\definecolor{ForestGreen}{rgb}{0.1333,0.5451,0.1333}
\crefname{equation}{}{}
\newcommand\remove[1]{}
\newtheorem{lemma}{Lemma}[section]
\newtheorem{theorem}{Theorem}
\newtheorem*{lemma*}{Lemma}
\newtheorem{corollary}[lemma]{Corollary}
\newtheorem*{corollary*}{Corollary}
\newtheorem*{remark}{Remark}
\theoremstyle{definition}
\newtheorem*{theorem*}{Theorem}
\newtheorem{definition}[lemma]{Definition}
\newtheorem*{rem*}{Remark}
\newcommand{\mc}{\mathcal}
\newcommand{\eps}{\varepsilon}
\newcommand{\R}{\mathbb{R}}
\newcommand{\E}{\mathop{\mathbb{E}}}
\newcommand{\norm}[1]{\left\lVert#1\right\rVert}
\crefname{algocf}{Algorithm}{Algorithms}
\renewcommand{\l}{\langle}
\renewcommand{\r}{\rangle}
\newcommand{\supp}{\mathsf{supp}}
\newcommand{\Z}{\mathbb{Z}}
\newcommand{\D}{\mathbb{D}}
\newcommand{\bbC}{\mathbb{C}}
\newcommand{\A}{\Sigma}
\newcommand{\B}{\Gamma}
\newcommand{\C}{\Phi}
\renewcommand{\a}{\sigma}
\renewcommand{\bar}{\overline}
\renewcommand{\hat}{\widehat}
\newcommand{\val}{\mathsf{swap}}
\newcommand{\wt}{\widetilde}
\renewcommand{\bar}{\overline}
\renewcommand{\deg}{\mathsf{deg}}
\renewcommand{\Re}{\mathsf{Re}}
\renewcommand{\Im}{\mathsf{Im}}
\newcommand{\lr}{\leftrightarrow}
\newcommand{\boxx}{\mathsf{box}}
\newcommand{\F}{\mathbb{F}}
\newcommand{\dtv}{\mathrm{d}_{\mathrm{TV}}}
\newcommand{\DP}{\mathsf{DP}}
\newcommand{\stab}{\mathsf{Stab}}
\newcommand{\cons}{\mathsf{Cons}}
\newcommand{\dkl}{\mathrm{d}_{\mathrm{KL}}}
\renewcommand{\ge}{\geqslant}
\renewcommand{\le}{\leqslant}
\renewcommand{\geq}{\geqslant}
\renewcommand{\leq}{\leqslant}
\newcommand{\lst}{\mathsf{List}}
\newcommand{\slst}{\mathsf{ShortList}}
\begin{document}

\title{On Approximability of Satisfiable $k$-CSPs: VI}

\author{Amey Bhangale\thanks{Department of Computer Science and Engineering, University of California, Riverside. Supported by the Hellman Fellowship award.}
	\and
	Subhash Khot\thanks{Department of Computer Science, Courant Institute of Mathematical Sciences, New York University. Supported by
		the NSF Award CCF-1422159, NSF CCF award 2130816, and the Simons Investigator Award.}
	\and
    Yang P. Liu\thanks{School of Mathematics, Institute for Advanced Study, Princeton, NJ. This material is based upon work supported by the National Science Foundation 
under Grant No. DMS-1926686}
    \and 
	Dor Minzer\thanks{Department of Mathematics, Massachusetts Institute of Technology. Supported by NSF CCF award 2227876 and NSF CAREER award 2239160.}}
\date{\vspace{-5ex}}
\clearpage\maketitle

\begin{abstract}
We prove local and global inverse theorems for general $3$-wise correlations over pairwise-connected distributions.
Let $\mu$ be a distribution over $\A \times \B \times \C$ such that the supports of $\mu_{xy}$, $\mu_{xz}$, and $\mu_{yz}$ are all connected, and let $f: \A^n \to \bbC$, $g: \B^n \to \bbC$, $h: \C^n \to \bbC$ be $1$-bounded functions satisfying \[ \left|\E_{(x,y,z) \sim \mu^{\otimes n}}[f(x)g(y)h(z)]\right| \ge \eps. \] 
In this setting, our local inverse theorem asserts that there is $\delta := \exp(-\eps^{-O_{\mu}(1)})$ such that with probability at least $\delta$, a random restriction of $f$ down to $\delta n$ coordinates $\delta$-correlates to a product function. To get a global inverse theorem, we prove a restriction inverse theorem for general product functions, stating that if a random restriction of $f$ down to $\delta n$ coordinates is $\delta$-correlated with a product function with probability at least $\delta$, then $f$ is $2^{-\textsf{poly}(\log(1/\delta))}$-correlated with a function of the form $L\cdot P$, where $L$ is a function of degree $\textsf{poly}(1/\delta)$,  $\|L\|_2\leq 1$, and $P$ is a product function.

We show applications to property testing and to additive combinatorics. 
In particular, we show the following result via a density increment argument.
Let $\A$ be a finite set and $S \subseteq \A \times \A \times \A$ such that: (1) $(x, x, x) \in S$ for all $x \in S$, and (2) the supports of $S_{xy}$, $S_{xz}$, and $S_{yz}$ are all connected. Then, any set $A \subseteq \A^n$ with $|\A|^{-n}|A| \ge \Omega((\log \log \log n)^{-c})$ contains $x, y, z \in A$, not all equal, such that $(x_i,y_i,z_i) \in S$ for all $i$. This gives the first reasonable bounds for the restricted 3-AP problem over finite fields.
\end{abstract}

\pagenumbering{gobble}

\newpage

\setcounter{tocdepth}{2}
\tableofcontents

\normalsize
\pagebreak
\pagenumbering{arabic}

\section{Introduction}\label{sec:intro}
A recent line of research~\cite{BKM1,BKM2,BKM3,BKM4,BKM5} about the approximability of satisfiable constraint satisfaction problems (CSPs in short) identified the following analytical problem as a central component. Suppose that $\Sigma$, $\Gamma$ and $\Phi$ are finite alphabets (thought of as being of constant size), and let $\mu$ be a distribution over $\Sigma\times\Gamma\times\Phi$ in which the probability of each atom is at least $\Omega(1)$. What triplets of $1$-bounded functions 
$f\colon \Sigma^n\to\mathbb{C}$, 
$g\colon \Gamma^n\to\mathbb{C}$,
$h\colon \Phi^n\to\mathbb{C}$ can achieve a significant $3$-wise correlation with respect to $\mu$, i.e., satisfy that
\begin{equation}\label{eq:main_analytical}
\left|\E_{(x,y,z) \sim \mu^{\otimes n}}[f(x)g(y)h(z)]\right|\geq \eps?
\end{equation}
To get a meaningful answer, one must make some assumptions about the distribution $\mu$; otherwise, $\mu$ could be supported only on inputs of the form $(x,x,x)$, in which case one could take $f$ to be any function of large $2$-norm, $g(y) = \overline{f(y)}$ and $h=1$. Here and throughout, we will restrict our discussion to the class of pairwise-connected distributions, defined as follows.
\begin{definition}
    For finite alphabets $\Sigma,\Gamma,\Phi$, a distribution $\mu$ over $\Sigma\times\Gamma\times\Phi$ is called pairwise-connected if
    the bipartite graphs $(\Sigma\cup \Gamma, \textsf{supp}(\mu_{x,y}))$, 
    $(\Sigma\cup \Phi, \textsf{supp}(\mu_{x,z}))$,
    $(\Gamma\cup \Phi, \textsf{supp}(\mu_{y,z}))$ are all connected.
\end{definition}

Earlier works have considered smaller classes of distributions. In~\cite{Mossel10}, Mossel shows that if $\mu$ is \emph{connected}, then any triplets of functions satisfying~\eqref{eq:main_analytical} must be correlated with low-degree functions. Here, a distribution is called connected if the graph whose vertices are $\textsf{supp}(\mu)$, and whose edges are between points in $\textsf{supp}(\mu)$ that differ in exactly one coordinate, is connected. In~\cite{BKM1}, the authors consider the notion of \emph{Abelian embeddings} and speculate that solutions to~\eqref{eq:main_analytical} are related to Abelian embeddings of $\mu$.
\begin{definition}
    We say $S\subseteq \Sigma\times\Gamma\times\Phi$ admits an Abelian embedding if there are Abelian group $(G,+)$ and maps $\sigma\colon \Sigma\to G$, $\gamma\colon \Gamma\to G$, $\phi\colon \Phi\to G$ not all constant, such that $\sigma(x)+\gamma(y)+\phi(z) = 0$ for all $(x,y,z)\in S$.  A distribution $\mu$ over $\Sigma\times\Gamma\times\Phi$ is said to admit an Abelian embedding if $\textsf{supp}(\mu)$ admits an Abelian embedding.
\end{definition}
Prior works~\cite{BKM2,BKM4} considered the class of distributions that do not admit any Abelian embedding (which can easily be seen to be larger than the class of connected distributions), and the more general class of distributions that do not admit embeddings to $(\mathbb{Z},+)$. In the former case, it was proved that triplets of functions satisfying~\eqref{eq:main_analytical} must be correlated with  low-degree functions. 
In the latter case, it was proved that triplets of functions satisfying~\eqref{eq:main_analytical} must be correlated with a functions of the form $L\cdot \chi\circ\sigma^{\otimes n}$, 
$L'\cdot \chi'\circ\gamma^{\otimes n}$,
$L''\cdot \chi''\circ\phi^{\otimes n}$, where $L,L',L''$ are low-degree functions with bounded $2$-norm, $\chi,\chi',\chi''\in \widehat{G}^{n}$ are characters, and $\sigma,\gamma,\phi$ is an Abelian embedding of $\mu$ into $G$, where $(G,+)$ is some finite Abelian group. 
\vspace{-1ex}
\paragraph{The issue of $(\mathbb{Z},+)$-embeddings:} the arguments in~\cite{BKM2,BKM4} rely on the fact that if $\mu$ does not admit any embedding into $(\mathbb{Z},+)$, then it can only admit embeddings into Abelian groups of size $O_{\mu}(1)$, of which there are $O_{\mu}(1)$ many (up to isomorphism). In particular, one can take a product of all such groups and be sure that the resulting Abelian group will ``contain'' within it all embeddings of $\mu$. 
This fact completely breaks once $\mu$ does have embeddings into $(\mathbb{Z},+)$. Indeed, such embedding automatically gives rise to embeddings into  $(\mathbb{Z}_p,+)$ for any prime $p$. 
Despite of that, even in the presence of a $(\mathbb{Z},+)$ embedding, it is not clear if there are solutions to~\eqref{eq:main_analytical} that do not fall into the ``low-degree times embedding function'' mold.

\subsection{Main Results}
\subsubsection{The Global Inverse Theorem}
Our main result is an inverse theorem even in the presence of $(\mathbb{Z},+)$ embeddings. In this case, embedding functions are quite arbitrary, and their main notable feature is that they are \emph{product functions}, in the sense that they are product of functions each depending only on one of the input's coordinates. With this in mind, our main result is the following statement. 
\begin{theorem}\label{thm:main_global}
Let $\alpha>0$, let $\Sigma,\Gamma,\Phi$ be finite alphabets, and let $\mu$ be a pairwise-connected distribution over $\Sigma\times\Gamma\times\Phi$ in which the probability of each atom is at least $\alpha$. Then for every $\eps>0$ there exist $d\in\mathbb{N}$ and $\delta>0$ such that the following holds.
If $f\colon \Sigma^n\to\mathbb{C}$, 
$g\colon \Gamma^n\to\mathbb{C}$, 
$h\colon \Phi^n\to\mathbb{C}$ 
are $1$-bounded functions such that 
\[ \left|\E_{(x,y,z) \sim \mu^{\otimes n}}[f(x)g(y)h(z)] \right| \ge \eps, \]
then there exist $L\colon \Sigma^n\to\mathbb{C}$ of degree at most $d$ and $\|L\|_2\leq 1$, and 
a product function $P\colon \Sigma^n\to\mathbb{C}$ of the form $P(x) = \prod\limits_{i=1}^{n}P_i(x_i)$ where $|P_i(x_i)| = 1$ for all $i$ and $x$, such that 
$|\langle f, L\cdot P\rangle|\geq \delta$. 
Quantitatively, we have that $d = \exp((1/\eps)^{O_{\alpha}(1)})$, 
and $\delta = \exp(-\exp((1/\eps)^{O_{\alpha}(1)}))$.
\end{theorem}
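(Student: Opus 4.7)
The plan is to prove \Cref{thm:main_global} by combining the two components advertised in the abstract: a \emph{local inverse theorem} and a \emph{restriction inverse theorem for product functions}. The local inverse theorem takes the three-function hypothesis $|\E_{(x,y,z)\sim \mu^{\otimes n}}[f(x)g(y)h(z)]|\ge \eps$ and concludes that with probability at least $\delta_0 = \exp(-\eps^{-O_\alpha(1)})$ over a random coordinate subset $I\subseteq[n]$ of size $\delta_0 n$ and a random restriction of $f$ outside $I$, the restricted $f|_I$ is $\delta_0$-correlated with some product function on $\Sigma^I$. The restriction inverse theorem then converts this local product-likeness into the global decomposition $|\langle f, L\cdot P\rangle|\ge \delta$ demanded by the theorem, with $L$ of bounded degree and $P$ a unit-modulus product function. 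Chaining the two results immediately yields \Cref{thm:main_global}, and the quantitative bounds $d = \exp(\eps^{-O_\alpha(1)})$ and $\delta = \exp(-\exp(\eps^{-O_\alpha(1)}))$ emerge from the parameter tracking through the two steps.

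For the local inverse step, I would follow the analytical paradigm of the BKM series and begin with iterated Cauchy--Schwarz inequalities in the variables $y$ and $z$, reducing $|\E[fgh]|\ge \eps$ to a self-correlation of $f$ against a two-coordinate distribution derived from $\mu$. Pairwise-connectedness of $\mu_{xy}, \mu_{xz}, \mu_{yz}$ guarantees that this derived coupling of $x$ and $x'$ is rich enough to support swap arguments along paths in the relevant connectivity graphs: swapping a single coordinate of $x$ along such a path lets one control how $f(x)$ changes in that one coordinate at a controlled cost. Combining this single-coordinate control with a random restriction of $f$ to a $\delta_0$-fraction of coordinates, one can show that on the remaining coordinates $f$ approximately factors into a product of single-coordinate functions. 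A key feature of this route is that it commits to no Abelian group structure: this is essential because, in contrast to the setting of~\cite{BKM2,BKM4}, pairwise-connected distributions admitting $(\mathbb{Z},+)$-embeddings give rise to a continuous family of possible product functions $\prod_i P_i(x_i)$ with $|P_i|=1$ which cannot be pinned down to any fixed finite-dimensional family of characters.

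The restriction inverse theorem is where I expect the main difficulty, and where the doubly-exponential quantitative loss arises. The task is to take a function $f$ whose random restrictions are locally approximated by product functions and globally recover a canonical product function $P$ such that the residual $f \cdot \overline{P}$ is close to a bounded low-degree function $L$. The natural strategy is a \emph{gluing} argument: pick two random overlapping restrictions $I_1, I_2$, extract the local product approximations $P^{(I_1)}, P^{(I_2)}$ of $f$, and use the approximate consistency of their one-coordinate factors on $I_1\cap I_2$ to identify a canonical global factor $P_i$ at each coordinate $i\in[n]$. Once $P$ is constructed, studying $f\cdot \overline{P}$ on further random restrictions should show that its restrictions are close to near-constant functions, which via a standard local-to-global degree lemma implies that $f\cdot \overline{P}$ itself is approximated by a low-degree $L$ with $\|L\|_2\le 1$. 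The principal obstacle is that each local $P^{(I)}$ is only defined up to a multiplicative scalar per coordinate, so the consistency/renormalisation step must carefully absorb these phase ambiguities; this is what produces the $2^{-\textsf{poly}(\log(1/\delta))}$ loss in the restriction inverse theorem and, composed with the exponential loss from the local inverse, yields the doubly-exponential bound on $\delta$ in the final statement.
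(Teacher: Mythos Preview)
Your overall plan matches the paper's: \Cref{thm:main_global} is indeed obtained by chaining the local inverse theorem (\Cref{thm:main}) with the restriction inverse theorem (\Cref{thm:rit}), and your high-level descriptions of each component---iterated Cauchy--Schwarz along paths using pairwise-connectedness to reach a swap/self-correlation structure, then a gluing/direct-product-test argument to globalise the local product correlations---are faithful to what the paper does.

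One small gap worth flagging: \Cref{thm:rit} as stated only outputs a product function $P$ with $\|P_i\|_\infty \le 1$, not $|P_i(x_i)|=1$ as required in \Cref{thm:main_global}. The paper handles this with an additional step (\Cref{thm:boundedprod}, applied as in \Cref{sec:dst}): one sets $G = f\overline{L}/\|L\|_4$, uses hypercontractivity to bound $\|L\|_4 \le 2^{O(d)}$, and then invokes \Cref{thm:boundedprod} on $G$ to upgrade the $\ell_2$-bounded (or $1$-bounded) product function to a unit-modulus one. Your proposal implicitly assumes the restriction inverse theorem already produces unit-modulus $P_i$, so you should insert this final step; it costs only a polynomial loss in the correlation and is absorbed into the stated doubly-exponential bound.
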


\subsubsection{The Local Inverse Theorem}
The main new component in the proof of~\Cref{thm:main_global} is a similar looking local inverse theorem. By that, we mean that given functions $f$, $g$ and $h$ satisfying~\eqref{eq:main_analytical}, we prove a similar structural result in a local (and random) part of the domain, given by \emph{random restrictions}. For a function $f\colon \Sigma^n\to\mathbb{C}$, a subset $I\subseteq [n]$ of the coordinates, and a setting $z\in \Sigma^{I}$, the \emph{restricted function} 
$f_{I\rightarrow z}$ is a function $f_{I\rightarrow z}\colon \Sigma^{[n]\setminus I} \to \mathbb{C}$ defined as $f_{I\rightarrow z}(y) = f(x_I = y, x_{\overline{I}} = z)$.

In a random restriction, either the set $I$ is sampled randomly, the fixing $z\in \Sigma^I$ is sampled randomly, or both. Below, we use the notation $I \sim_{1-\alpha} [n]$ to mean that $I$ is a random subset of $[n]$ in which each $i \in [n]$ is included in $I$ with probability $1-\alpha$, and $z \sim \nu^I$ means that $z \in \A^I$ is such that each $z_i$ is independently distributed according to $\nu$.
\begin{definition}[Random restriction]
Let $\mu$ be a distribution over $\A^n$ and write $\mu = (1-\alpha)\nu + \alpha \mu'$ for distributions $\nu, \mu'$. Then for a function $f: \A^n \to \bbC$, $I \sim_{1-\alpha} [n]$, and $z \sim \nu^I$, we define the random restriction $f_{I\to z}: (\A^{[n] \setminus I},{\mu'}^{[n]\setminus I}) \to \bbC$ as $f_{I \to z}(x) := f(x, z)$.
\end{definition}

Our local inverse theorem asserts that if $1$-bounded functions $f, g, h$ satisfy~\eqref{eq:main_analytical}, then with noticeable probability, a random restriction of $f$ (and similarly of $g$, $h$) correlates to a $1$-bounded product function. More precisely:
\begin{theorem}
\label{thm:main}
Let $\alpha>0$, let $\Sigma,\Gamma,\Phi$ be finite alphabets, and let $\mu$ be a pairwise-connected distribution over $\Sigma\times\Gamma\times\Phi$ in which the probability of each atom is at least $\alpha$. 
Then for every $\eps>0$ there exists $\delta>0$ such that if $f: \A^n \to \bbC$, $g: \B^n \to \bbC$, $h: \C^n \to \bbC$ are $1$-bounded functions satisfying that
\[ \left|\E_{(x,y,z) \sim \mu^{\otimes n}}[f(x)g(y)h(z)] \right| \ge \eps.,
\] 
then, writing $\mu = (1-\delta)\nu + \delta U$ where $U$ is uniform over $\A$, we have that:
\[ \Pr_{I \sim_{1-\delta} [n], z \sim \nu^I}\left[\exists \{P_i: \A \to \bbC, \|P_i\|_\infty \le 1\}_{i \in \bar{I}} \enspace \text{ with } \enspace \Big|\E_{x \sim \A^{\bar{I}}}\Big[f_{I\to z}(x) \prod_{i \in \bar{I}} P_i(x_i) \Big]\Big| \ge \delta \right] \ge \delta. \]
Quantitatively, $\delta(\eps) \ge \exp(-\eps^{-O_{\alpha}(1)})$.
\end{theorem}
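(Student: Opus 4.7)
The plan is to prove the local inverse theorem by combining the iterated Cauchy--Schwarz reductions standard in this series with a random-restriction analysis, inducting on the structural complexity of $\mu$ (e.g., alphabet size together with the dimension of the lattice of Abelian embeddings of $\mu$).

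First, I would use a Cauchy--Schwarz reduction. Starting from $\left|\E_{\mu^{\otimes n}}[fgh]\right| \ge \eps$, I apply Cauchy--Schwarz once to eliminate $h$ (integrating over $z$) and again to eliminate $g$ (integrating over $y$), arriving at an inequality of the form $\left|\E_{\rho^{\otimes n}}[f(x_1)\bar f(x_2)\Psi(x_1,x_2)]\right| \gs \eps^{O(1)}$, where $\rho$ is the pair-marginal obtained by sampling two independent triples from $\mu$ and identifying their $y$- and $z$-coordinates, and $\Psi$ is a bounded kernel. The pairwise-connectedness of $\mu$ is precisely what guarantees that $\rho$ is a connected distribution on $\A \times \A$, so that the swap-norm framework developed in previous BKM papers applies. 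Iterating this procedure shows that $f$ has a large swap norm with respect to a suitably connected distribution.

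Next I would analyze random restrictions. Writing $\mu = (1-\delta)\nu + \delta U$, the original three-wise correlation can be reinterpreted as an average, over random restrictions $(I, z)$, of local three-wise correlations on $\bar I$ against $U^{\otimes\bar I}$. By averaging, a noticeable fraction of restrictions preserve substantial local correlation. On such a restriction of size roughly $\delta n$, the local distribution on each unfixed coordinate is uniform, so characters (i.e., product functions with $|P_i|=1$) form a Fourier basis; any large three-wise correlation on this uniform product distribution forces the constituent functions to correlate with product functions. Combining this with the swap-norm information from the previous step should yield the desired local statement, with pairwise-connectedness ensuring that the product structure on $g$ and $h$ transfers back to $f$.

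The main obstacle, as the introduction emphasizes, is the presence of a $(\Z,+)$-embedding. In earlier papers the absence of such embeddings restricted the Abelian groups into which $\mu$ embeds to a constant-size family, allowing a single ``universal'' group to be used; when a $(\Z,+)$-embedding exists this strategy breaks. For the local theorem my plan is to sidestep naming a specific character: I would argue directly that whatever structure governs the obstruction must manifest as a product function on the restricted domain, using pairwise-connectedness to propagate product structure across the three variables, rather than trying to identify an embedding globally. The delicate quantitative point is ensuring $\delta(\eps) \ge \exp(-\eps^{-O_\alpha(1)})$, which I expect to require carefully controlling the loss at each Cauchy--Schwarz step and balancing restriction parameters against hypercontractive estimates used to analyze low-degree parts of $f$.
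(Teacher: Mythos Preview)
Your high-level outline---Cauchy--Schwarz reductions to pass to a statement about $f$ alone, then random restrictions---matches the paper's architecture, but there is a substantial gap at the second stage. The step you treat as essentially automatic, namely ``large swap norm $\Rightarrow$ a random restriction of $f$ correlates with a product function,'' is in fact the main technical contribution of the paper. You write that after restriction the marginal is uniform, ``so characters (i.e., product functions with $|P_i|=1$) form a Fourier basis; any large three-wise correlation on this uniform product distribution forces the constituent functions to correlate with product functions.'' This is not correct: $1$-bounded product functions are not a basis (there are uncountably many, and for $|\Sigma|>2$ no Fourier decomposition has exactly these as characters), and there exist $f$ with swap norm bounded away from zero yet $o(1)$ correlation with every product function (the paper gives the example of a random signed quadratic on $\{-1,1\}^n$). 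So random restriction is \emph{forced}, and proving that some restriction correlates with a product function is highly nontrivial: the paper establishes it (\cref{thm:swap}) by an induction on $\eps$, with a $99\%$-regime base case (\cref{thm:swapnine}) and an iterative scheme that alternates SVDs on random partitions with further random restrictions to increment $\val(f)^{1/2}/\|f\|_2^2$ (\cref{lemma:iter}, \cref{lemma:svdinc}). None of this is in prior BKM papers; the swap norm itself is introduced here, so ``the swap-norm framework developed in previous BKM papers'' does not exist to appeal to.

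Two smaller points. First, your Cauchy--Schwarz sketch (eliminate $h$, then $g$, obtain a connected pair distribution on $\Sigma\times\Sigma$) does not by itself yield the swap norm. The paper's reduction (\cref{thm:toswap}) applies a specific ordered sequence of three path tricks so that in the resulting distribution on $\Sigma^3\times\Gamma^+\times\Phi^+$ the first and third $\Sigma$-coordinates can be swapped without leaving the support (\cref{lemma:suppmu}); it is this symmetry, not mere connectivity of a pair marginal, that produces the swap norm after a further restriction. Second, the inverse theorem for the swap norm only delivers an $\ell_2$-bounded product function; upgrading the components to be $1$-bounded requires a separate argument (\cref{thm:boundedprod}), which your proposal does not address.
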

\begin{remark}
A few remarks are in order. First, as far as global inverse theorems are concerned, the low-degree part $L$ is necessary. 
Second, the correlation parameter $\delta$ in Theorem~\ref{thm:main} is exponentially better than the one in Theorem~\ref{thm:main_global}, and thus it gives better quantitative bounds in some applications. Third, we do note know whether the exponential dependency in Theorem~\ref{thm:main} or the double exponential dependency in Theorem~\ref{thm:main_global} is necessary. As far as we know, both results may be true with $\delta = \eps^{O_{\alpha}(1)}$. For instance, in the case no $(\mathbb{Z},+)$ embeddings exist, most of the arguments in~\cite{BKM4} give quasi-polynomial type dependency between the two parameters.
\end{remark}

% We also prove a restriction inverse theorem for general product functions. As a result, we obtain a global inverse theorem for $3$-wise correlations.
% \begin{theorem}
% \label{thm:main2}
% Let $\mu$ be a pairwise-connected distribution over $\A \times \B \times \C$ for finite sets $\A, \B, \C$. If $1$-bounded functions $f: \A^n \to \bbC$, $g: \B^n \to \bbC$, $h: \C^n \to \bbC$ satisfy that
% \[ \left|\E_{(x,y,z) \sim \mu^{\otimes n}}[f(x)g(y)h(z)] \right| \ge \eps, \] then there are $1$-bounded functions $P_i: \A \to \bbC$ for $i \in [n]$, function $L: \A^n \to \bbC$ of degree at most $D := D(\eps)$ with $\|L\|_2 = 1$, such that
% \[ \left|\E_{x \sim \mu_x^{\otimes n}}[f(x)L(x)\prod_{i=1}^n P_i(x_i) \right| \ge \delta, \]
% where quantitatively $\delta := \delta(\eps) \ge \exp(-\exp(\eps^{-C}))$ and $D(\eps) \le \exp(\eps^{-C})$ for some constant $C$ depending on $\mu$.
% \end{theorem}

\subsection{Applications}
In this section we discuss a few applications 
of our main results.
\subsubsection{Restricted 3-APs}
\label{subsec:intro3ap}
Let $p\geq 3$ be a prime thought of as constant. 
A restricted 3-AP in $\mathbb{F}_p^n$ is an arithmetic progression $x,x+a,x+2a$ where 
$x\in\mathbb{F}_p^n$ and $a\in\{0,1\}^n\setminus\{\vec{0}\}$.
The restricted 3-AP problem asks what is the maximum density of a subset $A \subseteq \F_p^n$ which is free of all restricted 3-AP's, i.e., that does not contain any triplet of the form $x, x+a, x+2a \in A$ for some $x \in \F_p^n$ and $a \in \{0, 1\}^n \setminus \{\vec{0}\}$. The restricted 3-AP problem can be 
seen as a variant of Roth's well known result 
in the finite field setting~\cite{roth1953certain,meshulam1995subsets}, which was highlighted by Green~\cite{Green}. 

It follows from the density Hales-Jewett theorem
that the density of a set $A$ that doesn't contain any restricted 3-APs is vanishing with 
$n$, and using the quantitative bounds from~\cite{polymath2012new} one has 
$\frac{|A|}{p^n}\leq O\big(\frac{1}{\sqrt{\log^* n}}\big)$.
Below, we state a result that (among other things) gives the first reasonable bounds (i.e., finite number of iterated logarithms) for a set $A\subseteq \mathbb{F}_p^n$ that doesn't contain any restricted 3-APs.
\begin{restatable}{theorem}{threeap}
\label{thm:3ap}
Let $\A$ be a finite set and let $S \subseteq \A \times \A \times \A$ be a non-empty set satisfying that:
\begin{enumerate}
\item $(x,x,x) \in S$ for all $x \in \A$, and
\item $\supp(S_{xy})$, $\supp(S_{xz})$, $\supp(S_{yz})$ are all connected.
\end{enumerate}
Then there are constants $c_{\A}, C_{\A} > 0$ such that for any $A \subseteq \A^n$ with 
$\frac{|A|}{|\Sigma|^n} \ge \frac{C_{\A}}{(\log\log\log n)^{c_{\A}}}$, there are $x,y,z \in A$, not all equal, such that $(x_i,y_i,z_i) \in S$ for all $i \in [n]$. 
\end{restatable}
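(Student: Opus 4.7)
\textbf{Proof proposal for \Cref{thm:3ap}.} The strategy is a density-increment argument driven by the global inverse theorem (\Cref{thm:main_global}). Let $\mu$ be the uniform distribution on $S$. Since $S$ contains the full diagonal and all three pairwise projections are connected, $\mu$ is pairwise-connected with atom weights $|S|^{-1}$, so it satisfies the hypotheses of \Cref{thm:main_global}. I will work with the density of $A$ measured against the marginal of $\mu$, which is equivalent to counting density up to factors depending only on $\A$.

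\textbf{Step 1 (Counting).} Suppose $A\subseteq \A^n$ has density $\alpha$ and contains no non-trivial $S$-triple. Each trivial triple $(x,x,x)$ with $x\in A$ contributes mass $|S|^{-n}$ under $\mu^{\otimes n}$, so
\[ \E_{(x,y,z)\sim \mu^{\otimes n}}[\one_A(x)\one_A(y)\one_A(z)] \;\leq\; |A|/|S|^n, \]
which is exponentially small in $n$. Writing $\one_A=\alpha+f$ with $\E[f]=0$ and expanding, first-order terms vanish, pairwise cross terms are at most $O(\alpha^2)$ by Cauchy--Schwarz, and the ``expected'' value of the left-hand side is $\alpha^3+O(\alpha^2)\cdot\text{(pairwise correlations)}$. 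Thus either one of the three pairwise correlations has magnitude $\gtrsim \alpha^{O(1)}$, which by a standard (much easier) two-wise density-increment on the connected bipartite structure yields a substantial density boost on a sub-box, or else
\[ \left|\E_{(x,y,z)\sim \mu^{\otimes n}}[f(x)f(y)f(z)]\right| \;\geq\; \eps \;=\; \Omega_{\A}(\alpha^{O(1)}). \]

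\textbf{Step 2 (Structure from the inverse theorem).} In the trilinear case, apply \Cref{thm:main_global} to the triple $(f,f,f)$. This yields a low-degree $L\colon \A^n\to\bbC$ of degree $d\leq \exp(\alpha^{-O(1)})$ with $\|L\|_2\leq 1$, and a unit-modulus product function $P(x)=\prod_i P_i(x_i)$, such that
\[ \left|\langle f,\,L\cdot P\rangle\right| \;\geq\; \delta,\qquad \delta \;\geq\; \exp\!\bigl(-\exp(\alpha^{-O(1)})\bigr). \]

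\textbf{Step 3 (Extracting a density increment).} Since $|P|\equiv 1$, I pigeonhole the unit circle into $O(1/\delta)$ arcs and take preimages under each $P_i$ to partition $\A^n$ into combinatorial sub-boxes $B=\prod_i B_i$ on which $P$ is almost constant. Averaging, some such $B$ satisfies $|\E_{x\sim B}[(\one_A-\alpha)(x) L(x)]|\gtrsim \delta$. Next, I approximate the low-degree function $L$ by a function that is constant on a further sub-box $B'\subseteq B$ using a Bogolyubov/Sanders-style atomization of the level sets of $L$ (controlled by its degree $d$). Averaging once more produces a combinatorial sub-box $B'$ with
\[ \frac{|A\cap B'|}{|B'|} \;\geq\; \alpha + \Omega(\delta). \]
This is the single density-increment step.

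\textbf{Step 4 (Iteration and the main obstacle).} I iterate the above on $A\cap B'$ viewed inside the product alphabet $\prod_i B'_i$. To preserve the hypotheses for the next application of \Cref{thm:main_global}, one restricts $S$ coordinate-wise to $B'_i$: the diagonal is automatically inherited, while pairwise-connectedness is maintained by passing, if necessary, to a pairwise-connected component via another pigeonhole (absorbing a bounded constant-factor density loss that depends only on $|\A|$). After roughly $1/\delta$ iterations the density exceeds $1$, a contradiction, provided $n$ has not collapsed. Carefully bookkeeping the loss in coordinates per step (dominated by $d^{O(1)}$ due to the atomization of $L$) against the density gain $\Omega(\delta)$, the iteration runs to completion as long as $\log\log\log n \gtrsim_{\A} \alpha^{-O(1)}$, which is exactly the claimed bound. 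The \emph{main obstacle} is Step~3 combined with the iteration bookkeeping: the degree $d=\exp(\alpha^{-O(1)})$ and correlation $\delta=\exp(-\exp(\alpha^{-O(1)}))$ from \Cref{thm:main_global} are exceedingly weak, and each iteration must extract a sub-box whose dimensions shrink only by a factor tolerable on the triple-logarithmic scale, while simultaneously preserving the pairwise-connectedness needed to re-apply the inverse theorem; tightly controlling this trade-off is what forces (and suffices for) the $(\log\log\log n)^{-c_{\A}}$ final bound.
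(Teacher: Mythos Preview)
Your approach has genuine gaps in Step~3 that cannot be repaired as stated.

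First, your pigeonhole on $P$ does not work. Partitioning the unit circle into $O(1/\delta)$ arcs and taking preimages under each $P_i$ gives sub-boxes $B=\prod_i B_i$ on which each individual $P_i$ is nearly constant, but $P(x)=\prod_{i=1}^n P_i(x_i)$ is a product of $n$ such factors, and a product of $n$ unit-modulus numbers each within an arc of width $\eta$ can be anywhere on the circle once $n\eta\gg 1$. So $P$ is not nearly constant on $B$. The paper handles this with a completely different mechanism: it groups together coordinates $j$ whose phase functions $v_j$ (where $P_j=e^{2\pi i v_j}$) are close, uses Dirichlet approximation to find a small multiplier $k$ with $\|kv_j\|_{\R/\Z}$ tiny, and then \emph{forces $k$ such coordinates to take the same value} (the $g_{=T}$ operation of \cref{def:same}). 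On this restricted domain the relevant block of $P$ becomes essentially $P_j(x)^k\approx 1$ by the Dirichlet choice, so $P$ really is almost constant. Crucially, this ``merge coordinates'' move keeps the alphabet $\A$ intact in every surviving coordinate, so the structure remains a tensor power of the same $S$ and pairwise-connectedness is automatically preserved---the issue you flag in Step~4 never arises.

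Second, precisely to avoid the low-degree factor $L$, the paper applies the \emph{local} inverse theorem (\cref{thm:main}) rather than the global one. After a preliminary random restriction it gets correlation of $(f_A)_{I\to u}$ with a bare $1$-bounded product function, with parameter $\gamma=\exp(-\alpha^{-O(1)})$ rather than the doubly exponential $\delta$ coming from \cref{thm:main_global}. Your ``Bogolyubov/Sanders-style atomization of the level sets of $L$'' is not a known technique in this setting: no group structure on $\A$ is assumed, and a generic function of degree $d=\exp(\alpha^{-O(1)})$ on $\A^n$ has no useful sub-box structure. Finally, your Step~1 decomposition is already shaky: with $\mu$ uniform on $S$ the three marginals $\mu_x,\mu_y,\mu_z$ need not coincide, so $f=\one_A-\alpha$ cannot have mean zero under all three simultaneously and the first-order terms do not vanish. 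The paper sidesteps this by taking $\mu$ to be an $O(n^{-1/2})$-perturbation of the diagonal distribution (\cref{lemma:notuniform}), so all marginals are $o(1)$-close to uniform and the errors are absorbed.
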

Thus, taking $\Sigma = \mathbb{F}_p$ and 
$S = \{(x,x+a,x+2a)~|~x\in\mathbb{F}_p, a\in\{0,1\}\}$, one sees that $S$ satisfies the conditions of~\Cref{thm:3ap}, and the 
conclusion regarding the density of 3-AP free sets follows. In fact, we can take an even sparser $S$ and conclude a similar result. For example, for $p=3$ we can take 
\[
S = \{(0,0,0),(1,1,1),(2,2,2),(0,1,2),(1,2,0)\}, \]
and ensure that a set $A$ with density exceeding $\frac{C_{\A}}{(\log\log\log n)^{c_{\A}}}$ must contain a triplet $x,y,z$ not all equal such that $(x_i,y_i,z_i)\in S$ for all $i$. In comparison, the standard restricted 3-AP problem also allows for $(x_i,y_i,z_i) = (2,0,1)$.

\cref{thm:3ap} and the consequence for restricted $3$-APs free sets improve upon a recent result of~\cite{BKM3ap}, who proved an analogous statement in the case the common difference $a$ is allowed to be in $\{0,1,2\}^n\setminus\{\vec{0}\}$. Our proof of~\cref{thm:3ap} proceeds by combining \cref{thm:main} with a density increment argument. This derivation is somewhat similar to the argument in~\cite{BKM3ap}, however the approach we take here is simpler and applies in more generality.

\subsubsection{Direct Sum Testing in the Low Soundness Regime}
Our next application is to the direct sum testing problem~\cite{david2015direct,dinur2019direct,westover2024new}. More specifically, consider the so-called diamond test, introduced in~\cite{dinur2019direct} and recently analyzed in~\cite{westover2024new}, which proceeds as follows. Given a function $f\colon \Sigma^n\to\mathbb{F}_p$, sample $a,b\in \Sigma^n$, and 
$x\in \{0,1\}^n$ uniformly and independently. Denoting by $\phi_x(a,b)$ the point in $\Sigma^n$ where $\phi_x(a,b)_i = a_i$ for $i$'s such that $x_i=1$ and $\phi_x(a,b) = b_i$ for $i$'s such that $x_i=0$, the tester reads the values of $f$ at the points $a$, $\phi_x(a,b)$, $\phi_x(b,a)$ and $b$, and accepts if and only if 
\[
f(a) - f(\phi_x(a,b)) - f(\phi_x(b,a)) + f(b) = 0.
\]
It is clear that any function $f$ which is a direct sum, i.e.~any function of the form 
$f(a) = \sum\limits_{i=1}^{n}f_i(a_i)$, passes the test with probability $1$. For $p=2$, the paper~\cite{dinur2019direct} proposed the diamond tester as a natural $4$-query test of direct sum-ness, and conjectured that in works in the so-called $99\%$ regime. This conjecture was recently confirmed in~\cite{westover2024new}, wherein the authors showed that if $f$ passes the test with probability $1-\eps$, then there is a direct sum function $g$ such that
\[
\Pr_{x\in \Sigma^n}\left[f(x)\neq g(x)\right]\leq O_{|\Sigma|}(\eps).
\]
Using the techniques underlying~\Cref{thm:main_global}, we are able to analyze this test in the small soundness regime. More precisely, we show the following result:
\begin{theorem}\label{thm:direct_sum}
    For all finite alphabets $\Sigma$ and $\eps>0$, there are $d\in\mathbb{N}$ and $\eps'>0$ such that the following holds.
    Suppose that $f\colon \Sigma^n\to\mathbb{F}_p$
    passes the diamond test with probability at least $\frac{1}{p}+\eps$. Then there exists $\alpha\in\mathbb{F}_p\setminus\{0\}$, a function $L\colon \Sigma^n\to\mathbb{C}$ with $\|L\|_2\leq 1$ and degree at most $d$ and 
    $P\colon \Sigma^n\to\mathbb{C}$ a product function $P(x) = \prod\limits_{i=1}^{n}P_i(x_i)$ where $|P_i(x_i)| = 1$ for all $i$ and $x$, such that
    \[
    \left|\E_x\left[\omega_p^{\alpha f(x)}L(x)P(x)\right]\right|\geq \eps',
    \]
    where $\omega_p$ is a primitive root of unity of order $p$. Quantitatively, the bounds on $d$ and $\eps'$ are the same as in Theorem~\ref{thm:main_global}.
\end{theorem}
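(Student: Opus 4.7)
I would prove Theorem~\ref{thm:direct_sum} by reducing it to the global inverse theorem (Theorem~\ref{thm:main_global}) via a Fourier expansion followed by a coordinate-merging trick. First, expanding the indicator as $\mathbf{1}[t=0] = \frac{1}{p}\sum_{\alpha\in\mathbb{F}_p}\omega_p^{\alpha t}$, the acceptance probability of the diamond tester equals
\[
\frac{1}{p} + \frac{1}{p}\sum_{\alpha\neq 0}\mathbb{E}_{a,b,x}\!\left[\omega_p^{\alpha\left(f(a) - f(\phi_x(a,b)) - f(\phi_x(b,a)) + f(b)\right)}\right],
\]
so the hypothesis of passing with probability at least $1/p+\eps$ forces some nonzero $\alpha\in\mathbb{F}_p$ for which the corresponding summand has magnitude at least $\eps$. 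Setting $F:=\omega_p^{\alpha f}$, which is unit-modulus and in particular $1$-bounded, this yields
\[
\left|\mathbb{E}_{a,b,x}\!\left[F(a)\,\overline{F(\phi_x(a,b))}\,\overline{F(\phi_x(b,a))}\,F(b)\right]\right|\geq \eps.\qquad(\star)
\]

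Next, I view $(\star)$ as a $4$-wise correlation of $F$ against the product distribution $\mu^{\otimes n}$, where the marginal $\mu$ on $\Sigma^4$ has atoms $(s,s,t,t)$ (coming from $x_i=1$) and $(s,t,s,t)$ (coming from $x_i=0$), each with probability $\tfrac{1}{2|\Sigma|^2}$. To collapse this into a $3$-wise setup to which Theorem~\ref{thm:main_global} applies, I merge the middle two coordinates: set $\Sigma':=\Sigma\times\Sigma$, define the $1$-bounded function $H\colon (\Sigma')^n\to\mathbb{C}$ by $H\bigl((v_i,w_i)_{i=1}^n\bigr):=\overline{F(v)}\cdot\overline{F(w)}$, and let $\nu$ be the induced distribution on $\Sigma\times\Sigma'\times\Sigma$ supported on $\{(s,(s,t),t) : s,t\in\Sigma\}\cup\{(s,(t,s),t) : s,t\in\Sigma\}$, with each atom of probability $\tfrac{1}{2|\Sigma|^2}$. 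Then $(\star)$ rewrites as
\[
\left|\mathbb{E}_{(a,(v,w),b)\sim\nu^{\otimes n}}\!\left[F(a)\,H((v,w))\,F(b)\right]\right|\geq \eps.
\]

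The only thing to check before invoking Theorem~\ref{thm:main_global} is that $\nu$ is pairwise-connected. This is direct: the marginal on the first and third coordinates has full support $\Sigma\times\Sigma$; for the pair (first, second), each $s\in\Sigma$ is connected to $(s,t)$ and to $(t,s)$ for every $t\in\Sigma$, so every $s,s'\in\Sigma$ lie in a common component via the length-$2$ path $s\to(s,s')\to s'$, and every element of $\Sigma'$ is connected to some $s\in\Sigma$; the (second, third) pair is handled symmetrically. Applying Theorem~\ref{thm:main_global} to the triple $(F,H,F)$ under $\nu$ then yields $L\colon\Sigma^n\to\mathbb{C}$ of degree at most $d$ with $\|L\|_2\leq 1$ and a product function $P$ with $|P_i(x_i)|=1$ satisfying $|\langle F,\, L\cdot P\rangle|\geq \delta$. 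Replacing $L,P$ by their complex conjugates (which preserves degree, $2$-norm, and the unit-modulus product structure) delivers $|\mathbb{E}_x[\omega_p^{\alpha f(x)}L(x)P(x)]|\geq\delta$, proving the theorem with the same quantitative dependence as Theorem~\ref{thm:main_global}. The only substantive step is identifying the coordinate-merge; once pairwise-connectedness is verified, Theorem~\ref{thm:main_global} does all the heavy lifting.
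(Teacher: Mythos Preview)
Your proof is correct and takes a genuinely different route from the paper. The paper's argument in \cref{sec:dst} observes that after the Fourier expansion step (which you share), the resulting four-point expectation is \emph{exactly} the swap norm $\val(F)$ of $F=\omega_p^{\alpha f}$; it then applies the swap-norm machinery directly (\cref{thm:swap}, \cref{thm:boundedprod}, \cref{thm:rit}), followed by an extra hypercontractivity step to upgrade the product function to have unit-modulus factors. Your approach instead packages the two middle coordinates into a single $\Sigma'$-valued coordinate, checks pairwise-connectedness of the resulting $3$-ary distribution, and invokes \cref{thm:main_global} as a black box.

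What each buys: the paper's route exploits the special structure of the diamond test---that it is \emph{literally} the swap norm---and so bypasses the path-trick reduction (\cref{thm:toswap}) that sits inside \cref{thm:main_global}; in principle this could yield sharper constants, though the theorem as stated only claims the \cref{thm:main_global} bounds anyway. Your route is more modular and avoids the somewhat delicate post-processing in the paper's proof (the $\|L\|_4$ bound via hypercontractivity and the second application of \cref{thm:boundedprod}), because \cref{thm:main_global} already hands you unit-modulus $P_i$. The coordinate-merge and the pairwise-connectedness verification are both clean, and the final conjugation step is exactly right.
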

We remark that the complicated form of correlation as in the statement of~\Cref{thm:direct_sum} is necessary. 
In the $99\%$ regime, i.e.~in the case that the test passes with probability $1-\eps$, our argument allows us to recover a similar to~\cref{thm:direct_sum} without the low-degree part $L$ and with $\eps' = 1-O_p(\sqrt{\eps})$.

\subsection{The Swap Norm}
\label{subsec:gowers}
Underlying the proof of our inverse theorems is the \emph{swap norm}, which is a new norm that can be viewed as an analogue of the $U^2$-Gowers' uniformity norm. To define the swap norm, we introduce the following convenient notation.
\begin{definition}
\label{def:swap}
Given $x, y \in \A^n$ we write $(x',y') \sim (x\lr y)$ to denote the following distribution over $\Sigma^n\times \Sigma^n$. We sample $(x',y')$ to satisfy $(x_i',y_i') = (x_i,y_i)$ or $(y_i,x_i)$ uniformly, independently for each coordinate $i$.
\end{definition}
In words, given $x,y\in \A^n$, the values of coordinates in the distribution $(x',y') \sim (x\lr y)$ is either the same as in $(x,y)$, or else it is swapped, and the choice in each coordinate is made independently. With this notation, we define the swap inner product and the swap norm.

\begin{definition}[Swap inner product and swap norm]
\label{def:swapnorm}
For functions $f_1, f_2, f_3, f_4: \A^n \to \mathbb{C}$, define
\[ \val(f_1,f_2,f_3,f_4) := 
\E_{\substack{x \sim \A^n, y \sim \A^n \\ (x', y') \sim (x \lr y)}} \left[f_1(x)f_2(y)\bar{f_3(x')f_4(y')}\right].
\]
When $f_1=f_2=f_3=f_4=f$, we abbreviate this notation and write $\val(f) := \val(f,f,f,f)$. 
\end{definition}

The form $\val(f_1,f_2,f_3,f_4)$ 
is easily seen to be a multi-linear form, 
and we show that it satisfies several properties that are reminiscent of Gowers' norms~\cite{Gowers01}; see \cref{sec:properties} for details. Among other things, we prove that $\val(f)^{1/4}$ is a norm over functions $f$ (see~\Cref{cor:norm}).

To get some intuition as to the relevance of the swap norm to product functions, we note that if 
$f(x) = \prod\limits_{i=1}^{n}P_i(x_i)$ is a product function with $\norm{f}_2=1$, then $\val(f) = \norm{f}_2^2=1$. More generally, a quick Cauchy-Schwarz argument shows that if $f$ is $\eps$-correlated with a product function $P$ with $\norm{P}_2=1$, then $\val(f)\geq \eps^4$.\footnote{This is similar to the way that if $f$ is correlated with a character then it has noticeable $U^2$ uniformity norm, and if it is correlated with a ``degree $s$'' polynomial then it has noticeable $U^{s+1}$ uniformity norm.}

Our local inverse theorem for the swap norm asserts that this is essentially the only possible way to have a large swap norm, at least under random restrictions. More precisely, we show:
\begin{restatable}{theorem}{swap}
\label{thm:swap}
If $f: \A^n \to \bbC$ is a $B$-bounded function satisfying that $\|f\|_2 \le 1$, $\val(f)^{1/4} \ge \eps$, then there is a constant $\delta(\eps, B) > 0$ such that:
\[ \Pr_{I\sim_{1-\delta(\eps, B)} [n], z \sim \A^I}\left[\exists \{P_i: \A \to \bbC, \|P_i\|_2 \le 1\}_{i \in \bar{I}} \enspace \text{ with } \enspace \Big|\Big\l f_{I\to z}, \prod_{i\in \bar{I}}P_i \Big\r \Big| \ge \delta(\eps, B) \right] \ge \delta(\eps, B). \]
Quantitatively, $\delta(\eps, B) \ge (\eps/B)^{O(\eps^{-O(1)})}$.
\end{restatable}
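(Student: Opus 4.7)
The plan is to combine a Fourier-type dual representation of the swap norm with an inductive random-restriction scheme that peels off product factors one coordinate at a time.

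First, by substituting $E = \{i : \epsilon_i = 1\}$ and $(u, v, p, q) = (x_E, y_E, x_{\bar E}, y_{\bar E})$ in the definition of $\val(f)$, I would derive the dual identity
\[
\val(f) = \E_{E \subseteq [n]} \E_{p, q \in \A^{\bar E}} \left|\E_{u \in \A^E}\bigl[f(u, p) \bar{f(u, q)}\bigr]\right|^2,
\]
where $E$ is uniform in $\{0, 1\}^n$. The integrand is nonnegative and equals $\sum_i \sigma_i(E)^4$, where $\sigma_i(E)$ are the singular values of $f$ viewed as a matrix in $\A^E \times \A^{\bar E}$. Given $\val(f) \geq \eps^4$, Markov's inequality then implies that an $\Omega(\eps^4)$-fraction of $E$'s have top singular value $\sigma_1(E) \gtrsim \eps^2$, yielding a two-factor approximation $f(u, p) \approx \sigma_1 A_E(u) B_E(p)$ across the $(\bar E, E)$ split with $\|A_E\|_2, \|B_E\|_2 \le 1$.

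Next, I would iterate this two-factor decomposition into a full $n$-factor product via a multi-scale random restriction. At each phase, we apply the dual identity and Markov's inequality to a restricted version of $f$ in order to extract a single product factor on one coordinate, absorb it into the accumulating product function, and then recurse on the residual after fixing the corresponding coordinate. The $B$-boundedness hypothesis is essential for controlling error propagation across phases: each phase loses at most a multiplicative factor of $(\eps/B)^{O(1)}$ in the correlation, and after $O(\eps^{-O(1)})$ phases we obtain a full product structure on the $\delta n$ free coordinates with total correlation $(\eps/B)^{O(\eps^{-O(1)})}$. Markov's inequality then converts this expected-correlation bound into the claimed probability bound $\Pr \geq \delta$.

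The main obstacle is ensuring that the iteration actually makes progress: after an SVD extraction, neither the residual nor the factor functions $A_E, B_E$ are guaranteed to have large swap norm, so naive recursion on them fails. The key technical device is a Jensen-type averaging inequality which, exploiting the nonnegativity of the integrand in the dual identity, shows that for any subset $I$ the expected swap norm $\E_z[\val(f_{I \to z})]$ under a random restriction is at least a corresponding weighted average of the nonnegative terms appearing in the dual identity. This guarantees that the restricted function retains usable structure on average, and combined with $B$-boundedness permits the iteration to proceed without the signal collapsing.
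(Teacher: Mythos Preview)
Your proposal correctly identifies the starting point (the box-norm decomposition $\val(f)=\E_{E}\boxx_E(f)$, and the fact that a large box norm yields a large top singular value), and you correctly isolate the main obstacle: after an SVD extraction the factor functions $A_E,B_E$ need not have large swap norm, so naive recursion fails. However, the scheme you propose to overcome this obstacle does not work, and the phase counting you give is internally inconsistent.

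You describe each phase as ``extract a single product factor on one coordinate, absorb it into the accumulating product function, and recurse on the residual after fixing the corresponding coordinate.'' If a phase fixes one coordinate, then producing a product function on the $\approx\delta n$ free coordinates of $\bar I$ requires $\approx\delta n$ phases, not $O(\eps^{-O(1)})$. With a multiplicative loss of $(\eps/B)^{O(1)}$ per phase, the total correlation is $(\eps/B)^{O(\delta n)}$, which depends on $n$ and vanishes as $n\to\infty$; this cannot give the stated $n$-free bound $\delta(\eps,B)\ge(\eps/B)^{O(\eps^{-O(1)})}$. Conversely, if you intend to restrict down to only $O(\eps^{-O(1)})$ live coordinates, then the restriction parameter would have to be $\delta\sim n^{-1}$, again contradicting the theorem. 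The Jensen-type averaging inequality you invoke (essentially $\val(f)^{1/2}\le\E_z\val(f_{I\to z})^{1/2}$) says only that random restriction \emph{preserves} $\val$ in expectation; it gives no mechanism for getting \emph{closer} to product structure, which is what your recursion needs.

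The paper's argument is organized around a different induction: on $\eps$, not on the number of coordinates. The base case is the $99\%$ regime $\val(f)\ge 1-c$, where a one-coordinate-at-a-time SVD peeling does work (\cref{thm:swapnine}), because one can show $\val(h_1)\ge(1+\Omega(\delta))\val(f)$ there. For general $\eps$, the paper runs an iterative scheme (\cref{subsec:iter}) that alternates random-partition SVDs and random restrictions until the residual $\Delta_t=f_t-\sum_{i\le t}\lambda_ig_ih_i$ has tiny swap norm, with all $g_i,h_i$ living on a \emph{common} partition $S_t\cup T_t$; a potential $\Phi_t=(\eps^{-5}\val(f_t)^{1/2}-\|\Delta_t\|_2^2)/\|f_t\|_2^2$ forces termination in $O(\eps^{-O(1)})$ steps. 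The key step you are missing is \cref{lemma:svdinc}: once $\val(\Delta_t)$ is negligible, one builds (by a case analysis, including a random-phase linear combination $h=\sum_i\beta_i\lambda_ih_i$) a function $h$ on one side of the partition with $\val(h)\ge(1+\Omega(\eps^2))\val(f_t)/\|f_t\|_2^4$, together with an identity $\langle (f_t)_{I\to z},gP\rangle=\eta\|f_t\|_2\langle h_{I\to z},P\rangle$ that transfers product correlations for $h$ back to $f_t$. This $\val$-increment, not a coordinate count, is what makes the recursion terminate after $O(\eps^{-O(1)})$ rounds and yields the $n$-independent bound.
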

\vspace{-1ex}
\paragraph{Deducing~\Cref{thm:main} from~\Cref{thm:swap}:}
to prove \cref{thm:main}, we show that for any triplets of functions $f$, $g$ and $h$ satisfying~\eqref{eq:main_analytical} with respect to some distribution $\mu$ which is pairwise-connected, it must be the case that 
the swap norm of $f$ is noticeable, at least in expectation over appropriately chosen random restrictions. This deduction is shown via a sequence of applications of the Cauchy-Schwarz inequality (which we call the \emph{path trick}, see \cref{def:pathtrick}). This gets us that a form similar to the one in the definition of the swap inner product, except that the distribution over $(x,y)$ is not necessarily uniform, is noticeable. We then apply random restrictions to modify the distribution of $(x,y)$ to be uniform. This step closely mirrors the situation with Szemer\'{e}di's theorem: if $f$ has nontrivial correlation over the distribution supported on $k$-APs, then by performing a sequence of Cauchy-Schwarz manipulations, we conclude that $f$ has a large $U^{k-1}$ norm. 

\vspace{-1ex}
\paragraph{The proof of~\Cref{thm:swap}:} the bulk
of our effort goes into establishing~\Cref{thm:swap}, and our argument proceeds by induction on $\eps$. One challenging aspect of the proof is that 
for the conclusion to be true, one must consider random restrictions. In other words, there are functions $f$ that have a noticeable swap norm that have $o(1)$ correlation with all product functions.\footnote{For example, one can take $\Sigma = \{-1,1\}$ and a random signed quadratic function, namely taking a set of random independent signs $\{\alpha_{i,j}\}_{1\leq i<j\leq n}$ and defining $f(x) = \textsf{sign}(\sum\limits_{i<j}\alpha_{i,j}x_ix_j)$.} This means that our inductive proof strategy must incorporate within it random restrictions all the way, and we defer further discussion to~\Cref{sec:overview}.

\subsection{Subsequent and Future Works}\label{subsec:related}
\paragraph{Subsequent works:} in~\cite{BKLM7},~\cite{BKLMDHJ3} we use the results proved herein to settle a conjecture from~\cite{BKM1}, as well as give reasonable bounds for the density Hales-Jewett theorem over $[3]^n$. 

More specifically, in~\cite{BKLM7} we prove that~\cite[Hypothesis 1.6]{BKM1} is true. To state this result, we first recall the natural extension of the definition of Abelian embeddings to $k$-ary distributions. We say 
that a distribution $\mu$ over $\Sigma_1\times\ldots\times\Sigma_k$ admits an Abelian embedding if there are Abelian group $(G,+)$ and maps $\sigma_i\colon \Sigma_i\to G$ not all constant, such that $\sum\limits_{i=1}^{k}\sigma_k(x_i) = 0$
for all $(x_1,\ldots,x_k)\in \supp(\mu)$. The main result of~\cite{BKLM7} is the following assertion:
\begin{theorem}\label{thm:proved_hypothesis}
    Let $k\geq 3$ and let $\Sigma_1,\ldots,\Sigma_k$ be finite alphabet. Suppose that $\mu$ is a distribution over $\Sigma_1\times\ldots\times\Sigma_k$ in which the probability of each atom is at least $\alpha$, and further suppose that $\mu$ does not admit any Abelian embedding. If $f_i\colon \Sigma_i^n\to\mathbb{C}$ are $1$-bounded functions for $i=1,\ldots,k$, such that
    \[
    \left|\E_{(x_1,\ldots,x_k)\sim \mu^{\otimes n}}\left[\prod\limits_{i=1}^kf_i(x_i)\right]\right|\geq \eps,
    \]
    then there exists a function $L\colon \Sigma^n\to\mathbb{C}$ of degree $O_{\alpha,\eps}(1)$ and $\|L\|_2\leq 1$ such that 
    $\langle f_1,L\rangle\geq \Omega_{\alpha,\eps}(1)$.
\end{theorem}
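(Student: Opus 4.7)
The plan is to extend the framework of~\Cref{thm:main_global} to $k$-ary correlations, and then use the no-Abelian-embedding hypothesis to kill the product part in the resulting decomposition. First, the hypothesis already implies that every pairwise marginal $\mu_{x_i,x_j}$ has connected bipartite support: any disconnection would produce a nontrivial $\mathbb{Z}/2$-valued Abelian embedding via component labels, contradicting the assumption. Consequently the path trick described in~\Cref{subsec:gowers} may be applied $k-1$ times, successively collapsing the coordinates $x_2,\ldots,x_k$, to deduce that $\val(f_1)^{1/4}\ge\Omega_{\alpha,\eps}(1)$ on average over a suitable random restriction. Invoking~\Cref{thm:swap} together with the restriction inverse theorem for product functions (the second main ingredient behind~\Cref{thm:main_global}) upgrades this to a global statement $|\langle f_1, L_1 P_1\rangle|\ge\delta$ for some $L_1$ with $\|L_1\|_2\le 1$ and $\deg L_1\le d=O_{\alpha,\eps}(1)$, and some unimodular product function $P_1(x)=\prod_j P_{1,j}(x_j)$. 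By symmetry the same holds for each $f_i$.

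Substituting back into the $k$-wise correlation yields $|\E[\prod_i L_i(x_i)P_i(x_i)]|\ge\Omega_{\alpha,\eps}(1)$. Expanding each $L_i$ in a character basis supported on at most $d$ coordinates and pigeonholing produces subsets $S_i\subseteq[n]$ of size $\le d$ and characters $\chi_{i,S_i}$ with $|\E[\prod_i \chi_{i,S_i}(x_i)P_i(x_i)]|\ge\delta'$. On the bounded set $T=\bigcup_i S_i$ the remaining expectation factorizes over the independent coordinates $j\notin T$, yielding
\[
\prod_{j\notin T}\Bigl|\E_{(x_1,\ldots,x_k)\sim\mu}\Bigl[\prod_i P_{i,j}(x_i)\Bigr]\Bigr|\ge\delta''.
\]
Each factor is at most $1$, so for any $\eta>0$ at most $O(\eta^{-1}\log(1/\delta''))$ coordinates $j$ can violate $|\E_\mu[\prod_i P_{i,j}]|\ge 1-\eta$. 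For each good coordinate, writing $P_{i,j}=e^{2\pi i\phi_{i,j}}$ with $\phi_{i,j}\colon \Sigma_i\to\mathbb{R}/\mathbb{Z}$, the near-$1$ expectation forces the approximate embedding relation $\sum_i\phi_{i,j}(x_i)\approx c_j$ on $\supp(\mu)$, with $L^2(\mu)$ error $O(\sqrt\eta)$.

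The final step is a robust version of the no-embedding assumption: by discretizing each $\phi_{i,j}$ to the nearest multiple of $1/N$ for a suitable $N=N(\eta)$, the approximate embedding rounds to an exact embedding of $\mu$ into $\mathbb{Z}/N$, which by hypothesis must be trivial. Hence each $\phi_{i,j}$ is within $o(1)$ of a constant as $\eta\to 0$, and $P_{i,j}$ is approximately a unimodular constant on every good coordinate. Absorbing these constants and the boundedly many bad coordinates into a slightly enlarged low-degree function $\widetilde L_1$ of degree $O_{\alpha,\eps}(1)$ gives $|\langle f_1,\widetilde L_1\rangle|\ge\Omega_{\alpha,\eps}(1)$, as required. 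The hard part will be making this rounding step fully quantitative and uniform in $n$, so that $\eta$ can be chosen small enough to make the rounding valid while keeping the number of bad coordinates bounded. A secondary difficulty is verifying that the path trick composes cleanly for $k\ge 4$, so that random restrictions introduced at successive Cauchy--Schwarz iterations combine into an estimate on $\val(f_1)$ itself rather than on a tensored object.
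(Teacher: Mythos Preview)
First, a framing point: this theorem is not proved in the present paper. It appears in \S\ref{subsec:related} as a statement of the main result of the subsequent work~\cite{BKLM7}, so there is no ``paper's own proof'' to compare against here. What follows is an assessment of your proposal on its own merits.

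The proposal has a genuine gap at the step ``Substituting back into the $k$-wise correlation yields $|\E[\prod_i L_i(x_i)P_i(x_i)]|\ge\Omega_{\alpha,\eps}(1)$.'' Knowing that each $f_i$ \emph{individually} satisfies $|\langle f_i, L_iP_i\rangle|\ge\delta$ does not let you replace $f_i$ by $L_iP_i$ inside the $k$-wise expectation. Correlation is a weak bilinear statement: it is perfectly consistent with $f_i - c_iL_iP_i$ still carrying essentially all of $\E[\prod_j f_j]$, since the $L_iP_i$ you extract for $f_i$ has no a priori relationship to the other $f_j$. To make such a replacement work you would need either (a) a structure-plus-pseudorandom decomposition $f_i=(\text{structured})+(\text{small swap norm})$ together with a generalized von Neumann inequality showing that small swap norm of one factor kills the $k$-wise correlation over $\mu$, or (b) to run the inverse theorem one coordinate at a time in a way that preserves the \emph{original} correlation, i.e.\ to produce $L_1P_1$ with $|\E[L_1P_1\cdot\prod_{i\ge2}f_i]|$ itself large, and then iterate. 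You have written neither, and (b) is delicate because after the first replacement $L_1P_1$ is not $1$-bounded, so the hypotheses needed to iterate are not immediate.

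Without this step, the downstream factorization-and-rounding argument---which is reasonable in spirit, and is indeed the mechanism by which the no-Abelian-embedding hypothesis forces the product part to be essentially constant on all but boundedly many coordinates---has nothing to act on: you never establish that the $P_i$ coming from different $f_i$ satisfy any joint relation over $\supp(\mu)$. The ``secondary difficulty'' you flag, extending the path-trick/swap-norm pipeline from $k=3$ to general $k$, is also not a formality: everything in \S\ref{sec:3wise} and \S\ref{subsec:symmetry} is specific to three variables, and the analogous $k$-ary reduction is part of what~\cite{BKLM7} has to supply.
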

The paper~\cite{BKLM7} also establishes extensions of this result which are necessary for the application to the density Hales-Jewett problem, and we refer the reader there for further discussion. 

In~\cite{BKLMDHJ3}, we establish reasonable bounds for the density Hales-Jewett theorem~\cite{HJ,FK,polymath2012new}, which can be thought of as an even more restricted form of the restricted 3-APs problem. 
In this context, a combinatorial line is a triplet $x,y,z\in \{0,1,2\}^n$, not all equal, where for each $i$, 
\[
(x_i,y_i,z_i)\in \{(0,0,0), (1,1,1), (2,2,2), (0,1,2)\}.
\]
The main result of~\cite{BKLMDHJ3} asserts that if $A\subseteq \{0,1,2\}^n$ does not contain any combinatorial line, then $\frac{|A|}{3^n}\leq \frac{1}{(\log\log\log\log n)^c}$, where $c>0$ is an absolute constant.
\vspace{-1ex}
\paragraph{Future works:} in future works, we plan to investigate the algorithmic applications of~\Cref{thm:main} for the class of satisfiable constraint satisfaction problems of arity $3$, extending the result of~\cite{BKM5} to a wider class of predicates. An example for a predicate of interest which belongs to this larger class (and for which the result of~\cite{BKM5} does not apply) is the rainbow coloring of $3$-uniform hypergraph. In this problem, the input is a $3$-uniform hypergraph $\mathcal{H}$ promised to be tripatite, and the goal is to partition the vertices into $3$-parts and maximize the number of hyperedges touching all parts.  

We also plan to investigate whether 
one can extend results as~\Cref{thm:3ap} and establish the stronger counting versions. By that, we mean a statement along the lines of: if $A\subseteq \mathbb{F}_p^n$ is a set of density at least $\eps$, then it contains at least $\delta$ fraction of the restricted $3$-APs, where $\delta = \delta(\eps)>0$.

\subsection{Preliminaries}

\paragraph{General notation.}
Let $[n] = \{1, \dots, n\}$. For a subset $I \subseteq [n]$ we write $\bar{I} = [n] \setminus I$. For a real number $x$ we write $\|x\|_{\R/\Z} := \min_{z \in \Z} |x-z|$. We let $\bbC$ denote complex numbers, and let $\D$ denote the unit disk. We let $\supp(\mu)$ denote the support of a distribution $\mu$.
For a finite set $\A$ we write $x \sim \A^n$ to denote sampling $x$ uniformly randomly from $\A^n$.

Our arguments use singular-value-decompositions heavily (abbreviated SVD henceforth), as per the following definition.
\begin{definition}[$J$-SVD]
\label{def:svd}
For a function $f: \A^n \to \bbC$ we say that its $J$-SVD is a representation $f = \sum_i \lambda_ig_ih_i$ for $g_i: \A^J \to \bbC$ and $h_i: \A^I \to \bbC$, where $\lambda_i$ are singular values and $g_i, h_i$ are singular vectors when $f$ is expressed as a matrix $M \in \bbC^{\A^J \times \A^I}$ for $I = [n] \setminus J$ defined as $M_{x,y} := f(x,y)$.
\end{definition}

\section{Proof Overview}
\label{sec:overview}
In this section we give a proof overview for our inverse theorems. We begin by discussing our local inverse theorem,~\cref{thm:main}.

\subsection{From 3-Wise Correlation to a Large Swap Norm}
\label{subsec:overviewtoswap}
The first step towards proving \cref{thm:main} is the following statement, which reduces the proof of~\Cref{thm:main} to~\Cref{thm:swap}.
\label{subsec:toswap}
\begin{restatable}{theorem}{toswap}
\label{thm:toswap}
Let $\mu$ be a pairwise-connected distribution over $\A \times \B \times \C$ for finite sets $\A, \B, \C$. If $1$-bounded functions $f: \A^n \to \bbC$, $g: \B^n \to \bbC$, $h: \C^n \to \bbC$ satisfy that
\[ \left|\E_{(x,y,z) \sim \mu^{\otimes n}}[f(x)g(y)h(z)] \right| \ge \eps, \]
then there is $\alpha \geq \Omega_{|\Sigma|,|\Gamma|,|\Phi|}(1)$, and distribution $\nu$ over $\A$ satisfying $\mu_x = \alpha U + (1-\alpha)\nu$ for uniform $U$ over $\A$, such that
\[ \E_{\substack{I \sim_{1-\alpha} [n] \\ z \sim \nu^I}}\Big[\val(f_{I\to z}, f_{I\to z}, f_{I\to z}, f_{I\to z})\Big] \ge \eps^{O_{\mu}(1)}. \]
\end{restatable}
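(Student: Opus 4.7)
The plan is to perform a sequence of Cauchy--Schwarz inequalities (the ``path trick'') that turn the three-wise correlation involving $f,g,h$ into a four-wise expression involving only $f$, whose distribution matches, after averaging over random restrictions, the expression defining the swap norm.

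First, I would apply two Cauchy--Schwarz steps to eliminate $g$ and $h$. Starting from $|\E_\mu[f(x)g(y)h(z)]|\ge \eps$ and using $|g(y)|\le 1$, a single Cauchy--Schwarz in $y$ yields
\[ \eps^{2} \;\le\; \E_{y,\,(x_1,z_1),(x_2,z_2)\sim\mu\mid y\text{ iid}}\bigl[f(x_1)\overline{f(x_2)}\,h(z_1)\overline{h(z_2)}\bigr]. \]
A second Cauchy--Schwarz in $(z_1,z_2)$ (using $|h|\le 1$) eliminates the $h$'s and produces
\[ \eps^{4}\;\le\; \E_{\tau_0^{\otimes n}}\bigl[f(x_1)\,\overline{f(x_2)f(x_3)}\,f(x_4)\bigr], \]
for a distribution $\tau_0$ on $\A^{4}$ (per coordinate) whose structure is that of a $2{\times}2$ grid: $(x_1,x_2)$ share a common $y$-value, $(x_3,x_4)$ share a common $y'$-value, while $(x_1,x_3)$ share a common $z_1$-value and $(x_2,x_4)$ share a common $z_2$-value, with the appropriate conditionals sampled from $\mu$.

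The main work is to transform $\tau_0^{\otimes n}$ into the random-restriction-averaged swap distribution $\pi^{\otimes n}$, where each coordinate of $\pi$ independently either collapses to a common $\nu$-sample (with probability $1-\alpha$) or draws $(X,Y)\sim U\times U$ together with a uniform swap $(X',Y')$ (with probability $\alpha$). Here I would invoke pairwise connectedness of $\mu$ in earnest, via the path trick of~\cref{def:pathtrick}: connectedness of the bipartite graph on $\supp(\mu_{yz})$ lets me ``walk'' the shared $z$-values across each column along an alternating path in that graph, and connectedness of $\supp(\mu_{xy})$ similarly lets me walk the shared $y$-values across rows. Each walk step is realized by an additional Cauchy--Schwarz application that introduces an auxiliary variable while preserving the correlation up to a polynomial loss in $\eps$. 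After a constant (i.e.\ $O_\mu(1)$) number of such walks, the distribution of $(x_1,x_2,x_3,x_4)$ becomes symmetric enough that, using the decomposition $\mu_x = \alpha U + (1-\alpha)\nu$ and separating contributions from ``collapsed'' and ``uniform'' coordinates, the resulting expectation rewrites as $\E_{I\sim_{1-\alpha}[n],\,z\sim\nu^I}\bigl[\val(f_{I\to z})\bigr]$.

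The main obstacle, and the place where pairwise connectedness is essential, is the path-trick step: the Cauchy--Schwarz manipulations must be carefully sequenced along all three connected bipartite graphs of $\mu$ so that the resulting per-coordinate distribution acquires the symmetric ``either all four coordinates collapse to a single value or $(X',Y')$ is a uniform swap of $(X,Y)$'' structure of $\pi$. Any missing pairwise connectivity would leave an asymmetric remainder obstructing this identification. Since each walk has constant length, bounded by the diameters of the three pairwise graphs, and each Cauchy--Schwarz step squares the parameter, the total loss is $\eps^{O_\mu(1)}$, as claimed.
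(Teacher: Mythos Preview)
Your overall strategy—Cauchy--Schwarz/path tricks plus pairwise connectedness, losing only $\eps^{O_\mu(1)}$—is the right toolbox, but the specific plan has a gap at the crucial step.

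After your two Cauchy--Schwarz applications you land on a four-point expression over a per-coordinate distribution $\tau_0$ with the \emph{box} (i.e.\ $2\times 2$ grid) structure: $(x_1,x_2)$ share a $y$, $(x_3,x_4)$ share a $y'$, $(x_1,x_3)$ share $z_1$, $(x_2,x_4)$ share $z_2$. But the swap distribution you need is not of box type: on each live coordinate it is supported on $\{(a,b,a,b),(a,b,b,a)\}$ for $a,b$ i.i.d.\ uniform. Your proposal then says to ``walk the shared $z$-values'' and ``walk the shared $y$-values'' along the connected bipartite graphs to turn $\tau_0$ into the swap distribution. This is where the argument is not a proof: once $g,h$ are eliminated the $y,z$-variables have been integrated out, and it is not at all clear which additional Cauchy--Schwarz steps on the four-variable expression would convert a box structure into a swap structure. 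Pairwise connectedness by itself tells you paths exist, not that those paths produce the very particular coordinate-wise exchange symmetry that $\val(\cdot)$ requires.

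The paper takes a different route that avoids this difficulty. It \emph{keeps} the three-wise correlation and instead enlarges the alphabets via path tricks in a specific order: first a $y$-path trick making $\supp(\mu_{xz})$ full, then a $z$-path trick making $\supp(\mu_{xy})$ full, and finally a short $x$-path trick ($r=2$) that replaces $f$ by $f^+(x_1,x_2,x_3)=f(x_1)\overline{f(x_2)}f(x_3)$. The point of this ordering is a structural fact (their \cref{lemma:suppmu}): in the resulting $\mu^+$, for every $a,b\in\A$ there exist $y,z$ with both $((a,a,b),y,z)$ and $((b,a,a),y,z)$ in the support. This built-in swappability of the first and third $\A$-coordinates is the key idea your plan is missing; once it is in place, a single $\|\cdot\|_2$ bound plus a random restriction immediately yields $\val(f_{I\to u_1},f_{I\to u_3},f_{I\to u_1},f_{I\to u_3})$, and \cref{lemma:cs1} finishes. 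In short, the decisive step is not massaging a four-point box into a swap, but engineering the three-wise distribution so that a swap symmetry is already present before the final Cauchy--Schwarz.
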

The idea towards proving \cref{thm:toswap} is to enlarge the alphabets $\A, \B, \C$ and distribution $\mu$ carefully through a sequence of \emph{path tricks}. Formally, the path trick takes an integer $r$, sets $\A^+ \subseteq \A^{2^{r+1}-1}$ and for all $y_1, z_1, y_2, z_2, \dots, y_{2^r}, z_{2^r}$ such that $(y_i, z_i), (z_i, y_{i+1}) \in \supp(\mu_{yz})$, adds the following to $\supp(\mu^+)$. Let $(x_i, y_i, z_i) \in \supp(\mu)$ for $i = 1, \dots, 2^r$, and $(x_i', y_{i+1}, z_i) \in \supp(\mu)$ for $i = 1, 2, \dots, 2^r - 1$.
Then, \[ (\{x_1, x_1', x_2, \dots, x_{2^r}\}, y_1, z_{2^r}) \in \A^+ \times \B \times \C \] is added to $\supp(\mu^+)$. We can define similar path tricks that enlarge $\B$ and $\C$ respectively.

The guarantees of the path trick are given in \cref{lemma:pathtrick}. Informally, if we apply an $x$-path trick, enlarging $\A$, then $f^+, g, \tilde{h}$ have correlation at least $\eps^{O(1)}$, where $\tilde{h}: \C^n \to \bbC$ is still $1$-bounded but otherwise arbitrary, and $f^+: \A^+ \to \bbC$ is defined for $x = (x_1, x_1', \dots, x_{2^{r-1}}) \in (\A^+)^n$ as $f^+(x) = \prod_{i=1}^{2^{r-1}} f(x_i) \prod_{i=1}^{2^{r-1}-1} \bar{f(x_i')}$. In other words, the values multiply.

Now, perform the following sequence of three path tricks.
\begin{enumerate}
    \item Perform a $y$-path trick that makes $xz$ have full support.
    \item Perform a $z$-path trick that makes $xy$ have full support.
    \item Perform a $x$-path trick for $r = 2$.
\end{enumerate}
This only affects the value of $\eps$ polynomially.
Let the resulting distribution be $\mu^+$ over $\A^+ \times \B^+ \times \C^+$. By inspection, one can check that for all $a, b \in \A$ that there is some $y \in \B^+$ and $z \in \C^+$ such that $((a,a,b),y,z)$ and $((b,a,a),y,z)$ are both in the support of $\mu^+$. Intuitively, this means that the first and third coordinates of $\A^+ \subseteq \A^3$ can be swapped without affecting the correlation. Because $f^+(x) = f(x_1)\bar{f(x_1')}f(x_2)$, this means that swapping each coordinate of $x_1, x_2$ should not affect the value of $f^+$ significantly, and thus $f$ has large swap norm.

\subsection{The 99\%~Regime of Swap Norm Inverse Theorem}
\label{subsec:overview99}
The rest of this overview section is devoted to discussing the proof of~\Cref{thm:swap}, which consists of the bulk of the effort in this paper. 
As mentioned earlier, our argument proceeds by induction on $\eps$, and we first address the base case that $\eps$ is sufficiently close to $1$. In this case random restrictions are not necessary, and we establish the following stronger result:
\begin{restatable}{theorem}{swapnine}\label{thm:swapnine}
If $f: \A^n \to \bbC$ satisfies $\|f\|_2 \le 1$ and $\val(f) \ge 1-c$ for $c\leq 0.05$, then there are $P_1, \dots, P_n: \A \to \bbC$ with $\|P_i\|_2 = 1$ such that $|\l f, P_1\dots P_n\r| \ge 1-O(c)$.
\end{restatable}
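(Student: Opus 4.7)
The plan is to induct on $n$. For the base case $n=1$, a direct computation gives $\val(f) = \|f\|_2^4$, so $\val(f) \geq 1-c$ forces $\|f\|_2 \geq (1-c)^{1/4} \geq 1-c/4$, and taking $P_1 = f/\|f\|_2$ gives $|\langle f, P_1\rangle| = \|f\|_2 \geq 1-c/4$.

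For the inductive step at $n \geq 2$, I would perform an SVD of $f$ in the first coordinate, writing $f(x) = \sum_k \lambda_k\, g_k(x_1)\, h_k(x_{2:n})$ with $\{g_k\} \subset L^2(\Sigma)$ and $\{h_k\} \subset L^2(\Sigma^{n-1})$ orthonormal and $\lambda_1 \geq \lambda_2 \geq \cdots \geq 0$. The goal is to establish two estimates: (i) $\lambda_1^2 \geq 1-O(c)$, and (ii) $\val(h_1) \geq 1-O(c)$. Assuming these, the inductive hypothesis applied to $h_1 \in L^2(\Sigma^{n-1})$ produces $P_2,\dots,P_n$ with $\|P_i\|_2=1$ and $|\langle h_1, P_2 \cdots P_n\rangle| \geq 1 - O(c)$; setting $P_1 = g_1$ and using orthonormality of the $g_k$'s collapses $\langle f, P_1 \cdots P_n\rangle$ to $\lambda_1 \langle h_1, P_2 \cdots P_n\rangle$, whose modulus is $\geq 1 - O(c)$.

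Both (i) and (ii) rely on the identity $\val(f) = \|TF\|_2^2$, where $F = f \otimes f$ and $T$ is the orthogonal projection onto functions in $L^2(\Sigma^n \times \Sigma^n)$ invariant under all coordinate-wise swaps of $(x,y)$. Since $\|F\|_2^2 \leq 1$ and $\|TF\|_2^2 \geq 1-c$, the antisymmetric component in coordinate~1 satisfies $\|S_1^\perp F\|_2^2 \leq c$ (decomposing $T = S_1 T_{[2:n]}$). Expanding in the SVD basis gives $\|S_1^\perp F\|_2^2 = \tfrac{1}{2}\sum_{k \neq \ell} \lambda_k^2 \lambda_\ell^2$, and combined with $\sum_k \lambda_k^2 = \|f\|_2^2 \geq \sqrt{1-c}$ one deduces (i). For (ii), a full SVD expansion yields
\[ \val(f) = \sum_k \lambda_k^4\, \val(h_k) + \tfrac{1}{2}\sum_{k \neq \ell} \lambda_k^2 \lambda_\ell^2 (A_{k\ell}+B_{k\ell}), \]
with $A_{k\ell} = \|T_{[2:n]}[h_k \otimes h_\ell]\|_2^2$ and $B_{k\ell} = \langle T_{[2:n]}[h_k \otimes h_\ell], T_{[2:n]}[h_\ell \otimes h_k]\rangle$, both bounded by $1$ in absolute value by Cauchy--Schwarz. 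The cross sum is $O(c)$ by (i), and $\sum_{k \geq 2} \lambda_k^4 \leq (\|f\|_2^2 - \lambda_1^2)^2 = O(c^2)$, so rearranging yields $\lambda_1^4 \val(h_1) \geq 1 - O(c)$, hence $\val(h_1) \geq 1 - O(c)$.

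The hard part will be the full SVD expansion of $\val(f)$ in step (ii), which requires handling the mixed symmetrization terms $A_{k\ell}$ and $B_{k\ell}$ carefully to extract a clean bound. A secondary subtlety is that each inductive step inflates the implicit constant in $O(c)$ by a fixed multiplicative factor (arising from the loose Cauchy--Schwarz bound $\val(h_1) \geq 1 - O(c)$ rather than $\val(h_1) \geq 1 - c$), so the constant in the conclusion may depend on $n$; this is acceptable because \Cref{thm:swapnine} is applied as a base case inside the induction on $\varepsilon$ in \Cref{thm:swap}, where such overhead is absorbed.
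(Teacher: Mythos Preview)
Your overall plan---induction on $n$, SVD in the first coordinate, set $P_1 = g_1$ and recurse on $h_1$---matches the paper exactly, and the identity $\val(f)=\|TF\|_2^2$ is a clean way to see that $\lambda_1^2 \ge 1-O(c)$. The gap is in the last paragraph, and it is fatal as stated.

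You only establish $\val(h_1) \ge 1 - Kc$ for some constant $K>1$, and you explicitly concede that the implicit constant inflates at every step of the induction. That is not acceptable: \Cref{thm:swapnine} is invoked in the proof of \Cref{thm:swap} on restrictions $f_\tau$ living on $\A^{R_\tau}$ where $|R_\tau|$ can be as large as $n$, and the resulting correlation bound must be independent of $n$. An $n$-dependent constant makes the statement vacuous. More concretely, even if one refines your estimate to $\val(h_1)\ge 1-c-O(c^2)$, the final correlation is $\prod_{i}\lambda_{1,i}$, and your bounds only give $\delta_i := 1-\lambda_{1,i}^2 \le O(c)$ at each step, so $\sum_i \delta_i$ could be $\Theta(nc)$ and $\prod_i\lambda_{1,i}$ could be exponentially small in $n$.

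What the paper does differently is exactly at the point where you use the trivial bound $|A_{k\ell}|,|B_{k\ell}|\le 1$. The paper proves (\Cref{lemma:valperp}) that for orthonormal $h_i,h_j$ one has $\val(h_i,h_j,h_i,h_j)\le 2/3$, strictly less than $1$. Plugging $2/3$ instead of $1$ into your expansion yields
\[
1-c \;\le\; \lambda_1^4\,\val(h_1) + \tfrac{4}{3}\delta + O(\delta^2),
\]
so that $\val(h_1)\ge (1-c)(1+\Omega(\delta))$. This gain of $\Omega(\delta)$ in the $\val$-deficit exactly compensates the loss of $\delta/2$ coming from the factor $\lambda_1=(1-\delta)^{1/2}$, and the induction then closes with a single fixed constant $C$: if $|\l h_1,P'\r|\ge 1-C(c-\Omega(\delta))$ by induction, then $\lambda_1|\l h_1,P'\r|\ge 1-Cc$ for $C$ large enough. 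Without a cross-term bound strictly below $1$, the induction cannot close.
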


To prove Theorem~\ref{thm:swapnine}, we build $P_1, \dots, P_n$ one at a time. Let the $\{1\}$-SVD on $f$ be $f = \sum \lambda_i g_i h_i$, where the largest singular value is $\lambda_1$. We will take $P_1 = g_1$, and note that for any function $P: \A^{[n] \setminus \{1\}} \to \bbC$ we have that $\l f, g_1 P \r = \lambda_1 \l h_1, P \r$. We repeat the argument on $h_1$: perform an SVD, take the largest singular value and so on, until we end up with a product function. To argue that this process works, it suffices to argue that $\val(h_1) \ge (1+\Omega(\delta))\val(f)$, where $\delta = 1-\lambda_1^2$. Towards this end, we expand $\val(f)$ (one can check that the orthogonality of $g$ causes other terms to cancel) and get that
\[
1-c
\leq
\val(f)
\leq
\sum\limits_{i}\lambda_i^4\val(h_i,h_i,h_i,h_i)
+
2\sum\limits_{i<j}\lambda_i^2\lambda_j^2\val(h_i,h_j,h_i,h_j)
\]
We require a non-trivial bound on the off-diagonal entries, and we show that $\val(h_i,h_j,h_i,h_j) \le 2/3$ for orthogonal $h_i, h_j$ (\cref{lemma:valperp}), which holds for $i \neq j$. 
Using this estimate, we can show that $\lambda_1^2$ is close to $1$ and hence $\delta$ is close to $0$. It follows that 
$\val(f)\leq \lambda_1^4\val(h_1) + \frac{4}{3}\delta + O(\delta^2)$, giving 
$\val(h_1)\geq \frac{\val(f) - \frac{4}{3}\delta + O(\delta^2)}{1-2\delta + O(\delta^2)}\geq (1+\Omega(\delta))\val(f)$.

\subsection{Inverse Theorem for the Swap Norm}
\label{subsec:overviewswap}
% Finally, we describe the proof of the inverse theorem for the swap norm.
% \begin{restatable}{theorem}{swap}
% \label{thm:swap}
% If $f: \A^n \to \bbC$ is $B$-bounded and satisfies $\|f\|_2 \le 1$, $\val(f,f,f,f)^{1/4} \ge \eps$, then there is a constant $\delta(\eps, B) > 0$ such that:
% \[ \Pr_{I\sim_{1-\delta(\eps, B)} [n], z \sim \A^I}\left[\exists \{P_i: \A \to \bbC, \|P_i\|_2 \le 1\}_{i \in \bar{I}} \enspace \text{ with } \enspace \Big|\Big\l f_{I\to z}, \prod_{i\in \bar{I}}P_i \Big\r \Big| \ge \delta(\eps, B) \right] \ge \delta(\eps, B). \]
% Quantitatively, $\delta(\eps, B) \ge (\eps/B)^{O(\eps^{-O(1)})}$.
% \end{restatable}
% Again, note that the product function is only guaranteed to be $\ell_2$-bounded. $P$ can be made $1$-bounded using \cref{thm:boundedprod}.
We now discuss the proof of~\Cref{thm:swap}. At a high level, our goal will be to perform a sequence of random restrictions and SVDs to eventually get to a function $g$ such that: 
\begin{enumerate}
    \item {\bf Correlation for $g$ implies correlation for $f$:} if a product function correlates with $g$, then a product function correlates with $f$ up to a multiplicative factor $\eps^{100}$ loss.
    \item {\bf Increase in swap norm:} $\val(g)^{1/4} \ge \eps + \eps^{100}$.
    \item {\bf Boundedness:} if $f$ is bounded, then $g$ is also bounded.
\end{enumerate}
Once we find such a procedure, we can iterate it until the swap norm of the resulting function becomes close to $1$, which happens in $O(1/\eps^{100})$ iterations by the second item. 
At that point, we use the base case, namely~\Cref{thm:swapnine}, and conclude that the resulting function is correlated with a product function. This result then translates to a correlation of a random restriction of $f$ with a product function, where the correlation parameter is at least $(\eps^{100})^{1/\eps^{100}} =  \exp(-\eps^{-O(1)})$.

The starting point of our argument is that if $f$ has large swap norm, then for a random subset $J \subseteq [n]$, the largest singular value of the $J$-SVD of $f$ is at least $\eps^{O(1)}$ in expectation. This follows by \cref{lemma:box}, which states that the swap norm is the expected \emph{box norm} of $f$ with respect to a random partition $(J, \bar{J})$, and the fact that having a large box norm with respect to some partition implies a large singular value in the corresponding SVD decomposition. 

This suggests the following approach: consider the $J$-SVD, say $f = \sum\limits_{i} \lambda_i g_i h_i$, and try to come up with a candidate function $g$ above based in the functions $g_i$. A natural attempt, for instance would be to take $g$ to be a linear combination of the the singular vectors $g_1, g_2, \dots$, possibly only those corresponding to large singular values. However, we do not know how to argue that this works, even in the case where there is only one large singular value. 
The issue is that there is no clear way to directly relate $\val(g_1)\val(h_1)$ and $\val(f)$.

Instead, we take the following iterative approach. Let $\lambda gh$ be the first term of the $J$-SVD of $f$, and write $\Delta = f - \lambda gh$ -- note that $\|\Delta\|_2^2 = \norm{f}_2^2 - \lambda^2 \le 1 - \eps^{O(1)}$. If $\val(\Delta) \le \eps^{1000}$, i.e., is very small, then we know by the fact that $\val^{1/4}$ is a norm that $\val(\lambda gh) \approx \val(f)$. Then $\val(g)\val(h) \ge (1-o(1))\val(f)$, and hence one of $\val(g)$ or $\val(h)$ is much larger, as desired. Otherwise, $\val(\Delta)$ is still quite large and hence the $J'$-SVD of $\Delta$ for a random $J' \subseteq [n]$ has a large singular value. Let the first term in the $J'$-SVD of $\Delta$ be $\lambda'g'h'$, and write
\[ \Delta' = \Delta - \lambda'g'h' = f - \lambda gh - \lambda'g'h'. \] Again, note that $\|\Delta'\|_2^2 \le \|\Delta\|_2^2 - \eps^{O(1)}$. However, now $gh$ and $g'h'$ live on different partitions which makes it difficult to relate them to some SVD of $f$ (we want to use SVDs to increase the $\val$ like we did in \cref{subsec:overview99}). To handle this, we random restrict $f$ on a common refinement of $J$ and $J'$, i.e., $\bar{I} := (J \cap J') \cup (\bar{J} \cap \bar{J'})$. After random restriction,
\[ \Delta'_{I \to z} = f_{I \to z} - \lambda g_{I \to z}h_{I \to z} - \lambda'g'_{I \to z}h'_{I \to z}. \] Critically, $g_{I \to z}, h_{I \to z}, g'_{I \to z}, h'_{I \to z}$ all live on the same partition of $\bar{I}$. On average, the random restriction does not increase $\|\Delta'\|_2$ or decrease $\val(f)$ (see \cref{lemma:valrr}). Thus we may continue repeating this process until it terminates -- this should happen within $\eps^{-O(1)}$ steps because $\|\Delta'\|_2^2$ decreases during each step of the iteration. In the end, after several rounds of random restriction, we are left with a function $\Delta := f - \sum_{i=1}^T \lambda_i g_i h_i$, where $T \le \eps^{-O(1)}$, $g_i, h_i$ are functions over some partition of $f$, and $\val(\Delta)$ is very small. At this point, we are able to use a more complicated version of the argument in \cref{subsec:overview99} to build the desired function $g$ as a linear combination of the $g_i$'s.

There is one point to discuss: the $g_i, h_i$ may live on the same partition of $f$, but may not necessarily be the true SVD of $f$ on that partition. This is not difficult to fix: after each random restriction, replace the $g_i,h_i$ with the true SVD up to that many terms. This can only decrease $\|\Delta\|_2^2$ further, because \[ \min_{\lambda_i,g_i,h_i: 1 \le i \le t} \Big\|f - \sum_{i=1}^t \lambda_i g_i h_i \Big\|_2^2 \] is attained exactly when $\lambda_i, g_i, h_i$ are the first $t$ terms of the SVD of $f$.

\subsection{Getting Bounded Functions}
The above argument naturally gives rise to product functions $P$ that are only guaranteed to be $\ell_2$-bounded, whereas in~\Cref{thm:main} we require the product function to have absolute value $1$ always. In \cref{sec:boundedprob} we prove that if a $1$-bounded function $f$ correlates to an $\ell_2$-bounded product function, then $f$ in fact correlates to a $1$-bounded product function, with polynomial loss in the correlation; see~\Cref{thm:boundedprod} for a formal statement.

\subsection{The Global Inverse Theorem}
Finally, to get~\Cref{thm:main_global} we use the
following restriction inverse theorem, extending the ones from~\cite{BKM3,BKM4}:
\begin{restatable}{theorem}{ritre}\label{thm:rit}
Let $f: \A^n \to \bbC$ be $1$-bounded and $\mu = (1-\rho)\nu + \rho U$ where $U$ is uniform over $\A$. Suppose that for $\eps > 0$ it holds that
\[  \Pr_{I \sim_{1-\rho} [n], z \sim \nu^I}\left[\exists \{P_i: \A \to \bbC, \|P_i\|_\infty \le 1\}_{i \in \bar{I}} \enspace \text{ with } \enspace \Big|\E_{x \sim \A^{\bar{I}}}\Big[f_{I\to z}(x) \prod_{i \in \bar{I}} P_i(x_i) \Big]\Big| \ge \eps \right] \ge \eps. \]
Then there exist $1$-bounded $P_i: \A \to \bbC$ for $i = 1, \dots, n$, and a function $L: \A^n \to \bbC$ of degree at most $D := D(\rho,\eps)$ with $\|L\|_2 \le 1$ such that:
\[ \left|\E_{x \sim \mu^{\otimes n}}\left[f(x)L(x) \prod_{i=1}^n P_i(x_i) \right] \right| \ge \eps', \]
where quantitatively $\eps' \ge \exp(-C_{\A}(\rho^{-1}\log(1/\eps))^{O(1)})$ and $D \le C_{\A}(\eps\rho)^{-O(1)}$ for some constant $C_{\A}$.
\end{restatable}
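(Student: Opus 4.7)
The plan is to extract a global product function $P^\star=\prod_i P_i^\star$ and a low-degree function $L$ from the hypothesis, by combining a discretization of per-coordinate product functions, a pigeonhole to find a popular global structure, and the restriction inverse theorem for low-degree functions from~\cite{BKM3,BKM4}.

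First I would discretize the space of $1$-bounded functions $\A\to\bbC$. Let $\mathcal{N}$ be an $\eta$-net of the unit disk with $|\mathcal{N}|=O(\eta^{-2})$ for $\eta := \eps/(4|\A|)$. Any $1$-bounded function $\A\to\bbC$ can be rounded to one in the finite family $\mathcal{Q}:=\mathcal{N}^{\A}$ of size $N:=|\mathcal{N}|^{|\A|}$. Since $f$ is $1$-bounded, rounding each $P_i^{(I,z)}$ coordinate-wise changes the relevant correlation by at most $|\A|\eta\le\eps/4$, so we may assume $P_i^{(I,z)}\in\mathcal{Q}$ and the hypothesis still holds with $\eps/2$ in place of $\eps$.

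Next, for each coordinate $i\in[n]$ and each $Q_j\in\mathcal{Q}$, let
\[ \alpha_{i,j}:=\Pr_{(I,z)}\big[(I,z)\text{ is good},\ i\in\bar{I},\ P_i^{(I,z)}=Q_j\big]. \]
The hypothesis yields $\sum_j\alpha_{i,j}\ge(1-\rho)\eps/2$, so the maximizer $j^\star(i):=\arg\max_j\alpha_{i,j}$ satisfies $\alpha_{i,j^\star(i)}\ge (1-\rho)\eps/(2N)$. Set $P_i^\star:=Q_{j^\star(i)}$, $P^\star:=\prod_i P_i^\star$, and $F:=f\cdot\overline{P^\star}$. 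For each good $(I,z)$, the product function $R^{(I,z)}$ on $\A^{\bar{I}}$ defined by $R^{(I,z)}_i:=P^{(I,z)}_i\overline{P_i^\star}$ satisfies $|\l F_{I\to z},R^{(I,z)}\r|\ge \eps/2$ and, crucially, $R^{(I,z)}_i\equiv 1$ whenever $P^{(I,z)}_i=P_i^\star$. Thus $R^{(I,z)}$ depends only on the ``disagreement'' set $D_{I,z}:=\{i\in\bar{I}:P^{(I,z)}_i\ne P_i^\star\}$, and its effective degree equals $|D_{I,z}|$. Assuming a noticeable fraction of good $(I,z)$ has $|D_{I,z}|\le D:=C_\A(\eps\rho)^{-O(1)}$, the hypothesis translates to: for a noticeable fraction of restrictions, $F_{I\to z}$ correlates with a degree-$\le D$ function. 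This is precisely the input to the restriction inverse theorem for low-degree functions in~\cite{BKM3,BKM4}, whose conclusion yields a global $L$ of degree $O_\A(D/\rho)^{O(1)}$ with $\|L\|_2\le 1$ and $|\E_{x\sim\mu^{\otimes n}}[F(x)L(x)]|\ge\eps'$. Unwinding $F=f\overline{P^\star}$ and replacing each $P_i^\star$ by its conjugate (which is still $1$-bounded) gives $|\E_{x\sim\mu^{\otimes n}}[f(x)L(x)\prod_i\overline{P_i^\star(x_i)}]|\ge\eps'$, which is the desired conclusion.

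The hard part will be the effective-degree reduction in the third step: the naive choice of $P^\star$ above only gives per-coordinate agreement probability $\Omega(1/N)$, so a typical good $(I,z)$ has $\Omega(n)$ disagreements, far too many to be absorbed into a degree-$D$ function. The fix is an iterative ``popular product replacement'' argument, where at each step one identifies either a coordinate whose current $P_i^\star$ is far from the empirical majority in the remaining good restrictions, or a joint correction across a few coordinates, and updates $P^\star$ accordingly. Each such step should strictly increase a bounded correlation potential (such as $\E_{(I,z)\text{ good}}[|\l F_{I\to z},R^{(I,z)}\r|^2]$), so the iteration terminates in $\mathrm{poly}(1/\eps)$ steps. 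Implementing this iteration carefully and tracking its effect on the distribution of $|D_{I,z}|$ is the technical heart of the proof, and combined with the quantitative bounds from~\cite{BKM3,BKM4} it yields the stated $\mathrm{poly}(1/(\eps\rho))$ bound on $D$ and the $\exp(-(\rho^{-1}\log(1/\eps))^{O(1)})$ bound on $\eps'$.
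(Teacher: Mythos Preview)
Your proposal has a genuine gap at exactly the point you flag as ``the hard part''. The iterative ``popular product replacement'' scheme is not actually specified, and the suggested potential $\E_{(I,z)\text{ good}}[|\langle F_{I\to z},R^{(I,z)}\rangle|^2]$ cannot drive it: since $F=f\overline{P^\star}$ and $R^{(I,z)}=P^{(I,z)}\overline{P^\star}$, the $P^\star$ factors cancel (up to the net rounding) and $|\langle F_{I\to z},R^{(I,z)}\rangle|\approx|\langle f_{I\to z},P^{(I,z)}\rangle|$, which is independent of $P^\star$. So updates to $P^\star$ do not move this potential, and there is no termination argument. More fundamentally, your coordinate-wise majority ignores the joint structure: the good $(I,z)$ could cluster around several mutually far product functions, and the per-coordinate plurality need not be close to any single cluster. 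Showing that (after passing to a noticeable subevent) the local $P^{(I,z)}$ are mutually close on overlaps is precisely a \emph{direct product test} problem, and you have not supplied any mechanism to establish such consistency.

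The paper takes a genuinely different route that confronts this head-on. It represents each $P\in\mathcal{F}(I)$ by a vector $\pi(P)\in\R^{2|I||\A|}$, relates $|\langle P,Q\rangle|$ to $\|\pi(P)-\pi(Q)\|_2$, and builds short lists $SW_I$ of candidate product functions per window $I$. Consistency of these lists across overlapping windows is established via (i) a new ``continuous'' small-set expansion statement (\cref{thm:sseproj}) showing that if a map $f:\{0,1\}^n\to\R^M$ satisfies $\|f(x)-f(y)\|_2\le D$ with noticeable probability for correlated $x,y$, then a popular value exists, and (ii) an analysis of an $\ell_2$-tolerant direct product test (\cref{thm:dp}) that stitches the local lists into a single global $g:[n]\to[0,1]^K$. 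Only after obtaining this global $G$ does the argument reduce to a noise-stability statement for $f\bar G$ and extract the low-degree $L$; the reduction to the low-degree restriction inverse theorem of \cite{BKM3,BKM4} that you hoped to invoke directly does not apply without first solving the direct product problem. In short, the missing idea in your outline is exactly the continuous small-set-expansion plus direct-product-test machinery of Sections~\ref{subsec:sse}--\ref{subsec:dp}.
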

In words,~\Cref{thm:rit} asserts that if a random restriction of $f$ is correlated with a product function with noticeable probability, then $f$ itself must be correlated with a product of a low-degree function with a product function.

The proof of~\Cref{thm:rit} proceeds along similar lines as in~\cite{BKM3,BKM4}, with some differences. In previous works, a key component of the proof was a small-set expansion statement, asserting that  if $h\colon \Sigma^n\to \Sigma^N$ is a function such that 
\[
\Pr_{\substack{x\sim \Sigma^n\\ y\sim_{\rho} x}}[h(x) = h(y)]\geq \eps
\]
(where $y\sim_{\rho} x $ is the distribution where for each coordinate $i$ independently, $y_i = x_i$ with probability $\rho$, and else $y_i\sim \Sigma$), then there exists $a\in \Sigma^N$ such that $h(x) = a$ for at least $\eps^{O_{\rho}(1)}$ of the $x$'s. Here, we have to prove a similar looking statement in the case that $h\colon \Sigma^n\to \mathbb{D}^N$ (where $\mathbb{D}$ is the unit disk), and we know that $h(x)$ and $h(y)$ are close in $\ell_2$ (as opposed to exactly equal) with probability at least $\eps$. In this case, we show that there exists $a\in (\mathbb{R}/\mathbb{Z})^N$ such that $h(x)$ is close in $\ell_2$ to $a$ for at least $\eps^{O_{\rho}(1)}$ of the $x$'s.

\section{The Swap Norm and Its Properties}
\label{sec:properties}
\subsection{The Box Norm}
We begin by discussing the \emph{box norm}, which is intimately related to the swap norm. 
\begin{definition}[Box inner product and box norm]
\label{def:box}
For functions $f_1,f_2,f_3,f_4: \A^n \to \bbC$ and $S \subseteq [n]$, 
define the box inner product as
\[ \boxx_S(f_1,f_2,f_3,f_4) := 
\E_{\substack{x_S,y_S \sim \A^S \\ x_{\bar{S}},y_{\bar{S}} \sim \A^{\bar{S}}}} 
\left[f_1(x_S,x_{\bar{S}})f_2(y_S,y_{\bar{S}})\bar{f_3(y_S,x_{\bar{S}})f_4(x_S,y_{\bar{S}})}\right]. \]
In the case that $f_1=f_2=f_3=f_4=f$, we simplify the notation and write $\boxx_S(f) = \boxx_S(f,f,f,f)$.
\end{definition}
It is easily seen that 
$\boxx_S(f_1,f_2,f_3,f_4)$ is a multi-linear form, 
and it is known that $\boxx_S(f)^{1/4}$ is a norm.
Indeed, the following lemma implies that:
\begin{lemma}
\label{lemma:boxstuff}
Let $f_1, f_2, f_3, f_4: \A^n \to \bbC$ and $S \subseteq [n]$. Then
\[ 
|\boxx_S(f_1,f_2,f_3,f_4)|^2 \le 
\boxx_S(f_1,f_3,f_1,f_3) \boxx_S(f_2,f_4,f_2,f_4), \]
\[ 
|\boxx_S(f_1,f_2,f_3,f_4)|^4 \le \boxx_S(f_1) \boxx_S(f_2)\boxx_S(f_3)\boxx_S(f_4), \]
and $\boxx_S(f) \le \|f\|_2^4$ for any $f: \A^n \to \bbC$.
\end{lemma}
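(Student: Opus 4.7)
All three bounds are Cauchy--Schwarz inequalities applied to the four-point form $\boxx_S$ along one of the two axes $S$ or $\bar S$. My plan is to prove the third bound $\boxx_S(f)\le\norm{f}_2^4$ directly as a warm-up, derive the first inequality by a single application of Cauchy--Schwarz, and iterate the first inequality to obtain the second. I would begin with the $\norm{f}_2^4$ bound because it isolates the grouping trick used throughout.

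\textbf{The bound $\boxx_S(f)\le\norm{f}_2^4$.} Conditioning on the pair $(x_{\bar S},y_{\bar S})$, the four evaluations of $f$ in $\boxx_S(f)$ split into one group depending only on $x_S$ and one depending only on $y_S$, so the inner expectation factorises. Setting
\[ A(x_{\bar S},y_{\bar S}):=\E_{x_S}\!\left[f(x_S,x_{\bar S})\overline{f(x_S,y_{\bar S})}\right], \]
the $y_S$-factor equals $\overline{A(x_{\bar S},y_{\bar S})}$ and hence $\boxx_S(f)=\E_{x_{\bar S},y_{\bar S}}\bigl[|A|^2\bigr]$. A pointwise Cauchy--Schwarz in $x_S$ bounds $|A|^2$ by $\E_{x_S}|f(x_S,x_{\bar S})|^2\cdot\E_{x_S}|f(x_S,y_{\bar S})|^2$, and taking expectation over the independent variables $x_{\bar S},y_{\bar S}$ factorises the right-hand side into $\norm{f}_2^2\cdot\norm{f}_2^2$.

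\textbf{The first inequality.} I would write $\boxx_S(f_1,f_2,f_3,f_4)$ as
\[ \E_{x_S,y_S}\!\left[\,\E_{x_{\bar S}}\!\bigl[f_1(x_S,x_{\bar S})\overline{f_3(y_S,x_{\bar S})}\bigr]\cdot\E_{y_{\bar S}}\!\bigl[f_2(y_S,y_{\bar S})\overline{f_4(x_S,y_{\bar S})}\bigr]\,\right], \]
pairing $f_1$ with $f_3$ (they share an $\bar S$-coordinate) and $f_2$ with $f_4$. Applying Cauchy--Schwarz to the outer expectation over $(x_S,y_S)$ squares each inner expectation; expanding $\bigl|\E_{x_{\bar S}}[\cdot]\bigr|^2$ introduces an auxiliary copy $x_{\bar S}'$ of the $\bar S$-variable, and after relabelling $(x_S,y_S,x_{\bar S},x_{\bar S}')$ as the four independent axes of a new box form, the eight-point average collapses to a four-point $\boxx_S$ with $f_1,f_3$ filling all four slots. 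A symmetric computation handles the other factor and supplies the corresponding $\boxx_S$ on $f_2,f_4$.

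\textbf{The second inequality and the main obstacle.} Applying the first inequality with the self-paired tuple $(f_i,f_j,f_i,f_j)$ gives $\boxx_S(f_i,f_j,f_i,f_j)^2\le\boxx_S(f_i)\boxx_S(f_j)$. Multiplying the two instances produced by the previous step and combining with the first inequality yields $|\boxx_S(f_1,f_2,f_3,f_4)|^4\le\prod_{i=1}^4\boxx_S(f_i)$. The one fiddly point is the index bookkeeping inside the first step: depending on which axis one applies Cauchy--Schwarz along, the $f_i$'s in the right-hand side may initially appear in a permuted form, and one has to invoke the symmetries of $\boxx_S$ (in particular invariance under the joint swap $x\leftrightarrow y$ in both axes and the identity $\boxx_S(g_1,g_2,g_3,g_4)=\overline{\boxx_S(g_3,g_4,g_1,g_2)}$) to bring the resulting expression into the stated form. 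No analytic input beyond Cauchy--Schwarz is required.
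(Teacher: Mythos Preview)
Your proposal is correct and follows essentially the same route as the paper: all three bounds come from Cauchy--Schwarz applied to the outer pair of variables after grouping the four evaluations by their shared coordinate. The only cosmetic difference is in the second inequality---you re-apply the already-proved first inequality to the self-paired tuple $(f_i,f_j,f_i,f_j)$, whereas the paper nominally redoes the Cauchy--Schwarz along the other axis (coupling $f_1,f_4$ and $f_2,f_3$) before specialising; both land in the same place with the same effort.
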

\begin{proof}
For the first assertion,
\begin{align*}
    |\boxx_S(f_1,f_2,f_3,f_4)|^2 &= \left|\E_{x_S,y_S \sim \A^S} \Big(\E_{x \sim \A^{\bar{S}}} f_1(x_S, x) \bar{f_3(y_S, x)} \Big) \Big(\E_{x \sim \A^{\bar{S}}} \bar{f_4(x_S, x)} f_2(y_S, x) \Big) \right|^2 \\
    &\le \E_{x_S,y_S \sim \A^S} \Big|\E_{x \sim \A^{\bar{S}}} f_1(x_S, x) \bar{f_3(y_S, x)} \Big|^2 \E_{x_S,y_S \sim \A^S} \Big|\E_{x \sim \A^{\bar{S}}} \bar{f_4(x_S, x)} f_2(y_S, x) \Big|^2 \\
    &= \boxx_S(f_1, f_3, f_1, f_3) \boxx_S(f_2, f_4, f_2, f_4),
\end{align*}
where we have applied the Cauchy-Schwarz inequality.
Repeating this argument 
(but with the coupling $f_1,f_4$ and $f_2,f_3$) gives that 
$\boxx_S(f_1,f_2,f_3,f_4)^2\leq \boxx_S(f_1, f_4, f_1, f_4)\boxx_S(f_2,f_3,f_2,f_3)$, which gives 
$\boxx_S(f_1, f_3, f_1, f_3)^2\leq \boxx_S(f_1)\boxx_S(f_3)$, hence the second assertion.

For the third assertion we note that by Cauchy-Schwarz again
\begin{align*}
    \boxx_S(f,f,f,f) &= 
    \E_{\substack{x_S,y_S \sim \A^S \\ x_{\bar{S}},y_{\bar{S}} \sim \A^{\bar{S}}}} \left[f(x_S,x_{\bar{S}})f(y_S,y_{\bar{S}})\bar{f(y_S,x_{\bar{S}})f(x_S,y_{\bar{S}})}\right] \\
    &\le \left(\E_{\substack{x_S,y_S \sim \A^S \\ x_{\bar{S}},y_{\bar{S}} \sim \A^{\bar{S}}}} |f(x_S,x_{\bar{S}})|^2 |f(y_S,y_{\bar{S}})|^2 \E_{\substack{x_S,y_S \sim \A^S \\ x_{\bar{S}},y_{\bar{S}} \sim \A^{\bar{S}}}} |f(y_S,x_{\bar{S}})|^2 |f(x_S,y_{\bar{S}})|^2\right)^{1/2}\\ 
    &= \|f\|_2^4.
    \qedhere
\end{align*}
\end{proof}
% \begin{lemma}
% \label{lemma:boxstuff}
% Let $f_1, f_2: \A^n \to \bbC$ and $S \subseteq [n]$. Then
% \[ \boxx_S(f_1,f_2,f_1,f_2)^2 \le \boxx_S(f_1,f_1,f_1,f_1) \boxx_S(f_2,f_2,f_2,f_2) \]
% and $\boxx_S(f,f,f,f) \le \|f\|_2^4$ for any $f: \A^n \to \bbC$.
% \end{lemma}
% \begin{proof}
% For the first claim,
% \begin{align*}
%     \boxx_S(f_1,f_2,f_1,f_2)^2 &= \left(\E_{x_S,y_S \sim \A^S} \Big(\E_{x \sim \A^{\bar{S}}} f_1(x_S, x) \bar{f_1(y_S, x)} \Big) \Big(\E_{x \sim \A^{\bar{S}}} \bar{f_2(x_S, x)} f_2(y_S, x) \Big) \right)^2 \\
%     &\le \E_{x_S,y_S \sim \A^S} \Big|\E_{x \sim \A^{\bar{S}}} f_1(x_S, x) \bar{f_1(y_S, x)} \Big|^2 \E_{x_S,y_S \sim \A^S} \Big|\E_{x \sim \A^{\bar{S}}} \bar{f_2(x_S, x)} f_2(y_S, x) \Big|^2 \\
%     &= \boxx_S(f_1, f_1, f_1, f_1) \boxx_S(f_2, f_2, f_2, f_2),
% \end{align*}
% where we have applied the Cauchy-Schwarz inequality. Also,
% \begin{align*}
%     \boxx_S(f,f,f,f) &= \E_{\substack{x_S,y_S \sim \A^S \\ x_{\bar{S}},y_{\bar{S}} \sim \A^{\bar{S}}}} f(x_S,x_{\bar{S}})f(y_S,y_{\bar{S}})\bar{f(y_S,x_{\bar{S}})f(x_S,y_{\bar{S}})} \\
%     &\le \left(\E_{\substack{x_S,y_S \sim \A^S \\ x_{\bar{S}},y_{\bar{S}} \sim \A^{\bar{S}}}} |f(x_S,x_{\bar{S}})|^2 |f(y_S,y_{\bar{S}})|^2 \E_{\substack{x_S,y_S \sim \A^S \\ x_{\bar{S}},y_{\bar{S}} \sim \A^{\bar{S}}}} |f(y_S,x_{\bar{S}})|^2 |f(x_S,y_{\bar{S}})|^2\right)^{1/2} = \|f\|_2^4.
% \end{align*}
% \end{proof}

\subsection{The Relation between the Box Norm and the Swap Norm}
Next, we observe that the swap inner product of 4 functions is the expected box inner product of the same 4 functions over a randomly chosen partition. 
\begin{lemma}
\label{lemma:box}
We have that
\[ \val(f_1,f_2,f_3,f_4) = \E_{S \subseteq [n]} \left[\boxx_S(f_1,f_2,f_3,f_4)\right]. \]
\end{lemma}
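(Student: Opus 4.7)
The plan is to prove this identity by exposing the latent uniform random subset hidden inside the definition of the swap coupling $(x',y') \sim (x \lr y)$, and then conditioning on it. Concretely, sampling $(x',y') \sim (x \lr y)$ can be realized by sampling independent bits $b_1,\dots,b_n \in \{0,1\}$ uniformly and setting $(x_i',y_i') = (x_i,y_i)$ when $b_i=0$ and $(x_i',y_i') = (y_i,x_i)$ when $b_i=1$. If we let $S := \{i \in [n] : b_i = 1\}$ denote the (random) set of swapped coordinates, then $S$ is distributed uniformly over subsets of $[n]$.

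Conditional on $S$, the identities $x_i' = y_i$ for $i \in S$ and $x_i' = x_i$ for $i \in \bar S$ can be written compactly as $x' = (y_S, x_{\bar S})$, and similarly $y' = (x_S, y_{\bar S})$. Plugging this into the definition of $\val$ and using the tower property yields
\[
\val(f_1,f_2,f_3,f_4) = \E_{S \subseteq [n]} \E_{x,y \sim \A^n}\!\left[f_1(x)f_2(y)\bar{f_3(y_S, x_{\bar S}) f_4(x_S, y_{\bar S})}\right].
\]
The inner expectation, once we split the draws of $x$ and $y$ coordinatewise into their $S$- and $\bar S$-parts $x_S, x_{\bar S}, y_S, y_{\bar S}$ (all independent and uniform on their respective alphabets), is exactly the definition of $\boxx_S(f_1,f_2,f_3,f_4)$ from \cref{def:box}. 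Taking the outer expectation gives the claim.

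There is no substantive obstacle: the statement is a structural identity, and the only content is the coupling observation that randomizing whether to swap each coordinate independently is the same as first choosing a uniformly random subset $S$ of coordinates to swap and then performing the swap deterministically on $S$. Consequently I expect the written proof to be short, essentially a single displayed equation together with the remark identifying the random swap set with a uniformly random subset of $[n]$.
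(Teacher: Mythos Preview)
Your proposal is correct and follows exactly the paper's approach: introduce the random set $S$ of swapped coordinates, observe it is uniform over subsets of $[n]$, and note that conditioning on $S$ yields $\boxx_S(f_1,f_2,f_3,f_4)$. The paper's proof is essentially the one-sentence version of what you wrote.
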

\begin{proof}
Looking at the definition of $\val(f_1,f_2,f_3,f_4)$, we denote by 
$S$ the random set of coordinates where $(x_i', y_i') = (y_i, x_i)$, i.e. where there was swap.
Conditioning on $S$, it can be seen that the expectation is equal to exactly $\boxx_S(f_1,f_2,f_3,f_4)$, and as the distribution of $S$ is uniform over all subsets of $[n]$ the claim follows.
\end{proof}

Using this connection,~\Cref{lemma:boxstuff} implies analogous properties for the swap norm.
\begin{lemma}
\label{lemma:l2}
We have that $\val(f_1,f_2,f_3,f_4) \le \|f_1\|_2\|f_2\|_2\|f_3\|_2\|f_4\|_2$.
\end{lemma}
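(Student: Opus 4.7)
The plan is to combine the two preceding lemmas in a single line. By Lemma \ref{lemma:box}, I can write $\val(f_1,f_2,f_3,f_4) = \E_{S \subseteq [n]}[\boxx_S(f_1,f_2,f_3,f_4)]$, and since the swap inner product is generally complex-valued, the statement should be read as a bound on $|\val(f_1,f_2,f_3,f_4)|$; the triangle inequality then gives
\[
|\val(f_1,f_2,f_3,f_4)| \le \E_{S \subseteq [n]} |\boxx_S(f_1,f_2,f_3,f_4)|.
\]
So it suffices to bound $|\boxx_S(f_1,f_2,f_3,f_4)|$ pointwise in $S$ by $\prod_i \|f_i\|_2$.

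To do this, I would invoke the second inequality of Lemma \ref{lemma:boxstuff}, namely $|\boxx_S(f_1,f_2,f_3,f_4)|^4 \le \boxx_S(f_1)\boxx_S(f_2)\boxx_S(f_3)\boxx_S(f_4)$, and then substitute the third bound from the same lemma, $\boxx_S(f_i) \le \|f_i\|_2^4$, for each $i \in \{1,2,3,4\}$. Combining, $|\boxx_S(f_1,f_2,f_3,f_4)| \le \|f_1\|_2\|f_2\|_2\|f_3\|_2\|f_4\|_2$ holds for every $S$, and taking the expectation over $S$ in the display above completes the proof. There is no real obstacle here: the lemma is essentially an immediate corollary of Lemmas \ref{lemma:box} and \ref{lemma:boxstuff}, with the only minor subtlety being the need to pass absolute values inside the expectation via the triangle inequality before applying the pointwise Cauchy--Schwarz bound on $\boxx_S$.
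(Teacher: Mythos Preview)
Your proof is correct and essentially identical to the paper's: both expand $\val$ as an expectation of $\boxx_S$ via Lemma~\ref{lemma:box}, apply the Gowers--Cauchy--Schwarz inequality $|\boxx_S(f_1,f_2,f_3,f_4)|^4 \le \prod_i \boxx_S(f_i)$ from Lemma~\ref{lemma:boxstuff}, and then bound each $\boxx_S(f_i) \le \|f_i\|_2^4$. Your explicit handling of absolute values via the triangle inequality is a small clarifying addition over the paper's slightly more terse version.
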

\begin{proof}
By~\Cref{lemma:box}
\begin{align*}
\val(f_1,f_2,f_3,f_4) 
&= \E_{S \subseteq [n]} \left[\boxx_S(f_1,f_2,f_3,f_4)\right]\\
&\leq 
\E_{S \subseteq [n]} \left[\boxx_S(f_1)^{1/4}
\boxx_S(f_2)^{1/4}\boxx_S(f_3)^{1/4}
\boxx_S(f_4)^{1/4}\right]\\
&\leq 
\norm{f_1}_2
\norm{f_2}_2
\norm{f_3}_2
\norm{f_4}_2.\qedhere
\end{align*}
\end{proof}

We also get the following Cauchy-Schwarz-Gowers type inequalities for the swap norms.
\begin{lemma}
\label{lemma:cs1}
For any $f_1, f_2, f_3, f_4$ we have that:
\begin{equation} |\val(f_1,f_2,f_3,f_4)|^2 \le \val(f_1,f_2,f_1,f_2)\val(f_3,f_4,f_3,f_4) \label{eq:part1}
\end{equation} and
\begin{equation}
    \val(f_1,f_2,f_1,f_2)^2 \le \val(f_1,f_1,f_1,f_1)\val(f_2,f_2,f_2,f_2). \label{eq:part2}
\end{equation}
\end{lemma}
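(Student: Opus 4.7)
\textbf{Proof plan for \cref{lemma:cs1}.} The plan is to handle the two inequalities via complementary approaches: for \eqref{eq:part1} I will re-express $\val$ using an auxiliary ``common-randomness'' representation and apply Cauchy-Schwarz on the common part, while for \eqref{eq:part2} I will reduce to the box version via \cref{lemma:box} and appeal to \cref{lemma:boxstuff}.

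For \eqref{eq:part1}, the key observation is that the joint distribution of $(x,y,x',y')$ appearing in $\val$ admits the following alternative sampling. Draw $a,b \sim \A^n$ uniformly and independent swap patterns $\epsilon,\epsilon' \in \{0,1\}^n$ uniformly; set $(x_i,y_i) = (a_i,b_i)$ if $\epsilon_i = 0$ and $(b_i,a_i)$ otherwise, and likewise define $(x',y')$ from $(a,b,\epsilon')$. A direct check shows this produces the correct marginals and coupling. Under this representation the roles of $\epsilon$ and $\epsilon'$ decouple once $(a,b)$ is fixed, and hence
\[
\val(f_1,f_2,f_3,f_4) = \E_{a,b}\Big[\E_{\epsilon}\bigl[f_1(x_\epsilon)f_2(y_\epsilon)\bigr]\cdot \overline{\E_{\epsilon'}\bigl[f_3(x_{\epsilon'})f_4(y_{\epsilon'})\bigr]}\Big].
\]
Applying Cauchy-Schwarz over the outer expectation in $(a,b)$ and expanding each resulting square by introducing fresh independent copies of the swap patterns, each squared factor re-assembles into a swap inner product of the form $\val(f_i,f_j,f_i,f_j)$, giving exactly \eqref{eq:part1}. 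I expect this ``alternative sampling'' step to be the only mildly subtle move; after that the argument is a one-line Cauchy-Schwarz.

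For \eqref{eq:part2}, I would instead work entirely at the box level. By \cref{lemma:box}, $\val(f_1,f_2,f_1,f_2) = \E_S[\boxx_S(f_1,f_2,f_1,f_2)]$. Specializing the first inequality of \cref{lemma:boxstuff} to $(f_1,f_2,f_1,f_2)$ gives $|\boxx_S(f_1,f_2,f_1,f_2)|^2 \le \boxx_S(f_1)\boxx_S(f_2)$ for every $S$. Taking square roots and then applying Cauchy-Schwarz over the uniform choice of $S \subseteq [n]$ yields
\[
\val(f_1,f_2,f_1,f_2) \le \E_S\sqrt{\boxx_S(f_1)\boxx_S(f_2)} \le \sqrt{\E_S[\boxx_S(f_1)]\cdot \E_S[\boxx_S(f_2)]} = \sqrt{\val(f_1)\val(f_2)},
\]
and squaring gives \eqref{eq:part2}.

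The main conceptual point to be careful about is that the pairing in \eqref{eq:part1} differs from the one available at the box level: \cref{lemma:boxstuff} naturally pairs the first and third (resp.\ second and fourth) arguments, whereas \eqref{eq:part1} pairs the first and second (resp.\ third and fourth). This is why we cannot simply average the pointwise box inequality over $S$ for \eqref{eq:part1}, and must instead use the alternative sampling to apply Cauchy-Schwarz over the shared variables $(a,b)$ common to both halves. Beyond this, the argument is a routine double application of Cauchy-Schwarz and should go through without further obstacles.
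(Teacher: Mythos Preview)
Your proposal is correct and essentially matches the paper's proof: your ``alternative sampling'' representation for \eqref{eq:part1} is exactly the identity the paper records as \eqref{eq:iden} (with your $(a,b)$ playing the role of the outer $(x,y)$ and your swap patterns $\epsilon,\epsilon'$ giving the two independent inner draws from $(x\lr y)$), after which Cauchy--Schwarz finishes the job, and your argument for \eqref{eq:part2} via \cref{lemma:box} and \cref{lemma:boxstuff} is identical to the paper's. The one implicit step worth making explicit is that $\boxx_S(f_1,f_2,f_1,f_2)\ge 0$ (it equals $\E_{x_{\bar S},y_{\bar S}}\bigl|\E_{z_S} f_1(z_S,x_{\bar S})\overline{f_2(z_S,y_{\bar S})}\bigr|^2$), so that you may pass from $\E_S[\boxx_S(f_1,f_2,f_1,f_2)]$ to $\E_S\sqrt{\boxx_S(f_1)\boxx_S(f_2)}$ without an absolute value issue.
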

\begin{proof}
The first follows from the identity
\begin{align}
\val(f_1,f_2,f_3,f_4) 
&= \E_{x,y\sim\A^n}\left[f_1(x)f_2(y) \cdot \E_{(x',y') \sim (x \lr y)} \bar{f_3(x')f_4(y')}\right]\notag\\
&=  \E_{x,y\sim\A^n} 
\left[\E_{(x',y') \sim (x \lr y)} f_1(x')f_2(y') \cdot \E_{(x',y') \sim (x \lr y)} \bar{f_3(x')f_4(y')}\right].
\label{eq:iden}
\end{align}
The second equality follows because $\E_{(x',y') \sim (x \lr y)} \bar{f_3(x')f_4(y')}$ is the same for all $(x'', y'')$ in the support of $(x \lr y)$. Now, \eqref{eq:part1} follows from the Cauchy-Schwarz inequality. For~\eqref{eq:part2}, we note:
\begin{align*}
    \val(f_1,f_2,f_1,f_2)^2 &= \left( \E_S \boxx_S(f_1,f_2,f_1,f_2) \right)^2\\ 
    &\le \left(\E_S \boxx_S(f_1,f_1,f_1,f_1)^{1/2} \boxx_S(f_2,f_2,f_2,f_2)^{1/2} \right)^2 \\
    &\le \E_S \boxx_S(f_1,f_1,f_1,f_1) \E_S \boxx_S(f_2,f_2,f_2,f_2) \\
    &= \val(f_1,f_1,f_1,f_1) \val(f_2,f_2,f_2,f_2).
\end{align*}
Here, the first equality is \cref{lemma:box}, and the first inequality is \cref{lemma:boxstuff}.
\end{proof}
Applying \cref{lemma:cs1} implies that $\val(f)^{1/4}$ is a norm.
\begin{corollary}
\label{cor:norm}
$\val(f)^{1/4}$ is a norm on functions $f$.
\end{corollary}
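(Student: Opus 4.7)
The plan is to verify the three norm axioms—absolute homogeneity, definiteness, and the triangle inequality—each by leveraging results already established. Homogeneity is immediate from the multilinear structure: since $\val(f_1,f_2,f_3,f_4)$ is linear in its first two slots and conjugate-linear in its last two, replacing $f$ by $cf$ pulls out a factor of $c\cdot c\cdot \bar c\cdot \bar c=|c|^4$, giving $\val(cf)^{1/4}=|c|\,\val(f)^{1/4}$.

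For non-negativity and definiteness, I will use~\Cref{lemma:box} to write $\val(f)=\E_{S\subseteq[n]}\boxx_S(f)$, and observe that
\[
\boxx_S(f)=\E_{x_S,y_S\sim \A^S}\Bigl|\E_{z\sim \A^{\bar S}}f(x_S,z)\overline{f(y_S,z)}\Bigr|^2\ge 0,
\]
so $\val(f)\ge 0$. Taking $S=[n]$ in the same identity gives $\boxx_{[n]}(f)=\|f\|_2^4$, so $\val(f)\ge 2^{-n}\|f\|_2^4$; hence $\val(f)=0$ forces $f\equiv 0$.

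The main content is the triangle inequality, which I intend to reduce to a generalized Cauchy--Schwarz--Gowers bound via~\Cref{lemma:cs1}. Specifically, combining \eqref{eq:part1} with two applications of \eqref{eq:part2} yields
\[
|\val(f_1,f_2,f_3,f_4)|\le \val(f_1)^{1/4}\val(f_2)^{1/4}\val(f_3)^{1/4}\val(f_4)^{1/4}
\]
for arbitrary $f_1,f_2,f_3,f_4$. Then, expanding $\val(f+g)=\val(f+g,f+g,f+g,f+g)$ via multilinearity produces $2^4=16$ terms of the form $\val(h_1,h_2,h_3,h_4)$ with each $h_i\in\{f,g\}$, and applying the generalized bound slotwise gives
\[
\val(f+g)\le \sum_{k=0}^{4}\binom{4}{k}\val(f)^{k/4}\val(g)^{(4-k)/4}=\bigl(\val(f)^{1/4}+\val(g)^{1/4}\bigr)^{4}.
\]
Taking fourth roots yields the triangle inequality.

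The only mildly delicate step is assembling the fourth-power Cauchy--Schwarz--Gowers estimate from~\Cref{lemma:cs1}, but this is routine bookkeeping with no real obstruction; everything else is a direct reading of the definitions or an immediate consequence of~\Cref{lemma:box}.
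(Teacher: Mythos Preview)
Your proof is correct and follows essentially the same route as the paper: the triangle inequality is obtained by expanding $\val(f+g)$ multilinearly into sixteen terms and bounding each via the Cauchy--Schwarz--Gowers inequality assembled from \cref{lemma:cs1}, exactly as the paper does. Your handling of nonnegativity and definiteness through the box-norm decomposition (\cref{lemma:box}), and in particular the bound $\val(f)\ge 2^{-n}\|f\|_2^4$, is more explicit than the paper's one-line appeal to \eqref{eq:iden}, but the substance is the same.
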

\begin{proof}
Clearly it obeys scaling and nonnegativity by \eqref{eq:iden}. By \cref{lemma:cs1} we get that:
\begin{align*}
    \val(f_1+f_2) &= \sum_{(a,b,c,d)\in\{1,2\}^4} \val(f_a,f_b,f_c,f_d) \\
    &\le \sum_{(a,b,c,d)\in\{1,2\}^4} \val(f_a)^{1/4}\val(f_b)^{1/4}\val(f_c)^{1/4}\val(f_d)^{1/4}
    \\ &= (\val(f_1)^{1/4} + \val(f_2)^{1/4})^4. \enspace \qedhere
\end{align*}
\end{proof}

The next lemma proves that $\val(f)^{1/2}$ is non-decreasing in expectation under random restrictions.
\begin{lemma}
\label{lemma:valrr}
For a subset $I \subseteq [n]$ and $f: \A^n \to \bbC$ we have that
\[ \val(f,f,f,f)^{1/2} \le \E_{z \sim \A^I}\left[\val(f_{I\to z},f_{I\to z},f_{I\to z},f_{I\to z})^{1/2} \right]. \]
\end{lemma}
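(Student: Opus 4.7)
The plan is to split the swap operation $(x\lr y)$ into its $I$-part and $\bar I$-part, introduce an intermediate function $G$ capturing only the $\bar I$-swap, apply Jensen's inequality on the $I$-swap, and apply~\eqref{eq:part2} of~\cref{lemma:cs1} on the $\bar I$-swap.

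Concretely, my starting point is the identity
\[ \val(f) = \E_{x,y\sim\A^n}\Bigl|\E_{(x',y')\sim(x\lr y)}[f(x')f(y')]\Bigr|^2, \]
which is the $f_1=f_2=f_3=f_4=f$ specialisation of~\eqref{eq:iden}. Let $(x\lr^I y)$ denote the variant of~\cref{def:swap} that independently swaps only the coordinates in $I$ and leaves the $\bar I$-coordinates fixed, and define $(x\lr^{\bar I} y)$ analogously. Because~\cref{def:swap} swaps each coordinate independently, a full $(x\lr y)$-swap factors as the composition of an $I$-swap followed by an independent $\bar I$-swap. Setting
\[ G(x,y) := \E_{(x',y')\sim(x\lr^{\bar I}y)}\bigl[f(x')f(y')\bigr], \]
one therefore has $\E_{(x',y')\sim(x\lr y)}[f(x')f(y')] = \E_{(w,z)\sim(x\lr^I y)}[G(w,z)]$. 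Jensen's inequality together with the fact that the $I$-swap is measure-preserving on $\A^I\times\A^I$ yields
\[ \val(f) \leq \E_{x,y}\E_{(w,z)\sim(x\lr^I y)}|G(w,z)|^2 = \|G\|_2^2. \]

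For the second step, for fixed $u,v\in\A^I$ and $x_{\bar I},y_{\bar I}\in\A^{\bar I}$, $G$ restricts on this slice to
\[ G\bigl((u,x_{\bar I}),(v,y_{\bar I})\bigr) = \E_{(x',y')\sim(x_{\bar I}\lr y_{\bar I})}\bigl[f_{I\to u}(x')\,f_{I\to v}(y')\bigr]. \]
The $(f_1,f_2,f_1,f_2)$-specialisation of~\eqref{eq:iden}, applied with $f_1=f_{I\to u}$ and $f_2=f_{I\to v}$ on $\A^{\bar I}$, identifies $\E_{x_{\bar I},y_{\bar I}}|G((u,x_{\bar I}),(v,y_{\bar I}))|^2$ with $\val(f_{I\to u},f_{I\to v},f_{I\to u},f_{I\to v})$, and~\eqref{eq:part2} of~\cref{lemma:cs1} bounds this by $\val(f_{I\to u})^{1/2}\val(f_{I\to v})^{1/2}$. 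Averaging over $u,v\in\A^I$ then gives $\|G\|_2^2 \leq \bigl(\E_{z\sim\A^I}\val(f_{I\to z})^{1/2}\bigr)^2$, and chaining the two steps and taking a square root finishes the proof.

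I do not anticipate a serious obstacle: once the swap is decomposed into its $I$- and $\bar I$-pieces, the argument reduces to a short two-step chain (Jensen on the $I$-part, Cauchy--Schwarz on the $\bar I$-part). The only mild bookkeeping is checking that an independent $I$-swap composed with a $\bar I$-swap has the same joint distribution as a full $(x\lr y)$-swap, which is immediate from the per-coordinate independence in~\cref{def:swap}.
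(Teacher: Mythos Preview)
Your proof is correct and follows essentially the same route as the paper: split the swap into its $I$- and $\bar I$-parts and apply \cref{lemma:cs1}. Your use of Jensen on the $I$-swap to pass directly to the two-function form $\val(f_{I\to u},f_{I\to v},f_{I\to u},f_{I\to v})$ (needing only \eqref{eq:part2}) is a mild streamlining of the paper's version, which instead keeps the four-function identity $\val(f)=\E_{x,y,(x',y')\sim(x\lr y)}\val(f_{I\to x},f_{I\to y},f_{I\to x'},f_{I\to y'})$ and then applies both \eqref{eq:part1} and \eqref{eq:part2} followed by one more Cauchy--Schwarz.
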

\begin{proof}
Note that
\begin{align*} \val(f,f,f,f) &= \E_{\substack{x\sim\A^I,y\sim\A^I \\ (x',y') \sim (x \lr y)}} \val(f_{I\to x}, f_{I\to y}, f_{I \to x'}, f_{I \to y'}) \\
&\le \E_{\substack{x\sim\A^I,y\sim\A^I \\ (x',y') \sim (x \lr y)}} \val(f_{I\to x})^{1/4} \val(f_{I\to y})^{1/4} \val(f_{I \to x'})^{1/4} \val(f_{I \to y'})^{1/4} \\
&\le \left(\E_{\substack{x\sim\A^I,y\sim\A^I \\ (x',y') \sim (x \lr y)}} \val(f_{I\to x})^{1/2} \val(f_{I\to y})^{1/2} \E_{\substack{x\sim\A^I,y\sim\A^I \\ (x',y') \sim (x \lr y)}} \val(f_{I\to x'})^{1/2} \val(f_{I\to y'})^{1/2}\right)^{1/2} \\
&= \E_{\substack{x\sim\A^I,y\sim\A^I \\ (x',y') \sim (x \lr y)}} \val(f_{I\to x})^{1/2} \val(f_{I\to y})^{1/2} = \left(\E_{x\sim\A^I} \val(f_{I\to x})^{1/2}\right)^2.
\end{align*}
Here the first inequality uses \cref{lemma:cs1} and the second uses Cauchy-Schwarz.
\end{proof}

\subsection{3-Wise Correlation Implies Large Swap Norm}
\label{sec:3wise}

The goal of this section is to establish \cref{thm:toswap}.

\subsubsection{Path Tricks}
\label{sec:pathtrick}
In this section we discuss the \emph{path trick}, which is a tool that allows us to modify the underlying distribution $\mu$ to get additional desirable properties from it.
We start by defining the distribution resulting from a path trick of $r$ steps on $\mu$.
\begin{definition}
\label{def:pathtrick}
Given a distribution $\mu$ on $\A \times \B \times \C$ and integer $r \ge 1$, define the \emph{$x$ path trick} distribution $\mu^+$ on $\A^+ \times \B \times \C$ (where $\A^+ \subseteq \A^{2^r-1}$) as follows:
\begin{enumerate}
    \item Sample $y_1 \sim \mu_y$.
    \item Sample $(x_1, z_1)$ from $\mu$, conditioned on $y = y_1$.
    \item Sample $(x_1', y_2)$ from $\mu$, conditioned on $z = z_1$.
    \item Repeat steps 2 and 3, and in the final step sample $(x_{2^r}, y_{2^{r-1}+1})$ from $\mu$ conditioned on $z = z_{2^{r-1}}$.
\end{enumerate}
$\mu^+$ is over the sequences $((x_1, x_1', \dots, x_{2^{r-1}-1}', x_{2^{r-1}}), y_1, z_{2^{r-1}})$ generated by the process above.
\end{definition}
The $y$ and $z$ path tricks can be defined symmetrically. One way to think of the $x$-path trick is as considering the bipartite graph between $\Gamma$ and $\Phi$ that has edges between $(y,z)$ if there is $x$ such that $(x,y,z)$ is in the support of $\mu$, and in that case the edge is labeled by $x$. Path tricks correspond to walks in this bipartite graph, and the resulting symbol in the first coordinate corresponds to the edge labels collected throughout the walk.
The following result from \cite{BKM1} asserts that if $f,g,h$ have non-trivial correlation with respect to $\mu$, then $f^{+}$, $g$ and $h'$ have non-trivial correlation with respect to the path trick distribution $\mu^{+}$, where $f^{+}$ is an alternating conjugated product of $f$ along the labels and $h'$ may be an arbitrary $1$-bounded function. We reproduce its proof below for the reader's convenience.
\begin{lemma}[Path trick]
\label{lemma:pathtrick}
Let $\mu$ be a distribution on $\A \times \B \times \C$. If $1$-bounded functions $f: \A^n \to \bbC$, $g: \B^n \to \bbC$ and $h: \C^n \to \bbC$ satisfy $|\E_{(x,y,z) \sim \mu^{\otimes n}}[f(x)g(y)h(z)] \ge \eps$, then for any $r \ge 1$ and $\mu^+$ defined as above for an $r$-step path trick, there are $1$-bounded functions $f^+: (\A^+)^n \to \bbC$, $g: \B^n \to \bbC$, $\wt{h}: \C^n \to \bbC$ satisfying:
\[ \Big|\E_{(x,y,z) \sim (\mu^+)^{\otimes n}}[f^+(x)g(y)\wt{h}(z)]\Big| \ge \eps^{2^r}. \]
Additionally for $x = (x_1, x_1', \dots, x_{2^{r-1}}) \in (\A^+)^n$ it holds that $f^+(x) = \prod_{i=1}^{2^{r-1}} f(x_i) \prod_{i=1}^{2^{r-1}-1} \bar{f(x_i')}$.
\end{lemma}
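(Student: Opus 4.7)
The plan is to proceed by induction on $r \ge 1$, with each inductive step driven by a single application of Cauchy--Schwarz that simultaneously doubles the exponent of $\eps$ and extends the walk underlying $\mu^+$.

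For the base case $r = 1$, the sampling procedure defining $\mu^+$ reduces to first drawing $y_1 \sim \mu_y$ and then $(x_1, z_1) \sim \mu \mid y_1$; the resulting joint distribution is exactly $\mu$ itself. Taking $f^+ := f$ and $\widetilde h := h$, the hypothesis directly gives $|\E_{\mu^+}[f^+(x) g(y) \widetilde h(z)]| = |\E_\mu[f g h]| \ge \eps \ge \eps^2 = \eps^{2^1}$, where the last inequality uses that $\eps \le 1$ since $f, g, h$ are $1$-bounded.

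For the inductive step $r \to r+1$, assume there are $1$-bounded $f^+, \widetilde h$ with $|T_r| := |\E_{\mu_r^+}[f^+(x) g(y_1) \widetilde h(z_{2^{r-1}})]| \ge \eps^{2^r}$, where $\mu_r^+$ denotes the level-$r$ path trick distribution. Isolating the starting vertex, write $T_r = \E_{y_1}[g(y_1) R(y_1)]$ with the $1$-bounded $R(y_1) := \E_{\text{rest}\mid y_1}[f^+(x) \widetilde h(z_{2^{r-1}})]$, and apply Cauchy--Schwarz, absorbing $|g|^2 \le 1$:
\[
\eps^{2^{r+1}} \le |T_r|^2 \le \E_{y_1}|R(y_1)|^2 = \E_{y_1,\, A,\, B\mid y_1}\bigl[f^+(x^A)\,\widetilde h(z^A)\,\overline{f^+(x^B)\widetilde h(z^B)}\bigr],
\]
where $A, B$ denote two independent copies of the remainder of the walk conditional on $y_1$. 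The remaining task is to identify this right-hand side with a correlation over $\mu_{r+1}^+$: reversing the orientation of walk $A$ and concatenating with walk $B$ through the common vertex $y_1$ assembles a simple walk, whose $x$-labels naturally live in the enlarged alphabet $\A^+$ of $\mu_{r+1}^+$. The function $f^+(x^A)\overline{f^+(x^B)}$ collects into the prescribed alternating $f/\bar f$ product (conjugation along the reversed $A$ swaps the roles of primed and unprimed coordinates consistently), and the endpoint factors $\widetilde h(z^A)\overline{\widetilde h(z^B)}$ consolidate into the new $1$-bounded $\widetilde h_{r+1}$ after marginalizing appropriately over the other endpoint.

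The main obstacle I anticipate is verifying that the joint distribution of two walks pinned at the common $y_1$ (produced by Cauchy--Schwarz) agrees with the distribution of $\mu_{r+1}^+$ under this re-indexing. This rests on the Markov property of the alternating random walk on the bipartite support graph of $\mu$, together with careful bookkeeping of which edges become primed versus unprimed in the enlarged alphabet. A secondary, routine check is that the resulting $\widetilde h_{r+1}$ remains $1$-bounded, which follows immediately from $|\widetilde h| \le 1$.
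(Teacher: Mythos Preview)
There is a genuine gap in the inductive step. When you apply Cauchy--Schwarz at the vertex $y_1$ and absorb $|g(y_1)|^2\le 1$, the factor $g$ disappears from the expression entirely; the quantity $\E_{y_1}|R(y_1)|^2$ contains only $f$'s and $\wt h$'s. There is then no way to write it as $\E_{\mu_{r+1}^+}[f^+(x)\,g(y)\,\wt h_{r+1}(z)]$, since that expectation must carry a $g$ at the starting $y$-vertex. Even the underlying combinatorics fails: each level-$r$ walk has $2^r-1$ edges, so gluing two of them at $y_1$ yields a walk of length $2^{r+1}-2$, whereas the level-$(r{+}1)$ walk has $2^{r+1}-1$ edges; moreover the glued walk runs from a $z$-vertex to a $z$-vertex, not from $y$ to $z$ as required by $\mu_{r+1}^+$. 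Your ``marginalize the other endpoint'' move cannot manufacture the missing edge or reintroduce $g$.

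The paper's proof avoids this by setting up a slightly different auxiliary induction. It first replaces $h$ by the specific function $\wt h(z):=\E_{(x,y)\sim\mu|z}[\,\overline{f(x)g(y)}\,]$, which encodes one extra step of the walk. The inductive claim is then stated for a walk of length $2^r$ (not $2^r-1$), running from $y_1$ to $y_{2^{r-1}+1}$ and carrying the factor $g(y_1)\,\overline{g(y_{2^{r-1}+1})}$. Cauchy--Schwarz is applied at the \emph{far} endpoint $y_{2^{r-1}+1}$; squaring then produces a fresh $g$ at one end and $\overline g$ at the other, and the two length-$2^r$ walks glue to a length-$2^{r+1}$ walk, which is exactly the level-$(r{+}1)$ auxiliary claim. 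Only at the very end is the specific form of $\wt h$ used to absorb the final extra edge back, yielding the $2^{r}-1$ edges of $\mu^+$ and the stated $f^+$. Your outline can be repaired by adopting this auxiliary induction (pivot at the $\overline g$-end, not the $g$-end), but as written the induction does not close.
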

\begin{proof}
We start by performing a simple reduction on $h$. Define \[ \wt{h}(z) := \E_{(x,y,z') \sim \mu^{\otimes n}}[\bar{f(x)g(y)} | z' = z]. \] Then note that
\begin{align*} \left| \E_{(x,y,z)\sim\mu^{\otimes n}}[f(x)g(y)h(z)] \right| &= \left| \E_{z\sim\mu_z}[h(z)\bar{\wt{h}(z)}] \right| \le \|h\|_2 \|\wt{h}\|_2 \le \|\wt{h}\|_2 = \E_{(x,y,z)\sim\mu^{\otimes n}}[f(x)g(y)\wt{h}(z)]^{1/2}. \end{align*}
Hence $\left| \E_{(x,y,z)\sim\mu^{\otimes n}}[f(x)g(y)\wt{h}(z)] \right| \ge \eps^2$. The remainder of the proof involves repeated application of Cauchy-Schwarz. We show the following by induction on $r$:
\begin{align} \left| \mathop{\E}_{\substack{x_1,x_1',\dots,x_{2^{r-1}}, x_{2^{r-1}}' \\ y_1, \dots, y_{2^{r-1}+1} \\ z_1, \dots, z_{2^{r-1}}}}\left[\prod_{i=1}^{2^{r-1}} f(x_i) \bar{f(x_i')} \cdot g(y_1) \bar{g(y_{2^{r-1} + 1})} \right] \right| \ge \eps^{2^r}. \label{eq:biginduct} \end{align}
Let us check the base case $r = 1$.
\begin{align*}
\eps^2 &\le \left| \E_{(x,y,z)\sim\mu^{\otimes n}}[f(x)g(y)\wt{h}(z)] \right| = \left| \mathop{\E}_{\substack{z_1 \\ x_1, y_1, x_1', y_2}}[f(x_1) \bar{f(x_1')} g(y_1) \bar{g(y_2)}] \right|.
\end{align*}
The inductive step will follow by squaring this, and using Cauchy-Schwarz on the variables other than $y_{2^{r-1}+1}$. Precisely,
\begin{align*}
    \eps^{2^{r+1}} &\le \left| \mathop{\E}_{\substack{x_1,x_1',\dots,x_{2^{r-1}}, x_{2^{r-1}}' \\ y_1, \dots, y_{2^{r-1}+1} \\ z_1, \dots, z_{2^{r-1}}}}\left[\prod_{i=1}^{2^{r-1}} f(x_i) \bar{f(x_i')} \cdot g(y_1) \bar{g(y_{2^{r-1}+1})} \right] \right|^2 \\
    &\le \mathop{\E}_{y_{2^{r-1}+1}} \left| \mathop{\E}_{\substack{x_1,x_1',\dots,x_{2^{r-1}}, x_{2^{r-1}}' \\ y_1, \dots, y_{2^{r-1}} \\ z_1, \dots, z_{2^{r-1}}}}\left[\prod_{i=1}^{2^{r-1}} f(x_i) \bar{f(x_i')} \cdot g(y_1) \right] \right|^2.
\end{align*}
Note that this last expression exactly equals \eqref{eq:biginduct} for $r+1$, completing the induction. Finally, the definition of $\wt{h}$ gives that the LHS expression in \eqref{eq:biginduct} can be expressed as:
\begin{align*} \left| \mathop{\E}_{\substack{x_1,x_1',\dots,x_{2^{r-1}} \\ y_1, \dots, y_{2^{r-1}} \\ z_1, \dots, z_{2^{r-1}}}}\left[\prod_{i=1}^{2^{r-1}} f(x_i) \prod_{i=1}^{2^{r-1}-1} \bar{f(x_i')} \cdot g(y_1) \wt{h}(z_{2^{r-1}}) \right] \right|.
\end{align*}
So we can set $f^+(x) = \prod_{i=1}^{2^{r-1}} f(x_i) \prod_{i=1}^{2^{r-1}-1} \bar{f(x_i')}$. This is exactly what we wanted to prove.
\end{proof}

\paragraph{Pairwise-connectedness:} We remark that it is easy to observe that if 
$\mu$ is pairwise-connected, then $\mu^{+}$ is also pairwise-connected. In fact, it improves connectivity between two of the coordinates: if $r$ is taken so that $2^r$ is larger than $|\Gamma|,|\Phi|$, then $\supp(\mu^{+}_{y,z})=\Gamma\times \Phi$; see~\cite[Lemma 3.12]{BKM4} for a proof. This fact will be useful for us later on.

\subsubsection{A Symmetry Argument}
\label{subsec:symmetry}

Now we prove \cref{thm:toswap} by applying a sequence of path tricks, and then applying a symmetry argument. We start by applying the following sequence of three path tricks.
\begin{enumerate}
    \item Apply a $y$ path trick with parameter $r$ satisfying $2^r \ge \max\{|\A|, |\C|\}$.
    \item Apply a $z$ path trick with parameter $r$ satisfying $2^r \ge \max\{|\A|, |\B^+|\}$.
    \item Apply a $x$ path trick with parameter $r = 2$.
\end{enumerate}
After this sequence of operations, we know that there is a distribution $\mu^+$ on $\A^+ \times \B^+ \times \C^+$ such that $\A^+ \subseteq \A^3$, and $1$-bounded functions $g: (\B^+)^n \to \bbC$, $h: (\C^+)^n \to \bbC$ such that
\begin{equation}
    \left|\E_{\substack{(x,y,z) \sim (\mu^+)^n \\ x = (x_1,x_2,x_3)}} f(x_1)\bar{f(x_2)}f(x_3)g(y)h(z) \right| \ge \eps^{O(1)}. \label{eq:3path}
\end{equation}
By doing the path tricks in this specific order, we are able to guarantee the following nice property of $\supp(\mu^+)$ that allows us to then exploit symmetry.
\begin{lemma}
\label{lemma:suppmu}
Let distribution $\mu^+$ over $\A^+ \times \B^+ \times \C^+$ be as defined above. For any $a, b \in \A$ there are $y \in \B^+, z \in \C^+$ such that both $((a,a,b),y,z) \in \supp(\mu^+)$ and $((b,a,a),y,z) \in \supp(\mu^+)$.
\end{lemma}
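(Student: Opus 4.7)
Let $\mu^{(1)}, \mu^{(2)}$ denote the intermediate distributions after Steps 1 and 2, so that $\mu^+$ arises from the $x$-path trick (with $r=2$) applied to $\mu^{(2)}$. By the remark following~\cref{lemma:pathtrick}, Step 1 guarantees $\supp(\mu^{(1)}_{xz}) = \A \times \C$ and Step 2 guarantees $\supp(\mu^{(2)}_{xy}) = \A \times \B^+$. Unwinding~\cref{def:pathtrick} for $r = 2$, the condition $((x_1, x_1', x_2), y, z) \in \supp(\mu^+)$ is equivalent to the existence of $y_2 \in \B^+$ and $z_1 \in \C^+$ such that $(x_1, y, z_1), (x_1', y_2, z_1), (x_2, y_2, z) \in \supp(\mu^{(2)})$.

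Given $a, b \in \A$, my goal is to produce a common $(y, z) \in \B^+ \times \C^+$ that simultaneously supports an $(a, a, b)$-walk (with intermediaries $y_2, z_1$) and a $(b, a, a)$-walk (with intermediaries $y_2', z_1'$), amounting to six $\mu^{(2)}$-membership constraints. I pick any $y \in \B^+$, set $y_2 := y$, and use the full $(x,y)$-support of $\mu^{(2)}$ to obtain $z_1 \in \C^+$ with $(a, y, z_1) \in \supp(\mu^{(2)})$ (which handles the first two $(a,a,b)$-constraints, since $y_2 = y$) and $z \in \C^+$ with $(b, y, z) \in \supp(\mu^{(2)})$ (handling the third). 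For the $(b, a, a)$-walk with the same endpoints $(y, z)$, full support again yields $z_1'$ with $(b, y, z_1') \in \supp(\mu^{(2)})$, leaving the task of finding $y_2' \in \B^+$ satisfying $(a, y_2', z), (a, y_2', z_1') \in \supp(\mu^{(2)})$.

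The existence of such a $y_2'$ reduces to the auxiliary claim that $(a, z), (a, z_1') \in \supp(\mu^{(2)}_{xz})$, which I expect to be the main obstacle of the proof. I plan to establish it via the following walk-extension argument applied to Step 2's $z$-path trick. Any $z^{\mathrm{seq}} \in \supp(\mu^{(2)}_z)$ corresponds to a walk in the $(x, y)$-bipartite graph of $\mu^{(1)}$ with prescribed label-sequence $z^{\mathrm{seq}}$; by the full $(x, z)$-support of $\mu^{(1)}$, one can re-root the walk at any prescribed $x \in \A$, choosing a compatible first $y$-vertex. Pairwise-connectedness of $\mu^{(1)}$, combined with the large Step-2 parameter $2^R \geq \max(|\A|, |\B^+|)$, supplies enough flexibility to extend the re-rooted walk to full length and land at a prescribed terminal $y \in \B^+$. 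A suitable common $y_2' \in \B^+$ compatible with both $z$ and $z_1'$ is then available, completing the $(b, a, a)$-walk and proving the lemma.
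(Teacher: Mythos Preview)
Your reduction in the last paragraph has a real gap. You claim that finding a common $y_2'$ with $(a,y_2',z),(a,y_2',z_1')\in\supp(\mu^{(2)})$ ``reduces to'' showing $(a,z),(a,z_1')\in\supp(\mu^{(2)}_{xz})$. That reduction is simply false: membership of each pair in $\supp(\mu^{(2)}_{xz})$ only yields \emph{separate} witnesses $y_2'$, not a common one. You try to close this by asserting that the re-rooted walk can be made to land at any prescribed terminal $y\in\B^+$, but you give no argument for this, and the large path-trick parameter by itself does not supply it once the label sequence $z^{\mathrm{seq}}$ is fixed. Already the re-rooting itself is problematic: after using full $(x,z)$-support of $\mu^{(1)}$ to pick a new first $y$-vertex $y_1$ with $(a,y_1,z_1)\in\supp(\mu^{(1)})$, the next step of the walk requires $(y_1,z_1')\in\supp(\mu^{(1)}_{yz})$, and you only know $\mu^{(1)}_{yz}$ is connected, not full. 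So both the re-rooting and the ``arbitrary terminal'' claims are unjustified.

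The paper's proof avoids all of this by a much more concrete choice. It goes back to $\mu^{(1)}$, picks a single $z\in\C$, and exploits the full $(x,z)$-support there to get $y_a,y_b\in\B^+$ with $(a,y_a,z),(b,y_b,z)\in\supp(\mu^{(1)})$. The key observation is that the \emph{constant} sequence $\vec{z}=(z,\dots,z)$ lies in $\C^+$ with $(a,y_a,\vec{z}),(b,y_b,\vec{z})\in\supp(\mu^{(2)})$, since the corresponding walks just oscillate $a\leftrightarrow y_a$ (resp.\ $b\leftrightarrow y_b$). One more element $z'\in\C^+$ with $(a,y_b,z')\in\supp(\mu^{(2)})$ comes from the full $(x,y)$-support of $\mu^{(2)}$. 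Then $(a,a,b)$ is realized by the path $y_b\to z'\to y_b\to\vec{z}$ and $(b,a,a)$ by $y_b\to\vec{z}\to y_a\to\vec{z}$, both with endpoints $(y_b,\vec{z})$. The constant-sequence trick is what makes the argument go through without any claims about $\supp(\mu^{(2)}_{xz})$.
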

\begin{proof}
After the first $y$ path trick, by pairwise connectivity the distribution $\mu'$ over $\A \times \B^+ \times \C$ is pairwise-connected and satisfies that $\mu'_{xz}$ has full support. Let $z \in \C$ be chosen arbitrarily, and let $y_a, y_b$ be such that $(a, y_a, z) \in \supp(\mu')$ and $(b, y_b, z) \in \supp(\mu')$. Let $\mu''$ be the distribution over $\A \times \B^+ \times \C^+$ after applying the $z$ path trick to $\mu'$. Again by pairwise connectivity, $\mu''$ is pairwise-connected and the support of $\mu''_{xy}$ is full, so there is some $z' \in \C^+$ such that $(a, y_b, z') \in \supp(\mu'')$. Also, note that for $\vec{z} := (z, z, \dots, z)$ we have that that $(a, y_a, \vec{z}), (b, y_b, \vec{z}) \in \supp(\mu'')$. To conclude, set $y := y_b$ and $z := \vec{z}$ and note that $(a, a, b)$ corresponds to the path $y_b \to z' \to y_b \to \vec{z}$, and $(b, a, a)$ corresponds to the path $y_b \to \vec{z} \to y_a \to \vec{z}$.
\end{proof}

We are now ready to prove \cref{thm:toswap}, restated below.
\toswap*
\begin{proof}
Consider $\mu^{+}$ as above, and define the distribution $\mu^-$ as follows. Sample $a, b \in \A$ uniformly at random and sample $c \in \{a, b\}$ uniformly at random. Now let $y, z$ be as guaranteed by \cref{lemma:suppmu}, i.e., $((a,c,b),y,z) \in \supp(\mu^+)$ and $((b,c,a),y,z) \in \supp(\mu^+)$. Include $((a,c,b), y, z)$ in $\mu^-$ with probability mass $\frac{1}{2|\A|^2}$. Let $\alpha > 0$ and $\nu$ be such that $\mu^+ = \alpha \mu^- + (1-\alpha)\nu$. Applying a random restriction and using \cref{eq:3path} tells us that
\begin{equation}\label{eq:symmetry_argument}
\E_{\substack{I \sim_{1-\alpha} [n] \\ ((u_1,u_2,u_3),v,w) \sim \nu^I}} \left|\E_{\substack{(x,y,z) \sim (\mu^-)^{\bar{I}} \\ x = (a, c, b)}} f_{I \to u_1}(a)\bar{f_{I \to u_2}(c)}f_{I \to u_3}(b)g_{I \to v}(y)h_{I \to w}(z) \right| \ge \eps^{O(1)}. 
\end{equation}
Define the function $\wt{f}_{I,u}: \supp(\mu^-_x)^{\bar{I}} \to \bbC$ as:
\[ \wt{f}_{I,u}(a,c,b) := \E_{a',b' \sim (a \lr b)} f_{I \to u_1}(a')\bar{f_{I \to u_2}(c)}f_{I \to u_3}(b'). 
\]
Note that by choice of $y, z$ (via \cref{lemma:suppmu}) we have that that
\[ \E_{\substack{(x,y,z) \sim (\mu^-)^{\bar{I}} \\ x = (a, c, b)}} f_{I \to u_1}(a)\bar{f_{I \to u_2}(c)}f_{I \to u_3}(b)g_{I \to v}(y)h_{I \to w}(z) = \E_{\substack{(x,y,z) \sim (\mu^-)^{\bar{I}} \\ x = (a, c, b)}} \tilde{f}_{I,u}(a,c,b)g_{I \to v}(y)h_{I \to w}(z). \]
Because $g, h$ are $1$-bounded this, together with~\eqref{eq:symmetry_argument}, implies that
\begin{align*}
    \eps^{O(1)} &\le \E_{\substack{I \sim_{1-\alpha} [n] \\ (u_1,u_2,u_3) \sim \nu_x^I}} \left\|\tilde{f}_{I,u} \right\|_{L^2((\mu^-_x)^{\bar{I}})} \\
    &= \E_{\substack{I \sim_{1-\alpha} [n] \\ (u_1,u_2,u_3) \sim \nu_x^I}} \left(
    \E_{(a,c,b) \sim (\mu^-_x)^{\bar{I}}} 
    \left[|f_{I\to u_2}(c)|^2 \left|\E_{(a',b') \sim (a \lr b)} f_{I\to u_1}(a')f_{I \to u_3}(b')\right|^2\right] \right)^{1/2} \\
    &\leq \E_{\substack{I \sim_{1-\alpha} [n] \\ (u_1,u_2,u_3) \sim \nu_x^I}} \left(
    \E_{(a,c,b) \sim (\mu^-_x)^{\bar{I}}} 
    \left[\left|\E_{(a',b') \sim (a \lr b)} f_{I\to u_1}(a')f_{I \to u_3}(b')\right|^2\right] \right)^{1/2} \\
    &= \E_{\substack{I \sim_{1-\alpha} [n] \\ (u_1,u_2,u_3) \sim \nu_x^I}} \val(f_{I\to u_1}, f_{I \to u_3}, f_{I \to u_1}, f_{I \to u_3})^{1/2} \\
    &\le \left(\E_{\substack{I \sim_{1-\alpha} [n] \\ (u_1,u_2,u_3) \sim \nu_x^I}} \val(f_{I \to u_1}) \right)^{1/4} \left(\E_{\substack{I \sim_{1-\alpha} [n] \\ (u_1,u_2,u_3) \sim \nu_x^I}} \val(f_{I \to u_3}) \right)^{1/4},
\end{align*}
where we have applied Cauchy-Schwarz and \cref{lemma:cs1} \eqref{eq:part2}. To conclude, note that for $(x_1,x_2,x_3) \sim \mu^+_x$ that both $x_1$ and $x_3$ are distributed as $\mu_x$, and thus if $\nu'$ is the distribution of $u_1$ (or $u_3$) for $(u_1,u_2,u_3) \sim \nu$, then $(1-\alpha)\nu' + \alpha U = \mu_x$. This exactly implies the conclusion of \cref{thm:toswap}.
\end{proof}

\section{Proof of~\Cref{thm:swapnine}: the 99\%~Regime}
\label{sec:99}
The goal of the section is to establish~\Cref{thm:swapnine}. The key component in the proof is the following lemma, which then implies \cref{thm:swapnine} by induction.
\begin{lemma}
\label{lemma:99induct}
Let $f\colon \Sigma^n\to\mathbb{C}$ be a function such that $\norm{f}_2\leq 1$ and $\val(f) = 1-\eps$ for $\eps\leq 0.05$. Let $\lambda_1$ be the maximum singular value under the $\{1\}$-SVD of $f$. Then:
\begin{itemize}
    \item $\lambda_1^2 \ge 1-3\eps$.
    \item If $\delta = 1-\lambda_1^2$ and $f = \sum_{i=1}^m \lambda_ig_ih_i$ is the $\{1\}$-SVD then $\val(h_1) \ge (1-\eps)(1+\Omega(\delta))$.
\end{itemize}
\end{lemma}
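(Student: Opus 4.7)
The plan is to write $f=\sum_{i=1}^{m}\lambda_i g_i h_i$ in its $\{1\}$-SVD, with $\{g_i\}\subset L^2(\Sigma)$ and $\{h_i\}\subset L^2(\Sigma^{n-1})$ orthonormal, and expand $\val(f)$ by multilinearity. The first key observation will be that, because the swap at coordinate $1$ is independent of the swaps at the remaining $n-1$ coordinates, the four-linear form factors across these two pieces:
\[
\val(g_{i_1}h_{i_1},g_{i_2}h_{i_2},g_{i_3}h_{i_3},g_{i_4}h_{i_4}) = T(g_{i_1},g_{i_2},g_{i_3},g_{i_4})\cdot \val(h_{i_1},h_{i_2},h_{i_3},h_{i_4}),
\]
where the single-coordinate factor $T$ computes directly via the $50/50$ swap choice to $\tfrac12\langle g_{i_1}, g_{i_3}\rangle\langle g_{i_2}, g_{i_4}\rangle + \tfrac12\langle g_{i_1}, g_{i_4}\rangle\langle g_{i_2}, g_{i_3}\rangle$. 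Orthonormality of $\{g_i\}$ forces $T$ to vanish unless $\{i_1,i_2\}=\{i_3,i_4\}$ as multisets, so after merging the two symmetric off-diagonal cases using the identity $\val(f_1,f_2,f_3,f_4)=\val(f_1,f_2,f_4,f_3)$ (which follows from symmetry of the $(x\lr y)$ distribution), the quadruple sum collapses to
\[
\val(f) = \sum_i \lambda_i^4\,\val(h_i) + 2\sum_{i<j}\lambda_i^2\lambda_j^2\,\val(h_i,h_j,h_i,h_j).
\]

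Next I would plug in $\val(h_i)\le\|h_i\|_2^4=1$ from \cref{lemma:l2} and $\val(h_i,h_j,h_i,h_j)\le 2/3$ for $i\ne j$ from \cref{lemma:valperp} (which applies because $h_i\perp h_j$) to obtain
\[
\val(f) \le \tfrac13\sum_i\lambda_i^4 + \tfrac23\Big(\sum_i\lambda_i^2\Big)^2 \le \tfrac13\sum_i\lambda_i^4 + \tfrac23,
\]
using $\|f\|_2^2=\sum_i\lambda_i^2\le 1$ in the last step. Rearranging against $\val(f)\ge 1-\eps$ yields $\sum_i\lambda_i^4\ge 1-3\eps$, and since $\sum_i\lambda_i^4 \le \lambda_1^2\sum_i\lambda_i^2 \le \lambda_1^2$, the first conclusion $\lambda_1^2\ge 1-3\eps$ is immediate.

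For the second conclusion I would keep the $\val(h_1)$ term exposed and bookkeep the rest in terms of $\delta=1-\lambda_1^2$. Writing $s=\sum_{i\ge 2}\lambda_i^2\le \delta$, the straightforward bounds $\sum_{i\ge 2}\lambda_i^4\le s^2\le \delta^2$ (since $\lambda_i^2\le s$ for $i\ge 2$) and $\sum_{i<j}\lambda_i^2\lambda_j^2 \le \lambda_1^2 s + s^2/2 \le \delta + O(\delta^2)$ yield
\[
\val(f) \le \lambda_1^4\val(h_1) + \tfrac{4}{3}\delta + O(\delta^2).
\]
Dividing by $\lambda_1^4=(1-\delta)^2$ and invoking the first item to get $\delta\le 3\eps\le 0.15$ and $(1-\delta)^{-2}\ge 1+2\delta$, a short expansion produces
\[
\val(h_1) \ge (1-\eps) + \tfrac{2}{3}\delta - O(\eps\delta + \delta^2).
\]
In the stated range $\eps\le 0.05$, the linear gain $\tfrac{2}{3}\delta$ exceeds $c(1-\eps)\delta$ for an absolute constant $c>0$ even after absorbing the $O(\eps\delta+\delta^2)$ error, which gives $\val(h_1)\ge(1-\eps)(1+\Omega(\delta))$.

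The main delicacy on which the whole argument hinges is the strict gap in \cref{lemma:valperp}: without the constant being strictly below $1$, the bound degenerates to the tautology $\val(f)\le(\sum\lambda_i^2)^2\le 1$ and yields no information. The factor of $3$ in the first conclusion and the $\tfrac{2}{3}$ coefficient in the second both trace back to this $2/3$, and the specific threshold $\eps\le 0.05$ is what guarantees enough linear gain to dominate the $O(\delta^2)$ corrections from the Taylor expansion of $(1-\delta)^{-2}$.
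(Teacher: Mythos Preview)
Your proof is correct and follows essentially the same approach as the paper: expand $\val(f)$ via the $\{1\}$-SVD to get $\val(f)=\sum_i\lambda_i^4\val(h_i)+2\sum_{i<j}\lambda_i^2\lambda_j^2\val(h_i,h_j,h_i,h_j)$, apply \cref{lemma:valperp} for the $2/3$ bound on the cross terms, and bookkeep in terms of $\delta$. The paper merely asserts the expansion identity with ``one can check that,'' whereas you fill in the factorization across coordinate $1$ and the collapse via orthonormality of the $g_i$; otherwise the arguments are the same.
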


We require a weak bound on $\val$ applied to orthogonal functions.
\begin{lemma}
\label{lemma:valperp}
If $\l h_1, h_2\r = 0$ and $\|h_1\|_2 = \|h_2\|_2 = 1$, then $\val(h_1,h_2,h_1,h_2) \le 2/3$.
\end{lemma}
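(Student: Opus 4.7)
My plan is to recognize $\val(h_1,h_2,h_1,h_2)$ as the squared norm of an orthogonal projection $M$ applied to $h_1 \otimes h_2$, exploit that $M$ is invariant under the total coordinate swap, and then read off the bound from a one-line $L^2$ expansion.

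First, define the averaging operator $M$ on $L^2(\A^n \times \A^n)$ by $(Mf)(x,y) := \E_{(x',y') \sim (x \lr y)} f(x',y')$. Since the per-coordinate swap choices are independent in Definition~\ref{def:swap}, $M$ factors as a commuting product $\prod_{i=1}^n M_i$, where $M_i = \tfrac{1}{2}(I + \sigma_i)$ and $\sigma_i$ swaps the pair $(x_i,y_i)$. Each $\sigma_i$ is a self-adjoint involution, so each $M_i$ is an orthogonal projection, and hence so is $M$ (it is self-adjoint and $M^2 = M$). Setting $F := h_1 \otimes h_2$ and unwinding Definition~\ref{def:swapnorm},
\[
\val(h_1,h_2,h_1,h_2) = \langle F, MF\rangle = \|MF\|_2^2.
\]

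Next, the total swap $\tau(x,y) := (y,x)$ equals $\prod_i \sigma_i$, which is one of the group elements $M$ averages over; explicitly, $M \tau = \tfrac{1}{2^n}\sum_S \sigma^{S \triangle [n]} = M$. Applied to $F$, this gives $MF = M\tau F = M(h_2 \otimes h_1)$, so by linearity
\[
MF \;=\; M\!\left(\tfrac{1}{2}\bigl(h_1 \otimes h_2 + h_2 \otimes h_1\bigr)\right).
\]

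Finally, because $M$ is a contraction and $\langle h_1 \otimes h_2, h_2 \otimes h_1\rangle = \langle h_1,h_2\rangle \langle h_2,h_1\rangle = |\langle h_1, h_2\rangle|^2$,
\[
\val(h_1,h_2,h_1,h_2) \;\le\; \left\|\tfrac{1}{2}(h_1\otimes h_2 + h_2\otimes h_1)\right\|_2^2 \;=\; \tfrac{1}{4}\bigl(2\|h_1\|_2^2\|h_2\|_2^2 + 2|\langle h_1,h_2\rangle|^2\bigr).
\]
Plugging in the hypotheses $\|h_1\|_2 = \|h_2\|_2 = 1$ and $\langle h_1,h_2\rangle = 0$ yields $\val(h_1,h_2,h_1,h_2) \le 1/2 \le 2/3$.

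The only nontrivial observation is the symmetry $M(h_1 \otimes h_2) = M(h_2 \otimes h_1)$; after that, the bound is immediate, and no serious obstacle arises. In fact this route delivers the sharper bound $1/2$, so the stated $2/3$ is comfortable slack; the authors may prefer a more computational SVD-based derivation that yields exactly $2/3$, but either version suffices for Lemma~\ref{lemma:99induct}.
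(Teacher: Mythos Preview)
Your proof is correct and in fact yields the sharper bound $\val(h_1,h_2,h_1,h_2)\le 1/2$. The route is genuinely different from the paper's. The paper argues by averaging: for uniformly random unit complex phases with $|\alpha|=|\beta|=1/\sqrt{2}$ one has
\[
1 \;\ge\; \E_{\alpha,\beta}\bigl[\val(\alpha h_1+\beta h_2)\bigr]
\;=\; \tfrac14\val(h_1)+\tfrac14\val(h_2)+\val(h_1,h_2,h_1,h_2),
\]
and then combines AM--GM with the Cauchy--Schwarz--Gowers inequality $\val(h_1)\val(h_2)\ge \val(h_1,h_2,h_1,h_2)^2$ (\cref{lemma:cs1}) to get $1\ge \tfrac32\val(h_1,h_2,h_1,h_2)$. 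Your argument instead recognizes $\val(h_1,h_2,h_1,h_2)=\|M(h_1\otimes h_2)\|_2^2$ for the orthogonal projection $M=\prod_i \tfrac12(I+\sigma_i)$, uses the total-swap symmetry $M\tau=M$ to replace $h_1\otimes h_2$ by its symmetrization, and bounds by the norm of $\tfrac12(h_1\otimes h_2+h_2\otimes h_1)$. This is more direct, avoids the auxiliary inequalities, and delivers the tight constant $1/2$ (attained when $n=1$). Either constant suffices for \cref{lemma:99induct}; the paper's approach has the mild advantage of staying within the norm-style inequalities already established, while yours exposes the underlying projection/symmetrization structure more transparently.
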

\begin{proof}
Let $\alpha,\beta$ be uniform complex numbers with $|\alpha| = |\beta| = 1/\sqrt{2}$. Note that
\begin{align*}
    1 &\ge \E_{\alpha,\beta}[\val(\alpha h_1 + \beta h_2, \alpha h_1 + \beta h_2, \alpha h_1 + \beta h_2, \alpha h_1 + \beta h_2)] \\ &= \frac14 \val(h_1) + \frac14 \val(h_2) + \val(h_1,h_2,h_1,h_2) \\
    &\ge \frac12 \sqrt{\val(h_1)\val(h_2)} + \val(h_1,h_2,h_1,h_2) \\
    &\ge \frac32 \val(h_1,h_2,h_1,h_2),
\end{align*}
where the first inequality uses \cref{lemma:l2} and the last uses \cref{lemma:cs1}.
\end{proof}
Now we can proceed to the proof of \cref{lemma:99induct}.
\begin{proof}[Proof of \cref{lemma:99induct}]
Writing $f = \sum_i \lambda_i g_ih_i$ in the $\{1\}$-SVD, one can check that
\[
1-\eps = \val(f,f,f,f) = \sum_i \lambda_i^4 \val(h_i) + 2\sum_{i < j} \lambda_i^2\lambda_j^2 \val(h_i,h_j,h_i,h_j). 
\]
Applying \cref{lemma:valperp} gives
\begin{equation}\label{eq:base_case1}
    1-\eps \le \sum_i \lambda_i^4\val(h_i) + 2 \sum_{i<j} \lambda_i^2\lambda_j^2 \cdot \frac{2}{3} = \sum_i \lambda_i^4\val(h_i) + \frac{2}{3}\left(\left(\sum_i \lambda_i^2\right)^2 - \sum_i \lambda_i^4\right),
\end{equation}
which is at most $\frac{1}{3} \sum_i \lambda_i^4 + \frac{2}{3}$ as $\val(h_i)\leq 1$.
Thus $\sum_i \lambda_i^4 \ge 1-3\eps$, and as $\sum_{i}\lambda_i^2 \leq 1$ we conclude that $\lambda_1^2 \ge 1-3\eps$. Denoting $\delta = 1 - \lambda_1^2$, we get that $\delta \le 3\eps$ and also that $\sum_{j \neq 1} \lambda_j^2 \leq \delta$.
Using these notations we conclude from~\eqref{eq:base_case1} that
\[
1-\eps \leq \sum_i \lambda_i^4 \val(h_i) + \frac{2}{3}\left(1- \sum_i \lambda_i^4\right) \le \lambda_1^4 \val(h_1) + \frac43\delta + O(\delta^2),
\]
where the final inequality uses $\lambda_1^4 \ge 1-2\delta$. Thus,
\[ \val(h_1) \ge \frac{1-\eps-\frac43\delta-O(\delta^2)}{(1-\delta)^2} \ge (1-\eps)(1+\Omega(\delta)), \]
for sufficiently small $\eps$ (recall that $\delta \le 3\eps$). This completes the proof.
\end{proof}
We now prove \cref{thm:swapnine}, restated below.
\swapnine*
\begin{proof}
We prove the following claim by induction on $n$: there is a constant $C$ such that for all sufficiently small $\eps$, if $\val(f) \ge 1-\eps$ then there is a product function $P = P_1 \dots P_n$ with $\|P\|_2 = 1$ and $\l f, P \r \ge 1-C\eps$. It is evident for $n = 1$.

Now let $n > 1$. Perform the SVD as in \cref{lemma:99induct} and let $\lambda_1^2 = 1-\delta$. We will set $P = g_1P'$ where $\l h_1, P' \r \ge 1 - C(\eps-\Omega(\delta))$ by induction, because $\val(h_1) \ge (1-\eps)(1+\Omega(\delta))$. Note that
\[ \l f, P \r = \lambda_1 \l h_1, P' \r \ge (1-\delta)^{1/2}(1 - C(\eps-\Omega(\delta))) \ge 1-C\eps \] 
where the last inequality holds provided that $C$ is a sufficiently large constant.
\end{proof}

\section{Proof of~\Cref{thm:swap}: the Inverse Theorem for the Swap Norm}
The goal of this section is to establish \cref{thm:swap}. At a high level, the proof of the
result proceeds by induction on $\eps$. We show that
given a function $f$, we can do a process consisting of taking random restrictions and SVD decompositions to get a function $f'$ whose (normalized) value is significantly larger than that of $f$. Additionally, the function $f'$ will have the property that if it is correlated with a product function, then $f$ is also correlated with a product function (perhaps a different one). Here and throughout, the normalized value of $f$ is $\val(f)^{1/2}/\norm{f}_2^2$; it will be easier to work with as we are considering restrictions of $f$ that may have different $2$-norms.

\subsection{An Iterative Scheme}
\label{subsec:iter}
Fix a function $f: \A^n \to \bbC$ with $\sqrt{\val(f)} \ge \eps^2$ as in the statement of~\cref{thm:swap}. In this section we provide an iterative argument that alternately peels off large singular vectors for random partitions and performs random restrictions, with the goal of increasing $\val(f)$ so that we can apply induction. At each time step in the iteration we maintain a subset $R_t \subseteq [n]$ (a random subset of some density), a partition $R_t = S_t \cup T_t$ (which is uniormly random conditioned on $R_t$), and a function $f_t: \A^{R_t} \to \bbC$ which is a random restriction of $f$ onto $R_t$.

Let us now formally define the iterative scheme. 
Set $f_0 = f$, $t=0$, $R_0 = [n]$, and take $S_0, T_0$ as a uniformly random partition of $R_0$. We choose $\gamma = \Theta(\eps^8)$, to be determined more precisely later.
\begin{enumerate}
\item Define $\Delta_t := f_t - \sum_{i=1}^t \lambda_ig_ih_i$, where $\lambda_1, \dots, \lambda_t$ are the $t$ largest singular values of the $S_t$-SVD of $f_t$ with corresponding singular vectors $g_i: \A^{S_t} \to \bbC, h_i: \A^{T_t} \to \bbC$.
\item If $\val(\Delta_t)^{1/2} < \gamma\|f_t\|_2^2$ or $t \ge 8\gamma^{-2}$, terminate.
\item If $\val(\Delta_t)^{1/2} \ge \gamma \|f_t\|_2^2$ and $t < 8\gamma^{-2}$, let $R_t = S' \cup T'$ be a uniformly random partition of $R_t$. Set $S_{t+1} := S_t \cap S'$ and $T_{t+1} := T_t \cap T'$. Set $R_{t+1} := S_{t+1} \cup T_{t+1}$.
\item Let $f_{t+1} := (f_t)_{(R_t\setminus R_{t+1}) \to z}$ for random $z \sim \A^{R_t\setminus R_{t+1}}$. Increase $t$ by $1$ and iterate.
\end{enumerate}

We use the following potential function to analyze the progress of the iterative scheme:
\[ \Phi_t := \frac{\eps^{-5}\val(f_t)^{1/2} - \|\Delta_t\|_2^2}{\|f_t\|_2^2}. \]
The idea behind the potential function is that during each iteration, if it does not terminate, $\Delta_t$ is subtracting off one extra term in the SVD, so $\|\Delta_t\|_2^2$ should be decreasing and thus increases $\Phi_t$. Thus, if $\val(f_t)$ stays the same throughout the process, we will eventually have that $\val(\Delta_t)$ must be small. 

The first step in the analysis is to argue that with non-negligible probability, $\Phi_t$ does not decrease significantly under a random restriction.
\begin{lemma}
\label{lemma:phirr}
Let $f, g : \A^n \to \bbC$ be functions such that $f$ is $B$-bounded. Let $I \subseteq [n]$, $C > 1$, and $\delta > 0$. Then with probability at least $\frac{\delta\|f\|_2^2}{CB^2}$ over $z \sim \A^I$:
\begin{equation} \frac{C\val(f_{I\to z})^{1/2} - \|g_{I\to z}\|_2^2}{\|f_{I\to z}\|_2^2} \ge \frac{C\val(f)^{1/2} - \|g\|_2^2}{\|f\|_2^2} - 2\delta \enspace \text{ and } \enspace \|f_{I\to z}\|_2^2 \ge \frac{\delta}{C+2\delta+\frac{\|g\|_2^2}{\|f\|_2^2}}\|f\|_2^2. \label{eq:phirr} \end{equation}
\end{lemma}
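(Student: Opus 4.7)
The plan is to translate both conclusions into a single event about an auxiliary random variable and then run a Markov/splitting argument. Introduce the shorthands $N(z) := C\val(f_{I\to z})^{1/2}-\|g_{I\to z}\|_2^2$, $D(z) := \|f_{I\to z}\|_2^2$, and $\Phi_0 := (C\val(f)^{1/2}-\|g\|_2^2)/\|f\|_2^2$, and define the deficit $X(z) := N(z) - (\Phi_0 - 2\delta)D(z)$. The first conclusion of the lemma rearranges to $X(z) \ge 0$, while the second reads $D(z) \ge D^\star := \delta\|f\|_2^2/\alpha$ with $\alpha := C + 2\delta + \|g\|_2^2/\|f\|_2^2$, so the goal becomes to lower-bound $\Pr_z[X(z) \ge 0 \text{ and } D(z) \ge D^\star]$.

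First I would compute $\E_z X(z)$. By Fubini, $\E_z \|f_{I\to z}\|_2^2 = \|f\|_2^2$ and $\E_z \|g_{I\to z}\|_2^2 = \|g\|_2^2$; by \cref{lemma:valrr}, $\E_z \val(f_{I\to z})^{1/2} \ge \val(f)^{1/2}$. Combining these and unpacking the definition of $\Phi_0$, linearity of expectation gives $\E_z X(z) \ge (C\val(f)^{1/2} - \|g\|_2^2) - (\Phi_0 - 2\delta)\|f\|_2^2 = 2\delta\|f\|_2^2$.

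Next I would establish a uniform pointwise upper bound on $X$. From $\val(f_{I\to z})^{1/2} \le D(z)$, a consequence of \cref{lemma:l2}, we get $N(z) \le CD(z)$; using the trivial bound $\Phi_0 \ge -\|g\|_2^2/\|f\|_2^2$ this yields $X(z) \le (C + 2\delta - \Phi_0)D(z) \le \alpha D(z) \le \alpha B^2$, where the last step uses that $f_{I\to z}$ inherits $B$-boundedness from $f$.

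The last step is the splitting argument. Since $X(z) \le X(z)^+$ pointwise and $\E_z X(z) \ge 2\delta\|f\|_2^2$, we have $\E_z X(z)^+ \ge 2\delta\|f\|_2^2$. Partition this expectation by whether $D(z) < D^\star$: on that event $X(z)^+ \le \alpha D(z) < \alpha D^\star = \delta\|f\|_2^2$, contributing at most $\delta\|f\|_2^2$, so the remaining contribution from $\{X(z) \ge 0, \, D(z) \ge D^\star\}$ is at least $\delta\|f\|_2^2$; since $X \le \alpha B^2$ on that event, we conclude $\Pr_z[X(z) \ge 0,\, D(z) \ge D^\star] \ge \delta\|f\|_2^2/(\alpha B^2)$, matching the claimed probability (with $C$ replaced by the essentially equivalent quantity $\alpha$, which differs only by the small additive terms $2\delta+\|g\|_2^2/\|f\|_2^2$). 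The main subtlety, and the reason to pass through the auxiliary $X(z)$ rather than argue directly about the ratio $N/D$, is that the pointwise bound $X \le \alpha D$ is exactly what lets us calibrate $D^\star$ so that the $D(z)<D^\star$ slice absorbs precisely half of the $2\delta\|f\|_2^2$ expectation; once this is set up correctly, the rest is bookkeeping.
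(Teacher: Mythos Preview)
Your argument is essentially the paper's: both introduce the same auxiliary $X(z) := N(z) - (\Phi_0 - 2\delta)D(z)$, lower-bound its expectation by $2\delta\|f\|_2^2$ via \cref{lemma:valrr}, upper-bound $X$ pointwise using $B$-boundedness, and finish by averaging. The one tactical difference is that the paper avoids your splitting step by working with the stronger event $\{X(z) \geq \delta\|f\|_2^2\}$ rather than $\{X(z) \geq 0,\ D(z)\geq D^\star\}$: once $X(z) \geq \delta\|f\|_2^2$, the first conclusion follows by dropping the positive $\delta\|f\|_2^2$, and the second follows directly from $N(z) \leq CD(z)$ together with $\Phi_0 \geq -\|g\|_2^2/\|f\|_2^2$, so no case analysis on $D(z)$ is needed. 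This also recovers the stated probability with $C$ rather than your $\alpha$ in the denominator (the paper invokes the pointwise bound $X(z)\leq CB^2$); your bound with $\alpha$ is harmless for the application in \cref{lemma:iter} but does not literally match the lemma as stated.
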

\begin{proof}
Let $\Phi := \frac{C\val(f)^{1/2} - \|g\|_2^2}{\|f\|_2^2}$. By \cref{lemma:valrr} we know that:
\[
\E_{z \sim \A^I}[C\val(f_{I\to z})^{1/2} - \|g_{I\to z}\|_2^2 - (\Phi - 2\delta)\|f_{I\to z}\|_2^2]\geq
C\val(f)^{1/2}-\|g\|_2^2 - (\Phi - 2\delta)\|f\|_2^2 
= 2\delta\|f\|_2^2. 
\]
As the random variable under the expectation on the left hand side is at most $CB^2$ always (since $f$ is $B$-bounded), it follows from an averaging argument that
\[ \Pr_{z \sim \A^I}\left[C\val(f_{I\to z})^{1/2} - \|g_{I\to z}\|_2^2 - (\Phi - 2\delta)\|f_{I\to z}\|_2^2 \ge \delta\|f\|_2^2\right] \ge \frac{\delta\|f\|_2^2}{CB^2}. \]
Any $z$ for which this event holds satisfies the conclusion of the lemma. Indeed, the first inequality in~\eqref{eq:phirr} follows by bounding $\delta\norm{f}_2^2\geq 0$ and re-arranging, and the
second inequality follows by bounding $\val(f_{I\to z})^{1/2} \le \|f_{I\to z}\|_2^2$ (which holds by~\cref{lemma:l2}), $\norm{g_{I\rightarrow z}}_2^2\geq 0$, $\Phi \ge -\|g\|_2^2/\|f\|_2^2$
and rearranging.
\end{proof}

\subsubsection{Analyzing a Single Iteration}
We now state and prove the main iteration lemma, 
stating that each step is successful with noticeable probability, wherein a successful step is one in which the potential $\Phi_t$ increases significantly and the $2$-norm of $f_t$ does not decrease too much.
\begin{lemma}
\label{lemma:iter}
Let $f_t, R_t = S_t \cup T_t, \Delta_t$ be defined as above. If $\val(\Delta_t)^{1/2} \ge \gamma\|f_t\|_2^2$, then with probability at least $\Omega(\gamma^4\eps^5\|f_t\|_2^6/B^6)$, it holds that $\Phi_{t+1} \ge \Phi_t + \gamma^2/4$ and $\|f_{t+1}\|_2^2 \ge \Omega(\eps^5\gamma^2)\|f_t\|_2^2$.
\end{lemma}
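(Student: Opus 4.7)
The plan reduces the lemma to producing, with noticeable probability over the random partition $S'$ and the restriction $z$, a large top singular value in the $S_{t+1}$-SVD of $(\Delta_t)_{I\to z}$, and then combining with \cref{lemma:phirr}. Because $\Delta_{t+1}$ is $f_{t+1}$ minus the best rank-$(t{+}1)$ approximation in the $S_{t+1}$-SVD, any explicit rank-$(t{+}1)$ approximation gives an upper bound on $\|\Delta_{t+1}\|_2^2$. A natural candidate is $\sum_{i=1}^{t}\lambda_i g_i' h_i' + \sigma_1 g''h''$, where $g_i', h_i'$ are the restrictions of $g_i, h_i$ from the $S_t$-SVD of $f_t$ onto the surviving coordinates, and $\sigma_1 g''h''$ is the top term of the $S_{t+1}$-SVD of $(\Delta_t)_{I\to z}$. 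Since $g'', h''$ are orthonormal singular vectors of $(\Delta_t)_{I\to z}$, Pythagoras yields
\[ \|\Delta_{t+1}\|_2^2 \;\le\; \|(\Delta_t)_{I\to z}\|_2^2 - \sigma_1^2, \]
and plugging this into the definition of $\Phi_{t+1}$ reduces the task to producing $\sigma_1^2 \gtrsim \gamma^2 \|f_t\|_2^2$ along with the conclusion of \cref{lemma:phirr} (applied with $f=f_t,\,g=\Delta_t,\,C=\eps^{-5},\,\delta=\gamma^2/8$).

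The main technical step, and the one I expect to be the hardest, is the inequality
\[ \val(\Delta_t) \;\le\; \E_{S',\,z}\bigl[\boxx_{S_{t+1}}\bigl((\Delta_t)_{I\to z}\bigr)\bigr]. \]
To prove it, classify each coordinate of $R_t$ into one of four classes $S_{t+1},\,T_{t+1},\,c := S_t\setminus S_{t+1},\,d := T_t\setminus T_{t+1}$; the last two get random-restricted by $z$. Writing the uniform swap set in $\val(\Delta_t)=\E_U\boxx_U(\Delta_t)$ as $U = S'$, unfolding the definition of $\boxx_{S'}(\Delta_t)$ shows that it equals an expectation over \emph{independent} values $(c_x,c_y,d_x,d_y)$ of the four-function box
\[ \boxx_{S_{t+1}}\bigl(\Delta_t^{(c_x,d_x)},\,\Delta_t^{(c_y,d_y)},\,\Delta_t^{(c_x,d_y)},\,\Delta_t^{(c_y,d_x)}\bigr), \]
where $\Delta_t^{(c,d)}$ denotes $\Delta_t$ restricted to the displayed values on the $c,d$-coordinates. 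Cauchy--Schwarz--Gowers (\cref{lemma:boxstuff}) bounds this four-function box in absolute value by the geometric mean of the four diagonal boxes $\boxx_{S_{t+1}}(\Delta_t^{(c_i,d_j)})$, and a further Cauchy--Schwarz over the independent variables $(c_x,c_y,d_x,d_y)$ then collapses the result to $\E_{c,d}\boxx_{S_{t+1}}(\Delta_t^{(c,d)}) = \E_z\boxx_{S_{t+1}}((\Delta_t)_{I\to z})$. Averaging over $S'$ yields the claim.

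For the conclusion, three Markov-type events are needed: (i) $\boxx_{S_{t+1}}((\Delta_t)_{I\to z}) \gtrsim \gamma^2\|f_t\|_2^4$, which follows from the inequality above and the crude bound $\boxx \le \|(\Delta_t)_{I\to z}\|_2^4 = O(B^4)$ using $\|\Delta_t\|_\infty \lesssim \sqrt{t}\,B = O(B/\gamma)$ from the truncated SVD, with probability $\Omega(\gamma^2\|f_t\|_2^4/B^4)$; (ii) $\|(\Delta_t)_{I\to z}\|_2^2 \le O(\|f_t\|_2^2)$ and $\|f_{t+1}\|_2^2 \le O(\|f_t\|_2^2)$, each with constant probability by Markov; (iii) the good event of \cref{lemma:phirr}, with probability $\Omega(\gamma^2\eps^5\|f_t\|_2^2/B^2)$. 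On the intersection, the standard inequality $\sigma_1^2 \ge \boxx_{S_{t+1}}/\|(\Delta_t)_{I\to z}\|_2^2$ gives $\sigma_1^2 \gtrsim \gamma^2\|f_t\|_2^2$, hence $\sigma_1^2/\|f_{t+1}\|_2^2 \gtrsim \gamma^2$, which together with the $-\gamma^2/4$ term from \cref{lemma:phirr} yields $\Phi_{t+1} \ge \Phi_t + \gamma^2/4$. Multiplying the three probability factors matches the claimed $\Omega(\gamma^4\eps^5\|f_t\|_2^6/B^6)$. The main obstacle is the Cauchy--Schwarz--Gowers step: the four-tuple $(\Delta_t^{(c_x,d_x)},\Delta_t^{(c_y,d_y)},\Delta_t^{(c_x,d_y)},\Delta_t^{(c_y,d_x)})$ is specifically arranged so that both the Gowers inequality in the $S_{t+1}$-direction and the ensuing Cauchy--Schwarz over the restricted coordinates collapse cleanly, and getting this bookkeeping exactly right is the delicate part.
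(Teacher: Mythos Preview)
Your overall architecture (peel off one rank-one term, invoke \cref{lemma:phirr}, then use SVD optimality to bound $\|\Delta_{t+1}\|_2^2$) is the right one, and your identity
\[
\boxx_{S'}(\Delta_t)\;=\;\E_{c_x,c_y,d_x,d_y}\bigl[\boxx_{S_{t+1}}\bigl(\Delta_t^{(c_x,d_x)},\Delta_t^{(c_y,d_y)},\Delta_t^{(c_x,d_y)},\Delta_t^{(c_y,d_x)}\bigr)\bigr]
\]
together with Gowers--Cauchy--Schwarz and AM--GM does give $\boxx_{S'}(\Delta_t)\le \E_z\boxx_{S_{t+1}}((\Delta_t)_{I\to z})$. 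However, two steps of the argument do not go through as written.

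\textbf{The $\ell_\infty$ bound on $\Delta_t$ is unjustified.} You assert $\|\Delta_t\|_\infty\lesssim\sqrt{t}\,B$ ``from the truncated SVD''. There is no such bound: truncating an SVD controls $\ell_2$ (and operator) norms, not entrywise norms. From \cref{lemma:svdinf} one only gets $\|\lambda_i g_i h_i\|_\infty\le B^2/\lambda_i$, which blows up for small $\lambda_i$; the scheme does not force the kept singular values to be bounded below. Without an almost-sure bound on $\boxx_{S_{t+1}}((\Delta_t)_{I\to z})$, your Markov step for event~(i) has no handle, and no second-moment substitute is available since $\E_z\|(\Delta_t)_{I\to z}\|_2^4$ is not controlled either.

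\textbf{Multiplying the probabilities of (i) and (iii) is illegitimate.} Both events are over the \emph{same} randomness $(S',z)$ and both have small probability; you cannot simply take the product. Event~(ii) can be absorbed by a union bound, but (i)$\cap$(iii) cannot be obtained from the marginals alone.

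The paper's proof avoids both obstacles by reversing the order: it takes the top term $\kappa_1 gh$ of the $S'$-SVD of $\Delta_t$ \emph{before} restricting. Then (a) the Markov bound over $S'$ uses only the deterministic inequality $\boxx_{S'}(\Delta_t)\le\|\Delta_t\|_2^4\le B^4$, so no $\ell_\infty$ control is needed; (b) one sets $\Delta_t':=\Delta_t-\|\Delta_t\|_2\kappa_1 gh$ and applies \cref{lemma:phirr} with $g=\Delta_t'$, so the $\gamma^2/2$ drop is already baked into $\|\Delta_t'\|_2^2$ and there is no extra $\sigma_1^2/\|f_{t+1}\|_2^2$ term to chase; and (c) the good-$S'$ event and the good-$z$ event live over \emph{disjoint} sources of randomness, so their probabilities genuinely multiply to $\Omega(\gamma^4\eps^5\|f_t\|_2^6/B^6)$. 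Finally, since $g$ is a function of $\A^{S'}$ and $S'\cap R_{t+1}=S_{t+1}$ (similarly for $h$), the restriction $g_{I\to z}h_{I\to z}$ respects the $(S_{t+1},T_{t+1})$ split, so SVD optimality still gives $\|\Delta_{t+1}\|_2^2\le\|(\Delta_t')_{I\to z}\|_2^2$. This pre-restriction peeling is precisely the step that makes the constants and the probability accounting close.
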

\begin{proof}
Let $R_t = S' \cup T'$ be a uniformly random partition, as described in the iterative scheme. Define $\tilde{\Delta} := \Delta_t/\|\Delta_t\|_2$, so that $\|\tilde{\Delta}\|_2 = 1$ and $\val(\tilde{\Delta}) \ge \gamma^2\|f_t\|_2^4\|\Delta_t\|_2^{-4}$ by the assumption that the process has not terminated. By \cref{lemma:box} we have that $\E_S[\boxx_S(\wt{\Delta})] = \val(\wt{\Delta})
\geq \gamma^2\|f_t\|_2^4\|\Delta_t\|_2^{-4}$, and as $\boxx_S(\wt{\Delta}) \le 1$ for all $S$ by \cref{lemma:boxstuff}, we get by an averaging argument that
\[ 
\Pr_{S'}
\left[\boxx_{S'}(\tilde{\Delta}) 
\ge \frac{\gamma^2\|f_t\|_2^4}{2\|\Delta_t\|_2^{4}}\right] \ge \frac{\gamma^2\|f_t\|_2^4}{2\|\Delta_t\|_2^{4}} 
\ge \frac{\gamma^2\|f_t\|_2^4}{2B^{4}}. 
\] 
In the last inequality we used the fact that $\|\Delta_t\|_2^4 \le \|f_t\|_2^4 \le B^4$. Let $\kappa_1 \ge \kappa_2 \ge \dots$ be the singular values of $\tilde{\Delta} \in \bbC^{\A^{S'} \times \A^{T'}}$. With these notations, we get that with probability at least $\gamma^2\|f_t\|_2^4B^{-4}/2$ over $S'$
\begin{equation}\label{eq:iterative_1}
\kappa_1^2 \ge \sum_i \kappa_i^4 = \boxx_{S'}(\tilde{\Delta}) \ge \frac{\gamma^2\|f_t\|_2^4}{2\|\Delta_t\|_2^{4}}. 
\end{equation}
Let $g: \A^{S'} \to \bbC$ and $h: \A^{T'} \to \bbC$ be the unit singular vectors corresponding to $\kappa_1$ and define
$\Delta_t' := \Delta_t - \|\Delta_t\|_2\kappa_1gh$. Using~\eqref{eq:iterative_1} it follows that
\begin{equation} \|\Delta_t'\|_2^2 = \|\Delta_t\|_2^2 - \|\Delta_t\|_2^2\kappa_1^2 \le \|\Delta_t\|_2^2 - \frac{\gamma^2}{2}\frac{\|f_t\|_2^4}{\|\Delta_t\|_2^{2}} \le \|\Delta_t\|_2^2 - \frac{\gamma^2}{2}\|f_t\|_2^2, \label{eq:deltadrop} \end{equation} 
where the last transition uses the fact that $\|\Delta_t\|_2 \le \|f_t\|_2$.

Now we will take the random restriction and invoke \cref{lemma:phirr}. Let $I := R_t \setminus R_{t+1}$, $\delta = \gamma^2/8$, $C = \eps^{-5}$, and use \cref{lemma:phirr} for $f := f_t, g := \Delta_t'$ to get that with probability at $\frac{\gamma^2\eps^5\|f_t\|_2^2}{8B^2}$ over $z \sim \A^I$,
\[ \frac{\eps^{-5}\val((f_t)_{I\to z})^{1/2} - \|(\Delta_t')_{I\to z}\|_2^2}{\|(f_t)_{I\to z}\|_2^2} \ge \frac{\eps^{-5}\val(f_t)^{1/2} - \|\Delta_t'\|_2^2}{\|f_t\|_2^2} - \frac{\gamma^2}{4}, \] and
\[ \|f_{t+1}\|_2^2 = \|(f_t)_{I \to z}\|_2^2 \ge \Omega(\gamma^2\eps^5)\|f_t\|_2^2 
\] 
(we used $\|\Delta_t'\|_2^2 \le \|f_t\|_2^2$ to replace the $\norm{g}_2^2/\norm{f}_2$ term in~\cref{lemma:phirr} by $1$). 
By \eqref{eq:deltadrop} we know that $\frac{\eps^{-5}\val(f_t)^{1/2} - \|\Delta_t'\|_2^2}{\|f_t\|_2^2} \ge \Phi_t + \frac{\gamma^2}{4}$. Thus, the conclusion follows from the fact that $f_{t+1} = (f_t)_{I\to z}$ and $\|(\Delta_t')_{I\to z}\|_2^2 \ge \|\Delta_{t+1}\|_2^2$. Indeed, the latter inequality follows because
\[ (\Delta_t')_{I\to z} = f_{t+1} - \sum_{i=1}^t \lambda_i(g_i)_{I\to z}(h_i)_{I\to z} - \|\Delta_t\|_2\kappa_1g_{I\to z}h_{I\to z}, \]
and $(g_1)_{I \to z}, \dots, (g_t)_{I \to z}$ and $g_{I \to z}$ are all functions from $\A^{S_{t+1}} \to \bbC$, and $(h_1)_{I \to z}, \dots, (h_t)_{I \to z}$ and $h_{I \to z}$ are all functions from $\A^{T_{t+1}} \to \bbC$, and that the SVD is the optimal way to reduce the $\ell_2$ norm of a matrix by subtracting rank one matrices.
\end{proof}

\subsubsection{Analyzing the Termination State}
We next analyze the functions $f_t$ and $\Delta_t$ upon the termination of the process. In the following lemma, we show that with noticeable probability, either the process terminated with 
$\Delta_t$ with small value while $\val(f_t)^{1/2}/\norm{f_t}^2$ has not decrease by much, or else  $\val(f_t)^{1/2}/\norm{f_t}^2$ is noticeably larger than $\val(f)^{1/2}/\norm{f}^2$.
Denote by $\tau$ the time step on which the iteration scheme ends, so either $\tau = 8\gamma^{-2}$ or $\val(\Delta_{\tau})^{1/2} < \gamma\|f_{\tau}\|_2^2$.
\begin{lemma}
\label{lemma:tlarge}
Let $T = 8\gamma^{-2}$. Over the randomized process described above, with probability at least $(\eps\|f\|_2/B)^{O(T^2)}$, we either have (\textbf{Case 1}) that:
\[ \frac{\val(f_T)^{1/2}}{\|f_T\|_2^2} \ge \frac{\val(f)^{1/2}}{\|f\|_2^2} + \eps^5 \enspace \text{ and } \enspace \|f_T\|_2^2 \ge (\eps\gamma)^{O(T)}\|f\|_2^2, \]
or (\textbf{Case 2}) there is some $\tau \in \{1, 2, \dots, T\}$ such that $\val(\Delta_\tau)^{1/2} < \gamma\|f_\tau\|_2^2$, $\|f_t\|_2^2 \ge (\eps\gamma)^{O(t)}$ for all $t \le \tau$, and
\[ \frac{\val(f_\tau)^{1/2}}{\|f_\tau\|_2^2} \ge \frac{\val(f)^{1/2}}{\|f\|_2^2} - \eps^5. \]
\end{lemma}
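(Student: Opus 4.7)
The plan is to condition on every non-terminating step of the iterative scheme being a ``success'' in the sense of \cref{lemma:iter}, and then to deduce Case 1 and Case 2 by examining the potential $\Phi_t$ at termination. Because the iteration runs for at most $T = 8\gamma^{-2}$ steps, chaining the per-step success probabilities of \cref{lemma:iter} is what delivers the $(\eps\|f\|_2/B)^{O(T^2)}$ bound.

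First I would establish the probability bound. Each step's success in \cref{lemma:iter} requires the non-termination condition $\val(\Delta_t)^{1/2}\geq \gamma\|f_t\|_2^2$ and occurs with conditional probability at least $\Omega(\gamma^4\eps^5\|f_t\|_2^6/B^6)$; crucially, success also supplies the deterministic lower bound $\|f_{t+1}\|_2^2\geq \Omega(\eps^5\gamma^2)\|f_t\|_2^2$. By induction, on the event that all steps $0,1,\ldots,t-1$ have succeeded we have $\|f_t\|_2^2\geq \Omega(\eps^5\gamma^2)^t\|f\|_2^2$, so multiplying the per-step conditional probabilities gives
\[
\prod_{t=0}^{T-1} \Omega(\gamma^4\eps^5)\cdot \Omega(\eps^5\gamma^2)^{3t}\cdot (\|f\|_2/B)^6 \;\geq\; (\eps\|f\|_2/B)^{O(T^2)},
\]
using $\gamma=\Theta(\eps^8)$. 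The same cascaded $\ell_2$ bounds give the inequalities $\|f_t\|_2^2\geq (\eps\gamma)^{O(t)}\|f\|_2^2$ required in both Cases.

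Having conditioned on this success event, let $\tau\leq T$ denote the termination time. From $\Phi_{t+1}\geq \Phi_t+\gamma^2/4$ for each $t<\tau$ we obtain $\Phi_\tau\geq \Phi_0+\tau\gamma^2/4$, and $\Delta_0=f$ gives $\Phi_0=\eps^{-5}\val(f)^{1/2}/\|f\|_2^2-1$. If $\tau=T$, then $\tau\gamma^2/4=2$ and, after dropping the nonnegative term $\|\Delta_T\|_2^2/\|f_T\|_2^2$ from $\Phi_T\geq \Phi_0+2$, rearranging yields $\val(f_T)^{1/2}/\|f_T\|_2^2\geq \val(f)^{1/2}/\|f\|_2^2+\eps^5$, giving Case 1. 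If instead $\tau<T$, the termination rule forces $\val(\Delta_\tau)^{1/2}<\gamma\|f_\tau\|_2^2$, and $\Phi_\tau\geq \Phi_0$ together with $\|\Delta_\tau\|_2^2/\|f_\tau\|_2^2\geq 0$ yields $\val(f_\tau)^{1/2}/\|f_\tau\|_2^2\geq \val(f)^{1/2}/\|f\|_2^2-\eps^5$, giving Case 2.

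The only genuinely delicate piece is the cascaded probability calculation: each step's success probability is a function of the random quantity $\|f_t\|_2^2$, which may shrink substantially under random restrictions. This is resolved precisely by the deterministic $\ell_2$ lower bound built into \cref{lemma:iter}, so that conditioning on successive successes produces a compounding deterministic lower bound on $\|f_t\|_2^2$. Once this is set up, the potential argument is elementary rearrangement of $\Phi_\tau\geq \Phi_0+\tau\gamma^2/4$, using only $\Delta_0=f$ and $\|\Delta_\tau\|_2^2\geq 0$.
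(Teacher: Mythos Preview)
Your proposal is correct and follows essentially the same approach as the paper's proof: both condition on the success event of \cref{lemma:iter} at every non-terminating step, chain the per-step probabilities via the cascaded $\ell_2$ bounds to obtain the $(\eps\|f\|_2/B)^{O(T^2)}$ estimate, and then read off Cases 1 and 2 from $\Phi_\tau\ge\Phi_0+\tau\gamma^2/4$ together with $\Delta_0=f$ and $\|\Delta_\tau\|_2^2\ge 0$. Your writeup is in fact slightly more explicit than the paper's in tracking the compounded product of per-step probabilities.
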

\begin{proof}
Consider the event that the conclusion of \cref{lemma:iter} is true at each iteration $i \le \tau$. Under this we get that $\|f_i\|_2^2 \ge (\eps\gamma)^{O(i)}\|f\|_2^2$ for all $i = 0, 1, \dots, T$. The probability of this is at least $(\eps\gamma/B)^{O(T)} \cdot \prod_{i=0}^t \|f_i\|_2^{O(1)} \ge (\eps\gamma\|f\|_2/B)^{O(T^2)}$.

In this event, we get that $\Phi_t \ge \Phi_0 + t\gamma^2/4$, for $t \le \tau$.
Thus
\[ \eps^{-5}\left(\frac{\val(f_t)^{1/2}}{\|f_t\|_2^2} - \frac{\val(f)^{1/2}}{\|f\|_2^2} \right) \ge t\frac{\gamma^2}{4} + \frac{\|\Delta_t\|_2^2}{\|f_t\|_2^2} - \frac{\|\Delta_0\|_2^2}{\|f\|_2^2} \ge t\frac{\gamma^2}{4} - 1, \] because $\Delta_0 = f$. If $t = 8\gamma^{-2}$, we are in Case 1. Otherwise, Case 2 holds.
\end{proof}
Intuitively, if the first case of Lemma~\ref{lemma:tlarge} holds then 
we have successfully made progress by increasing the value of $f_t$. At that point we could appeal to the inductive hypothesis and conclude that $f_t$ is correlated with a product function, and thus be done. Otherwise, the second case holds, and
we know that we have not lost too much in value when passing from $f$ to $f_t$ but at the same time we know that $\val(\Delta_t)$ is small, meaning that, in a sense, we have exhausted all of the meaningful parts in the SVD decompositions of $f$. In the next subsection, we show how to use this information to construct a function $f'$ related to $f$ that has significantly larger value, such that correlations of $f'$ with product functions are related to correlation of $f$ with product functions.
\subsubsection{Analyzing {\bf Case 2} in~\Cref{lemma:tlarge}}
In this section we provide a way to increment $\val(f)$ in the second case in~\cref{lemma:tlarge}. 
%The proof in this case resembles the proof of \cref{thm:swapnine} via \cref{lemma:99induct}.

\begin{lemma}
\label{lemma:svdinc}
If $\val(\Delta_t)^{1/2} \le c^2\eps^8\|f_t\|_2^2$, for sufficiently small $c$, and $\frac{\val(f_t)^{1/2}}{\|f_t\|_2^2} \in [\eps^2/2, 0.99]$ for some $t = O(\eps^{-16})$, then there are $B\eps^{-O(1)}\|f_t\|_2^{-1}$-bounded functions $g: \A^{S_t} \to \bbC, h: \A^{T_t} \to \bbC$ with $\|g\|_2 = \|h\|_2 = 1$ and constant $\eta_{g,h}$ with $|\eta_{g,h}| \ge \Omega(\eps^8)$ such that at least one of the following holds:
\begin{enumerate}
\item $\val(g) \ge (1+\Omega(\eps^2))\frac{\val(f_t)}{\|f_t\|_2^4}$ and for any  $I \subseteq S_t$, $z \in \A^I$, and $P: \A^{S_t \setminus I} \to \bbC$ it holds that \[ \l (f_t)_{I \to z}, Ph\r = \eta_{g,h}\|f_t\|_2\l g_{I\to z}, P \r. \]
\item $\val(h) \ge (1+\Omega(\eps^2))\frac{\val(f_t)}{\|f_t\|_2^4}$ and for any $I \subseteq T_t$, $z \in \A^I$, and $P: \A^{T_t \setminus I} \to \bbC$ it holds that
\[ \l (f_t)_{I \to z}, gP \r = \eta_{g,h} \|f_t\|_2\l h_{I \to z}, P \r. \]
\end{enumerate}
\end{lemma}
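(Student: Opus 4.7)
The plan is to construct $(g,h)$ from the $S_t$-SVD of $f_t$, leveraging that $\val(\Delta_t)$ is much smaller than $\val(f_t)$. As a first step, the triangle inequality for $\val^{1/4}$ (\cref{cor:norm}), together with $\val(\Delta_t)^{1/4}\le c\eps^4\|f_t\|_2$ and $\val(f_t)^{1/4}\ge (\eps/\sqrt 2)\|f_t\|_2$, gives
\[
\val(f')\ge (1-O(c\eps^3))\,V\|f_t\|_2^4,
\]
where $f':=\sum_{i\le t}\lambda_ig_ih_i$ and $V:=\val(f_t)/\|f_t\|_2^4$, so $f'$ carries essentially all of $f_t$'s swap norm.

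The next step is to select an index $j^*\le t$ and set $g:=g_{j^*}$, $h:=h_{j^*}$. Because $\Delta_t$ expands into singular vectors orthogonal to each $h_j$ with $j\le t$, we have $\l \Delta_t(x,\cdot),h_{j^*}\r_{T_t}=0$ pointwise in $x$, and hence $\l f_t(x,\cdot),h_{j^*}\r_{T_t}=\lambda_{j^*}g_{j^*}(x)$. This yields $\eta_{g,h}=\lambda_{j^*}/\|f_t\|_2$ and the $I=\emptyset$ identity in either case; the general $(I,z)$ version follows because projection onto $h$ commutes with restriction on $S_t$ (and symmetrically for Case~2). Boundedness of $g=g_{j^*}$ is automatic from $g_{j^*}(x)=\l f_t(x,\cdot),h_{j^*}\r_{T_t}/\lambda_{j^*}$ together with the trivial bound $|\l f_t(x,\cdot),h_{j^*}\r_{T_t}|\le B\|h_{j^*}\|_1\le B$, giving $B/\lambda_{j^*}$-boundedness; the symmetric formula $h_{j^*}(y)=\l f_t(\cdot,y),g_{j^*}\r_{S_t}/\lambda_{j^*}$ handles $h$. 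The lower bound $\eta_{g,h}\ge\Omega(\eps^8)$ follows from a pigeonhole argument: tracking the guarantees of the iterative scheme (\cref{lemma:iter}) yields that $\|f'\|_2^2=\sum_i\lambda_i^2$ is an $\Omega(1)$-fraction of $\|f_t\|_2^2$, and with $t\le O(\eps^{-16})$ this gives $\lambda_1^2\ge\|f'\|_2^2/t\ge\Omega(\eps^{16})\|f_t\|_2^2$.

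The main technical obstacle is verifying the $(1+\Omega(\eps^2))$-improvement. Expanding
\[
\val(f')=\sum_{i,j,k,l}\lambda_i\lambda_j\lambda_k\lambda_l\,\val(g_ih_i,g_jh_j,g_kh_k,g_lh_l)
\]
and factorizing each summand using \cref{lemma:box} into swap norms over $S_t$ and $T_t$ respectively, the off-diagonal cross-terms are controlled via \cref{lemma:valperp} (giving $\val(u,v,u,v)\le 2/3$ for orthogonal unit $u,v$) together with the Cauchy-Schwarz-Gowers inequalities (\cref{lemma:cs1}). Following the spirit of the $99\%$-regime argument in \cref{lemma:99induct}, a case analysis on whether $\lambda_1^2$ dominates or multiple singular values are comparable, combined with choosing $j^*$ to maximize $\lambda_{j^*}^4\val(g_{j^*})\val(h_{j^*})$, forces one of $\val(g_{j^*})$ or $\val(h_{j^*})$ to strictly exceed $(1+\Omega(\eps^2))V$. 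The hypothesis $V\le 0.99^2$ is essential here, providing the needed room above $V$ to obtain a multiplicative improvement --- the target $(1+\Omega(\eps^2))V$ remains strictly below $1$, the maximum swap norm of a unit vector. This case analysis, subtle due to the potential interaction of several near-top singular pairs, is the crux of the proof.
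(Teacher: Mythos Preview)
Your proposal has a genuine gap at precisely the point you identify as ``the crux of the proof.'' You commit to always taking $g=g_{j^*}$ and $h=h_{j^*}$ for a \emph{single} index $j^*$, and assert without argument that one of $\val(g_{j^*}),\val(h_{j^*})$ must exceed $(1+\Omega(\eps^2))V$. The paper's proof shows this is not how the argument goes: when $\lambda_1$ is bounded away from $1$ and the off-diagonal terms $2\sum_{i<j}\lambda_i^2\lambda_j^2\val(h_i,h_j,h_i,h_j)$ are at least $(\delta/4)\val(f)$, the paper does \emph{not} pick a single singular vector. Instead it takes random unit phases $\beta_1,\dots,\beta_t$ and sets $g=\sum_i t^{-1/2}\bar\beta_i g_i$, $h=\sum_i \beta_i\lambda_i h_i$, so that
\[
\E_{\beta}[\val(h)]=\sum_i\lambda_i^4\val(h_i)+4\sum_{i<j}\lambda_i^2\lambda_j^2\val(h_i,h_j,h_i,h_j),
\]
which exceeds $\val_{T_t}(F)$ by one extra copy of the off-diagonal sum, yielding the $(1+\Omega(\eps^2))$ gain. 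If all $\val(h_i)$ (and by symmetry all $\val(g_i)$) sit near $V$ while the cross terms carry the mass, no single $j^*$ suffices. Your sketch gives no mechanism to handle this case.

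Two further issues are hidden by the hand-waving. First, you expand $\val(f')$ directly over the full swap on $R_t$ and invoke \cref{lemma:valperp} on ``the off-diagonal cross-terms,'' but the full expansion contains terms like $\val(g_i,g_i,g_i,g_j)\val(h_i,h_i,h_i,h_j)$ that are neither diagonal nor of the $\val(u,v,u,v)$ shape; the paper sidesteps this by passing to $\val_{T_t}(F)$, which swaps only on $T_t$ and collapses the $g$-side via orthonormality to $\delta_{ik}\delta_{jl}$. Second, your pigeonhole gives $\lambda_1\ge\Omega(\eps^8)\|f_t\|_2$, but your chosen $j^*$ need not be $1$; the paper first discards all $i$ with $\lambda_i<c\eps^4/t$ (absorbing them into $\Delta$ at negligible cost) so that every surviving $\lambda_i$ is $\Omega(\eps^{20})$, which is what makes $\eta_{g,h}\ge\Omega(\eps^8)$ hold in every case, including the random-combination case where $\eta_{g,h}=t^{-1/2}$.
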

Before proving this, we need a technical lemma, asserting that the singular vectors of large singular values are bounded in $\ell_\infty$. This helps to prove that $g, h$ in \cref{lemma:svdinc} are bounded.
\begin{lemma}
\label{lemma:svdinf}
Let $f \in \bbC^{S \times T}$ and let $g \in \bbC^S$ and $h \in \bbC^T$ with $\|g\|_2 = \|h\|_2 = 1$ be singular vectors of $f$ with singular value $\lambda$. Then $g, h$ are $B/\lambda$-bounded.
\end{lemma}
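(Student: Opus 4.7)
The plan is to invoke the defining operator identity for singular vectors and then to apply a one-line Cauchy--Schwarz. Specifically, interpreting $f$ as the kernel of an operator $T: L^2(T) \to L^2(S)$ (under the uniform-measure inner product, so that $\|g\|_2 = \|h\|_2 = 1$ has its $L^2$ meaning), the singular value decomposition relation for the pair $(g,h,\lambda)$ reads
\[ \lambda g(x) \;=\; \E_{y \sim T}\!\left[f(x,y)\,\bar{h(y)}\right] \quad \text{for every } x \in S, \]
and symmetrically $\lambda \bar{h(y)} = \E_{x \sim S}[\,\bar{f(x,y)}\,g(x)\,]$ for every $y \in T$. This is the standard pair of identities $Th = \lambda g$ and $T^*g = \lambda h$.

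Given these, the pointwise bound on $g$ is immediate. For any fixed $x \in S$, Cauchy--Schwarz against the uniform measure on $T$ gives
\[ |\lambda g(x)| \;\le\; \E_y\!\left[|f(x,y)|\,|h(y)|\right] \;\le\; \left(\E_y |f(x,y)|^2\right)^{\!1/2}\!\left(\E_y |h(y)|^2\right)^{\!1/2} \;\le\; B \cdot \|h\|_2 \;=\; B, \]
using that $f$ is $B$-bounded coordinate-wise (so $\E_y|f(x,y)|^2 \le B^2$) and that $\|h\|_2 = 1$. Dividing by $\lambda$ yields $|g(x)| \le B/\lambda$. The identical argument applied to the symmetric identity yields $|h(y)| \le B/\lambda$ for every $y \in T$.

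There is essentially no obstacle in the argument; the entire content is the one-line Cauchy--Schwarz above. The only point requiring care is a bookkeeping step: since the $J$-SVD of \cref{def:svd} is phrased as the matrix SVD of $M_{x,y} = f(x,y)$, one must reconcile the matrix singular vector normalization with the $L^2(\text{uniform})$ normalization used elsewhere in the paper by rescaling singular vectors by appropriate factors of $\sqrt{|S|}$ and $\sqrt{|T|}$. Once this is done, the operator identity above holds with expectation (rather than sum) as written, and the Cauchy--Schwarz inequality produces the correct $B/\lambda$ bound.
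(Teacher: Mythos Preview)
Your proof is correct and essentially identical to the paper's: both use the singular-vector identity $\lambda g = fh$ (in expectation form) and bound $|g(x)|$ pointwise, with the only cosmetic difference that the paper applies the $\ell_\infty$--$\ell_1$ H\"older bound $|\lambda g(x)| \le B\|h\|_1 \le B\|h\|_2$ whereas you use Cauchy--Schwarz to reach the same $B/\lambda$ bound.
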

\begin{proof}
Thinking of $f$ as a matrix and $g$ and $h$ as vectors, we have that $g = \frac{1}{\lambda}f^*h$. Thus
\[ 
\|g\|_\infty \le \frac{B\|h\|_1}{\lambda} \le \frac{B\|h\|_2}{\lambda} = \frac{B}{\lambda}. 
\qedhere
\]
\end{proof}
\begin{proof}[Proof of \cref{lemma:svdinc}]
By scaling, we may assume that $\|f_t\|_2 = 1$. For simplicity of notation, let $f := f_t$, consider the $S_T$-SVD decomposition of $f$, say 
$f=\sum\limits_{i}\lambda_ig_i h_i$ where 
$\lambda_{i}\geq\lambda_{i+1}$ for all $i$. 
Define
\[ 
\Delta := f - \sum_{i \in [t], \lambda_i \ge \frac{c\eps^4}{t}} \lambda_i g_ih_i 
\] 
and recall that 
$\Delta_t = f-\sum\limits_{i\in [t]}\lambda_ig_ih_i$.
By \cref{cor:norm}, we know that
\[ 
\val(\Delta)^{1/4} \le 
\val(\Delta_t)^{1/4} + \val(\Delta-\Delta_t)^{1/4}
\leq 
c\eps^4+\sum\limits_{i\in [t],\lambda_i<\frac{c\eps^4}{t}} \lambda_i
\le 2c\eps^4. 
\] 
Thus, $\Delta$ has a small value and we henceforth work with it. For simplicity, relabel $t$ as the number of $\lambda_i$ subtracted in the definition of $\Delta$, so that $\Delta = f - \sum_{i=1}^t \lambda_ig_ih_i$.  We remark that (by the upper bound on $t$ in the statement) we get that $\lambda_i\geq \Omega(c\eps^{20})$ for each $i=1,\ldots,t$.

Let $F = f - \Delta = \sum_{i=1}^t \lambda_ig_ih_i$. By \cref{cor:norm} we know that 
\begin{equation}
    \val(F)^{1/4} \ge \val(f)^{1/4} - \val(\Delta)^{1/4} \ge \val(f)^{1/4}(1-4c\eps^2). \label{eq:lowerF}
\end{equation}
The remaining analysis is split into two cases depending on the size of $\lambda_1$.

\paragraph{Case 1: $\lambda_1 \ge 1 - \eps^2/10^{10}$.} By \cref{lemma:l2,cor:norm} we know that
\begin{align*}
\val(\lambda_1g_1h_1)^{1/4} &\ge \val(F)^{1/4} - \val\left(\sum_{i=2}^t \lambda_ig_ih_i \right)^{1/4}
\\&\ge \val(f)^{1/4}(1-4c\eps^2) - \left\|\sum_{i=2}^t \lambda_ig_ih_i \right\|_2 \\
&\ge 0.999\val(f)^{1/4},
\end{align*}
where we have used that $\left\|\sum_{i=2}^t \lambda_ig_ih_i \right\|_2 \le \sqrt{1-\lambda_1^2} \le \eps/10^5$.
Because $\val(g_1h_1) = \val(g_1)\val(h_1)$ we conclude that because $\val(f) \le 0.99$, \[ \val(g_1) \ge (1+\Omega(1))\val(f) \enspace \text{ or } \enspace \val(h_1) \ge (1+\Omega(1))\val(f). \] Assume the latter by symmetry. In this case, let $g = g_1, h = h_1$. Then item 2 follows because
\begin{align*}
    \l f_{I\to z}, gP \r &= \Big\l \sum_i \lambda_i g_i(h_i)_{I \to z}, g_1P \Big\r = \lambda_1 \l h_{I \to z}, P \r.
\end{align*}
and $g_1, h_1$ are $O(B)$-bounded by \cref{lemma:svdinf}, and $\eta_{g,h} := \lambda_1$.

\paragraph{Case 2: $\lambda_1 \le 1 - \eps^2/10^{10}$.} Let $\delta = 1-\lambda_1^2$. Define
\[ \val_{T_t}(F, F, F, F) := \E_{\substack{x,y \sim \A^{R_t} \\ (x',y') \sim (x\lr_{T_t}y)}} F(x)F(y)\bar{F(x')}\bar{F(y')}, \] where $(x',y') \sim (x \lr_{T_t} y)$ denotes that $(x_i',y_i') = (x_i,y_i)$ for $i \in R_t \setminus T_t$, and otherwise $(x_i',y_i') = (x_i,y_i)$ or $(y_i,x_i)$ with probability $1/2$ for $i \in T_t$. Then we can see that
\begin{align}
\val(f)(1-16c\eps^2) \le \val(F) \le \val_{T_t}(F) = \sum_{i=1}^t \lambda_i^4 \val(h_i,h_i,h_i,h_i) + 2\sum_{1\le i<j \le t} \lambda_i^2\lambda_j^2 \val(h_i,h_j,h_i,h_j). \label{eq:valf}
\end{align}
Here, the first inequality is by \eqref{eq:lowerF}, and the second follows because (using \eqref{eq:iden})
\begin{align*}
    \val(f,f,f,f) &= \E_{x,y\sim\A^n} \left|\E_{(x',y') \sim (x \lr y)} f_1(x')f_2(y') \right|^2 \\
    &= \E_{x,y\sim\A^n} \left|\E_{(x',y') \sim (x \lr y)} \E_{(x'',y'') \sim (x' \lr_{T_t} y')} f_1(x'')f_2(y'') \right|^2 \\
    &\le \E_{x,y\sim\A^n} \E_{(x',y') \sim (x \lr y)} \left|\E_{(x'',y'') \sim (x' \lr_{T_t} y')} f_1(x'')f_2(y'') \right|^2 \\
    &= \E_{x,y\sim\A^n} \left|\E_{(x',y') \sim (x \lr_{T_t} y)} f_1(x')f_2(y') \right|^2\\ 
    &= \val_{T_t}(f,f,f,f).
\end{align*}
We further split into two cases depending on the size of the second term in~\eqref{eq:valf}.
\paragraph{Case 2(a): The second term is large.} Consider the case that
\[ 2\sum_{1\le i<j \le t} \lambda_i^2\lambda_j^2 \val(h_i,h_j,h_i,h_j) \ge \frac{\delta}{4}\val(f). \]
In this case let $\beta_1, \dots, \beta_t \in \bbC$ be i.i.d.~unit complex numbers, let $\alpha_i = t^{-1/2}\beta_i$, and $g := \sum_{i=1}^t \bar{\alpha_i} g_i$. Then $\|g\|_2 = 1$ and
\begin{align*}
    \l f_{I \to z}, gP \r 
    = \Big\l \sum_i \lambda_i g_i(h_i)_{I \to z}, \Big(\sum_{i=1}^t \overline{\alpha_i} g_i\Big)P \Big\r 
    = \Big\l \sum_{i=1}^t \lambda_i\alpha_i(h_i)_{I \to z}, P \Big\r = t^{-1/2}\l h_{I \to z}, P \r
\end{align*} for $h := \sum_{i=1}^t \beta_i\lambda_ih_i$, so $\|h\|_2 \le 1$, so the second part of item 2 holds with $\eta_{g,h} = t^{-1/2} \ge \Omega(\eps^8)$. A calculation yields that
\begin{align*}
\E_{\beta_1,\dots,\beta_t} \val(h) &= \sum_{i,j,k,\ell \in [t]} \lambda_i\lambda_j\bar{\lambda_k\lambda_{\ell}} \val(h_i,h_j,h_k,h_{\ell}) \E[\beta_i\beta_j\bar{\beta_k \beta_{\ell}}] \\
&= \sum_{i=1}^t \lambda_i^4 \val(h_i,h_i,h_i,h_i) + 4\sum_{1\le i<j \le t} \lambda_i^2\lambda_j^2 \val(h_i,h_j,h_i,h_j) \\
&\ge \val(f)(1-16c\eps^2) + 2\sum_{1\le i<j \le t} \lambda_i^2\lambda_j^2 \val(h_i,h_j,h_i,h_j) \\
&\ge \val(f)(1-16c\eps^2) + \frac{\delta}{4}\val(f) \ge \val(f)(1+\Omega(\eps^2)),
\end{align*}
for sufficiently small $c$. Here, the second equality follows because $\E[\beta_i\beta_j\bar{\beta_k \beta_{\ell}}] = 1$ only if $\{i,j\} = \{k,\ell\}$, and otherwise is $0$. Also, the first inequality uses \eqref{eq:valf}. Thus, item 2 holds for some choice of $\beta_1, \dots, \beta_t$. The functions $g$ and $h$ are $B\eps^{-O(1)}$ bounded by \cref{lemma:svdinf} because $\lambda_i \ge \eps^{O(1)}$ for all $i$.

\paragraph{Case 2(b): The second term is small.} Precisely, that
\[ 2\sum_{1\le i<j \le t} \lambda_i^2\lambda_j^2 \val(h_i,h_j,h_i,h_j) \le \frac{\delta}{4}\val(f). \]
In this case, we know that
\[ \sum_{i=1}^t \lambda_i^4 \val(h_i,h_i,h_i,h_i) \ge \val(f)(1-16c\eps^2) - \frac{\delta}{4}\val(f) \ge (1-\delta/3)\val(f). \]
Now we remove all small $\lambda_i$ to get that
\[ \sum_{1\le i\le t, \lambda_i^2 > \frac{\delta\val(f)}{10}} \lambda_i^4 \val(h_i,h_i,h_i,h_i) \ge (1-\delta/3)\val(f) - \delta\val(f)/10 \ge (1-\delta/2)\val(f). \]
To conclude, note that
\[ \max_{i:\lambda_i^2 > \frac{\delta\val(f)}{10}} \val(h_i) \ge \frac{(1-\delta/2)\val(f)}{\sum_i \lambda_i^4} \ge \frac{(1-\delta/2)\val(f)}{\lambda_1^2} \ge (1+\Omega(\delta))\val(f). \]
Thus, picking $i$ achieving the maximum on the left hand side, we may set $g = g_i, h = h_i$ and recall from above that $\l f_{I \to z}, gP \r = \lambda_i \l (h_i)_{I \to z}, P \r$, where $\lambda_i^2 \ge \Omega(\eps^6)$. Thus the second part of item 2 holds for $\eta_{g,h} := \lambda_i \ge \Omega(\eps^3)$. Finally, $g_i, h_i$ are $B\eps^{-O(1)}$ bounded by \cref{lemma:svdinf}.
\end{proof}

\subsection{Main Induction: Proving \cref{thm:swap}}
\label{subsec:swapinduct}
Now we apply the iteration scheme of \cref{subsec:iter} to prove \cref{thm:swap}, restated below, by an inductive argument.
\swap*
\begin{proof}
We prove the following statement by induction on $\eps$ and $B$: under the hypotheses of \cref{thm:swap}, there is a constant $\delta(\eps, B)$ such that
\begin{align} \E_{\substack{I\sim_{1-\delta(\eps, B)} [n] \\ z \sim \A^I}} \left[ \sup\Big\{\Big|\Big\l f_{I\to z}, \prod_{i\in \bar{I}}P_i \Big\r \Big| : P_i: \A \to \bbC, \|P_i\|_2 \le 1 \text{ for } i \in \bar{I} \Big\} \right] \ge \delta(\eps, B). \label{eq:inductswap} \end{align}
\cref{thm:swap} follows from \eqref{eq:inductswap}, because $f$ is $B$-bounded. If $\eps \ge 0.99$ then \eqref{eq:inductswap} follows from \cref{thm:swapnine}. For the inductive step, note that for each $i = 0, 1, \dots, t$, the distribution of set $R_i$ is $R_i\sim_{2^{-i}} [n]$. Also, conditioned on $R_i$, $S_i \sim_{1/2} R_i$.
Let $\gamma = c\eps^8$ for a sufficiently small constant $c$. The conclusion of \cref{lemma:tlarge} holds with probability at least $q(\eps, B) := (\eps/B)^{O(T^2)}$. In Case 1 of \cref{lemma:tlarge}, let $\tau := 8\gamma^{-2}$, and in Case 2 let $\tau$ be as in the statement.

We first consider Case 1, when $\tau = 8\gamma^{-2}$. If so then \[ \frac{\val(f_\tau)^{1/2}}{\|f_\tau\|_2^2} \ge \frac{\val(f)^{1/2}}{\|f\|_2^2} + \eps^5 \ge \eps^2 + \eps^5 \] by \cref{lemma:tlarge}. Let $\eps' := (\eps^2 + \eps^5)^{1/2} = \eps + \Omega(\eps^4)$ and $B' := B\|f\|_2/\|f_\tau\|_2 \ge B\eps^{-O(\tau)}$ by \cref{lemma:tlarge}. By induction, a random restriction of $f_\tau$ correlates to a product function, i.e.,
\begin{align*} 
\E_{\substack{I\sim_{1-\delta(\eps', B')} R_t \\ z \sim \A^I}} \left[ \sup\Big\{\Big|\Big\l (f_\tau)_{I\to z}, \prod_{i\in R_t \setminus I}P_i \Big\r \Big| : P_i: \A \to \bbC, \|P_i\|_2 \le 1 \text{ for } i \in R_t \setminus I \Big\} \right] &\ge \delta(\eps', B')\|f_t\|_2 \\
&\ge \delta(\eps', B')\eps^{O(\tau)}.
\end{align*}
Because restricting onto $(R_t \setminus I) \sim_{\delta(\eps',B')} R_\tau$ is equivalent in distribution to restricting onto a set sampled $\sim_{2^{-\tau}\delta(\eps',B')} [n]$, we conclude that:
\[ \E_{\substack{I\sim_{1-2^{-\tau}\delta(\eps', B')} [n] \\ z \sim \A^I}} \left[ \sup\Big\{\Big|\Big\l f_{I\to z}, \prod_{i\in \bar{I}}P_i \Big\r \Big| : P_i: \A \to \bbC, \|P_i\|_2 \le 1 \text{ for } i \in \bar{I} \Big\} \right] \ge q(\eps, B)\delta(\eps', B')\eps^{O(\tau)}. \] Thus this case follows by induction for the choice (where $\tau \le \eps^{-O(1)}$) for
\[ \delta(\eps, B) := q(\eps, B)\eps^{O(\tau)}\delta(\eps+\Omega(\eps^4), B\eps^{-O(\tau)}). \]
In Case 2, by symmetry we may assume that item 2 of \cref{lemma:svdinc} holds. Then with probability at least $q(\eps, B)/2$ over $R_t \sim_{2^{-\tau}} [n], z \sim \A^{[n]\setminus R_t}$, and $S_t \sim_{1/2} R_t$, that there are functions $g, h$ satisfying item 2 of \cref{lemma:svdinc}.
By item 2 of \cref{lemma:svdinc} we know that
\[ \val(h) \ge (1+\Omega(\eps^2)) \frac{\val(f_\tau)}{\|f_\tau\|_2^4} \ge (1+\Omega(\eps^2))\left(\frac{\val(f)^{1/2}}{\|f\|_2^2} - \eps^5\right)^2 \ge (1+\Omega(\eps^2))\val(f), \] as $\val(f)^{1/2} \ge \eps^2$ and $\|f\|_2 = 1$. Let $\eps' = \eps + \Omega(\eps^3)$ and $B' = B\eps^{-O(\tau)}$ so that $h$ is $B'$-bounded (recall that $\|f_\tau\|_2^2 \ge \eps^{O(\tau)}$). By induction we know that a random restriction of $h$ correlates to a product function, i.e.,
\begin{align} \E_{\substack{I\sim_{1-\delta(\eps', B')} T_t \\ z \sim \A^I}} \left[ \sup\Big\{\Big|\Big\l h_{I\to z}, \prod_{i\in T_t \setminus I}P_i \Big\r \Big| : P_i: \A \to \bbC, \|P_i\|_2 \le 1 \text{ for } i \in T_t \setminus I \Big\} \right] \ge \delta(\eps', B'). \label{eq:ttrr} \end{align}
In particular, note that the random restriction in the above equation is independent of $S_t$. For a fixed restriction $I\sim_{1-\delta(\eps', B')} T_t, z \sim \A^I$ let $P_{I, z} := \prod_{i \in T_t \setminus I} P_i$ denote the product function guaranteed in \eqref{eq:ttrr}. Then, we know that:
\begin{align*} 
\E_{\substack{I\sim_{1-\delta(\eps', B')} T_t \\ z \sim \A^I}} \E_{z' \sim \A^{S_t}} |\l (f_\tau)_{(S_t, I) \to (z', z)}, g_{S_t \to z'}P_{I,z} \r| 
&\ge \E_{\substack{I\sim_{1-\delta(\eps', B')} T_t \\ z \sim \A^I}} |\l (f_\tau)_{I \to z}, gP_{I,z} \r| \\
\\ 
&= \eta_{g,h}\|f_\tau\|_2 \E_{\substack{I\sim_{1-\delta(\eps', B')} T_t \\ z \sim \A^I}} |\l h_{I\to z}, P_{I,z} \r| \\
&\ge \eps^{O(\tau+1)}\delta(\eps', B'),
\end{align*}
where the first transition is by the triangle inequality, the second transition is by \cref{lemma:svdinc}, and the third one is by the choice of $P_{I,z}$. The result follows because:
\begin{enumerate}
    \item The hypotheses required to apply \cref{lemma:svdinc} occur with probability at least $q(\eps, B)$.
    \item $g_{S_t \to z'}$ is a constant bounded by $B\eps^{-O(\tau)}$ (by \cref{lemma:svdinc}), and
    \item $(f_\tau)_{(S_t, I) \to (z', z)}$ has identical distribution as $f_{J \to x}$ for $J \sim_{1-2^{-(\tau+1)}\delta(\eps',B')}$ and $x \sim \A^J$.
\end{enumerate}
This completes the induction, by setting $\delta(\eps, B) = B^{-1}\eps^{O(\tau+1)}q(\eps, B)\delta(\eps', B')$. Solving the recursion for $\delta$ gives the quantitative bounds claimed in the theorem.
\end{proof}

\section{Correlation to Bounded Product Functions}
\label{sec:boundedprob}
Note that \cref{thm:swap} says that (under random restriction) there is an $\ell_2$-bounded product function that correlates with the $1$-bounded function $f$. In this section, we prove that this random restriction must in fact correlate to a $1$-bounded product function.
\begin{theorem}
\label{thm:boundedprod}
Let $f: \A^n \to \bbC$ be a function with 
$\|f\|_4\leq 1$ and $P_i: \A \to \bbC$ be such that $\|P_i\|_2 \le 1$ and
\[ \Big|\E_{x\sim\A^n}[f(x) \prod_{i=1}^n P_i(x_i)] \Big| \ge \delta. \]
Then there are functions $P_i': \A \to \D$ such that
\[ \Big|\E_{x\sim\A^n}[f(x) \prod_{i=1}^n P_i'(x_i)] \Big| \ge \delta^{O(1)}. \]
\end{theorem}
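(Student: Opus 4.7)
The plan is to reduce the $\ell_2$-bounded product function to a $\D$-valued one via truncation, controlling the error using the $\ell_4$ norm bound on $f$. For each coordinate $i$, split $P_i = \tilde P_i + R_i$ where $\tilde P_i := P_i\mathbbm{1}[|P_i|\le 1]$ and $R_i := P_i\mathbbm{1}[|P_i|>1]$, so $\tilde P_i$ is automatically $\D$-valued while $|R_i|\ge 1$ on its support. Expanding the product yields
\[
\prod_iP_i(x_i) \;=\; \sum_{S\subseteq [n]}\prod_{i\in S}R_i(x_i)\prod_{i\notin S}\tilde P_i(x_i),
\]
and the $S=\emptyset$ term is exactly the correlation of $f$ with the bounded product $\prod_i \tilde P_i$, which is the target quantity.

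The next step is to bound the $S\ne \emptyset$ error terms by Hölder's inequality with exponents $(4,4/3)$. Since $|R_i|\ge 1$ on its support we have $|R_i|^{4/3}\le |R_i|^2$ pointwise, giving $\|R_i\|_{4/3}\le \|R_i\|_2^{3/2}$; combined with $|\tilde P_i|\le 1$ and independence of coordinates,
\[
\Bigl|\E\Bigl[f\prod_{i\in S}R_i\prod_{i\notin S}\tilde P_i\Bigr]\Bigr| \;\le\; \|f\|_4\cdot \prod_{i\in S}\|R_i\|_{4/3}\;\le\;\prod_{i\in S}\|R_i\|_2^{3/2}.
\]
Summing over $S\ne \emptyset$ bounds the total error by $\prod_i(1+\|R_i\|_2^{3/2})-1$. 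If this quantity is at most $\delta/2$, then the triangle inequality gives $|\E[f\prod_i\tilde P_i]|\ge \delta/2$ and we conclude by setting $P_i' = \tilde P_i$.

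The main obstacle is the opposite case: when many of the $\|R_i\|_2$'s are close to $1$, the aggregate error can blow up exponentially in $n$ and exceed $\delta$. To handle this regime, the plan is to refine the truncation into dyadic layers $P_i^{(k)}$ supported on $\{2^k<|P_i|\le 2^{k+1}\}$, exploiting the sharper inequality $\|P_i^{(k)}\|_{4/3}\le 2^{-k/2}\|P_i^{(k)}\|_2^{3/2}$, which provides geometric decay in the level $k$. A pigeonhole-plus-rescaling argument over the dyadic choices should then single out a dominant tuple $\vec k$ and yield a $\D$-valued product function (each factor being $2^{-(k_i+1)}P_i^{(k_i)}$) whose correlation with $f$ is $\delta^{O(1)}$. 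The key technical challenge will be organizing the pigeonholing so that the final loss is polynomial in $\delta$ rather than exponentially small in $n$; this relies essentially on combining the geometric decay in $k$ with the $\ell_2$ constraint $\sum_k \|P_i^{(k)}\|_2^2 \le 1$ to prevent the product structure from producing an exponential blowup.
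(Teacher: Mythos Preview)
Your initial truncation step is fine, and you correctly identify that the hard case is when many coordinates have significant mass above $1$. The gap is in your handling of this hard case: the dyadic pigeonhole you sketch does not avoid the exponential loss in $n$. Expanding $\prod_i P_i = \prod_i \sum_k P_i^{(k)}$ produces a sum over tuples $\vec k \in \{-1,0,1,\dots\}^n$, and the number of tuples with $k_i$ bounded is still exponential in $n$. Even after rescaling $P_i^{(k_i)}$ by $2^{-(k_i+1)}$, the per-coordinate geometric factor $2^{-k_i/2}$ and the $\ell_2$ constraint $\sum_k \|P_i^{(k)}\|_2^2 \le 1$ only give you bounds of the form $\prod_i (1+c_i)$, which generically blows up. You have not indicated any mechanism (and I do not see one along these lines) that converts the per-coordinate savings into a global $\delta^{O(1)}$ bound independent of $n$.

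The paper takes a fundamentally different route that sidesteps per-coordinate decomposition entirely. After absorbing the phases so that the $P_i$ are positive reals, it slices the \emph{global} scalar $\ln P(x) = \sum_i \ln P_i(x_i)$ into intervals $[L,L+\alpha]$: the Hölder/$\|f\|_4$ argument (exactly your $|P|^{4/3}\le |P|^2$ trick) confines the main contribution to $|\ln P(x)|\le O(\log(1/\delta))$, and on a short interval $e^L$ is essentially constant so one can replace $P(x)$ by $1_{\ln P(x)\in[L,L+\alpha]}$. The crucial step is then to approximate this indicator by a superposition of characters $e^{2\pi i\theta \ln P(x)}$ via a smoothed Fourier expansion with controlled $L^1$ coefficient norm. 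The point is that each such character is \emph{automatically} a product of unit-modulus functions, $e^{2\pi i\theta\ln P(x)} = \prod_i P_i(x_i)^{2\pi i\theta}$. An averaging argument over $\theta$ then extracts one with correlation $\delta^{O(1)}$. The reason this avoids the exponential blowup is that the slicing is of a single one-dimensional quantity, not a product of $n$ separate slicings; this is the idea your proposal is missing.
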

Fix $f$ and $P$ as in the setting of the theorem.
By making small perturbations in $P$ we may assume that $P(x) \neq 0$ for all $x$.
Our proof relies on the following standard approximation of intervals by exponential sums.
\begin{lemma}
\label{lemma:exp}
For any real numbers $R \ge L$ and $\eta$ there is $c: \R \to \bbC$ and function $b(x) := \int_{-\infty}^\infty c(\theta)e^{2\pi i\theta x}$ satisfies:
\begin{itemize}
    \item $\int_{-\infty}^\infty |c(\theta)| d\theta \le \left(\frac{R-L}{\eta}\right)^{1/2}$, and
    \item For $x \in [L+\eta/2, R-\eta/2]$ it holds that $b(x) = 1$
    \item For $x \in (-\infty, L-\eta/2] \cup [R+\eta/2, \infty)$ it holds that $b(x) = 0$,
    \item For all $x \in \R$ it holds that $0 \le b(x) \le 1$.
\end{itemize}
\end{lemma}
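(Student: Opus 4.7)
The idea is to construct $b$ as a convolution of two indicator functions whose widths are $R-L$ and $\eta$ respectively, and then realize $c$ as its Fourier transform. Concretely, I would set
\[ f_1 := \one_{[L,R]}, \qquad f_2 := \eta^{-1} \one_{[-\eta/2,\eta/2]}, \qquad b := f_1 * f_2. \]
A direct calculation shows that for $x\in[L+\eta/2, R-\eta/2]$, the shifted interval $[x-\eta/2, x+\eta/2]$ is contained in $[L,R]$, so $b(x) = \eta^{-1}\int_{-\eta/2}^{\eta/2} \one_{[L,R]}(x-t)\,dt = 1$. For $x \le L-\eta/2$ or $x\ge R+\eta/2$, the two intervals are disjoint and $b(x) = 0$. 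Since $f_2\ge 0$ integrates to $1$ and $0\le f_1\le 1$, we get $0 \le b(x) \le 1$ for every $x$, so the three pointwise conditions on $b$ are satisfied.

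Next, I would define $c$ to be the Fourier transform $c(\theta) := \widehat{b}(\theta) = \widehat{f_1}(\theta)\,\widehat{f_2}(\theta)$ (under the convention $\widehat{g}(\theta) = \int g(x) e^{-2\pi i \theta x}\,dx$). Since $b$ is continuous, compactly supported, and piecewise linear (hence of bounded variation), Fourier inversion gives $b(x) = \int c(\theta) e^{2\pi i \theta x}\,d\theta$ pointwise, which matches the statement. The bound on $\int |c|$ is the point of the factorization: by Cauchy--Schwarz and Parseval,
\[ \int_{-\infty}^{\infty} |c(\theta)|\,d\theta \le \Big(\int |\widehat{f_1}|^2\Big)^{1/2}\Big(\int |\widehat{f_2}|^2\Big)^{1/2} = \|f_1\|_2 \|f_2\|_2 = \sqrt{R-L}\cdot \sqrt{1/\eta} = \left(\frac{R-L}{\eta}\right)^{1/2}, \]
which is exactly the desired estimate.

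\textbf{Main obstacle.}
There is no real analytic obstacle here, as the construction is classical; the only subtlety is the justification of pointwise Fourier inversion, which is immediate because $b$ is continuous and of bounded variation (alternatively, because $\widehat b \in L^1 \cap L^2$). In the edge case $R - L < \eta$ the interval $[L+\eta/2, R-\eta/2]$ is empty, so the corresponding bullet is vacuous and the same construction still works.
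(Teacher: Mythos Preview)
Your proposal is correct and essentially identical to the paper's proof: the paper also sets $b = \eta^{-1} I_{[L,R]} * I_{[-\eta/2,\eta/2]}$, declares the pointwise bullets evident, and bounds $\int|c|$ via Cauchy--Schwarz and Parseval exactly as you do. Your write-up is slightly more detailed in verifying the pointwise properties and in remarking on Fourier inversion and the edge case $R-L<\eta$, but the construction and the estimate are the same.
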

\begin{proof}
Let $I_{[A,B]}$ be the indicator function of the interval $[A,B]$. Let $b = \eta^{-1} I_{[L,R]} * I_{[-\eta/2,\eta/2]}$. All items except the first are evident. For the first item, for a function $g: \R \to \bbC$ let $\hat{g}(\theta) := \int_{-\infty}^{\infty} g(x) e^{-2\pi i\theta x} dx$ be the Fourier transform. By Fourier inversion,
\begin{align*}
    \int_{-\infty}^\infty |c(\theta)| d\theta &= \frac{1}{\eta}\int_{-\infty}^\infty |\hat{I_{[L,R]}}(\theta)|
    |\hat{I_{[-\eta/2,\eta/2]}}(\theta)| d\theta \\
    &\le \frac{1}{\eta}\left(\int_{-\infty}^\infty 
    |\hat{I_{[L,R]}}(\theta)|^2 d\theta \right)^{1/2} \left(\int_{-\infty}^\infty |\hat{I_{[-\eta/2,\eta/2]}}(\theta)|^2 d\theta \right)^{1/2} = \left(\frac{R-L}{\eta}\right)^{1/2},
\end{align*}
by Parseval's identity.
\end{proof}
Now we move towards the proof of \cref{thm:boundedprod}. Write $P_i = Q_i R_i$ where $R_i: \A \to \R_{>0}$, and $|Q_i(x)| = 1$ for all $x \in \A$. By replacing $f \to f\prod_{i=1}^n Q_i$ we may assume that $P_i = R_i$ in fact, i.e., $P_i$ take nonnegative real values.

Define the set $S_{L,R} := \{x \in \A^n : e^L \le P(x) \le e^R\}$ and let $\mu(S_{L,R}) := \Pr_{x \in \A^n}[x \in S_{L,R}]$ and $A_{L,R} := \E_{x \in \A^n}[f(x)P(x)1_{x\in S_{L,R}}]$.
Let $D = \log(16/\delta^2)$.
\begin{lemma}
\label{lemma:ad}
It holds that $A_{-D,D} \ge \delta/2$.
\end{lemma}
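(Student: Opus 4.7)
The plan is to bound the tails of $\E[f(x)P(x)]$ contributed by $x$'s where $P(x)$ is either very small or very large, and show that together these tails are at most $\delta/2$ in magnitude. After rotating $f$ by an appropriate unimodular constant, we may assume the inner product is real and that $\E[f(x)P(x)] \geq \delta$. Writing
\[ \E_{x}[f(x)P(x)] = A_{-D,D} + \E_x\left[f(x)P(x) \one_{P(x) < e^{-D}}\right] + \E_x\left[f(x)P(x) \one_{P(x) > e^{D}}\right], \]
it suffices to show each of the two tail terms has absolute value at most $\delta/4$.

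For the small-$P$ tail, I would just bound pointwise by $e^{-D}$: since $P \ge 0$,
\[ \left| \E_x[f(x)P(x)\one_{P(x) < e^{-D}}] \right| \leq e^{-D} \E_x|f(x)| \leq e^{-D} \|f\|_4 \leq e^{-D} = \delta^2/16 \leq \delta/4, \]
using that $\delta \leq 1$. For the large-$P$ tail, I would apply H\"older with exponents $4$ and $4/3$, which matches the hypotheses $\|f\|_4 \leq 1$ and $\|P_i\|_2 \leq 1$:
\[ \left| \E_x[f(x)P(x)\one_{P(x) > e^{D}}] \right| \leq \|f\|_4 \cdot \left(\E_x[P(x)^{4/3} \one_{P(x) > e^D}] \right)^{3/4}. \]
On the event $\{P > e^D\}$ we have $P^{4/3} = P^2 \cdot P^{-2/3} \leq e^{-2D/3} P^2$, so
\[ \E_x[P(x)^{4/3} \one_{P(x) > e^D}] \leq e^{-2D/3} \E_x[P(x)^2] = e^{-2D/3} \prod_{i=1}^n \|P_i\|_2^2 \leq e^{-2D/3}, \]
which gives that the large-$P$ tail is at most $e^{-D/2} = \delta/4$ by the choice $D = \log(16/\delta^2)$.

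Combining, $A_{-D,D} \geq \delta - \delta/4 - \delta/4 = \delta/2$, as desired. The proof is routine; the only minor subtlety is choosing the H\"older exponents so that the pieces match the two hypotheses $\|f\|_4 \leq 1$ and $\|P\|_2 \leq 1$, and choosing $D$ so that $e^{-D/2}$ (the dominant tail) is $\delta/4$. Note that the constant $16$ in the definition of $D$ is exactly what makes both tails bounded by $\delta/4$.
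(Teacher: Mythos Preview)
Your proof is correct and follows essentially the same approach as the paper: split off the small-$P$ and large-$P$ tails, bound the first by $e^{-D}\|f\|_1\le\delta^2/16$ and the second via H\"older with exponents $(4,4/3)$ together with $P^{4/3}\le e^{-2D/3}P^2$ on $\{P>e^D\}$ and $\|P\|_2\le 1$. Your presentation is in fact cleaner than the paper's, which has a couple of typos in the indicator functions, and you make explicit the harmless rotation of $f$ that renders $\E[fP]$ real and $\ge\delta$.
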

\begin{proof}
    First we bound
    \[ \left|\E_{x \sim \A^n}[f(x)P(x)1_{P(x) \le \delta^2/16}] \right| \le \frac{\delta}{4}, \] because $\|f\|_1\leq 1$. 
    Also, by H\"{o}lder's inequality and $\norm{f}_4\leq 1$
    \[ 
    \left|\E_{x \sim \A^n}[f(x)P(x)1_{P(x) \le 16/\delta^2}] \right| 
    \leq
    \left|\E_{x \sim \A^n}[|P(x)|^{4/3}1_{P(x) \le 16/\delta^2}]\right|^{3/4} 
    \le \frac{\delta}{4}\E_{x \sim \A^n} |P(x)|^2 \le \frac{\delta}{4}, \] because $f$ is $1$-bounded and $\|P\|_2 \le 1$.
\end{proof}

\begin{proof}[Proof of \cref{thm:boundedprod}]
Let $\alpha, \eta < 1$ be parameters to be chosen later. Let $-D-1 \le L \le D$ and $R = L+\alpha$. Let $c_{L,R}(\theta)$ be the coefficients guaranteed by \cref{lemma:exp}. Then
\begin{align*}
    \int_{-\infty}^\infty c_{L,R}(\theta) \left(\E_{x\sim\A^n}[f(x) e^{2\pi i\theta \ln P(x)}] \right) d\theta = \E_{x \sim \A^n} [f(x)g_{L,R}(x)],
\end{align*}
where $g_{L,R}(x) = b(\ln P(x))$ and $b$ is the function in \cref{lemma:exp}. By the properties of $g_{L,R}$,
\begin{align*}
    \left|\E_{x \in \A^n} [f(x)g_{L,R}(x)]\right| \ge \left|\E_{\substack{x \sim \A^n}} f(x)1_{x\in S_{L,R}} \right| - \mu(S_{L-\eta/2,L+\eta/2}) - \mu(S_{R-\eta/2,R+\eta/2}).
\end{align*}
Furthermore, we can bound (using $-D-1 \le L \le D$)
\begin{align*}
    \left|\E_{\substack{x \sim \A^n }} f(x)1_{x\in S_{L,R}} \right| &\ge e^{-L}\left|\E_{\substack{x \sim \A^n }} f(x)P(x)1_{x\in S_{L,R}} \right| - \E_{\substack{x \sim \A^n }} \left[|e^{-L}P(x)-1| \cdot |f(x)|1_{x\in S_{L,R}} \right] \\
    &\ge e^{-L}A_{L,R} - 2\alpha \mu(S_{L,R}),
\end{align*}
because $|e^{\alpha}-1| \le 2\alpha$ (for $\alpha \le 1$) and $\norm{f}_1\leq 1$. Overall, we have concluded that
\begin{align*}
&\int_{-\infty}^\infty c_{L,R}(\theta)\left(\E_{x\sim\A^n}[f(x) e^{2\pi i\theta \ln P(x)}] \right) d\theta\\
&\qquad\qquad\qquad\ge e^{-L}A_{L,R} - 2\alpha \mu(S_{L,R}) - \mu(S_{L-\eta/2,L+\eta/2}) - \mu(S_{R-\eta/2,R+\eta/2}) . 
\end{align*}
The expectation of the RHS over $L \sim [-D-1,D]$ chosen uniformly at random is at least
\begin{equation}\label{eq:1bdd_corr1}
\frac{\alpha A_{-D,D}}{e^{-D}(2D+1)} - \frac{4\alpha^2}{2D+1} - \frac{4\eta}{2D+1} \ge \frac{\alpha\delta^3}{32(2D+1)} - \frac{4\alpha^2}{2D+1} - \frac{4\eta}{2D+1}, 
\end{equation} 
where the last inequality is by~\cref{lemma:ad}. Here,
we used the fact that 
\[
\E_{L}\left[e^{-L}A_{L,R}\right]
\geq e^{-D}\E_{L}[A_{L,R}]
\geq e^{-D} A_{-D,D} \frac{\alpha}{2D+1},
\]
and that similar calculations give
\[
\E_{L}\left[\mu(S_{L,R})\right]
%=\E_{x}\left[\E_{L}\left[1_{\ln(P(x))\in [L,R]}\right]\right]
\leq \frac{2\alpha}{2D+1},
\qquad
\E_{L}\left[\mu(S_{L-\eta/2,L+\eta/2})\right]
%=\E_{x}\left[\E_{L}\left[1_{\ln(P(x))\in [L-%\eta/2,L+\eta/2]}\right]\right]
\leq \frac{2\eta}{2D+1},
\qquad
\E_{L}\left[\mu(S_{R-\eta/2,R+\eta/2})\right]
\leq 
\frac{2\eta}{2D+1}.
\]

Choosing $\alpha = \delta^3/1000$ and $\eta = \delta^6/400000$ the expression in~\eqref{eq:1bdd_corr1} is at least $\frac{\delta^6}{100000(2D+1)}$. Hence
\[ \int_{-\infty}^\infty c_{L,R}(\theta) \left(\E_{x\sim\A^n}[f(x) e^{2\pi i\theta \ln P(x)}] \right) d\theta \ge \frac{\delta^6}{100000(2D+1)} \] for some choice of $L$. Because $\int_{-\infty}^\infty |c_{L,R}(\theta)| d\theta \le (\alpha/\eta)^{1/2} = 20/\delta$, we conclude that
\[ \left|\E_{x\sim\A^n}[f(x) e^{2\pi i\theta \ln P(x)}] \right| \ge \frac{\delta^7}{2000000(2D+1)} \] for some $\theta$. This concludes the proof, because $e^{2\pi i\theta \ln P(x)}$ is a product function in which each component is into the unit disc.
\end{proof}

\paragraph{Proof of~\cref{thm:main}:}
\cref{thm:main} immediately follows by combining \cref{thm:toswap,thm:swap,thm:boundedprod}.

\section{Restriction Inverse Theorem}
\label{sec:rit}

The goal of this section is to establish~\Cref{thm:rit}.
%\ritre*
% \begin{theorem}
% \label{thm:rit}
% Let $f: \A^n \to \bbC$ be $1$-bounded and $\mu = (1-\rho)\nu + \rho U$ where $U$ is uniform over $\A$. Also, for $\eps > 0$ it holds that:
% \[  \Pr_{I \sim_{1-\rho} [n], z \sim \nu^I}\left[\exists \{P_i: \A \to \bbC, \|P_i\|_\infty \le 1\}_{i \in \bar{I}} \enspace \text{ with } \enspace \Big|\E_{x \sim \A^{\bar{I}}}\Big[f_{I\to z}(x) \prod_{i \in \bar{I}} P_i(x_i) \Big]\Big| \ge \eps \right] \ge \eps. \]
% Then there exist $1$-bounded $P_i: \A \to \bbC$ for $i = 1, \dots, n$, and a function $L: \A^n \to \bbC$ of degree at most $D := D(\rho,\eps)$ with $\|L\|_2 \le 1$ such that:
% \[ \left|\E_{x \sim \mu^{\otimes n}}\left[f(x)L(x) \prod_{i=1}^n P_i(x_i) \right] \right| \ge \eps', \]
% where quantitatively $\eps' \ge \exp(-C_{\A}(\rho^{-1}\log(1/\eps))^{O(1)})$ and $D \le C_{\A}(\eps\rho)^{-O(1)}$ for some constant $C_{\A}$.
% \end{theorem}

\subsection{Preliminaries on Product Functions}
\label{subsec:setup}
In this section we collect preliminary facts about product functions and their \emph{distance}. For $I \subseteq [n]$ define $\mc{F}(I)$ to be the set of product functions $P: \A^I \to \bbC$ such that $|P_i(x)| = 1$ for all $x \in \A^n$ and $i\in [n]$.
We start by discussing how to convert such a product function $P: \A^n \to \bbC$ into a vector.

Throughout this section we will use the notation $O_{\rho,\gamma}(\cdot)$ to denote that we are suppressing constants depending on $\rho$, $\gamma$, with polynomial dependence on the parameters being suppressed. For example, if $\rho$, $\gamma < 1$, this means $O((\rho\gamma)^{-O(1)})$. It will be convenient for us to fix a special reference input $\a\in \Sigma$.
\begin{definition}
\label{def:vector}
Let $P(x) = \prod_{i=1}^n P_i(x_i)$ be a product function with $P_i(\a) \in \R_{\ge0}$ for all $i \in [n]$. Define $\pi(P) \in \R^{2n|A|}$ as $\pi(P)_{i,\a'} = P_i(\a') \in \bbC$. For general $\wt{P} = \prod_{i=1}^n \wt{P}_i$, let $c_1, \dots, c_n \in \bbC$ with $|c_i| = 1$ be such that $c_i\wt{P}_i(\a) \in \R_{\ge0}$. Then define $\pi(\wt{P}) := \pi(\prod_{i=1}^n c_i \wt{P})$.
\end{definition}
Informally, $\pi(P)$ is defined by concatenating $P_i(\a')$ for $i \in [n], \a' \in \A$, if $P_i(\a)$ are real. Note that $\pi(P)$ is a $2n|A|$ dimensional because $P_i(\a')$ are complex numbers. The reason we are taking the reference point $\a$ and normalizing appropriately so that the $P(\a)$ is real valued is to ensure that $\pi(P)$ is invariant under rotation by a unit complex number. Our next goal is to establish that $|\l P, Q\r|$ for product functions $P, Q$ is related to $\|\pi(P) - \pi(Q)\|_2$. The statement bellow asserts that if $\pi(P)$ and $\Pi(Q)$ are far, then $P$ and $Q$ can only have a small correlation. Throughout this section, the implicit constant $\Omega$ hides factors that depend on the probability of the smallest atom in $\mu$.

% \begin{lemma}
% \label{lemma:prodip}
% Let $P, Q: \A^n \to \bbC$ be $1$-bounded product functions. Then \[ |\l P, Q\r| \le \exp(-\Omega(\|\pi(P)-\pi(Q)\|_2^2)). \]
% \end{lemma}
% \begin{proof}
% We may assume that $P_i(\a), Q_i(\a) \in \R_{\ge0}$ for all $i \in [n]$.
% Because $|\l P, Q\r| = \prod_{i=1}^n |\l P_i, Q_i\r|$, it suffices to argue that
% \[ |\l P_i, Q_i\r| \le 1 - \Omega(\|P_i-Q_i\|_2^2). \]
% We leverage the following fact: if $r \in \R_{\ge0}$, $c \in (0, 1)$ bounded away from $0$ and $1$ (possibly depending on $|\A|$), and $|z| \le 1$, then
% \[ |cr + (1-c)z| \le 1 - \Omega((1-r) + (1-\Re(z))). \]
% Also we have that
% \[ \Re(P_i(x)\bar{Q_i(x)}) = \frac12(|P_i(x)|^2 + |Q_i(x)|^2 - |P_i(x) - Q_i(x)|^2) \le 1 - \frac12|P_i(x) - Q_i(x)|^2. \]
% Thus
% \[ |\l P_i, Q_i\r| \le 1 - \Omega\Big(\E_{x \in \A}(1 - \Re(P_i(x)\bar{Q_i(x)})\Big) \le 1 - \Omega(\|P_i-Q_i\|_2^2), \] as desired.
% \end{proof}

\begin{lemma}
\label{lemma:prodip}
Let $P, Q: \A^n \to \bbC$ be $1$-bounded product functions. Then \[ |\l P, Q\r| \le \exp(-\Omega(\|\pi(P)-\pi(Q)\|_2^2)). \]
\end{lemma}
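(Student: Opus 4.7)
Since $P = \prod_i P_i$ and $Q = \prod_i Q_i$ are product functions, normalized by \cref{def:vector} so that $P_i(\a), Q_i(\a) \in \R_{\ge 0}$, the inner product factorizes as $\l P, Q \r = \prod_i b_i$ with $b_i := \E_{\a' \sim \A}[P_i(\a') \overline{Q_i(\a')}]$, while $\|\pi(P) - \pi(Q)\|_2^2 = \sum_{i,\a'} |P_i(\a') - Q_i(\a')|^2$. It thus suffices to prove the one-factor inequality
\[
|b|^2 \le 1 - c_\mu \sum_{\a' \in \A} |u(\a') - v(\a')|^2 \qquad (*)
\]
for every $1$-bounded $u, v: \A \to \bbC$ with $u(\a), v(\a) \in \R_{\ge 0}$, where $b := \E_{\a'}[u(\a') \overline{v(\a')}]$ and $c_\mu > 0$ depends only on $|\A|$. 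Granting $(*)$, we get $|b| \le \exp(-\tfrac{c_\mu}{2}\sum_{\a'}|u - v|^2)$, and multiplying over $i$ yields the theorem.

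\textbf{Proof of $(*)$ by case analysis.} If $\min(u(\a), v(\a)) \le 1/2$, say $u(\a) \le 1/2$, then $\|u\|_2^2 \le \tfrac{1}{|\A|}(\tfrac{1}{4} + (|\A| - 1)) = 1 - \tfrac{3}{4|\A|}$, so Cauchy--Schwarz gives $|b|^2 \le \|u\|_2^2\|v\|_2^2 \le 1 - \tfrac{3}{4|\A|}$. Combined with the trivial bound $\sum_{\a'}|u-v|^2 \le 4|\A|$, this yields $(*)$ with $c_\mu = \Omega(|\A|^{-2})$. If instead $u(\a), v(\a) \ge 1/2$, we use the Hermitian Cauchy--Schwarz identity $\|u\|_2^2 \|v\|_2^2 - |b|^2 = \|v\|_2^2 \|u - \alpha v\|_2^2$ with $\alpha := b/\|v\|_2^2$, and evaluate $|u(\a) - \alpha v(\a)|^2 = (u(\a) - \Re(\alpha) v(\a))^2 + \Im(\alpha)^2 v(\a)^2$ using the normalization $u(\a) \in \R$. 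This gives the phase-control bound
\[
\Im(b)^2 \le 4|\A|(1 - |b|^2) \qquad (**)
\]
upon using $v(\a)^2 \ge 1/4$ and $\|v\|_2^2 \le 1$. From $(**)$, if $\Re(b) \ge 0$ then $\Re(b) \ge \sqrt{1 - (4|\A|+1)(1 - |b|^2)}$, so $\|u - v\|_2^2 \le 2(1 - \Re(b)) \le O(|\A|)(1 - |b|^2)$ via $1 - \sqrt{1 - x} \le x$. If $\Re(b) < 0$, then $(u(\a) - \Re(\alpha) v(\a))^2 \ge u(\a)^2 \ge 1/4$, and unwinding the Cauchy--Schwarz identity forces $1 - |b|^2 \ge \Omega(|\A|^{-2})$, so the trivial bound $\|u-v\|_2^2 \le 4$ suffices. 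In either sub-case, $(*)$ holds with $c_\mu = \Omega(|\A|^{-3})$.

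\textbf{Main obstacle.} The technical heart is Case 2, and specifically the phase-control bound $(**)$: this is what transfers information from the phase-invariant quantity $|b|^2$ to the non-phase-invariant distance $\sum_{\a'}|u - v|^2$. The key insight is that evaluating the Hermitian Cauchy--Schwarz gap at the coordinate $\a$, where $u(\a)$ and $v(\a)$ are real, directly exposes the imaginary part of $\alpha$ and hence of $b$. Without the normalization, one only controls $|b|^2$ up to an arbitrary phase rotation of $u$ or $v$, and the conclusion of the lemma would fail.
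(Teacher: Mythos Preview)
Your proof is correct and follows the same overall reduction as the paper: factorize, normalize so that $P_i(\a),Q_i(\a)\in\R_{\ge0}$, and prove a one-coordinate inequality. The implementations diverge in how the phase is controlled. The paper writes $|b|=\theta b$ for a unit complex $\theta$, uses the identity $|b|=\frac12\E_x[|P_i|^2+|Q_i|^2-|\theta P_i-Q_i|^2]$, and case-splits first on whether all $|P_i(x)|,|Q_i(x)|$ are within $\tau_i/8$ of $1$, and then on how close $\theta$ is to $1$, evaluating at the reference coordinate $\a$ to separate real and imaginary parts. You instead use the Cauchy--Schwarz gap identity $\|u\|_2^2\|v\|_2^2-|b|^2=\|v\|_2^2\|u-\alpha v\|_2^2$ with $\alpha=b/\|v\|_2^2$, evaluate it at $\a$ to obtain the phase-control bound $(**)$ on $\Im(b)$, and then split only on the size of $u(\a),v(\a)$ and the sign of $\Re(b)$. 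Your route is a bit more algebraic and compact; the paper's is more hands-on. Both hide $|\A|$-dependent constants in the $\Omega$, as the paper explicitly allows.

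One small omission: in the sub-case $\Re(b)\ge 0$ you implicitly assume $(4|\A|+1)(1-|b|^2)\le 1$ so that the square root is real; when it exceeds $1$ you already have $1-|b|^2\ge\Omega(|\A|^{-1})$ and the trivial bound $\sum_{\a'}|u-v|^2\le 4|\A|$ closes $(*)$, so this is easy to patch.
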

\begin{proof}
We may assume that $P_i(\a), Q_i(\a) \in \R_{\ge0}$ for all $i \in [n]$.
Because $|\l P, Q\r| = \prod_{i=1}^n |\l P_i, Q_i\r|$, it suffices to argue that for each $i$,
\[ 
|\l P_i, Q_i\r| \le 1 - \Omega(\|P_i-Q_i\|_2^2). 
\]
Fix $i$ and let $\theta\in\mathbb{C}$ be of absolute value $1$ such that $|\langle P_i, Q_i\rangle|=\theta \langle P_i, Q_i\rangle$. Then
\[
|\langle P_i, Q_i\rangle|
=\Re(\langle \theta P_i, Q_i\rangle)
=\frac{1}{2}\E_x\left[|P_i(x)|^2 + |Q_i(x)|^2
-|\theta P_i(x) - Q_i(x)|^2\right].
\]
Denote $\tau_i=\|P_i-Q_i\|_2^2$. If $|P_i(x)|\leq 1-\frac{1}{8}\tau_i$ or $|Q_i(x)|\leq 1-\frac{1}{8}\tau_i$
for some $x$, then (using the $1$-boundedness of $P_i,Q_i$) we immediately get from the previous inequality that $|\langle P_i, Q_i\rangle|\leq 1-\Omega(\tau_i)$, and we are done. Hence, we assume henceforth that $|P_i(x)|,|Q_i(x)|>1-\frac{1}{8}\tau_i$ for all $x$. We get that
\[
|\langle P_i, Q_i\rangle|
\leq 1-\frac{1}{2}\E_x\left[|\theta P_i(x) - Q_i(x)|^2\right].
\]
In the rest of the argument we show that $\E_x\left[|\theta P_i(x) - Q_i(x)|^2\right]\geq \Omega(\tau_i)$ by case analysis. If $|\theta - 1|\leq \frac{1}{2}\sqrt{\tau_i}$, then by the triangle inequality
\[
\E_x\left[|\theta P_i(x) - Q_i(x)|^2\right]^{1/2}
\geq 
\E_x\left[|P_i(x) - Q_i(x)|^2\right]^{1/2}
-|1-\theta|\E_x\left[|P_i(x)|^2\right]^{1/2}
\geq \frac{1}{2}\sqrt{\tau_i},
\]
and so 
$\E_x\left[|\theta P_i(x) - Q_i(x)|^2\right]\geq \frac{1}{4}\tau_i$ and we are done.
Else, $|\theta - 1| > \frac{1}{2}\sqrt{\tau_i}$, 
and then
\[
\E_x\left[|\theta P_i(x) - Q_i(x)|^2\right]
\geq \Omega\left(|\theta P_i(\sigma) - Q_i(\sigma)|^2\right)
\geq 
|P_i(\sigma)\Re(\theta) - Q_i(\sigma)|^2
+
|P_i(\sigma)\Im(\theta)|^2.
\]
If $|\Im(\theta)|\geq \frac{1}{4}\sqrt{\tau_i}$ 
then the second term on the right hand side is already $\frac{1}{16}\tau_i$, and we are done.
Else, $|\Im(\theta)|\leq \frac{1}{4}\sqrt{\tau_i}$
and $|\theta - 1| > \frac{1}{2}\sqrt{\tau_i}$, 
so it follows that 
$\Re(\theta)\leq 1 - \frac{1}{4}\sqrt{\tau_i}$.
If $\Re(\theta)\leq 0$ then
$|P_i(\sigma)\Re(\theta) - Q_i(\sigma)|
\geq |Q_i(\sigma)|\geq 1/2$, and we are done. 
Else, 
\[
|P_i(\sigma)\Re(\theta) - Q_i(\sigma)|
\geq 
|Q_i(\sigma)|
-|P_i(\sigma)\Re(\theta)|
\geq 
1-\frac{1}{8}\tau_i
-
\left(1 - \frac{1}{4}\sqrt{\tau_i}\right)\cdot 1
= \frac{1}{8}\sqrt{\tau_i}.
\qedhere
\]
% We leverage the following fact: if $r \in \R_{\ge0}$, $c \in (0, 1)$ bounded away from $0$ and $1$ (possibly depending on $|\A|$), and $|z| \le 1$, then
% \[ |cr + (1-c)z| \le 1 - \Omega((1-r) + (1-\Re(z))). \]
% Also we have that
% \[ \Re(P_i(x)\bar{Q_i(x)}) = \frac12(|P_i(x)|^2 + |Q_i(x)|^2 - |P_i(x) - Q_i(x)|^2) \le 1 - \frac12|P_i(x) - Q_i(x)|^2. \]
% Thus
% \[ |\l P_i, Q_i\r| \le 1 - \Omega\Big(\E_{x \in \A}(1 - \Re(P_i(x)\bar{Q_i(x)})\Big) \le 1 - \Omega(\|P_i-Q_i\|_2^2), \] as desired.
\end{proof}

We will require a partial converse to~\cref{lemma:prodip}. It essentially says that if $\|\pi(P) - \pi(Q)\|_2^2 \le \eps$, then $\|P - Q\|_2^2$ is small up to some transformations.
\begin{lemma}
\label{lemma:prodip2}
If $P, Q \in \mc{F}([n])$ then there is some $|c| = 1$ such that $\|cP - Q\|_2^2 \le \|\pi(P) - \pi(Q)\|_2^2$.
\end{lemma}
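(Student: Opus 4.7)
The plan is to reduce both sides to scalar inner products $\alpha_i := \langle \tilde P_i, \tilde Q_i\rangle_{L^2(\A)}$, where $\tilde P_i := c_i^P P_i$ and $\tilde Q_i := c_i^Q Q_i$ are the unit-modulus rotated factors used to define $\pi$, so that $\tilde P_i(\a) = \tilde Q_i(\a) = 1$. Since $|\tilde P_i(\a')| = |\tilde Q_i(\a')| = 1$ for every $\a' \in \A$, direct computation gives
\[
\|\pi(P) - \pi(Q)\|_2^2 = \sum_{i,\a'}|\tilde P_i(\a') - \tilde Q_i(\a')|^2 = |\A|\sum_i \|\tilde P_i - \tilde Q_i\|_{L^2(\A)}^2 = 2|\A|\sum_i(1 - \Re(\alpha_i)).
\]
For the left-hand side, $\|cP\|_2 = \|Q\|_2 = 1$ gives $\|cP - Q\|_2^2 = 2 - 2\Re(c\langle P, Q\rangle)$, and the optimal phase $c = \overline{\langle P, Q\rangle}/|\langle P, Q\rangle|$ achieves $\min_{|c| = 1} \|cP - Q\|_2^2 = 2(1 - |\langle P, Q\rangle|)$. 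Since $P, Q$ are product functions, $|\langle P, Q\rangle| = \prod_i |\alpha_i|$, so it suffices to prove $1 - \prod_i |\alpha_i| \le |\A|\sum_i(1 - \Re(\alpha_i))$.

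I would establish this via two elementary estimates. First, the telescoping inequality $1 - \prod_i x_i \le \sum_i(1 - x_i)$ for $x_i \in [0,1]$, applied to $x_i := |\alpha_i|$ (which lies in $[0,1]$ by Cauchy--Schwarz and $\|\tilde P_i\|_2 = \|\tilde Q_i\|_2 = 1$). Second, $|\alpha_i| \ge \Re(\alpha_i)$, because $|\alpha_i|^2 = \Re(\alpha_i)^2 + \Im(\alpha_i)^2 \ge \Re(\alpha_i)^2$ forces $|\alpha_i| \ge |\Re(\alpha_i)| \ge \Re(\alpha_i)$. Chaining gives $1 - \prod_i|\alpha_i| \le \sum_i(1 - |\alpha_i|) \le \sum_i(1 - \Re(\alpha_i)) \le |\A|\sum_i(1-\Re(\alpha_i))$; multiplying by $2$ and invoking the displayed computation closes the proof. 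In fact the factor $|\A|$ is wasteful: the stronger inequality $\min_{|c|=1}\|cP - Q\|_2^2 \le \|\pi(P) - \pi(Q)\|_2^2/|\A|$ falls out for free.

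There is no serious obstacle; the only subtlety is that a single global phase $c$ must absorb the per-coordinate phases $c_i^P, \overline{c_i^Q}$ implicit in the definition of $\pi$, which is automatic because $\prod_i c_i^P \overline{c_i^Q}$ is itself a unit complex number, and the minimization over $|c|=1$ is free to choose it.
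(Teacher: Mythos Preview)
Your proof is correct and follows essentially the same route as the paper's: both normalize so that $\tilde P_i(\a)=\tilde Q_i(\a)=1$, reduce $\min_{|c|=1}\|cP-Q\|_2^2$ to $2(1-\prod_i|\alpha_i|)$, and then combine the telescoping bound $1-\prod x_i\le\sum(1-x_i)$ with $|\alpha_i|\ge\Re(\alpha_i)$. Your observation that the argument actually yields the stronger bound $\min_{|c|=1}\|cP-Q\|_2^2\le\|\pi(P)-\pi(Q)\|_2^2/|\A|$ is correct; the paper's final line conflates $\sum_i\|\tilde P_i-\tilde Q_i\|_2^2$ with $\|\pi(P)-\pi(Q)\|_2^2$, silently absorbing that factor.
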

\begin{proof}
Let $|c_i|=1$ be such that $c_i\l P_i, Q_i \r \in \R_{\ge0}$, and let $c = \prod c_i$. Note that
\[ \|P_i - Q_i\|_2^2 = 2 - \l P_i, Q_i \r - \l Q_i, P_i \r, \] and hence $|\l P_i, Q_i \r| \ge 1 - \|P_i - Q_i\|_2^2/2$.
\begin{align*}
    \|cP - Q\|_2^2 &= 2 - \E_x \prod_{i=1}^n c_i P_i(x_i)\bar{Q_i(x_i)} - \E_x \prod_{i=1}^n \bar{c_i} \bar{P_i(x_i)}Q_i(x_i) \\
    &= 2 - 2\prod_{i=1}^n |\l P_i, Q_i\r| \le 2 - 2 \prod_{i=1}^n (1 - \|P_i - Q_i\|_2^2/2) \\
    &\le \sum_{i=1}^n \|P_i - Q_i\|_2^2 = \|\pi(P) - \pi(Q)\|_2^2,
\end{align*}
as desired.
\end{proof}
Now we will describe a procedure which takes a function $f$ and produces a small list of product functions which essentially capture the set of all product functions with nontrivial correlation to $f$. We begin by defining the list of all product functions that are correlated with $f$.
\begin{definition}
\label{def:list}
Given a $1$-bounded function $f: \A^n \to \bbC$, define $\lst_{\eps}[f]$ to be the set of product functions $P \in \mc{F}([n])$ with $|\l f, P\r| \ge \eps$.
\end{definition}

The next result extracts from $\lst_{\eps}[f]$ a short list that is essentially a net.
\begin{lemma}[\!\!{\cite[Lemma 12.16]{BKM4}}]
\label{lemma:slist}
Let $f: \A^n \to \bbC$ be a $1$-bounded function. There is some $\slst_{\eps,\delta}[f] \subseteq \lst_{\eps}[f]$ satisfying:
\begin{enumerate}
    \item $|\slst_{\eps,\delta}[f]| \le \frac{1}{\eps^2-\delta}$, and
    \item For all $P \in \lst_{\eps}[f]$ there is $P' \in \slst_{\eps,\delta}[f]$ wuth $|\l P, P'\r| \ge \delta$.
\end{enumerate}
\end{lemma}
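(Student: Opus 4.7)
The proof is a standard greedy packing argument; the key identity is a Gram-matrix bound that converts pairwise near-orthogonality of the $P_i$'s into an upper bound on how many correlated product functions there can be.

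My plan is to build $\slst_{\eps,\delta}[f]$ greedily. Initialize $\slst = \emptyset$, and while there exists some $P \in \lst_{\eps}[f]$ with $|\l P, P' \r| < \delta$ for every $P' \in \slst$ already chosen, add such a $P$ to $\slst$. By construction, once the process terminates the second item of the lemma is automatic: every $P \in \lst_{\eps}[f]$ either is in $\slst$ (in which case it has correlation $1 \ge \delta$ with itself) or else failed the entrance test, meaning $|\l P, P' \r| \ge \delta$ for some $P' \in \slst$.

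The core of the argument is to bound the size of $\slst$. Suppose $\slst = \{P_1, \dots, P_k\}$. Since each $P_i \in \mc{F}([n])$ satisfies $|P_i(x)| = 1$, we have $\|P_i\|_2 = 1$. Choose unit complex numbers $\theta_i$ so that $\theta_i \l f, P_i\r = |\l f, P_i \r| \ge \eps$, and set $F := \sum_{i=1}^k \theta_i P_i$. Then
\[ \l f, F \r = \sum_{i=1}^k |\l f, P_i \r| \ge k\eps. \]
On the other hand, expanding $\|F\|_2^2 = \sum_{i,j} \theta_i \bar{\theta_j}\l P_i, P_j\r$ and using $|\theta_i \bar{\theta_j} \l P_i, P_j\r| < \delta$ for $i \neq j$ along with $\l P_i, P_i\r = 1$ gives $\|F\|_2^2 \le k + k(k-1)\delta$. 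Since $f$ is $1$-bounded, $\|f\|_2 \le 1$, so Cauchy-Schwarz yields $k\eps \le \l f, F\r \le \|F\|_2 \le \sqrt{k + k(k-1)\delta}$. Squaring and dividing by $k$ gives $k\eps^2 \le 1 + (k-1)\delta \le 1 + k\delta$, so $k \le 1/(\eps^2 - \delta)$, which is exactly the bound in item 1.

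There is no real obstacle here; the only points to be careful about are (a) keeping track of the phases $\theta_i$ so that all the pairwise inner-product bounds still hold in absolute value after rotation, and (b) noting that the sum $\sum_{i \ne j} \theta_i \bar{\theta_j}\l P_i, P_j\r$ is real (being equal to $\|F\|_2^2 - k$) so the triangle inequality cleanly gives the $k(k-1)\delta$ error term. The greedy procedure must terminate since $k$ is bounded, which completes the proof.
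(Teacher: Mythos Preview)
Your proof is correct and is the standard greedy/Bessel-type argument; the paper does not supply its own proof here (it simply cites \cite[Lemma 12.16]{BKM4}), and what you wrote is essentially the expected proof. One cosmetic point: with the paper's inner-product convention you want $\bar{\theta_i}\langle f,P_i\rangle=|\langle f,P_i\rangle|$ rather than $\theta_i\langle f,P_i\rangle=|\langle f,P_i\rangle|$, but this does not affect the argument.
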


We will require a lemma which says that if $f$ correlates to a product function, then it does so with nonnegligible probability under random restriction.
\begin{lemma}
\label{lemma:rrprod}
Let $f: \A^n \to \bbC$ be a $1$-bounded function, and $P$ be a $1$-bounded product function with $|\l f, P \r| \ge \eps$. For any subset $I \subseteq [n]$, it holds that
\[ \Pr_{z \sim \A^{\bar{I}}}\left[|\l f_{\bar{I} \to z}, P_{\bar{I} \to z}\r| \ge \eps/2\right] \ge \eps/2. \]
\end{lemma}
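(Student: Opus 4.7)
The plan is to reduce to a one-line averaging argument. Set $X(z) := |\langle f_{\bar{I}\to z}, P_{\bar{I}\to z}\rangle|$ and observe two things about this random variable over the choice $z\sim \A^{\bar I}$.

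First, by unpacking the definition of the restriction and writing $\langle f_{\bar{I}\to z}, P_{\bar{I}\to z}\rangle = \E_{y\sim \A^I}[f(y,z)\overline{P(y,z)}]$, we have
\[
\E_{z\sim\A^{\bar I}} \langle f_{\bar{I}\to z}, P_{\bar{I}\to z}\rangle
= \E_{x\sim\A^n}[f(x)\overline{P(x)}] = \langle f, P\rangle,
\]
so by the triangle inequality $\E_z X(z) \ge |\langle f,P\rangle| \ge \eps$. Second, for every restriction $z$, Cauchy--Schwarz together with the fact that both $f$ and $P$ are $1$-bounded gives $X(z) \le \|f_{\bar{I}\to z}\|_2\|P_{\bar{I}\to z}\|_2 \le 1$.

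With a nonnegative random variable bounded by $1$ whose expectation is at least $\eps$, the usual averaging argument finishes: splitting $\E X = \E[X\,\mathbf 1_{X<\eps/2}]+\E[X\,\mathbf 1_{X\ge\eps/2}]$ and bounding the first term by $\eps/2$ and the second by $\Pr[X\ge\eps/2]$ yields $\Pr[X\ge \eps/2]\ge \eps/2$, which is precisely the claim.

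There is no real obstacle here; the only thing to be careful about is the bookkeeping of which side is being restricted (the statement writes $f_{\bar I \to z}$ with $z\sim \A^{\bar I}$, so the remaining free coordinates are in $I$), but this does not affect the identity $\E_z \langle f_{\bar{I}\to z}, P_{\bar{I}\to z}\rangle = \langle f,P\rangle$ since the two expectations simply combine to an expectation over all of $\A^n$.
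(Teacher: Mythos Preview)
Your proof is correct and is exactly the simple averaging argument the paper has in mind: the paper's own proof is a one-liner stating that it ``follows by a simple averaging argument, as $f, P$ are $1$-bounded,'' and your write-up just unpacks this.
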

\begin{proof}
Follows by a simple averaging argument, as $f, P$ are $1$-bounded.
\end{proof}

\subsection{From Random Restrictions to a Direct Product Test}
\label{subsec:todp}
We will require the following theorem which turns local agreement into global agreement. If $D = 0$, it would follow by standard small-set-expansion arguments.
\begin{restatable}{theorem}{sse}
\label{thm:sseproj}
Let $M, n$ be positive integers and $\rho, \gamma \in (0, 1)$. Let $f: \{0, 1\}^n \to \R^M$ be such that:
\[ \Pr_{x \sim_{\rho} [n], y \sim_{1-\gamma} x}[\|f(x) - f(y)\|_2 \le D] \ge \eps. \]
Then
\[ \Pr_{x,y \sim_{\rho} [n]}[\|f(x) - f(y)\|_2 \le O_{\rho,\gamma}(D \log(1/\eps))] \ge \eps^{O_{\rho,\gamma}(1)}. \]
\end{restatable}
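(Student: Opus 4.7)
The plan is to find a single ``center'' $c^* \in \R^M$ such that $f(x)$ is close to $c^*$ for a nonnegligible fraction of $x$'s, after which the conclusion follows by the triangle inequality applied to two independent samples $x, y$ that are both close to $c^*$. Concretely, I would set $S := \{(x,y) \in (\{0,1\}^n)^2 : \|f(x) - f(y)\|_2 \le D\}$ and write $\mu_\eta$ for the $\eta$-correlated distribution on pairs with $\rho$-biased marginals, so that the hypothesis reads $\mu_{1-\gamma}(S) \ge \eps$.

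First I would consider a Markov chain $x_0 \to x_1 \to \cdots \to x_T$ in which each consecutive pair $(x_{i-1}, x_i)$ is $(1-\gamma)$-correlated. Choosing $T = O_\gamma(\log(1/\eps))$ makes the endpoint correlation $(1-\gamma)^T$ smaller than any desired polynomial in $\eps$, so that (by a Fourier expansion on the biased cube, bounding characters by $(1-\gamma)^{T|S|}$) the measure $\mu_{(1-\gamma)^T}$ agrees with the product measure $\mu_0$ on every set up to error $O((1-\gamma)^T)$. Letting $E_i := \{(x_{i-1}, x_i) \in S\}$, the triangle inequality ensures that on $\bigcap_i E_i$ one has $\|f(x_0) - f(x_T)\|_2 \le TD = O_{\rho,\gamma}(D\log(1/\eps))$, which provides the required slack in $D$.

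The heart of the proof is to lower bound $\Pr\big[\bigcap_i E_i\big]$ by a quantity of the form $\eps^{O_{\rho,\gamma}(1)}$, with an exponent independent of $T$. I would attempt this by combining reverse hypercontractivity on the biased cube (in the Mossel--Oleszkiewicz form) with a recursive Cauchy--Schwarz splitting of the chain around its midpoint, exploiting the Markov structure to turn chain events into products of pair events that are amenable to reverse Bonami-Beckner. Combining the chain bound with the decorrelation step of the previous paragraph then yields $\mu_0\big(\{(x,y) : \|f(x) - f(y)\|_2 \le TD\}\big) \ge \eps^{O_{\rho,\gamma}(1)}$, which is exactly the theorem's conclusion.

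The main obstacle is precisely that lower bound on $\Pr[\bigcap_i E_i]$: a naive iteration using symmetric Cauchy--Schwarz halving yields only $\eps^{\Theta(T)}$, which translates into $\eps^{\Theta(\log(1/\eps))}$ rather than the claimed $\eps^{O_{\rho,\gamma}(1)}$ and is therefore insufficient. Bridging this gap likely requires either a stronger edge-based reverse hypercontractivity inequality on the biased cube, or (perhaps more naturally) an entirely structural ``clustering / density-increment'' argument in the spirit of the discrete small-set-expansion statement quoted earlier in the paper for functions $h : \Sigma^n \to \Sigma^N$, carried out directly for the ball sets $\{x : \|f(x)-c\|_2 \le D'\}$ in $\R^M$ by discretizing $\R^M$ and tracking how a biased-noise density on the level sets concentrates. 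I would expect the final argument to resemble the latter: one identifies, via an averaging plus reverse hypercontractivity on $h(x) := f(x)$ quantized to a suitable $D$-net, a cluster of points whose $f$-values lie in a ball of radius $O_{\rho,\gamma}(D\log(1/\eps))$ around some $c^*$, of density $\eps^{O_{\rho,\gamma}(1)}$.
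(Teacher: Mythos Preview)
Your Markov chain approach indeed stalls exactly where you say it does: the naive halving yields $\eps^{\Theta(T)}$ with $T=\Theta(\log(1/\eps))$, and neither reverse hypercontractivity nor Cauchy--Schwarz splitting will rescue a $T$-fold intersection of dependent events with an exponent independent of $T$ in this generality. So the primary plan does not close.

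Your fallback sketch --- quantize $f$ to a $D$-net and run a small-set-expansion argument on the level sets --- is closer to what the paper actually does, but it is missing the key idea. The obstacle is the ambient dimension $M$: a $D$-net in $\R^M$ is uncontrollably large, and an averaging argument over boxes of side $\Theta(D)$ in $\R^M$ gives nothing useful when $M$ is unbounded. The paper's proof first performs a random Gaussian projection $g:\R^M\to\R^m$ with $m=O_{\rho,\gamma}(\log(1/\eps))$, using Johnson--Lindenstrauss-type bounds to show that (i) the hypothesis survives with $D$ replaced by $2D$, and (ii) pairwise distances in $\R^M$ are not shrunk below $\delta^{1/2}$ times their true value except with probability $(10\delta)^{m/2}$, which can be made $\ll \eps^{O_{\rho,\gamma}(1)}$. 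Only \emph{after} this dimension reduction does one run the box-covering plus forward small-set-expansion argument (not reverse hypercontractivity): one integrates the SSE bound $\Pr[x,y\in S_v]\le \mu(S_v)^{1+c}$ over all shifted boxes $S_v\subset\R^m$ of side $2D\sqrt{m}$ to extract a single box with $\mu(S_v)\ge(\eps/2)^{1/c}$. The radius blow-up $O_{\rho,\gamma}(D\log(1/\eps))$ in the conclusion comes from the box diameter $2Dm$ with $m=O(\log(1/\eps))$ and from the $\delta^{-1/2}$ contraction factor in the projection step --- not from any chain length $T$.
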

\begin{proof}
Deferred to~\cref{subsec:sse}.
\end{proof}
The hypothesis of \cref{thm:rit} guarantees that a random restriction of $f$ correlates to a product function on $\A^n$. We can think of the product function that correlates with $f_{I \to z}$ as an element of $(\R^{2|\A|})^I$. Consider $F: \{0, 1\}^n \to (\R^{2|\A|})^{\le n}$ as the map from $I$ to the correlating product function (later we argue that the product function correlating to $f_{I \to z}$ does not heavily depend on $z$). We will prove that this function $F[I]$ satisfies the following \emph{direct product test}.
\begin{definition}[Direct product test]
\label{def:dp}
For $\rho, \alpha > 0$, we say that a function $F: \{0, 1\}^n \to (\R^K)^{\le n}$ for integer $K$ and parameter $D$ passes the $\DP(\rho, \alpha)$ with probability:
\[ \Pr_{\substack{C \sim_{\alpha\rho} [n] \\ A, B \sim_{\rho} [n] : A, B \supseteq C}} \left[\|F[A]|_C - F[B]|_C\|_2 \le D \right]. \]
Here, the probability is over $C$ sampled from $\sim_{\rho} [n]$ and $A, B$ sampled from $\sim_{\rho} [n]$ conditioned on $A \supseteq C$ and $B \supseteq C$.
\end{definition}
Then, we prove the following inverse theorem for the direct product test of \cref{def:dp}.

\begin{restatable}{theorem}{directproduct}
\label{thm:dp}
Let $n, K$ be positive integers such that $F: \{0, 1\}^n \to ([0, 1]^K)^{\le n}$ with parameter $D > 1$ passes $\DP(\rho, \alpha)$ with probability at least $\eps$. Then there is a function $g: [n] \to [0, 1]^K$ with
\[ \Pr_{A \sim_{\rho} [n]}\left[\|g|_A - F[A]\|_2 \le O_{\rho,\alpha}(D\log(1/\eps) + \log(1/\eps)^{3/2}) \right] \ge \eps^{O_{\rho,\alpha}(1)}. \]
Also, for each $x \in [n]$ there is some $A \subseteq [n]$ such that $g(x) = F[A]_x$.
\end{restatable}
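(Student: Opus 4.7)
The plan is to prove \cref{thm:dp} in three stages: (i) lift the ``local'' agreement guaranteed by the $\DP(\rho,\alpha)$ test to a robust pairwise agreement on the intersection $A \cap B$ for essentially independent pairs via \cref{thm:sseproj}; (ii) define the candidate global function $g$ coordinate-wise by choosing a good representative restriction $A_x \ni x$ for each $x \in [n]$; and (iii) verify the claimed $\ell_2$ bound by combining the pairwise agreement with a triangle-inequality/chaining argument.

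For stage (i), I first observe that, conditioned on the pair $(A,B)$ in the $\DP$ test, the shared subset $C$ is distributed as a $\beta$-random subset of $A \cap B$ for some $\beta = \beta(\alpha,\rho) > 0$ that can be read off from the joint density. Hence the test passing with probability $\ge \eps$ implies that, for an $\Omega(\eps)$ fraction of correlated pairs $(A,B)$, a $\beta$-random subsum of $\{\|F[A]_i - F[B]_i\|_2^2\}_{i \in A \cap B}$ is at most $D^2$ with probability $\Omega(\eps)$. A Chernoff-type anti-concentration bound for $\beta$-random subsums of bounded terms converts this into a deterministic sum estimate of the form $\|F[A]|_{A\cap B} - F[B]|_{A \cap B}\|_2 \le O_{\rho,\alpha}(D + \log(1/\eps)^{1/2})$ on the correlated pairs. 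Encoding $F[A]$ via $\tilde F[A]_{i,j} := F[A]_i[j] \cdot \mathbf{1}[i \in A]$ and reinterpreting the correlated $(A,B)$ joint distribution as a noise-type coupling on $\{0,1\}^n$, I then invoke \cref{thm:sseproj} to lift the bound to an $\eps^{O_{\rho,\alpha}(1)}$ fraction of \emph{independent} pairs $A, B \sim_\rho [n]$:
\[
\|F[A]|_{A\cap B} - F[B]|_{A \cap B}\|_2 \le L := O_{\rho,\alpha}(D\log(1/\eps) + \log(1/\eps)^{3/2}).
\]

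For stage (ii), call $A \sim_\rho [n]$ \emph{good} if the above inequality holds for at least an $\eps^{O(1)}$ fraction of $B \sim_\rho [n]$; by Fubini, an $\eps^{O_{\rho,\alpha}(1)}$ fraction of all $A$ are good, and a further averaging yields, for every $i \in [n]$, some good $A_i \ni i$. Set $g(i) := F[A_i]_i$, which by construction satisfies the second conclusion of \cref{thm:dp}. For stage (iii), fix a typical good $A$. The naive per-coordinate triangle inequality $\|F[A]_i - g(i)\|_2 \le \|F[A]_i - F[B]_i\|_2 + \|F[B]_i - F[A_i]_i\|_2$ (for a common witness $B \ni i$ good against both $A$ and $A_i$) loses a $\sqrt{|A|}$ factor when summed over $i \in A$. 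To avoid this loss I plan to \emph{coordinate} the witnesses: with probability $\eps^{O_{\rho,\alpha}(1)}$ over the randomness of good $A$'s, there exists a small family $B^{(1)}, \ldots, B^{(T)}$ of $L$-close witnesses (with $T = O_\rho(1)$) that covers $A$, and such that each $B^{(k)}$ is simultaneously $L$-close (in sum norm) to $A$ and to all $A_i$ it contains. Summing the sum-norm bound $L^2$ only $O(T)$ times then gives $\|g|_A - F[A]\|_2 \le O_{\rho,\alpha}(L) = O_{\rho,\alpha}(D\log(1/\eps) + \log(1/\eps)^{3/2})$ as desired.

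The principal obstacle is the summation in stage (iii): naive per-coordinate bounding yields only $\sqrt{|A|} \cdot L$, losing a factor of $\sqrt{n}$ that is fatal to the claimed $n$-independent bound. The coordinated-witness argument sketched above has to be executed delicately, establishing that an $O_\rho(1)$-size covering family of sum-norm-close witnesses exists simultaneously for $A$ and for all the $A_i$'s; this requires a careful concentration step combining goodness of $A$, goodness of each $A_i$, and the $\eps^{O(1)}$-density of good sets, so that random sampling of $B$'s produces such a covering family with noticeable probability. A secondary difficulty is in stage (i): the $\DP$ test compares $F[A]$ and $F[B]$ only on a random sub-$C$ of $A\cap B$, whereas $\tilde F$ records $F[A]$ on all of $A$, so the extraneous error from coordinates in $(A \triangle B) \cup ((A \cap B)\setminus C)$ must be absorbed without destroying the $\log(1/\eps)$-type dependence before \cref{thm:sseproj} can be meaningfully invoked.
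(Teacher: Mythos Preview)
Your plan has a genuine gap in stage (i), and the ``secondary difficulty'' you flag is in fact the main obstruction. With the encoding $\tilde F[A]_{i,j}:=F[A]_i[j]\cdot\mathbf 1[i\in A]$ one has
\[
\|\tilde F[A]-\tilde F[B]\|_2^2=\|F[A]|_{A\cap B}-F[B]|_{A\cap B}\|_2^2+\sum_{i\in A\setminus B}\|F[A]_i\|_2^2+\sum_{i\in B\setminus A}\|F[B]_i\|_2^2,
\]
and for the correlated pairs arising in $\DP(\rho,\alpha)$ the set $A\triangle B$ has size $\Theta_{\rho,\alpha}(n)$. Since $F[A]_i\in[0,1]^K$ can have $\|F[A]_i\|_2^2$ bounded below by a constant, the last two sums are typically $\Theta_{\rho,\alpha}(n)$, so the hypothesis of \cref{thm:sseproj} fails with any $D$ independent of $n$. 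No padding convention (zeros, $1/2$, etc.) fixes this: the symmetric difference always contributes $\Theta(n)$. Thus you cannot invoke \cref{thm:sseproj} on $A\mapsto\tilde F[A]$ as written, and the conclusion you want for independent pairs is not available at that stage.

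The paper's proof avoids this by inserting a \emph{voting} step before any appeal to \cref{thm:sseproj}. For each $A$ one defines a \emph{global} function $g_A:[n]\to[0,1]^K$ by letting $g_A(x)$ be the $v=F[B']_x$ minimizing the averaged distance to $F[B]_x$ over consistent neighbors $B$ of $A$. One then proves (i) $\|g_A|_A-F[A]\|_2$ is small, (ii) for ``excellent'' $A$, $g_A|_B\approx F[B]$ for most consistent $B$, and crucially (iii) $\|g_A-g_{A'}\|_2$ is small for an $\eps^{O(1)}$ fraction of correlated $(A,A')$. Because $g_A$ and $g_{A'}$ are both defined on all of $[n]$, there is no symmetric-difference term, and only now does \cref{thm:sseproj} apply to the map $A\mapsto g_A$ to produce a single $g$. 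Your stage (iii) ``coordinated witnesses'' idea is also shaky: asking each $B^{(k)}$ to be simultaneously $L$-close in sum norm to $A$ and to every prechosen $A_i$ with $i\in B^{(k)}$ is essentially asking for the global object $g$ you are trying to build; the voting construction is exactly the mechanism that makes this work.
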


The remainder of the section contains the proof of \cref{thm:rit} given \cref{thm:sseproj,thm:dp}.
\subsubsection{Local structure}
Note that the short lists of $f_{\bar{I} \to z}$ (as in \cref{lemma:slist}) may depend on both $I$ and $z$. Our next goal will be to define short lists that do not depend on $z$ and use these to define a direct product test.

Let $K > C_{\rho}\log(1/\eps)^{O(1)}$ and $\zeta < \eps^{O_{\rho}(1)}$ for sufficiently large constant $C_{\rho}$. For $I \subseteq [n]$ define
\[ W_I := \{P \in \mc{F}(I) : \Pr_{z \in \A^{\bar{I}}}\left[\exists Q \in \slst_{\eps/2, \eps^2/10}[f_{\bar{I} \to z}], \|\pi(P)-\pi(Q)\|_2^2 \le K \right] \ge \zeta \}, \]
i.e., $W_I$ is the set of product functions that are close to something in $\slst_{\eps/2, \eps^2/10}[f_{\bar{I} \to z}]$ for a noticeable fraction of assignments $z$ to $\bar{I}$. The next lemma produces a short list capturing $W_I$.
\begin{lemma}
\label{lemma:sw}
There is a subset $SW_I \subseteq W_I$ satisfying:
\begin{enumerate}
    \item $|SW_I| \le O(\eps^{-2}\zeta^{-1})$.
    \item For all $P \in W_I$ there is $Q \in SW_I$ with $\|\pi(P)-\pi(Q)\|_2^2 \le 4K$.
\end{enumerate}
\end{lemma}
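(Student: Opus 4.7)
The plan is to construct $SW_I$ by a standard greedy net argument and then bound its size using the short lists $\slst_{\eps/2,\eps^2/10}[f_{\bar I\to z}]$ together with a simple averaging.

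First I would build $SW_I$ greedily: initialize $SW_I=\emptyset$, and while there exists $P\in W_I$ such that $\|\pi(P)-\pi(Q)\|_2^2>4K$ for every $Q\in SW_I$, add such a $P$ to $SW_I$. When this procedure terminates, the second conclusion of the lemma is immediate by construction, since every $P\in W_I$ is within squared distance $4K$ of some $Q\in SW_I$. Moreover any two distinct elements $Q,Q'\in SW_I$ satisfy $\|\pi(Q)-\pi(Q')\|_2>2\sqrt{K}$, so it only remains to bound the size of such a $2\sqrt{K}$-separated set.

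For the size bound, fix $SW_I=\{Q_1,\dots,Q_m\}$. By the definition of $W_I$, for each $j\in[m]$ there is a set $Z_j\subseteq \A^{\bar I}$ of relative measure at least $\zeta$ such that for every $z\in Z_j$ there exists $R_{j,z}\in \slst_{\eps/2,\eps^2/10}[f_{\bar I\to z}]$ with $\|\pi(Q_j)-\pi(R_{j,z})\|_2^2\le K$. Summing the indicators $\one_{z\in Z_j}$ gives $\E_{z\sim \A^{\bar I}}[|\{j:z\in Z_j\}|]\ge m\zeta$, so by averaging there is some $z^\star\in\A^{\bar I}$ with $|\{j: z^\star\in Z_j\}|\ge m\zeta$. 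For every such $j$, pick $R_{j,z^\star}\in \slst_{\eps/2,\eps^2/10}[f_{\bar I\to z^\star}]$ with $\|\pi(Q_j)-\pi(R_{j,z^\star})\|_2\le \sqrt{K}$.

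I claim the map $j\mapsto R_{j,z^\star}$ is injective. Indeed, if $R_{j,z^\star}=R_{j',z^\star}$ for some $j\ne j'$, then the triangle inequality gives
\[
\|\pi(Q_j)-\pi(Q_{j'})\|_2\le \|\pi(Q_j)-\pi(R_{j,z^\star})\|_2+\|\pi(R_{j',z^\star})-\pi(Q_{j'})\|_2\le 2\sqrt{K},
\]
contradicting the $2\sqrt{K}$-separation of $SW_I$. Therefore $m\zeta \le |\slst_{\eps/2,\eps^2/10}[f_{\bar I\to z^\star}]|$, and the short list bound in~\cref{lemma:slist} (applied with parameters $\eps/2$ and $\eps^2/10$) gives $|\slst_{\eps/2,\eps^2/10}[f_{\bar I\to z^\star}]|\le \tfrac{1}{\eps^2/4-\eps^2/10}=O(\eps^{-2})$. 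Combining yields $m\le O(\eps^{-2}\zeta^{-1})$, as required. The only mildly delicate point is the averaging step producing a single $z^\star$ that ``witnesses'' many $Q_j$'s at once, but this is routine once the greedy separation is in place; no hidden obstacle is expected.
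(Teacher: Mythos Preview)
Your proof is correct and takes essentially the same greedy-net approach as the paper. The only minor difference is in the counting step: you average to find a single $z^\star$ where many $Q_j$'s are witnessed and then use injectivity into the short list at that $z^\star$, whereas the paper assigns to each $Q\in SW_I$ the set $S(Q)$ of witnessing pairs $(z,P)$, takes $SW_I$ maximal subject to the $S(Q)$'s being pairwise disjoint, and bounds $|SW_I|$ by summing the measures of the disjoint $S(Q)$'s against the total measure $O(\eps^{-2})$ coming from the short-list size; both arguments yield the same bound.
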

\begin{proof}
For $Q \in \mc{F}(I)$ define \[ S(Q) := \{ (z, P) : P \in \slst_{\eps/2, \eps^2/10}[f_{\bar{I} \to z}], \|\pi(P)-\pi(Q)\|_2^2 \le K \}. \]
Let $SW_I$ consist of a maximal set of $Q \in W_I$ such that $S(Q)$ are disjoint. We first prove that item 2 is satisfied. Consider a $P \in W_I$ where $(z, Q') \in S(P) \cap S(Q)$ for some $Q \in SW_I$. Then $\|\pi(P) - \pi(Q)\|_2^2 \le 2\|\pi(P) - \pi(Q')\|_2^2 + 2\|\pi(Q') - \pi(Q)\|_2^2 \le 4K$, by the triangle inequality.

Towards item 1, for a subset $S \subseteq \A^{\bar{I}} \times \mc{F}(I)$ define its \emph{index} as $\mc{I}(S) = \sum_{(z, P) \in S} \mu(z)$, where $\mu$ is the measure over $z \in \A^{\bar{I}}$. Then
\[ \zeta|SW_I| \le \mc{I}(\cup_{Q \in SW_I} S(Q)) \le O(\eps^{-2}), \]
where the first inequality is by disjointness and the definition of $S(Q)$, and the second inequality is by \cref{lemma:slist}. Indeed, note that if $(z, P) \in S(Q)$ then $P \in \slst_{\eps/2, \eps^2/10}[f_{\bar{I} \to z}]$, and each short list has size at most $O(\eps^{-2})$.
\end{proof}
Our next goal is to argue that for many $I' \subseteq_{1/2} I \subseteq_{\rho} [n]$ it holds that $SW_{I'}$ is nonempty. This is by a small-set-expansion argument.
\begin{lemma}
\label{lemma:swi}
It holds that
\[ \Pr_{I' \subseteq_{\rho/2} [n]}\left[|SW_{I'}| > 0 \right] \ge \Omega(\eps^7). \]
\end{lemma}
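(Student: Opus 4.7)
The plan is to pass from the hypothesis of \cref{thm:rit} (formulated with free-set density $\rho$) to a corresponding statement for density $\rho/2$, and then to exhibit an element of $W_{I'}$ for ``good'' $I'$ via a clustering argument on short-list representatives. I would first couple $I \sim_\rho [n]$ and $I' \sim_{\rho/2} [n]$ so that $I' \subseteq I$, by assigning each $i \in [n]$ independently to $I'$ (probability $\rho/2$), to $I \setminus I'$ (probability $\rho/2$), or to $\bar I$ (probability $1-\rho$). The hypothesis of \cref{thm:rit} gives, with probability $\ge \eps$ over $(I,z)$ with $z \sim \nu^{\bar I}$, a product function $P \in \mc F(I)$ with $|\l f_{\bar I \to z}, P\r| \ge \eps$. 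Further restricting by $w \sim \nu^{I \setminus I'}$ and applying \cref{lemma:rrprod} to $(f_{\bar I \to z}, P)$, with probability $\ge \eps/2$ over $w$ one obtains $|\l f_{\bar{I'} \to (z,w)}, P|_{I'}\r| \ge \eps/2$ (since $P_{(I \setminus I') \to w}$ equals $\phi(w) \cdot P|_{I'}$ with $|\phi(w)|=1$). Marginalizing yields $\Pr_{I',z'}[\lst_{\eps/2}[f_{\bar{I'} \to z'}] \ne \emptyset] \ge \eps^2/2$, and \cref{lemma:slist} then guarantees $\slst_{\eps/2,\eps^2/10}[f_{\bar{I'} \to z'}] \ne \emptyset$ with the same probability.

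Next, by Markov over $I'$, an $\eps^2/4$-fraction of $I' \sim_{\rho/2}$ are ``good'' in the sense that $\Pr_{z'}[\slst \ne \emptyset] \ge \eps^2/4$. For each good $I'$, set $G_{I'} := \{z' : \slst[f_{\bar{I'} \to z'}] \ne \emptyset\}$ (of $\nu$-measure at least $\eps^2/4$), and pick $Q(z') \in \slst[f_{\bar{I'} \to z'}]$ for each $z' \in G_{I'}$. The task reduces to producing $P^* \in W_{I'}$ for good $I'$. By \cref{lemma:prodip}, $|\l Q, Q'\r| \ge \eps^2/10$ implies $\|\pi(Q) - \pi(Q')\|_2^2 \le O_\rho(\log(1/\eps))$, which is at most $K$ for our choice $K \ge C_\rho \log(1/\eps)^{O(1)}$. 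Hence it suffices to find $z_0' \in G_{I'}$ such that for a $\zeta$-fraction of $z' \sim \nu^{\bar{I'}}$, some $Q' \in \slst[f_{\bar{I'} \to z'}]$ satisfies $|\l Q(z_0'), Q'\r| \ge \eps^2/10$; in particular, for $z' \in G_{I'}$ it is enough that $|\l Q(z_0'), Q(z')\r| \ge \eps^2/10$. A pigeonhole/counting argument---leveraging $|\slst| \le O(\eps^{-2})$ together with $\nu(G_{I'}) \ge \Omega(\eps^2)$---shows that many pairs $(z_0', z') \in G_{I'}^2$ satisfy this inner-product condition, so by averaging over $z_0'$ an $\Omega(\eps^5)$-fraction of $z_0' \in G_{I'}$ yields $Q(z_0') \in W_{I'}$. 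Combining with the $\Omega(\eps^2)$ good-$I'$ probability delivers the claimed $\Omega(\eps^7)$ bound.

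The main obstacle is the clustering/counting step: demonstrating that the representatives $\{Q(z')\}_{z' \in G_{I'}}$ do not spread out too much in $\pi$-space, so that many pairs are simultaneously $\pi$-close. This ultimately relies on choosing $\zeta = \eps^{O_\rho(1)}$ small enough, using the short-list size bound $O(\eps^{-2})$ and the exponential correlation decay from \cref{lemma:prodip} to force sufficient overlap across different restrictions; the constant $K$ in the definition of $W_{I'}$ is tuned accordingly. Once this clustering is in hand, the three probability factors multiply to yield the desired conclusion $\Pr_{I' \sim_{\rho/2}}[|SW_{I'}| > 0] \ge \Omega(\eps^7)$ (noting that $W_{I'}$ nonempty trivially implies $SW_{I'}$ nonempty by the construction preceding \cref{lemma:sw}).
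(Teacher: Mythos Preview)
Your coupling of $I'\subseteq I$ and the restriction of the good product function $P$ to $P|_{I'}$ is exactly the right first move, and it matches the paper. The gap is what you do after establishing $P|_{I'}\in\lst_{\eps/2}[f_{\bar I'\to(z,w)}]$ for an $\eps/2$-fraction of $w$: you discard the anchor $P|_{I'}$, keep only the fact that $\slst[f_{\bar I'\to z'}]\ne\emptyset$ for many $z'$, and then try to cluster the representatives $\{Q(z')\}_{z'\in G_{I'}}$ by a ``pigeonhole/counting'' argument. This step does not work. The bound $|\slst|\le O(\eps^{-2})$ is a per-$z'$ statement and says nothing across different restrictions; the space $\mc F(I')$ is a continuum, so there is nothing finite to pigeonhole against; and \cref{lemma:prodip} only converts inner products to $\pi$-distances once you already have an inner-product lower bound, which is precisely what is in question. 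Nothing you have written forces $Q(z')$ and $Q(z'')$ to be related for independent $z',z''$.

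The paper's argument keeps the anchor. Since $P|_{I'}\in\lst_{\eps/2}$, the short list at $(z,w)$ contains some $Q$ with $|\l P|_{I'},Q\r|\ge\eps^2/10$, hence $\|\pi(P|_{I'})-\pi(Q)\|_2^2\le O(\log(1/\eps))$ by \cref{lemma:prodip}. For the \emph{same} $(I,z)$ and two independent further restrictions $w',w''$, the random short-list picks $Q',Q''$ are therefore both close to the common $P|_{I'}$ and hence to each other; accounting for the random choice within each short list costs $\Omega(\eps^4)$, yielding the $\Omega(\eps^7)$ in \eqref{eq:localstruct}. But this is only closeness for \emph{correlated} pairs $((z,w'),(z,w''))$ sharing $z$. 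To produce a single $P^*$ that is $\pi$-close to short-list elements for a $\zeta$-fraction of \emph{independent} $z'$ (the definition of $W_{I'}$), the paper invokes small-set expansion, \cref{thm:sseproj}, applied to the map $\tilde z\mapsto\pi(Q_{\tilde z})$. That conversion from correlated agreement to independent agreement is the missing ingredient in your outline, and it is also what dictates $\zeta=\eps^{O_\rho(1)}$ and $K=C_\rho\log(1/\eps)^{O(1)}$ rather than any direct combinatorial count.
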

\begin{proof}
For each $z \in \A^{\bar{I'}}$, let $Q_z \in \slst_{\eps/2, \eps^2/10}[f_{\bar{I'} \to z}]$ be random. We argue that
\begin{align} \Pr_{\substack{I \subseteq_{\rho} [n], z \sim \A^{\bar{I}} \\ I' \subseteq_{1/2} I, z', z'' \sim \A^{I \setminus I'}}}\left[\|\pi(Q_{(z,z')}) - \pi(Q_{(z,z'')})\|_2^2 \le O(\log(1/\eps)) \right] \ge \Omega(\eps^7). \label{eq:localstruct} \end{align}
Indeed, by the hypothesis in \cref{thm:rit}, with probability $\eps$ over $(I, z)$ there is some $P$ with $|\l f_{\bar{I} \to z}, P \r| \ge \eps$. By \cref{lemma:rrprod}, $|\l f_{\bar{I'} \to (z, z')}, P|_{I'} \r| \ge \eps/2$ with probability at least $\eps/2$ over $z'$, and thus there is some $Q \in \slst_{\eps/2, \eps^2/10}[f_{\bar{I'} \to (z,z')}]$ with $|\l P|_{I'}, Q \r| \ge \eps^2/10$, and hence $\|\pi(P|_{I'}) - \pi(Q)\|_2^2 \le O(\log(1/\eps))$ by \cref{lemma:prodip}.

Hence with probability at least $(\eps/2)^2$ conditioned on $(I, z)$, there are \[ Q' \in \slst_{\eps/2, \eps^2/10}[f_{\bar{I'} \to (z,z')}] \enspace \text{ and } \enspace Q'' \in \slst_{\eps/2, \eps^2/10}[f_{\bar{I'} \to (z,z'')}] \] with $\|\pi(P|_{I'}) - \pi(Q')\|_2^2 \le O(\log(1/\eps))$ and $\|\pi(P|_{I'}) - \pi(Q'')\|_2^2 \le O(\log(1/\eps))$, so $\|\pi(Q') - \pi(Q'')\|_2^2 \le O(\log(1/\eps))$ by the triangle inequality. The probability that these specific $Q', Q''$ are picked from the short lists is $\Omega(\eps^4)$, by \cref{lemma:slist}. Thus \eqref{eq:localstruct} follows.

Now for $K = C\log(1/\eps)^{O(1)}$ and small $\zeta = \eps^{O_{\rho}(1)}$, the lemma follows from \cref{thm:sseproj}.
\end{proof}

Now we define a relaxed version of $W_I$. Define
\[ \wt{W}_I := \{P \in \mc{F}(I) : \Pr_{z \in \A^{\bar{I}}}\left[\exists Q \in \slst_{\eps/4, \eps^2/100}[f_{\bar{I} \to z}], \|\pi(P) - \pi(Q)\|_2^2 \le O(K+\log(1/\eps)) \right] \ge \zeta\eps/4 \}. \]
We can define $\wt{SW}_I \subseteq \wt{W}_I$ as in \cref{lemma:sw}.
Finally we argue that if $|SW_{I'}| > 0$ and $P \in SW_{I'}$ then for any $I'' \subseteq I'$ it holds that $P|_{I''} \in \wt{SW}_{I''}$.
\begin{lemma}
\label{lemma:swii}
Let $P \in SW_{I'}$ and $I'' \subseteq I$. Then $P|_{I''} \in \wt{SW}_{I''}$.
\end{lemma}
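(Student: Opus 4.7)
The goal is to show that if $P \in SW_{I'}$ and $I'' \subseteq I'$, then $P|_{I''}$ lies in $\wt{W}_{I''}$; the desired membership in $\wt{SW}_{I''}$ then follows from the fact that $\wt{SW}_{I''}$ is constructed to cover $\wt{W}_{I''}$ up to $\pi$-distance $2\sqrt{O(K+\log(1/\eps))}$, analogously to $SW_I$ in \cref{lemma:sw}. The idea is very simple: unpack what $P \in W_{I'}$ guarantees, then use \cref{lemma:rrprod} to further random-restrict on $I' \setminus I''$, producing a correlation that can be recaptured by the short list on $\A^{\overline{I''}}$, and finally compare via the triangle inequality at the level of the $\pi$-vectors.

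\textbf{Step 1 (pushing the correlation down to $I''$).} By definition of $SW_{I'} \subseteq W_{I'}$, for at least a $\zeta$-fraction of $z_2 \in \A^{\overline{I'}}$ there is a product function $Q = Q(z_2) \in \slst_{\eps/2,\eps^2/10}[f_{\overline{I'}\to z_2}]$ with $\|\pi(P)-\pi(Q)\|_2^2 \le K$. Fix such a $z_2$. Since $|\langle f_{\overline{I'}\to z_2}, Q\rangle| \ge \eps/2$ and both are $1$-bounded, \cref{lemma:rrprod} applied to $f_{\overline{I'}\to z_2}\colon \A^{I'}\to\bbC$ with subset $I'' \subseteq I'$ yields that with probability at least $\eps/4$ over $z_1 \in \A^{I' \setminus I''}$,
\[
\big|\langle f_{\overline{I''}\to(z_1,z_2)}, Q_{(I'\setminus I'')\to z_1}\rangle\big| \ge \eps/4.
\]
Because $Q$ is a product function with $|Q_i|=1$, its restriction equals $c(z_1)\cdot Q|_{I''}$ for a scalar $c(z_1)$ of modulus $1$, so the inequality simplifies to $\big|\langle f_{\overline{I''}\to(z_1,z_2)}, Q|_{I''}\rangle\big| \ge \eps/4$.

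\textbf{Step 2 (comparing to the short list on $\A^{\overline{I''}}$).} Hence $Q|_{I''} \in \lst_{\eps/4}[f_{\overline{I''}\to(z_1,z_2)}]$, and \cref{lemma:slist} provides a $Q' \in \slst_{\eps/4,\eps^2/100}[f_{\overline{I''}\to(z_1,z_2)}]$ with $|\langle Q|_{I''}, Q'\rangle| \ge \eps^2/100$. By the contrapositive of \cref{lemma:prodip}, this gives $\|\pi(Q|_{I''})-\pi(Q')\|_2^2 \le O(\log(1/\eps))$. Now observe that in \cref{def:vector} each coordinate $P_i$ is normalized individually so that $P_i(\a)\ge 0$ (and similarly for $Q$), so $\pi(P|_{I''})$ and $\pi(Q|_{I''})$ are literally the sub-vectors of $\pi(P),\pi(Q)$ indexed by $I''\times \A$. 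Consequently $\|\pi(P|_{I''})-\pi(Q|_{I''})\|_2 \le \|\pi(P)-\pi(Q)\|_2$, and the triangle inequality yields
\[
\|\pi(P|_{I''})-\pi(Q')\|_2^2 \le 2\|\pi(P)-\pi(Q)\|_2^2 + 2\|\pi(Q|_{I''})-\pi(Q')\|_2^2 \le 2K + O(\log(1/\eps)).
\]

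\textbf{Step 3 (assembling the probability).} Combining Steps 1 and 2, for a $\zeta$-fraction of $z_2$ and (conditionally) an $\eps/4$-fraction of $z_1$ — so a $(\zeta\eps/4)$-fraction of $(z_1,z_2) \in \A^{\overline{I''}}$ — there exists $Q' \in \slst_{\eps/4,\eps^2/100}[f_{\overline{I''}\to(z_1,z_2)}]$ with $\|\pi(P|_{I''})-\pi(Q')\|_2^2 \le O(K+\log(1/\eps))$. This is precisely the threshold in the definition of $\wt{W}_{I''}$, so $P|_{I''}\in\wt{W}_{I''}$, and the construction of $\wt{SW}_{I''}$ gives the claim. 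The only subtle point in the proof is the bookkeeping on the $\pi$-normalization in Step 2: once one checks that restriction commutes with the per-coordinate normalization, everything else is a direct application of \cref{lemma:rrprod,lemma:slist,lemma:prodip}.
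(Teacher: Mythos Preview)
Your proof is correct and follows essentially the same route as the paper: both arguments start from the witness $Q\in\slst_{\eps/2,\eps^2/10}[f_{\overline{I'}\to z}]$ guaranteed by $P\in W_{I'}$, push the correlation down to $I''$ via \cref{lemma:rrprod}, recapture it in $\slst_{\eps/4,\eps^2/100}[f_{\overline{I''}\to(z,z')}]$ using \cref{lemma:slist}, translate the inner-product bound into a $\pi$-distance bound via \cref{lemma:prodip}, and finish with the triangle inequality. Your added remark that restriction commutes with the per-coordinate normalization in \cref{def:vector} (so $\pi(P|_{I''})$ is literally the sub-vector of $\pi(P)$) is a useful clarification the paper leaves implicit. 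One small point: what both you and the paper actually establish is $P|_{I''}\in\wt{W}_{I''}$, not literal membership in the subset $\wt{SW}_{I''}$; the covering property of $\wt{SW}_{I''}$ only gives a nearby element, which is all that is used downstream in \cref{lemma:dplower}.
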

\begin{proof}
Let $z \sim \A^{\bar{I'}}$ and $Q \in \slst_{\eps/2, \eps^2/10}[f_{\bar{I'} \to z}]$ be such that $\|\pi(P)-\pi(Q)\|_2^2 \le K$. Such $Q$ exists with probability at least $\zeta$ over $z$ because $P \in SW_{I'}$. By \cref{lemma:rrprod} we know that
\[ \Pr_{z'\sim \A^{I' \setminus I''}}\left[|\l f_{\bar{I''} \to (z,z')}, Q|_{I''}\r| \ge \eps/4 \right] \ge \eps/4. \] Thus with probability at least $\zeta\eps/4$ over $(z, z')$, $Q|_{I''} \in \lst_{\eps/4}[f_{\bar{I''} \to (z,z')}]$. By the definition of $\slst_{\eps/4, \eps^2/100}$ and \cref{lemma:prodip}, we know that there is $P' \in \slst_{\eps/4, \eps^2/100}$ with $\|\pi(P') - \pi(Q|_{I''})\|_2^2 \le O(\log(1/\eps))$. Thus
\[ \|\pi(P') - \pi(P|_{I''})\|_2^2 \le 2\|\pi(P') - \pi(Q|_{I''})\|_2^2 + 2\|\pi(P) - \pi(Q)\|_2^2 \le O(K + \log(1/\eps)). \]
This holds with probability at least $\zeta\eps/4$ over $(z, z')$ as desired.
\end{proof}

\subsubsection{Applying the Direct Product Test}

Define a function $F: \{0,1\}^n \to (\R^{2|\A|})^{\le n}$ as follows. Let $I \subseteq [n]$ and define $F[I]$ to be a uniformly random element of $\wt{SW}_I$ (and apply $\pi$) if it is nonempty, and otherwise a random point.
The following lemma bounds the success probability of the direct product test.
\begin{lemma}
\label{lemma:dplower}
It holds that
\[ \Pr_{\substack{I' \subseteq_{\rho/2} [n] \\ I'', I''' \subseteq_{1/2} I}}\left[\|F[I'']_{I'' \cap I'''} - F[I''']_{I'' \cap I'''}\|_2^2 \le O(K + \log(1/\eps)) \right] \ge \Omega(\zeta^2\eps^{11}). \]
\end{lemma}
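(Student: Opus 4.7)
The plan is to chain the three preceding results of this subsection. First, I condition on the event $\mathcal{E}_1$ that $SW_{I'} \neq \emptyset$, which by \cref{lemma:swi} occurs with probability $\Omega(\eps^7)$ over $I' \subseteq_{\rho/2} [n]$, and fix any $P \in SW_{I'}$. Given $\mathcal{E}_1$, I sample $I'', I''' \subseteq_{1/2} I'$ independently and aim to show that with noticeable conditional probability both $F[I'']$ and $F[I''']$ are close to $P|_{I''}$ and $P|_{I'''}$, respectively.

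Second, \cref{lemma:swii} applied to $P$ and each of $I'' \subseteq I'$, $I''' \subseteq I'$ yields $P|_{I''} \in \wt{W}_{I''}$ and $P|_{I'''} \in \wt{W}_{I'''}$ (the proof of \cref{lemma:swii} in fact verifies the defining membership condition for $\wt{W}$, from which the short-list version follows). Applying the analog of \cref{lemma:sw} for $\wt{W}$ then produces $\wt{P}'' \in \wt{SW}_{I''}$ with $\|\pi(P|_{I''}) - \pi(\wt{P}'')\|_2^2 \le O(K + \log(1/\eps))$, and similarly $\wt{P}''' \in \wt{SW}_{I'''}$. In particular, neither $\wt{SW}_{I''}$ nor $\wt{SW}_{I'''}$ is empty, so $F[I'']$ and $F[I''']$ are uniform samples from these sets. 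Tracking parameters via \cref{lemma:sw} and \cref{lemma:slist} with the relaxed thresholds $(\eps/4, \eps^2/100, \zeta\eps/4)$ gives $|\wt{SW}_{I''}|, |\wt{SW}_{I'''}| \le O(\eps^{-3}\zeta^{-1})$, so each of the events $F[I''] = \wt{P}''$ and $F[I'''] = \wt{P}'''$ has conditional probability $\Omega(\zeta \eps^3)$, and the two are independent given $(I', I'', I''')$.

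Third, when both hits occur, the key observation is that $\pi$ commutes with restriction, because the normalization in \cref{def:vector} rotates each coordinate independently against the single reference letter $\a$. Consequently $\pi(P|_{I''})|_{I'' \cap I'''} = \pi(P|_{I'''})|_{I'' \cap I'''}$, and the triangle inequality yields
\[
\|F[I'']|_{I'' \cap I'''} - F[I''']|_{I'' \cap I'''}\|_2 \le \|\pi(\wt{P}'') - \pi(P|_{I''})\|_2 + \|\pi(\wt{P}''') - \pi(P|_{I'''})\|_2 \le O(\sqrt{K + \log(1/\eps)}),
\]
whose square is the target bound. Multiplying the three probability factors gives $\Omega(\eps^7) \cdot \Omega(\zeta \eps^3)^2 = \Omega(\zeta^2 \eps^{13})$, which is stronger than the claimed $\Omega(\zeta^2 \eps^{11})$.

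The only real subtlety, and the step I expect to warrant the most care, is justifying that $\pi$ respects restriction: this requires the observation that the global phase normalization in \cref{def:vector} factors coordinatewise once $P_i(\a)$ is rotated to $\R_{\ge 0}$ for every $i$, so that the restriction of $\pi(P)$ to a subset of coordinates exactly equals $\pi(P|_{\text{subset}})$. Without this, the two approximation errors would not telescope on the overlap. Everything else amounts to bookkeeping short-list sizes, using the independence of the two random halvings of $I'$, and tracking the $\slst_{\eps/4, \eps^2/100}$ parameters built into the definition of $\wt{W}$.
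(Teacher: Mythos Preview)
Your approach is essentially identical to the paper's: condition on $SW_{I'}\neq\emptyset$ via \cref{lemma:swi}, push a fixed $P\in SW_{I'}$ down to $\wt{W}_{I''}$ and $\wt{W}_{I'''}$ via \cref{lemma:swii}, use the short-list bound to argue that the random choices $F[I'']$, $F[I''']$ hit nearby elements with noticeable probability, and finish with the triangle inequality on the overlap. Your remark that $\pi$ commutes with restriction (because the phase normalization in \cref{def:vector} is coordinatewise) is exactly the point that makes the telescoping work, and the paper leaves this implicit.

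There is one slip at the end. You write that $\Omega(\zeta^2\eps^{13})$ is ``stronger than the claimed $\Omega(\zeta^2\eps^{11})$''; since $\eps<1$, it is in fact \emph{weaker}. The discrepancy comes from your bound $|\wt{SW}_I|\le O(\eps^{-3}\zeta^{-1})$, obtained by plugging the relaxed threshold $\zeta\eps/4$ into the analogue of \cref{lemma:sw}. The paper instead quotes $|\wt{SW}_I|\le O(\zeta^{-1}\eps^{-2})$, which yields the stated $\Omega(\zeta^2\eps^{11})$. Either bound suffices for the downstream argument (the exact power of $\eps$ is immaterial once $\zeta\le\eps^{O_\rho(1)}$), but as written your final sentence is incorrect and your conclusion falls one polynomial factor short of the lemma as stated.
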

\begin{proof}
By \cref{lemma:swi} we know that $|SW_{I'}| > 0$ with probability at least $\Omega(\eps^7)$. Fix $P \in SW_{I'}$. By \cref{lemma:swii} we know that $P|_{I''} \in \wt{SW}_{I''}, \wt{SW}_{I'''}$. The result follows from $|\wt{SW}_{I''}|, |\wt{SW}_{I'''}| \le O(\zeta^{-1}\eps^{-2})$ by \cref{lemma:sw} and the triangle inequality.
\end{proof}
We now have all the pieces necessary to prove \cref{thm:rit}.
\begin{proof}[Proof of \cref{thm:rit}]
By \cref{lemma:dplower} and \cref{thm:dp} there is a $1$-bounded global function $g: [n] \to \bbC^{|\A|}$ and constants $K' \le O_{\rho}((K+\log(1/\eps))\log(1/\eps)^{O(1)})$ and $\eps' \ge (\eps\zeta)^{O_{\rho}(1)}$ such that
\begin{align} \Pr_{I \subseteq_{\rho/4} [n]}\left[\|g|_I - F[I]\|_2^2 \le K' \right] \ge \eps'. \label{eq:dpresult}
\end{align}
Let $G(x) = \prod_{i=1}^n g(i)_{x_i}$ be the product function corresponding to $g$. Let $\gamma = \eps^3/(CK')$ for sufficiently large constant $C$. 

By \eqref{eq:dpresult} and the definition of $\wt{SW}_I$, we know that
\[ \Pr_{I \subseteq_{\rho/4} [n], z \sim \A^{\bar{I}}}\left[\exists Q \in \mc{F}(I),|\l f_{\bar{I} \to z}, Q \r| \ge \eps/2, \|\pi(G|_I) - \pi(Q)\|_2^2 \le O(K') \right] \ge \eps'\zeta. \]
For this fixed pair $(I, z)$ we conclude that
\begin{align} \Pr_{I' \subseteq_{\gamma} I, z' \sim \A^{I \setminus I'}}\left[|\l f_{\bar{I}' \to (z,z')}, Q_{(I\setminus I') \to z'} \r| \ge \eps/4, \|\pi(G|_{I'}) - \pi(Q_{(I \setminus I') \to z'})\|_2^2 \le c\eps^2 \right] \ge \eps/8. \label{eq:fg}  \end{align}
This follows because $|\l f_{\bar{I}' \to (z,z')}, Q_{(I\setminus I') \to z'} \r| \ge \eps/4$ with probability at least $\eps/4$ by \cref{lemma:rrprod}, and also, $\E[\|G|_{I'} - Q_{(I \setminus I') \to z'}\|_2^2] \le O(\gamma K')$, and hence
\[ \Pr[\|\pi(G|_{I'}) - \pi(Q_{(I \setminus I') \to z'})\|_2^2 > c\eps^2] \le O(\gamma K'/(c\eps^2)) \le \eps/8, \] by Markov's inequality and the choice of $\gamma$. Let $|\alpha| = 1$ be such that
\[ \|\alpha G|_{I'} - Q_{(I \setminus I') \to z'}\|_2^2 \le c\eps^2, \] which exists by \cref{lemma:prodip2}. Thus
\begin{align*}
\stab_0((f\bar{G})_{\bar{I} \to (z,z')}) &= |\l f_{\bar{I} \to (z,z')}, \alpha G|_{I'}\r|^2 \\ 
&\ge \left(|\l f_{\bar{I}' \to (z,z')}, Q_{(I\setminus I') \to z'} \r| - \|\alpha G|_{I'} - Q_{(I \setminus I') \to z'}\|_2 \right)^2\\ 
&\ge \frac{\eps^2}{64}.
\end{align*}
Thus
\begin{align*}
    \stab_{1-\rho\gamma/4}(f\bar{G}) &= \E_{\substack{I \subseteq_{\rho/4} [n], I' \subseteq_{\gamma} I \\ z \sim \A^{\bar{I}}, z' \sim \A^{I \setminus I'}}}\left[\stab_0((f\bar{G})_{\bar{I} \to (z,z')}) \right]
    \ge \eps'\zeta \cdot \frac{\eps}{8} \cdot \frac{\eps^2}{64} \ge \frac{\eps'\zeta\eps^3}{512}.
\end{align*}
Thus \cref{thm:rit} follows for degree $D := O(\rho\gamma \log(1/\delta'))$ for $\delta' := \frac{\eps'\zeta\eps^3}{2000}.$
\end{proof}

\subsection{Small-Set Expansion}
\label{subsec:sse}
The goal of this section is to establish \cref{thm:sseproj}.
We require the following standard result on small-set expansion in the biased hypercube.
\begin{theorem}[Small-set expansion]
\label{thm:sse}
Let $\rho, \gamma \in (0, 1)$, and $A \subseteq \{0, 1\}^n$. Let $\mu(A) := \Pr_{x \sim_\rho [n]}[x \in A]$. There is a constant $c = \Omega_{\rho,\gamma}(1)$ such that for all $A$, $\Pr_{x \sim_{\rho} [n], y \sim_{1-\gamma}(x)}[x, y \in A] \le \mu(A)^{1+c}$.
\end{theorem}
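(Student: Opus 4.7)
The plan is to reduce the statement to the standard hypercontractive inequality for the $\rho$-biased hypercube (Bonami--Beckner in the unbiased case, Latała--Oleszkiewicz in the biased case). Let $f := \mathbbm{1}_A$, viewed as a function in $L^2(\{0,1\}^n, \mu_\rho)$ where $\mu_\rho$ is the $\rho$-biased product measure. The sampling rule $y \sim_{1-\gamma} x$ corresponds to the noise operator $T_{1-\gamma}$ that, independently on each coordinate, keeps the value of $x_i$ with probability $1-\gamma$ and otherwise resamples from $\mu_\rho$. Hence
\[ \Pr_{x \sim_{\rho} [n], y \sim_{1-\gamma} x}[x, y \in A] = \langle f, T_{1-\gamma} f\rangle_{\mu_\rho}, \]
and since $f$ is a $\{0,1\}$-valued indicator, $\|f\|_r = \mu(A)^{1/r}$ for every $r \ge 1$.

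Next I would apply Hölder followed by hypercontractivity. For any $1 < p < 2 < p'$ with $1/p + 1/p' = 1$,
\[ \langle f, T_{1-\gamma} f\rangle \le \|f\|_p \, \|T_{1-\gamma} f\|_{p'} = \mu(A)^{1/p} \cdot \|T_{1-\gamma} f\|_{p'}. \]
The biased hypercontractive inequality states that $T_{1-\gamma}: L^p(\mu_\rho) \to L^{p'}(\mu_\rho)$ is a contraction whenever $(1-\gamma)^2 \le C_\rho \cdot (p-1)/(p'-1)$ for an explicit constant $C_\rho > 0$ depending only on the bias $\rho$. As $p \to 2^-$ we have $(p-1)/(p'-1) \to 1$, so for every fixed $\rho, \gamma \in (0,1)$ one can choose $p = p(\rho, \gamma) \in (1, 2)$ satisfying the condition. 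This yields $\|T_{1-\gamma} f\|_{p'} \le \|f\|_p = \mu(A)^{1/p}$, and hence
\[ \Pr_{x \sim_\rho [n], y \sim_{1-\gamma} x}[x, y \in A] \le \mu(A)^{2/p} = \mu(A)^{1+c}, \]
where $c := 2/p - 1 > 0$ depends only on $\rho$ and $\gamma$.

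There is no real obstacle here; the statement is a well-known consequence of biased hypercontractivity, and the only bookkeeping required is tracking the dependence of the parameter $p$ on $\rho$ and $\gamma$. Concretely, one may take $p = 2 - \varepsilon$ for $\varepsilon$ a small multiple of $\gamma$ scaled by a factor depending on $\rho$, giving $c = \Omega_{\rho, \gamma}(1)$. Given that the result is classical, the cleanest presentation would simply cite it from the literature (e.g., O'Donnell's textbook, Chapter~10) rather than reproducing the hypercontractive estimate in full.
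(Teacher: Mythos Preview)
Your proposal is correct and follows essentially the same approach as the paper: both rewrite the probability as $\langle 1_A, T_{1-\gamma} 1_A\rangle$ and bound it via hypercontractivity on the $\rho$-biased cube, citing O'Donnell's Chapter~10. The only cosmetic difference is that the paper applies Cauchy--Schwarz and then hypercontractivity $\|T_{1-\gamma} 1_A\|_2 \le \|1_A\|_{2-c}$ to one factor, whereas you use H\"older with dual exponents $p<2<p'$ and the hypercontractive bound $\|T_{1-\gamma} f\|_{p'}\le \|f\|_p$; both yield $\mu(A)^{1+c}$ with $c=\Omega_{\rho,\gamma}(1)$.
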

\begin{proof}
Let $T_{\gamma}\colon L_2(\{0,1\}^n; \mu_\rho^{\otimes n})\to L_2(\{0,1\}^n; \mu_\rho^{\otimes n})$ be the averaging operator 
defined as $T_{\gamma} f(x) = \E_{y\sim_{1-\gamma}(x)}{f(y)}$. Then 
\[
\Pr_{x \sim_{\rho} [n], y \sim_{1-\gamma}(x)}[x, y \in A] 
=\langle 1_A, T_{\gamma} 1_A\rangle
\leq \|1_A\|_2\|T_{\gamma} 1_A\|_2.
\]
By~\cite[Chapter 10]{O14} 
we have that $\|T_{\gamma} 1_A\|_2\leq \|1_A\|_{2-c}$
for some $c = c(\rho,\gamma)>0$, giving us that 
\[
\Pr_{x \sim_{\rho} [n], y \sim_{1-\gamma}(x)}[x, y \in A] 
\leq 
\|1_A\|_2
\|1_A\|_{2-c}
=\mu(A)^{1/2}\cdot\mu(A)^{1/(2-c)}
\leq \mu(A)^{1+c/4}.\qedhere
\]
\end{proof}
Our first goal is to apply \cref{thm:sse} to establish an analogue of \cref{thm:sseproj} in the setting where the dimension $m$ is small.
\begin{lemma}
\label{lemma:constm}
Let $m, n$ be positive integers and $\rho, \gamma \in (0, 1)$. Let $f: \{0, 1\}^n \to \R^m$ be such that:
\[ \Pr_{x \sim_{\rho} [n], y \sim_{1-\gamma} x}[\|f(x) - f(y)\|_2 \le D] \ge \eps. \]
Then
\[ \Pr_{x,y \sim_{\rho} [n]}[\|f(x) - f(y)\|_2 \le 2Dm] \ge (\eps/2)^{O_{\rho,\gamma}(1)}.\]
\end{lemma}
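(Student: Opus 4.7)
The plan is to reduce to a small-set-expansion application by partitioning $\R^m$ into axis-aligned cubes and using \cref{thm:sse} to find one cube whose $f$-preimage has non-negligible measure. The key numerical choice is to use cubes of side length $s := 2\sqrt{m}\,D$, which we arrive at by exploiting the $\ell_2$ hypothesis on $\|f(x)-f(y)\|_2$ together with Cauchy--Schwarz.

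\textbf{Execution.} Over a uniformly random shift $t \in [0,s]^m$ of the grid, for any fixed $u,v \in \R^m$ the probability that $u$ and $v$ land in distinct cubes is at most $\sum_{i=1}^m |u_i-v_i|/s = \|u-v\|_1/s \le \sqrt{m}\,\|u-v\|_2/s$. Applied to $u = f(x), v = f(y)$ under the hypothesis $\|f(x)-f(y)\|_2 \le D$, this splitting probability is at most $1/2$, so averaging over $t$ produces a fixed shift $t^*$ with
\[ \Pr_{x\sim_\rho [n],\, y \sim_{1-\gamma} x}\big[f(x), f(y) \text{ lie in the same } t^*\text{-shifted cube}\big] \ge \eps/2. \]
Write $\{A_C\}_C$ for the partition of $\{0,1\}^n$ given by the preimages of these shifted cubes, and $\mu(A_C) := \Pr_{x\sim_\rho [n]}[x \in A_C]$. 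Then $\sum_C \Pr[x,y \in A_C] \ge \eps/2$ for correlated $(x,y)$, and \cref{thm:sse} bounds each summand by $\mu(A_C)^{1+c}$ with $c = \Omega_{\rho,\gamma}(1)$. Combined with $\sum_C \mu(A_C) = 1$, this gives
\[ \eps/2 \le \sum_C \mu(A_C)^{1+c} \le \big(\max_C \mu(A_C)\big)^{c}, \]
so some cube $C^*$ satisfies $\mu(A_{C^*}) \ge (\eps/2)^{1/c}$. For independent $x,y\sim_\rho [n]$ we then have $\Pr[x,y \in A_{C^*}] = \mu(A_{C^*})^2 \ge (\eps/2)^{2/c}$; and whenever this event occurs, $f(x)$ and $f(y)$ lie in a common cube of $\ell_2$-diameter $s\sqrt{m} = 2Dm$. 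This yields the conclusion with $O_{\rho,\gamma}(1) = 2/c$.

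\textbf{Main obstacle.} The only delicate point is calibrating the cube side length correctly. A naive coordinate-wise union bound on the shift argument would force $s \ge 2mD$, yielding an $\ell_2$-diameter of $2mD\cdot\sqrt{m}$ per cube, which is off by a factor of $\sqrt{m}$ from the required $2Dm$. Replacing $\|f(x)-f(y)\|_1$ by $\sqrt{m}\|f(x)-f(y)\|_2$ via Cauchy--Schwarz saves exactly this factor and lets us take $s = 2\sqrt{m}D$; the remainder of the argument is a routine pigeonhole-plus-SSE.
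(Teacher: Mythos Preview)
Your proof is correct and follows essentially the same approach as the paper: both arguments tile $\R^m$ with cubes of side $2D\sqrt{m}$, use the $\ell_1$--$\ell_2$ inequality to show that close pairs $(f(x),f(y))$ land in a common cube with probability at least $1/2$, apply \cref{thm:sse} to the preimages, and use $\sum_C \mu(A_C)=1$ to extract a single cube of measure $(\eps/2)^{O_{\rho,\gamma}(1)}$. The only cosmetic difference is that the paper integrates continuously over all shifts (a soft cover) rather than fixing one random shift into a partition, but the resulting calculation and bounds are identical.
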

\begin{proof}
For all $v \in \R^m$, let
\[ S_v = \{x \in \{0,1\}^n: f(x)_i \in [v_i, v_i + 2D\sqrt{m}] \enspace \text{ for all } \enspace i \in [m]\}. \] Note that each $x \in \R^n$, $\int_{v \in \R^m} 1_{x \in S_v} = (2D\sqrt{m})^m$. Also, for each $x, y \in \R^m$ with $\|x-y\|_2 \le D$ it holds that
\[ \int_{v \in \R^m} 1_{x,y \in S_v} = \prod_{i=1}^m (2D\sqrt{m} - |x_i-y_i|) \ge (2D\sqrt{m})^m\Big(1 - \sum_{i=1}^m \frac{|x_i-y_i|}{2D\sqrt{m}}\Big) \ge (2D\sqrt{m})^m/2. \]
Combining these along with small-set-expansion (\cref{thm:sse})
\begin{align*}
    \eps 
    &\le \Pr_{x \sim_{\rho} [n], y \sim_{1-\gamma} x}[\|f(x) - f(y)\|_2 \le D]\\
    &\le \frac{2}{(2D\sqrt{m})^m} \E_{x \sim_{\rho} [n], y \sim_{1-\gamma} x}\left[\int_{v \in \R^m} 1_{x,y \in S_v}\right] \\
    &\le \frac{2}{(2D\sqrt{m})^m} \int_{v \in \R^m} \mu(S_v)^{1+c} \\
    &\le 2 \max_v \mu(S_v)^c,
\end{align*}
where we have applied \cref{thm:sse}. Thus, $\mu(S_v) \ge (\eps/2)^{O_{\rho,\nu}(1)}$ for some $v \in \Z^m$. To conclude, note for any $x, y \in S_v$ that $\|f(x) - f(y)\|_2 \le 2Dm$ and
\[ \Pr_{x,y \sim_{\rho} [n]}[\|f(x) - f(y)\|_2 \le 2Dm] \ge \mu(S_v)^2 \ge (\eps/2)^{O_{\rho,\gamma}(1)}. \qedhere \]
\end{proof}
Now to prove \cref{thm:sseproj} we will essentially randomly project the values of $f(x)$ down to a constant number of dimensions, and then apply \cref{lemma:constm}. Towards this we first prove that a random projection indeed preserves the $\ell_2$ norm of pairwise distances in a way that we need.

\begin{lemma}
\label{lemma:proj}
Let $u, v \in \R^M$ and let $g_1, \dots, g_m \sim \mathcal{N}(0, I)$ be uniform Gaussian vectors in $M$ dimensions. Then for $u' = m^{-1/2}(\l u, g_1 \r, \dots, \l u, g_m \r) \in \R^m$ and $v' = m^{-1/2}(\l v, g_1 \r, \dots, \l v, g_m \r)$ it holds that
\[ \Pr[\|u'-v'\|_2^2 > 2\|u-v\|_2^2] \le e^{-m/10} \enspace \text{ and } \enspace \Pr[\|u'-v'\|_2^2 < \delta\|u-v\|_2^2] \le (10\delta)^{m/2}.\]
\end{lemma}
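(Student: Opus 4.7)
The plan is to reduce the statement to a standard concentration inequality for chi-squared random variables. Let $w = u - v \in \R^M$; the key observation is that $\langle w, g_i\rangle \sim \mathcal{N}(0, \|w\|_2^2)$, and these are independent across $i$ because the $g_i$ are. Therefore
\[
\|u' - v'\|_2^2 = \frac{1}{m}\sum_{i=1}^m \langle w, g_i\rangle^2 = \frac{\|w\|_2^2}{m}\sum_{i=1}^m Y_i^2,
\]
where $Y_i \sim \mathcal{N}(0,1)$ i.i.d., so $Z := \sum_{i=1}^m Y_i^2$ is a chi-squared random variable with $m$ degrees of freedom. The two inequalities we must prove become $\Pr[Z > 2m] \le e^{-m/10}$ and $\Pr[Z < \delta m] \le (10\delta)^{m/2}$.

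Both bounds follow from the moment generating function $\E[e^{tY^2}] = (1-2t)^{-1/2}$ (valid for $t < 1/2$) and a Chernoff argument. For the upper tail I would take $t = 1/4$ and compute
\[
\Pr[Z > 2m] \le e^{-2tm}\,\E[e^{tZ}] = e^{-m/2}\cdot 2^{m/2} = (2/e)^{m/2},
\]
and then verify numerically that $(2/e)^{1/2} \le e^{-1/10}$, i.e.~$\ln 2 - 1 \le -1/5$, which is immediate. For the lower tail I would apply the same trick with the negative MGF $\E[e^{-tY^2}] = (1+2t)^{-1/2}$, giving
\[
\Pr[Z < \delta m] \le e^{t\delta m}(1+2t)^{-m/2}.
\]
Optimizing in $t$ by choosing $1+2t = 1/\delta$ yields the bound $(e\delta \cdot e^{-\delta})^{m/2} \le (e\delta)^{m/2} \le (10\delta)^{m/2}$, since $e \le 10$.

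There are no serious obstacles here: the proof is entirely the classical Johnson--Lindenstrauss chi-squared concentration computation, and the constants in the statement ($2$ in the upper tail, $10$ in the lower tail, and $1/10$ in the exponent) are comfortably loose so the Chernoff parameters do not need to be tuned carefully. The only step that requires any care is recording the normalization factor $m^{-1}$ in the definition of $u',v'$, which is precisely what converts the chi-squared tail bounds into the multiplicative-deviation form stated in the lemma.
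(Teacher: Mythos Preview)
Your proof is correct and follows essentially the same reduction as the paper: set $w=u-v$, observe that $\langle w,g_i\rangle$ are i.i.d.\ $\mathcal{N}(0,\|w\|_2^2)$, and reduce both statements to tail bounds for a $\chi^2_m$ variable. The paper handles the upper tail by citing Hanson--Wright and the lower tail by bounding the Gaussian integral by the volume of the ball of radius $\sqrt{\delta m}$, whereas you do both tails via the MGF/Chernoff method; these are interchangeable standard arguments and your explicit constants check out.
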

\begin{proof}
We may assume without loss of generality that $v = 0$ and that $\|u\|_2=1$. The first bound then follows from the Hanson-Wright inequality, because each $\l u, g_i\r$ is Gaussian with variance $\|u\|_2^2 = 1$. For the second bound let $\eta_i = \l u, g_i \r/\|u\|_2$, and note that
\begin{align*} 
\Pr[\|u'\|_2^2 \le \delta\|u\|_2^2] &= \int_{\sum_{i=1}^m \eta_i^2 \le \delta m} \prod_{i=1}^m \frac{1}{\sqrt{2\pi}}e^{-\eta_i^2/2} \le (2\pi)^{-m/2} \mathrm{Vol}(\sqrt{\delta m} B_m) \le (10\delta)^{m/2},
\end{align*}
where $B_m$ is the unit ball in $m$ dimensions.
\end{proof}
% \begin{lemma}
% \label{lemma:proj}
% Let $u, v \in \R^M$ and let $g_1, \dots, g_m \sim \mathcal{N}(0, I)$ be uniform Gaussian vectors in $M$ dimensions. Then for $u' = m^{-1/2}(\l u, g_1 \r, \dots, \l u, g_m \r) \in \R^m$ and $v' = m^{-1/2}(\l v, g_1 \r, \dots, \l v, g_m \r)$ it holds that
% \[ \Pr[\|u'-v'\|_2^2 > 2\|u-v\|_2^2] \le e^{-m/10} \enspace \text{ and } \enspace \Pr[\|u'-v'\|_2^2 < \delta\|u-v\|_2^2] \le (10\delta)^{m/2}.\]
% \end{lemma}
% \begin{proof}
% We may assume without loss of generality that $v = 0$ and that $\|u\|_2=1$. 
% Note that then $\|u'\|_2^2 = \frac{1}{m} \sum_{i=1}^m \l u, g_i \r^2$. Recall that for any positive integer $k$ that 
% \[ \E[\l u, g_i\r^{2k}] = \|u\|_2^{2k} \E_{g \sim \mathcal{N}(0, 1)}[g^{2k}] \le (2k)^k \|u\|_2^{2k}. \]
% Thus
% \[ \Pr[\|u'\|_2^2 > 2\|u\|_2^2] \le \|u\|_2^{-2k}\E[(\|u'\|_2^2 - \|u\|_2^2)^k] \le m^{-k}(2k)^k = (2k/m)^k. \]
% The desired bound follows by taking $k = m/10$. For the second bound let $\eta_i = \l u, g_i \r/\|u\|_2$, and note that
% \begin{align*} 
% \Pr[\|u'\|_2^2 \le \delta\|u\|_2^2] &= \int_{\sum_{i=1}^m \eta_i^2 \le \delta m} \prod_{i=1}^m \frac{1}{\sqrt{2\pi}}e^{-\eta_i^2/2} \le (2\pi)^{-m/2} \mathrm{Vol}(\sqrt{\delta m} B_m) \le (10\delta)^{m/2},
% \end{align*}
% where $B_m$ is the unit ball in $m$ dimensions.
% \end{proof}
At this point, we are ready to proceed to the proof of \cref{thm:sseproj}.
\begin{proof}[Proof of \cref{thm:sseproj}]
Let $m, \delta$ be parameters chosen later. By Markov's inequality and \cref{lemma:proj} there are $g_1, \dots, g_m \in \R^M$ such that $g(u) = m^{-1/2}(\l u, g_1\r, \dots, \l u, g_m\r)$ satisfies:
\begin{equation} \Pr_{x,y \sim_{\rho} [n]}[\|g(u)-g(v)\|_2 < \delta^{1/2}\|f(u)-f(v)\|_2] \le 3(10\delta)^{m/2} \label{eq:proj1} \end{equation} and
\[ \Pr_{x \sim_{\rho} [n], y \sim_{1-\gamma} x}[\|g(u)-g(v)\|_2 \le 2D] \ge \eps - 3e^{-m/10} \ge \eps/2 \] for $m = 100C_{\rho,\gamma}\log(1/\eps)$ for sufficiently large constant $C_{\rho,\gamma} = O_{\rho,\gamma}(1)$. Combining the latter equation with \cref{lemma:constm} gives
\begin{equation} \Pr_{x,y \sim_{\rho} [n]}[\|g(x) - g(y)\|_2 \le 4Dm] \ge (\eps/2)^{O_{\rho,\gamma}(1)} \ge \eps^{O_{\rho,\gamma}(1)}. \label{eq:proj3} \end{equation}
Let $\delta$ be sufficiently small so that $6(10\delta)^{m/2} < \eps^{O_{\rho,\gamma}(1)}$, here the constant $C_{\rho,\gamma}$ in the definition of $m$ is sufficiently large. 
Combining~\eqref{eq:proj3} with \eqref{eq:proj1} gives
\[ \Pr_{x,y \sim_{\rho} [n]}[\|f(u)-f(v)\|_2 \le 4\delta^{-1/2}Dm] \ge \eps^{O_{\rho,\gamma}(1)} - 3(10\delta)^{m/2} \ge \eps^{O_{\rho,\gamma}(1)}, \] as desired.
\end{proof}

\subsection{Analysis of Direct Product Test}
\label{subsec:dp}

In this section we prove \cref{thm:dp}, restated below.
\directproduct*
We first rephrase the distribution in the direct product test. Note that we may find constants $\gamma, \gamma' \in (0, 1)$ such that the distribution over $A, B, C$ in \cref{def:dp} is equivalent to sampling $A \sim_{\rho} [n]$, $B \sim_{1-\gamma} A$, and $C \subseteq_{\gamma'} A \cap B$. Here $B \sim_{1-\gamma} A$ means that $A, B$ are $1-\gamma$ correlated.

Let us work with this formulation from now on. Our first goal is to establish that $\|F[A]|_{A \cap B} - F[B]|_{A \cap B}\|_2$ is small.
\begin{lemma}
\label{lemma:dp1}
Under the hypotheses of \cref{thm:dp} it holds that
\[ \Pr_{A \sim_{\rho} [n], B \sim_{1-\gamma} A} \left[\|F(A)|_{A \cap B} - F[B]_{A \cap B}\|_2 \le O_{\rho,\gamma}(D + \log(1/\eps)^{1/2}) \right] \ge \frac{2\eps}{3}. \]
\end{lemma}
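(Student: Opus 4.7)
The plan is to combine Markov's inequality with a Bernstein-type concentration so as to lift the $\ell_2$-closeness guaranteed on a random subsample $C$ into $\ell_2$-closeness on all of $A \cap B$.

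Using the equivalent formulation noted just above the lemma (sample $A \sim_\rho [n]$, $B \sim_{1-\gamma} A$, and $C \subseteq A \cap B$ by including each element independently with probability $\gamma'$), I set $v_{A,B} := F[A]|_{A\cap B} - F[B]|_{A\cap B}$ and $p_{A,B} := \Pr_C[\|v_{A,B}|_C\|_2 \le D]$. The hypothesis then reads $\E_{A,B}[p_{A,B}] \ge \eps$, and since $p_{A,B} \in [0,1]$, Markov's inequality gives $\Pr_{A,B}[p_{A,B} \ge \eps/3] \ge 2\eps/3$. It therefore suffices to show that any fixed $(A,B)$ with $p_{A,B} \ge \eps/3$ satisfies $\|v_{A,B}\|_2 \le O_{\rho,\gamma}(D + \log(1/\eps)^{1/2})$.

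For such $(A,B)$, write $v := v_{A,B}$ and decompose
\[ X \;:=\; \|v|_C\|_2^2 \;=\; \sum_{i \in A \cap B} \|v_i\|_2^2 \, \mathbb{1}[i \in C] \]
as a sum of independent random variables, each lying in $[0,K]$ (since $v_i \in [-1,1]^K$ implies $\|v_i\|_2^2 \le K$), with mean $\gamma'\|v\|_2^2$ and variance at most $\gamma' K \|v\|_2^2$. Bernstein's inequality applied with $t := \gamma' \|v\|_2^2 - D^2$ (assuming $\|v\|_2^2 \ge 2D^2/\gamma'$, so $t \ge \gamma'\|v\|_2^2/2$) yields
\[ \Pr[X \le D^2] \;\le\; \exp\!\Big(-\Omega\big(\gamma'\|v\|_2^2 / K\big)\Big). \]
If $\|v\|_2^2 \ge C_0 K {\gamma'}^{-1} \log(3/\eps)$ for a large enough absolute constant $C_0$, the right-hand side is strictly less than $\eps/3$, contradicting $p_{A,B} \ge \eps/3$. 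Consequently $\|v\|_2^2 \le O_{\rho,\gamma}(D^2 + \log(1/\eps))$, and since $D \ge 1$, square-rooting delivers the desired bound.

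The main subtlety is the use of Bernstein's inequality rather than Chebyshev's: Chebyshev would yield only $\|v\|_2 \le O(D + 1/\sqrt{\eps})$, which has the wrong (polynomial) dependence on $1/\eps$ and is too weak for the rest of the argument. The logarithmic-in-$1/\eps$ dependence hinges critically on exploiting the uniform boundedness $\|v_i\|_2^2 \le K$ through the exponential tail of Bernstein's inequality. Polynomial dependence on $\gamma, \gamma', K$ (hence on $\rho, \alpha$) is absorbed into the $O_{\rho,\gamma}(\cdot)$ notation.
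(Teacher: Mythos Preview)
Your proof is correct and follows essentially the same approach as the paper, which gives only the one-line justification ``Follows by the fact that $C \subseteq_{\gamma'} A \cap B$ for some $\gamma' > 0$ and the Chernoff bound.'' You have carefully spelled out the two implicit steps (the averaging/Markov step to isolate pairs $(A,B)$ with $p_{A,B} \ge \eps/3$, and the Chernoff/Bernstein concentration for $\|v|_C\|_2^2$), and your remark on why an exponential tail bound is required rather than Chebyshev is a useful clarification; the residual $\sqrt{K}$ factor you absorb into $O_{\rho,\gamma}(\cdot)$ is treated the same way throughout the paper (in the application $K = 2|\Sigma|$ is a constant).
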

\begin{proof}
Follows by the fact that $C \subseteq_{\gamma'} A \cap B$ for some $\gamma' > 0$ and the Chernoff bound.
\end{proof}
For the remainder of the section, let $D' = O(D + \log(1/\eps)^{1/2})$.
\begin{definition}[Consistent]
\label{def:cons}
We say that $A, B$ are $D'$-consistent if $\|F[A]|_{A \cap B} - F[B]|_{A \cap B}\|_2 \le D'$.
\end{definition}
If $A, B$ are $D'$-consistent we sometimes write that $A \in \cons(B)$ or $B \in \cons(A)$.

\begin{definition}[Good]
\label{def:good}
We say that $A$ is good if $\Pr_{B \sim_{1-\gamma} A}[B \in \cons(A)] \ge \eps/3$.
\end{definition}
By an averaging argument, we know that $\Pr_{A \sim_{\rho} [n]}[A \text{ is good}] \ge \eps/3$. 
Let $r$ be a parameter to be determined.
\begin{definition}[Excellent]
\label{def:excellent}
For some $D'' > D'$, we say that $A$ is excellent if $A$ is good and
\[ \Pr_{B \sim_{1-\gamma} A}\left[\E_{B' \sim_{1-\gamma} A}\left[\|F[B]|_{B \cap B'} - F[B']_{B \cap B'}\|_2^2 \enspace | \enspace B' \in \cons(A) \right] > (D'')^2 \text{ and } B \in \cons(A)\right] \le r. \]
\end{definition}
Now we establish for large enough $D''$ that the probability that $A$ is good but not excellent is exponentially small.
\begin{lemma}
\label{lemma:excellent}
If $D'' > C_{\rho,\gamma}D'$ then
\[ \Pr_{A \sim_{\rho} [n]}[A \emph{ is good but not excellent}] \le \exp(-\Omega_{\rho,\gamma}(D''))/r. \]
\end{lemma}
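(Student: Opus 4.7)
My plan is a Markov reduction on the outer $A$-randomness, followed by a Chernoff-style tail bound on the contribution from coordinates outside $A$. First, note that ``$A$ is good but not excellent'' is the event that $A$ is good and the inner probability $p(A) := \Pr_{B \sim_{1-\gamma} A}[Z(B) > (D'')^2,\, B \in \cons(A)]$ exceeds $r$. Markov's inequality applied to $p(A)$ gives
\[
\Pr_A[\text{good but not excellent}] \;\le\; \frac{1}{r}\, \E_A\bigl[\mathbf{1}_{\text{good}(A)}\, p(A)\bigr] \;\le\; \frac{1}{r}\, \Pr_{A, B}\bigl[Z(B) > (D'')^2,\ B \in \cons(A)\bigr],
\]
so it suffices to show that the joint probability on the right is at most $\exp(-\Omega_{\rho,\gamma}(D''))$.

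Second, I unfold $Z(B) = \E_{B'}[\|F[B] - F[B']\|_{B \cap B'}^2 \mid B' \in \cons(A)]$ and use the goodness hypothesis $\Pr[B' \in \cons(A)] \ge \eps/3$ to strip the conditioning at the cost of an $O(1/\eps)$ factor. A higher-moment Chebyshev argument on $Z(B)$ then reduces matters to bounding the symmetric joint event
\[
\Pr_{A, B, B'}\bigl[B, B' \in \cons(A)\ \text{and}\ \|F[B] - F[B']\|_{B \cap B'} \ge c D''\bigr]
\]
for some constant $c>0$. On this event, the triangle inequality on $A \cap B \cap B'$ gives $\|F[B] - F[B']\|_{A \cap B \cap B'} \le \|F[B]-F[A]\|_{A\cap B} + \|F[A]-F[B']\|_{A\cap B'} \le 2D'$, so when $D'' \ge C_{\rho,\gamma} D'$ the excess must live on $B \cap B' \setminus A$: namely $\|F[B] - F[B']\|_{B \cap B' \setminus A} \ge \Omega(D'')$.

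Finally, I bound this residual event by a Chernoff/Bernstein inequality. Setting $v_i := \|F[B]_i - F[B']_i\|_2^2 \in [0, K]$, the squared norm $\sum_{i \in B \cap B'} v_i\, \mathbf{1}_{i \notin A}$ is, after swapping the order of sampling (i.e.\ conditioning on $B, B'$), a sum of independent Bernoulli-weighted bounded random variables, each indicator $\mathbf{1}_{i \notin A}$ for $i \in B\cap B'$ having probability $O_{\rho,\gamma}(\gamma^2)$. A Chernoff bound of the form $\exp\bigl(-\Omega(m \log(m/\mu))\bigr)$ applied with $m \asymp (D'')^2/K$ and mean $\mu \asymp \gamma^2 |B \cap B'|$ translates in the extreme-deviation regime to an $\exp(-\Omega_{\rho,\gamma}(D''))$ tail.

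The main obstacle is precisely this last step: arranging the concentration so the exponent is linear in $D''$, not quadratic of the form $\exp(-\Omega((D'')^2/K))$. This relies on being deep in the Poisson/extreme-deviation tail of the Bernoulli sum (which is enforced by $D'' \ge C_{\rho,\gamma}D'$) and on $K$ being a constant depending only on the alphabet $|\Sigma|$, so that $K$ disappears into the $\Omega_{\rho,\gamma}$ notation rather than controlling the exponent.
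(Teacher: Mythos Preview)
Your overall strategy --- Markov to push the probability onto the triple $(A,B,B')$, use consistency plus the triangle inequality on $A\cap B\cap B'$, then apply concentration over the randomness of $A$ --- is the paper's approach. But the final Chernoff step has a real gap. After you condition on $B,B'$, the sum $S:=\sum_{i\in B\cap B'} v_i\,\mathbf{1}_{i\notin A}$ has mean $p\sum_i v_i$ with $p=\Theta_{\rho,\gamma}(\gamma^2)$, \emph{not} $\gamma^2|B\cap B'|$ as you wrote; and you have no control on $V:=\sum_{i\in B\cap B'} v_i$. If $V\gtrsim (D'')^2/p$ then the mean of $S$ already exceeds the threshold $\Omega((D'')^2)$ and no upper-tail bound applies. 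The trouble is that in passing to the ``residual event'' $\{S\ge\Omega((D'')^2)\}$ and then conditioning on $B,B'$, you discarded the consistency conditions --- the only thing that ties $V$ down. The paper sidesteps this with a \emph{multiplicative} comparison event $E(A,B,B',X)$: that $\|F[B]-F[B']\|_{B\cap B'}^2 \ge C\,\|F[B]-F[B']\|_{A\cap B\cap B'}^2 + X$; conditioned on $B,B'$, Bernstein gives $\Pr[E]\le\exp(-\Omega_{\rho,\gamma}(X))$ \emph{uniformly in $V$}, because the negative drift of $\sum_i v_i(1-C\cdot\mathbf{1}_{i\in A})$ scales with $V$ itself. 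An equivalent fix to your approach: retain from consistency the event $\{\sum_i v_i\mathbf{1}_{i\in A}\le 4(D')^2\}$ together with $\{V\ge c^2(D'')^2\}$, and apply a \emph{lower}-tail Chernoff to $\sum_i v_i\mathbf{1}_{i\in A}$, whose mean $(1-p)V$ is $\Omega((D'')^2)$.

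Your step 3 also needs more than you wrote. Bounding the single-threshold probability $\Pr_{A,B,B'}[W>c(D'')^2,\ldots]$ does not by itself control $\Pr_{A,B}\bigl[\E_{B'}[W\mid B'\in\cons(A)]>(D'')^2\bigr]$, since rare but huge values of $W$ could dominate the conditional expectation. You need integrable tails $\Pr[W>t,\ldots]\le\exp(-\Omega(t))$ for all $t$ (which the corrected concentration step does provide). The paper packages this as an exponential-moment bound $\E\bigl[\mathbf{1}_{B,B'\in\cons(A)}\exp(W/C')\bigr]\le\exp(O_{\rho,\gamma}((D')^2))$, then uses Jensen (convexity of $\exp$) to pass to $\E_{B'}[W\mid B'\in\cons(A)]$, and finally Markov in $(A,B)$; this is the clean execution of what you called ``higher-moment Chebyshev.''
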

\begin{proof}
Let $C := C_{\rho,\gamma}$. For a real number $X \ge 0$, let $E(A,B,B',X)$ be the event that
\[ \|F[B]|_{B \cap B'} - F[B']|_{B \cap B'}\|_2^2 \ge C\|F[B]|_{A \cap B \cap B'} - F[B']|_{A \cap B \cap B'}\|_2^2 + X. \]
The main claim is that $\Pr_{A,B,B'}[E] \le \exp(-\Omega_{\rho,\gamma}(X))$. To see this, consider first fixing $B. B'$. Note for $i \in B \cap B'$, there is some probability $p > 0$ (for each coordinate iid, independent of $n$), such that $i \in A$. Thus, the claim follows from Bernstein's inequality.

Note that when $E(A,B,B',X)$ does not hold and $(A, B), (A, B')$ are $D'$-consistent:
\begin{align*} 
\|F[B]|_{B \cap B'} - F[B']|_{B \cap B'}\|_2^2 &\le X + C\|F[B]|_{A \cap B \cap B'} - F[B']|_{A \cap B \cap B'}\|_2^2 \\
&\le X + 2C\|F[B]|_{A \cap B \cap B'} - F[A]|_{A \cap B \cap B'}\|_2^2 \\ ~&+ 2C\|F[A]|_{A \cap B \cap B'} - F[B']|_{A \cap B \cap B'}\|_2^2 \\
&\le X + 2CD + 2CD \le X + O(D).
\end{align*}
where we have used the triangle inequality and consistency of $(A, B)$ and $(A, B')$. As  $\Pr_{A,B,B'}[E] \le \exp(-\Omega_{\rho,\gamma}(X))$, we get that
\begin{align}
\E_{A,B,B'}\left[1_{B,B' \in \cons(A), A \text{ good}} \exp\left(\|F[B]|_{B \cap B'} - F[B']|_{B \cap B'}\|_2^2/C'\right) \right] \le \exp(O_{\gamma,\rho}(D)), \label{eq:tomarkov}
\end{align}
for sufficiently large constant $C'$ depending on $\rho, \gamma$. Consider good $A$ and $B \in \cons(A)$ with
\[ \E_{B'}\left[1_{B' \in \cons(A)} \exp\left(\|F[B]|_{B \cap B'} - F[B']|_{B \cap B'}\|_2^2/C'\right) \right] \le \exp(O_{\gamma,\rho}(D)). \]
Then \[ \E_{B'}\left[1_{B' \in \cons(A)} \|F[B]|_{B \cap B'} - F[B']|_{B \cap B'}\|_2^2 \right] \le O_{\gamma,\rho}(D) \] follows by convexity of $\exp$. The lemma follows from Markov's inequality applied to \eqref{eq:tomarkov}.
\end{proof}

Now we define how to generate a global function $g_A$ for each set $A \subseteq [n]$. Ultimately, our goal is to prove that these $g_A$ agree for a nonnegligible fraction of $A$.
\begin{definition}[Voting]
\label{def:vote}
Define the function $g_A: [n] \to [0, 1]^K$ as
\[ g_A(x) = \min_{v = F[B']_x : B' \in \{0,1\}^n} \E_{B \sim_{1-\gamma} A}\Big[1_{B \in \cons(A), x \in B}\|v-F[B]_x\|_2^2\Big]. \]
\end{definition}
In other words, $g_A(x)$ is the minimizer over all $F[B]_x$ of the expected distance between it and a random $B$ that is consistent with $A$.

We require a lemma on the conditions of product distributions.
\begin{lemma}
\label{lemma:prodcond}
Let $U = (U_1, \dots, U_n)$ be a product distribution. Let $\wt{U}$ be $U$ conditioned on an event with probability at least $d$ and $\wt{U}_1, \dots, \wt{U}_n$ be the marginal distributions of $\wt{U}$. Then \[ \sum_{i=1}^n \dtv(U_i, \wt{U}_i)^2 \le \log(1/d). \]
\end{lemma}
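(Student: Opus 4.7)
The plan is to prove this via standard information-theoretic inequalities: chain rule/subadditivity of KL divergence for product distributions, combined with Pinsker's inequality and a direct computation of the KL divergence between a distribution and its conditioning on a high-probability event.

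First I would upper bound $\dkl(\wt{U} \| U)$ by $\log(1/d)$. Let $E$ be the event with $\Pr_U[E] = p \ge d$ that we condition on, so $\wt{U}(x) = U(x)/p$ for $x \in E$ and $\wt{U}(x) = 0$ otherwise. Then
\[
\dkl(\wt{U} \| U) = \sum_{x \in E} \wt{U}(x)\log\frac{\wt{U}(x)}{U(x)} = \sum_{x \in E}\wt{U}(x)\log(1/p) = \log(1/p) \le \log(1/d).
\]

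Next I would use the fact that $U = U_1 \otimes \cdots \otimes U_n$ is a product distribution to get the subadditivity inequality $\sum_{i=1}^n \dkl(\wt{U}_i \| U_i) \le \dkl(\wt{U} \| U)$. This follows by writing out the definition of KL divergence using that $U(x) = \prod_i U_i(x_i)$, expanding the logarithm as a sum, and recognizing each term as $\dkl(\wt{U}_i \| U_i)$ after marginalizing the other coordinates (together with nonnegativity of the conditional KL terms that get dropped; alternatively, this is the standard fact that relative entropy is superadditive in the second argument when the reference measure is a product).

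Finally I would apply Pinsker's inequality $\dtv(\wt{U}_i, U_i)^2 \le \tfrac{1}{2}\dkl(\wt{U}_i \| U_i)$ coordinatewise and sum. Chaining the three estimates yields
\[
\sum_{i=1}^n \dtv(U_i, \wt{U}_i)^2 \le \frac{1}{2}\sum_{i=1}^n \dkl(\wt{U}_i \| U_i) \le \frac{1}{2}\dkl(\wt{U} \| U) \le \frac{1}{2}\log(1/d) \le \log(1/d),
\]
which is the desired conclusion. There is no real obstacle here; the only mildly delicate step is the subadditivity inequality, but this is standard and can be proved in a line by expanding the definitions and using nonnegativity of KL on the conditional marginals.
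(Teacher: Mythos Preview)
Your proposal is correct and follows essentially the same approach as the paper: both chain the bound $\dkl(\wt{U}\|U)\le\log(1/d)$, the subadditivity of KL divergence with respect to a product reference measure, and Pinsker's inequality. Your write-up is in fact more detailed than the paper's (and you even record the extra factor of $1/2$ from Pinsker, which the paper drops).
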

\begin{proof}
Use the following sequence of inequalities:
\[ \log(1/d) \ge \dkl(\wt{U} \| U) \ge \sum_{i=1}^n \dkl(\wt{U}_i, U_i) \ge \sum_{i=1} \dtv(\wt{U}_i, U_i)^2. \]
The first is because $\wt{U}$ is $U$ conditioned on an event of probability at most $d$, the second is because $U$ is a product distribution, and the last is by Pinsker's inequality.
\end{proof}

For the remainder of the analysis it will be useful to define the \emph{degree} of a set $A$.
\begin{definition}[Degree]
\label{def:degree}
Define $\deg(A) := \Pr_{A' \sim_{1-\gamma} A}\left[A' \in \cons(A) \right]$.
\end{definition}

We first prove that for $A$ with nonnegligible degree that the votes are consistent with $F[A]$.
\begin{lemma}
\label{lemma:votecons}
For all $A$ it holds that $\|(g_A)|_A - F[A]\|_2 \le O_{\rho,\gamma}(D'' + \log(1/\deg(A))^{1/2})$.
\end{lemma}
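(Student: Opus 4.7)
The plan is to exploit the optimality of the $\arg\min$ defining $g_A(x)$ by comparing it against the natural competitor $v = F[A]_x$, combine this with the $D'$-consistency definition to obtain a $\tilde p_x$-weighted bound, and then strip the weights via \cref{lemma:prodcond} to produce the desired unweighted $\ell_2$ estimate up to a small ``bad'' set.

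First, fix $x \in A$ and write $\tilde p_x := \Pr_{B \sim_{1-\gamma} A}[B \in \cons(A),\, x \in B]$. Plugging $B' = A$ into the $\arg\min$ yields
\[ \E_B\!\left[\mathbf{1}_{B \in \cons(A),\, x \in B}\,\|g_A(x) - F[B]_x\|_2^2\right] \le \E_B\!\left[\mathbf{1}_{B \in \cons(A),\, x \in B}\,\|F[A]_x - F[B]_x\|_2^2\right]. \]
Using $\|g_A(x) - F[A]_x\|_2^2 \le 2\|g_A(x) - F[B]_x\|_2^2 + 2\|F[A]_x - F[B]_x\|_2^2$ and summing over $x \in A$, the definition of $D'$-consistency gives the weighted bound
\[ \sum_{x \in A} \tilde p_x\,\|g_A(x) - F[A]_x\|_2^2 \le 4\,\E_B\!\left[\mathbf{1}_{B \in \cons(A)}\,\|F[A]|_{A \cap B} - F[B]|_{A \cap B}\|_2^2\right] \le 4 (D')^2 \deg(A). \]

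Second, to remove the weights, note that for $x \in A$ the unconditional marginal $q := \Pr_B[x \in B]$ depends only on $\gamma$. Applying \cref{lemma:prodcond} to the product distribution $B \sim_{1-\gamma} A$ conditioned on the event $B \in \cons(A)$ (which has probability $\deg(A)$) gives
\[ \sum_{x \in A}\bigl(\tilde p_x/\deg(A) - q\bigr)^2 \le \log(1/\deg(A)). \]
Split $A = G \sqcup B^{\circ}$ with $G := \{x \in A : \tilde p_x \ge q\deg(A)/2\}$. On $G$, dividing the weighted inequality by $q\deg(A)/2$ yields $\sum_{x \in G}\|g_A(x) - F[A]_x\|_2^2 = O_{\rho,\gamma}((D')^2) \le O_{\rho,\gamma}((D'')^2)$. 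On $B^{\circ}$, each coordinate satisfies $(\tilde p_x/\deg(A) - q)^2 \ge (q/2)^2$, so $|B^{\circ}| = O_{\rho,\gamma}(\log(1/\deg(A)))$; and each such $x$ contributes $\|g_A(x) - F[A]_x\|_2^2 = O(1)$ by breaking ties in the $\arg\min$ so that $g_A(x) = F[A]_x$ when $F[A]_x$ is optimal (this handles $\tilde p_x = 0$ vacuously and keeps the difference bounded when $\tilde p_x$ is very small). Adding the two contributions and taking square roots delivers the claimed bound.

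The step I expect to require the most care is the bad set $B^{\circ}$: as $\tilde p_x \to 0$ the optimality-based inequality degenerates, so the pointwise bound on $\|g_A(x) - F[A]_x\|_2$ must instead be extracted from the tie-breaking convention in the $\arg\min$. It is the concentration bound $|B^{\circ}| = O_{\rho,\gamma}(\log(1/\deg(A)))$ coming from \cref{lemma:prodcond} that lets this tail contribution be absorbed into the $\log(1/\deg(A))$ term in the final estimate, yielding the clean $O_{\rho,\gamma}(D'' + \log(1/\deg(A))^{1/2})$ bound.
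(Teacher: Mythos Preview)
Your proof is correct and follows essentially the same approach as the paper: both use the optimality of $g_A$ against the competitor $F[A]_x$, combine this with $D'$-consistency to obtain a weighted $\ell_2$ bound, and then invoke \cref{lemma:prodcond} to show that all but $O_{\rho,\gamma}(\log(1/\deg(A)))$ coordinates carry weight $\Omega_{\rho,\gamma}(1)$. One minor remark: your tie-breaking concern on $B^\circ$ is unnecessary, since by definition $g_A(x)$ is chosen from $\{F[B']_x : B' \in \{0,1\}^n\} \subseteq [0,1]^K$, so $\|g_A(x)-F[A]_x\|_2^2 \le K = O(1)$ holds automatically for every $x$, regardless of $\tilde p_x$ or how ties are broken.
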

\begin{proof}
By the definition of consistency we know that
\begin{align*}
    (D')^2 &\ge \E_B[\|F[B]_{A \cap B} - F[A]|_{A \cap B}\|_2^2 \enspace | \enspace B \in \cons(A)] \\
    &= \sum_{x \in A} \E[1_{x \in B} \|F[B]_x - F[A]_x\|_2^2 \enspace | \enspace B \in \cons(A)] \\
    &\ge \sum_{x \in A} \E[1_{x \in B} \|F[B]_x - (g_A)_x\|_2^2 \enspace | \enspace B \in \cons(A)]
\end{align*}
where the final inequality is by the optimality of $g_A$. By the triangle inequality we conclude that
\begin{align*}
    4(D')^2 &\ge \sum_{x \in A} \E[1_{x \in B} \|F[A]_x - (g_A)_x\|_2^2 \enspace | \enspace B \in \cons(A)] \\ &= \sum_{x \in A} \Pr[x \in B \enspace | \enspace B \in \cons(A)] \|F[A]_x - (g_A)_x\|_2^2.
\end{align*}
Now, $\Pr[x \in B \enspace | \enspace B \in \cons(A)] \ge \Omega_{\rho,\gamma}(1)$ except for $O_{\rho,\gamma}(\log(1/\deg(A))$ many $x \in A$, by \cref{lemma:prodcond}. Also, $\|F[A]_x - (g_A)_x\|_2^2 \le O(1)$ for all $x$. Thus,
\[ \|F[A] - (g_A)|_A\|_2^2 = \sum_{x \in A} \|F[A]_x - (g_A)_x\|_2^2 \le O_{\rho,\gamma}((D')^2 + \log(1/\deg(A))) \]
as desired.
\end{proof}
Now we prove that if $A$ is excellent, then $F[B]$ agrees with $g_A$ for most $B$ such that $(A, B)$ is $D'$-consistent.
\begin{lemma}
\label{lemma:voteexc}
If $A$ is excellent then
\[ \Pr_{B \sim_{1-\gamma} A}[\|g_A|_{B} - F[B]\|_2 > C_{\rho,\gamma}D'' \text{ and } B \in \cons(A)] \le O(r/\eps). \]
\end{lemma}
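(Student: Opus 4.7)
The plan is to reduce the statement to a deterministic pointwise bound. I will call $B$ \emph{typical} if $B \in \cons(A)$ and
\[ V(B) := \E_{B' \sim_{1-\gamma} A}\bigl[\|F[B]|_{B \cap B'} - F[B']|_{B \cap B'}\|_2^2 \,\big|\, B' \in \cons(A) \bigr] \le (D'')^2. \]
By excellence, the event $\{B \in \cons(A)\} \setminus \{B \text{ typical}\}$ has mass at most $r$ under $B \sim_{1-\gamma} A$, and since $A$ is good $\deg(A) \ge \eps/3$, which is what eventually produces the $O(r/\eps)$ in the conclusion. Hence it is enough to show $\|g_A|_B - F[B]\|_2 \le C_{\rho,\gamma} D''$ for every typical $B$.

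Fix such a $B$. The defining optimality of $g_A(x)$ (\cref{def:vote}) applied with the witness $v = F[B]_x$, summed over $x \in B$, yields
\[ \sum_{x \in B} \E_{B'}\bigl[\mathbf{1}_{B' \in \cons(A),\, x \in B'}\|g_A(x) - F[B']_x\|_2^2\bigr] \le V(B)\deg(A) \le (D'')^2 \deg(A). \]
I would then combine this with the triangle inequality $\|g_A(x) - F[B]_x\|_2^2 \le 2\|g_A(x) - F[B']_x\|_2^2 + 2\|F[B']_x - F[B]_x\|_2^2$: multiplying by $\mathbf{1}_{B' \in \cons(A),\, x \in B'}$ and taking $\E_{B'}$ gives, setting $p_x := \Pr[x \in B' \mid B' \in \cons(A)]$,
\[ p_x \deg(A) \|g_A(x) - F[B]_x\|_2^2 \le 2\E_{B'}[\mathbf{1}_{B' \in \cons(A),\, x \in B'}\|g_A(x) - F[B']_x\|_2^2] + 2\E_{B'}[\mathbf{1}_{B' \in \cons(A),\, x \in B'}\|F[B']_x - F[B]_x\|_2^2]. \]
Summing over $x \in B$, the second sum on the right telescopes back into $V(B)\deg(A)$, while the first is bounded as above, so $\sum_{x \in B} p_x \|g_A(x) - F[B]_x\|_2^2 \le 4(D'')^2$.

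To convert this weighted sum into the unweighted $\ell_2^2$ distance I would invoke \cref{lemma:prodcond}: since $B' \sim_{1-\gamma} A$ is a product distribution and we condition on an event of probability $\deg(A) \ge \eps/3$, the sum $\sum_x \dtv(U_x, \wt U_x)^2$ is $O(\log(1/\eps))$. Consequently all but $O_{\rho,\gamma}(\log(1/\eps))$ of the coordinates $x \in B$ satisfy $p_x = \Omega_{\rho,\gamma}(1)$; these coordinates contribute $O_{\rho,\gamma}((D'')^2)$ to $\|g_A|_B - F[B]\|_2^2$ by the previous display, while on the remaining $O_{\rho,\gamma}(\log(1/\eps))$ "bad" coordinates the per-coordinate error is bounded by a constant (using $F[B]_x, g_A(x) \in [0,1]^K$ with $K$ treated as a constant in the intended application to \cref{thm:rit}), contributing another $O_{\rho,\gamma}(\log(1/\eps))$. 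Since $D'' \ge D' \gs \log(1/\eps)^{1/2}$, the total is $O_{\rho,\gamma}((D'')^2)$, giving the desired pointwise bound after taking square roots.

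The main obstacle is not any single inequality but rather matching the quantities properly: the symmetric $\|F[B]_x - F[B']_x\|^2$ term produced by the triangle inequality must be controlled by the \emph{same} variance $V(B)$ that excellence bounds, and the conditioning on $B' \in \cons(A)$ must be handled uniformly enough that \cref{lemma:prodcond} only produces $O(\log(1/\eps))$ "bad" coordinates. Once this bookkeeping is in place, the bound $r$ on the non-typical event immediately gives the claimed $O(r/\eps)$ after conditioning on $B \in \cons(A)$.
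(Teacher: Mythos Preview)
Your proposal is correct and follows essentially the same route as the paper's proof: isolate the ``typical'' $B$'s via the excellence condition, use the pointwise optimality of $g_A$ with witness $v=F[B]_x$ together with the triangle inequality to obtain $\sum_{x\in B} p_x\|(g_A)_x-F[B]_x\|_2^2=O((D'')^2)$, and then invoke \cref{lemma:prodcond} to handle the few coordinates where $p_x$ is small. One minor remark: your final sentence about obtaining $O(r/\eps)$ ``after conditioning on $B\in\cons(A)$'' is unnecessary, since the event in the lemma is already the intersection with $\{B\in\cons(A)\}$ rather than a conditional probability; your argument actually yields the stronger bound $r$ directly, which of course implies $O(r/\eps)$.
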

\begin{proof}
Fix $B$ satisfying the condition in \cref{def:excellent}. By definition,
\begin{align*}
    (D'')^2 \ge \E_{B' \sim_{1-\gamma} A}\left[\|F[B]|_{B \cap B'} - F[B']|_{B \cap B'}\|_2^2 \enspace | \enspace B' \in \cons(A) \right].
\end{align*}
By the triangle inequality, and optimality of $g_A$,
\begin{align*} &\E_{B' \sim_{1-\gamma} A}\left[\|g_A|_{B \cap B'} - F[B]|_{B \cap B'}\|_2^2 \enspace | \enspace B' \in \cons(A) \right] \\
&\qquad\le  2\E_{B' \sim_{1-\gamma} A}\left[\|F[B]|_{B \cap B'} - F[B']|_{B \cap B'}\|_2^2 + \|g_A|_{B \cap B'} - F[B']|_{B \cap B'}\|_2^2 \enspace | \enspace B' \in \cons(A) \right] \\
&\qquad\le O((D'')^2).
\end{align*}
Rewrite the above equation as
\[ O((D'')^2) \ge \sum_{x \in B} \Pr_{B' \sim_{1-\gamma} A}[x \in B' \enspace | \enspace B' \in \cons(A)] \|(g_A)_x - F[B]_x\|_2^2. \]
To conclude, recall by \cref{lemma:prodcond} that because $\Pr[B' \in \cons(A)] \ge \eps/3$ because $A$ is good, that for all but $O_{\rho,\gamma}(\log(1/\eps))$ many $x \in B$ that
\[ \Pr_{B' \sim_{1-\gamma} A}[x \in B' \enspace | \enspace B' \in \cons(A)] \ge \Omega_{\rho,\gamma}(1). \]
This completes the proof.
\end{proof}

Finally, we prove that for almost all pairs of consistent $A, A'$ that the votes are close in $\ell_2$. The proof closely resembles that of \cref{lemma:excellent}.
\begin{lemma}
\label{lemma:vote}
For $D''' > C_{\rho,\gamma}D''$, We have that
\[ \Pr_{A \sim_{\rho}[n], A' \sim_{1-\gamma} A}\left[\|g_A - g_{A'}\|_2 \le D''' \text{, and } \deg(A),\deg(A') \ge \eps^{O(1)} \right] \ge \eps^{O(1)}. \]
\end{lemma}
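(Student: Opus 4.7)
Plan. The plan mirrors that of Lemma~\ref{lemma:excellent}: route through a common intermediate set $B$ together with a concentration argument.

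First, I would build a joint coupling of $(A,A',B)$ whose marginal on $(A,A')$ matches the lemma's hypothesis and whose marginals on $(A,B)$ and on $(A',B)$ each coincide with that of $A\sim_\rho[n]$, $B\sim_{1-\gamma}A$. Such a coupling is obtained via a common root: sample $C\sim_\rho[n]$ and let $A,A',B$ be three conditionally independent noisy copies of $C$, with noise parameter $\alpha=1-\sqrt{1-\gamma}$ so that each pairwise correlation equals $1-\gamma$. A crucial property of this coupling is that $\Pr[x\in B\,|\,A,A']\geq p_0$ for some $p_0=\Omega_{\rho,\gamma}(1)$ uniformly in $x\in[n]$, since conditional on $(A_i,A_i')$ the coordinate $B_i$ is sampled from a mixture whose weights are bounded away from~$0$.

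Next, using Lemma~\ref{lemma:excellent} together with a Jensen-style argument applied to $\E_A[\deg(A)]\geq 2\eps/3$ (from Lemma~\ref{lemma:dp1}), I would show that under the common-root coupling
\[
\Pr_{A,A'}\bigl[\text{both }A,A'\text{ excellent and }\deg(A),\deg(A')\geq\eps^{O(1)}\bigr]\geq \eps^{O(1)},
\]
by conditioning on $C$, using conditional independence of $A$ and $A'$ to obtain a squared conditional probability, and then using Jensen to bound the resulting $\E_C$ from below by the marginal probability squared. Condition on this event. Applying Lemma~\ref{lemma:voteexc} separately to $A$ and to $A'$ (using the marginal distributions of $B$ given $A$ and given $A'$, respectively), on a subset of $B$'s of probability at least $\eps^{O(1)}$ under each marginal we have $B\in\cons(A)\cap\cons(A')$ together with $\|g_A|_B-F[B]\|_2,\|g_{A'}|_B-F[B]\|_2\leq O_{\rho,\gamma}(D'')$.

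Combining the triangle inequality with a Bernstein-type concentration step (patterned on Lemma~\ref{lemma:excellent}) to absorb the contribution of ``bad'' $B$'s, I would establish
\[
\E_{B}\bigl[\|g_A|_B-g_{A'}|_B\|_2^2 \,\big|\, A,A'\bigr]\leq O_{\rho,\gamma}((D'')^2).
\]
Since $\Pr[x\in B\,|\,A,A']\geq p_0$ uniformly in $x$, the pointwise inequality $p_0\|g_A-g_{A'}\|_2^2 \leq \E_B[\|g_A|_B-g_{A'}|_B\|_2^2\,|\,A,A']$ yields $\|g_A-g_{A'}\|_2\leq O_{\rho,\gamma}(D'')\leq D'''$ provided $D'''>C_{\rho,\gamma}D''$ with $C_{\rho,\gamma}$ suitably large. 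A final Markov step over $(A,A')$ delivers the claimed positive-probability bound.

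The main obstacle is absorbing the ``bad'' $B$'s — those for which Lemma~\ref{lemma:voteexc} does not apply or for which $B\notin\cons(A)\cap\cons(A')$ — since on such $B$ the quantity $\|g_A|_B-F[B]\|_2^2$ can be as large as $O(|B|\cdot K)$. The parameter $r$ in Definition~\ref{def:excellent} must be chosen small enough (and the threshold $D''$ correspondingly large) that the bad-$B$ contribution to the expectation is dominated by the main term, exactly paralleling the calibration producing the $\exp(-\Omega_{\rho,\gamma}(D''))/r$ bound in the proof of Lemma~\ref{lemma:excellent}.
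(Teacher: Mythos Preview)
There is a genuine gap in your plan, and it lies precisely where you flag the ``main obstacle.'' Conditioning only on $A,A'$ both being excellent (via the common-root coupling and Jensen) gives you no link whatsoever between $\cons(A)$ and $\cons(A')$. In particular, you never establish that
\[
\Pr_{B\mid (A,A')}\bigl[B\in\cons(A)\cap\cons(A')\bigr]\ \ge\ \eps^{O(1)},
\]
and in fact this can fail: if $F$ is a mixture of two unrelated global assignments, you can have $A$ good for one and $A'$ good for the other, with $\cons(A)\cap\cons(A')$ essentially empty. Your claimed bound $\E_B[\|g_A|_B-g_{A'}|_B\|_2^2\mid A,A']\le O_{\rho,\gamma}((D'')^2)$ then cannot hold: the left side equals $\sum_x \Pr[x\in B\mid A,A']\,\|(g_A)_x-(g_{A'})_x\|_2^2$, which is the very quantity you are trying to bound, and routing through $F[B]$ only helps on the set of $B$'s in $\cons(A)\cap\cons(A')$. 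The complementary set has probability up to $1-O(\eps)$ and on it $\|g_A|_B-F[B]\|_2^2$ can be of order $|B|$; no choice of the parameter $r$ repairs this, since $r$ only controls $B$'s that are \emph{already} in $\cons(A)$ but fail the vote bound.

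The paper fixes both issues at once. First, it restricts from the outset to pairs $(A,A')$ that are $D'$-\emph{consistent} (not merely both excellent), which occurs with probability $\ge \eps/4$. Second, instead of a single $B$ it samples two independent intermediates $B\sim_{1-\gamma}A$ and $B'\sim_{1-\gamma}A'$, so that the events $B\in\cons(A)$ and $B'\in\cons(A')$ are conditionally independent and their intersection has probability $\ge(\eps/3)^2$. The Bernstein step is then used not to ``absorb bad $B$'s'' but to pass from $\|F[B]-F[B']\|$ on $B\cap B'$ to the same quantity on $A\cap A'\cap B\cap B'$, where the chain of consistencies $(A,B),(A,A'),(A',B')$ and the triangle inequality give the $O(D')$ bound. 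Only after that does one invoke \cref{lemma:voteexc} and \cref{lemma:prodcond} to pass from $\|g_A-g_{A'}\|$ restricted to $B\cap B'$ back to all of $[n]$. The two missing ingredients in your plan are exactly the consistency of $(A,A')$ and the use of two intermediates.
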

\begin{proof}
We first argue that there are many pairs $(A, A')$ that are $D'$-consistent and $\deg(A), \deg(A') \ge \eps^{O(1)}$. Indeed, let $S = \{A : \deg(A) \ge \eps/3 \}$, i.e., the set of good $A$. Then
\begin{align*}
    &\Pr[A' \in \cons(A), A \in S, A' \in S] \\
    &\qquad\ge \Pr[A' \in \cons(A)] - \Pr[A' \in \cons(A), A \notin S] - \Pr[A' \in \cons(A), A' \notin S] \\
    &\qquad\ge \eps - \eps/3 - \eps/3 \ge \eps/3.
\end{align*}
By \cref{lemma:excellent} at least $\eps/4$ of pairs $(A, A')$ also have that both $A, A'$ are excellent as long as $\exp(-\Omega_{\rho,\gamma}(D''))/r \le \eps/100$ (we will choose $r$ later to ensure this).

We now prove the stronger claim that over consistent pairs $(A, A')$, the probability that $B \sim_{1-\gamma} A$ and $B' \sim_{1-\gamma} A'$ are inconsistent is very unlikely. In other words, $B$ and $B'$ vote similarly towards $g_A$ and $g_{A'}$. Formally, we want to prove that:
\begin{align*} 
&\Pr_{\substack{A \sim_{\rho}[n], A' \sim_{1-\gamma} A \\ B \sim_{1-\gamma} A, B' \sim_{1-\gamma} A'}}\left[A', B \in \cons(A), B' \in \cons(A') \text{ and } \|F[B]|_{B \cap B'} - F[B']|_{B \cap B'}\|_2 > D''' \right] \\ 
&\qquad\le \exp(-\Omega_{\rho,\gamma}(D''')).
\end{align*}
Towards proving this, define $E(A,A',B,B')$ to be the event that
\[ \|F[B]|_{B \cap B'} - F[B']|_{B \cap B'}\|_2 > C_{\rho,\gamma}\|F[B]|_{A\cap A' \cap B \cap B'} - F[B']|_{A\cap A' \cap B \cap B'}\|_2 + D'''/2. \]
We first argue that $\Pr_{A,A',B,B'}[E(A,A',B,B')] \le \exp(-\Omega_{\rho,\gamma}(D'''))$. Indeed, fix $B, B'$ and note that for $i \in B, B'$ there is some probability $p > 0$ (independent of $n$) such that $i \in A \cap A'$. Thus this follows by a Chernoff bound. When $E(A,A',B,B')$ does not hold and $(A, A'), (A, B)$, and $(A', B')$ are all consistent pairs:
\begin{align*}
    \|F[B]|_{B \cap B'} - F[B']|_{B \cap B'}\|_2 &\le D'''/2 + C\|F[B]|_{A\cap A' \cap B \cap B'} - F[B']|_{A\cap A' \cap B \cap B'}\|_2 \\
    &\le D'''/2 + C\|F[B]|_{A\cap A' \cap B \cap B'} - F[A]|_{A\cap A' \cap B \cap B'}\|_2 \\
    &~+ C\|F[A]|_{A\cap A' \cap B \cap B'} - F[A']|_{A\cap A' \cap B \cap B'}\|_2 \\
    &~+ C\|F[A'|_{A\cap A' \cap B \cap B'} - F[B']|_{A\cap A' \cap B \cap B'}\|_2 \\
    &\le D'''/2 + CD' + CD' + CD' \le D'''.
\end{align*}
Here we have used the triangle inequality and consistency. Restrict to $(A, A')$ satisfying 
\[ \Pr_{B \sim_{1-\gamma} A, B' \sim_{1-\gamma} A'}\left[A', B \in \cons(A), B' \in \cons(A') \text{ and } \|F[B]|_{B \cap B'} - F[B']|_{B \cap B'}\|_2 > D''' \right] \le r, \]
for $r$ chosen later. At least $\eps/6 -  \exp(-\Omega_{\rho,\gamma}(D'''))/r \ge \eps/12$ fraction of $(A, A')$ satisfy this. Say that $(B, B')$ is a \emph{valid pair} if $\|F[B]|_{B \cap B'} - F[B']|_{B \cap B'}\|_2 \le D'''$, \[ \|g_A|_{B} - F[B]\|_2 \le O_{\rho,\gamma}(D'') \enspace \text{ and } \enspace \|g_{A'}|_{B'} - F[B']\|_2 \le O_{\rho,\gamma}(D''). \]
By \cref{lemma:voteexc} we know that at most $O(r)$ fraction of pairs $(B, B')$ are not valid.

For such $(A, A')$:
\begin{align*}
    (D''')^2 &\ge \E_{B, B'}\left[\|F[B]|_{B \cap B'} - F[B']|_{B \cap B'}\|_2^2 \enspace | \enspace B \in \cons(A), B' \in \cons(A'), (B, B') \text{ valid} \right],
\end{align*}
and by the triangle inequality:
\begin{align*}
    &\E_{B, B'}\left[\|g_A|_{B \cap B'} - g_{A'}|_{B \cap B'}\|_2^2 \enspace | \enspace B \in \cons(A), B' \in \cons(A'), (B, B') \text{ valid} \right] \\
    &\qquad\le  3\E_{B, B'}\Bigg[\|g_A|_B-F[B]\|_2^2+\|g_{A'}|_{B'}-F[B']\|_2^2+\|F[B]_{B \cap B'} - F[B']_{B \cap B'}\|_2^2 \\
     ~& \qquad\qquad\qquad\qquad\enspace \Big| \enspace B \in \cons(A), B' \in \cons(A'), (B, B') \text{ valid} \Bigg]
     \le O_{\rho,\gamma}((D''')^2).
\end{align*}
Rewriting this gives us:
\begin{align*}
    O_{\rho,\gamma}((D''')^2) \ge \sum_{x \in [n]} \Pr_{B,B'}\left[x \in B \cap B' \enspace | \enspace B \in \cons(A), B' \in \cons(A'), (B, B') \text{ valid} \right] \|(g_A)_x - (g_{A'})_x\|_2^2 .
\end{align*}
Because $B \cap B'$ is a product distribution, and the event $B \in \cons(A), B' \in \cons(A'), (B, B')$ has probability at least $\eps^2/9 - O(r) \ge \eps^2/10$, by \cref{lemma:prodcond} we get that
\[ \Pr_{B, B'}\left[x \in B \cap B' \enspace | \enspace B \in \cons(A), B' \in \cons(A'), (B, B') \text{ valid} \right] \ge \Omega_{\rho,\gamma}(1) \] on all but $O_{\rho,\gamma}(\log(1/\eps))$ many $x \in [n]$. This completes the proof.
\end{proof}
Now \cref{thm:dp} immediately follows from combining \cref{lemma:vote} with \cref{thm:sseproj}, when choosing $r = e^{-C_{\rho,\gamma}' D''}$ for appropriately chosen constant $C_{\rho,\gamma}'>0$.
\begin{proof}[Proof of \cref{thm:dp}]
By \cref{lemma:vote,thm:sseproj} there is a function $g: [n] \to [0,1]^K$ such that
\[ \Pr_{A \sim_{\rho} [n]}\left[\|g - g_A\|_2 \le O_{\rho,\gamma}(D''' \log(1/\eps))), \deg(A) \ge \eps^{O(1)} \right] \ge \eps^{O_{\rho,\gamma}(1)}. \]
Here, we may assume that $\deg(A) \ge \eps^{O(1)}$ by restricting $g_A$ only to $A$ with $\deg(A) \ge \eps^{O(1)}$, and then applying \cref{thm:sseproj}. To conclude, note that \cref{lemma:votecons} gives
\[ \|g|_A - F[A]\|_2 \le \|g - g_A\|_2 + \|F[A] - g_A|_A\|_2 \le O_{\rho,\gamma}(D''' \log(1/\eps)) \] for $\deg(A) \ge \eps^{O(1)}$, and recall that $D''' \le O_{\rho,\gamma}(D + \log(1/\eps)^{1/2})$.
\end{proof}

\section{Applications}
\subsection{Restricted $3$-APs over $\F_p^n$ and Generalizations}
\label{sec:3ap}
In this section we establish \cref{thm:3ap} by a density increment argument.
\threeap*
Throughout this section, let $m = |\A|$. Let $\alpha = m^{-n}|A|$ and define $f_A: \A^n \to \R$ as $f_A(x) = A(x) - \alpha$, where we abuse notation to write $A(x)$ to be the indicator function of $A$. Na\"{i}vely, we wish to have a distribution $\mu$ supported on $S$ such that $\mu_x, \mu_y, \mu_z$ are all uniform. Unfortunately this is not always possible. Instead we settle for an approximate version of uniformity which we then random restrict so that $\mu_x$ becomes uniform.

Throughout this section, assume that $\alpha \ge C_m(\log\log\log n)^{-c_m}$. Call a triple $(x,y,z)$ such that $x, y, z$ are not all equal and $(x_i,y_i,z_i) \in S$ for all $i \in [n]$ a \emph{valid triple}.
\begin{lemma}
\label{lemma:notuniform}
Let $\mu$ be the distribution over $\A \times \A \times \A$ with mass $\frac1m - \frac{\delta}{m\sqrt{n}}$ on $(a,a,a)$ for $a \in \A$, and $\frac{1}{|S|-m} \cdot \frac{\delta}{\sqrt{n}}$ on each $\vec{a}$ for $\vec{a} \in S \setminus \{(a,a,a) : a \in \A\}$. If $A$ has no valid triples, then there are $1$-bounded functions $g: \A^n \to \bbC$, $h: \A^n \to \bbC$ such that:
\begin{align} 
&\max\left\{\left|\E_{(x,y,z) \sim \mu^{\otimes n}}[f_A(x)g(y)h(z)]\right|, \left|\E_{(x,y,z) \sim \mu^{\otimes n}}[g(x)f_A(y)h(z)]\right|, \left|\E_{(x,y,z) \sim \mu^{\otimes n}}[g(x)h(y)f_A(z)]\right| \right\} \notag\\
\ge ~& \alpha^3/7 - \exp(-\Omega(\delta\sqrt{n})). \label{eq:a37}
\end{align}
Also, it holds that $\E_{x \sim \mu_x^{\otimes n}}[A(x)] \ge \alpha-2\delta$.
\end{lemma}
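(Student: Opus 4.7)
The plan is to combine a density-increment expansion with a Pinsker-type bound on the marginal $\mu_x$.

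For the correlation inequality, I would first observe that since $\mu$ is supported on $S$, every $(x,y,z) \in \supp(\mu^{\otimes n})$ is coordinatewise in $S$, and if additionally $A(x)A(y)A(z) \neq 0$ then the no-valid-triple hypothesis forces $x=y=z$. The per-coordinate diagonal mass is $m \cdot (\tfrac{1}{m} - \tfrac{\delta}{m\sqrt{n}}) = 1 - \tfrac{\delta}{\sqrt{n}}$ with uniform conditional distribution, so
\[ \E_{(x,y,z)\sim \mu^{\otimes n}}[A(x)A(y)A(z)] \;=\; \alpha\bigl(1-\tfrac{\delta}{\sqrt{n}}\bigr)^n \;\le\; \alpha e^{-\delta\sqrt{n}}. \]
Next I would expand $A = f_A + \alpha$ trilinearly, yielding $\alpha^3$ plus seven ``mixed moments'' of the shape $\alpha^{3-|T|}\, \E[\prod_{i\in T} f_A(\cdot_i)]$ indexed by nonempty $T \subseteq \{x,y,z\}$. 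Rearranging and applying the triangle inequality forces one of these seven mixed moments to have magnitude at least $(\alpha^3 - \alpha e^{-\delta\sqrt{n}})/7$. Any such $T$ contains a position $i^\star$, and I would place $f_A$ at $i^\star$ and either $1$ or $f_A$ at the other two positions, all $1$-bounded since $|f_A| \le 1$, to realize this mixed moment as one of the three forms in \eqref{eq:a37}. A short calculation using $\alpha \geq \Omega((\log\log\log n)^{-c_\A})$ shows that dividing out the prefactor $\alpha^{3-|T|}$ preserves the $\alpha^3/7$ lower bound up to an additive $\exp(-\Omega(\delta\sqrt{n}))$.

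For the density claim, the structure of $\mu$ gives $\mu_x = (1-\tfrac{\delta}{\sqrt{n}})U + \tfrac{\delta}{\sqrt{n}}\nu$ for some distribution $\nu$ on $\A$ (since the diagonal part of $\mu$ has uniform $x$-marginal), so $|\mu_x(a) - 1/m| \le \delta/\sqrt{n}$ for every $a$. A direct $\chi^2$ estimate gives $\dkl(\mu_x \| U) \le O_{|\A|}(\delta^2/n)$, which tensorizes to $\dkl(\mu_x^{\otimes n} \| U^{\otimes n}) \le O_{|\A|}(\delta^2)$ and yields, via Pinsker, $\dtv(\mu_x^{\otimes n}, U^{\otimes n}) \le O_{|\A|}(\delta)$. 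Since $A$ is $1$-bounded,
\[ \E_{x\sim\mu_x^{\otimes n}}[A(x)] \;\ge\; \alpha - \dtv(\mu_x^{\otimes n}, U^{\otimes n}) \;\ge\; \alpha - 2\delta, \]
after absorbing the harmless $|\A|$-dependent constant.

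The main obstacle is essentially cosmetic: verifying in Part 1 that any of the seven winning mixed moments can be realized as one of the three prescribed forms with $1$-bounded $g, h$, and that the final division by $\alpha^{3-|T|}$ loses only an $\exp(-\Omega(\delta\sqrt{n}))$ amount (the latter uses the assumed lower bound on $\alpha$ to absorb the $1/\alpha^{2-|T|}$ blowup arising when $|T|=1$). Part 2 is a routine Pinsker-type computation.
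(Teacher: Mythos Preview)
Your proposal is correct and follows essentially the same approach as the paper: bound $\E_{\mu^{\otimes n}}[A(x)A(y)A(z)]$ by the diagonal probability, expand $A=\alpha+f_A$ trilinearly and pigeonhole over the seven nonconstant terms, and for the marginal claim use Pinsker after tensorizing the per-coordinate KL bound. Your write-up is in fact more careful than the paper's on two points: you make explicit how each of the seven mixed moments is realized as one of the three displayed forms with $1$-bounded $g,h$, and you note that the Pinsker computation really gives $\dtv(\mu_x^{\otimes n},U^{\otimes n})\le O_{|\A|}(\delta)$ rather than the bare $2\delta$ stated (the paper is slightly imprecise there, but the downstream argument only needs $O_m(\delta)$ anyway).
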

\begin{proof}
Note that if $A$ has no valid triples, then \[ \E_{(x,y,z) \sim \mu^{\otimes n}}[A(x)A(y)A(z)] = \Pr_{(x,y,z) \sim \mu^{\otimes n}}[x=y=z] \le \exp(-\Omega(\delta\sqrt{n})). \] Thus the first part follows from expanding $A = \alpha + f_A$. The second part follows because
\[ \E_{x \sim \mu_x^{\otimes n}}[A(x)] \ge \E_{x \sim \A^n}[A(x)] - \dtv(\mu_x^{\otimes n}, \A^n) \ge \alpha-2\delta, \] by applying Pinsker's inequality.
\end{proof}
For the remainder of the section we let $\gamma = \exp(-1/\alpha^{C_m'})$ for sufficiently large constant $C_m'$ (depending on $m$) and let $\delta = \gamma^3/11000$. Assume without loss of generality that the maximum in~\cref{lemma:notuniform} is attained by the first expectation. Now we will apply our inverse theorem to randomly restrict $f_A$ so that it correlates with a product function over the uniform distribution on $\A^n$ (or we obtain a density increment by other means). For $A \subseteq \A^n$ let $\mu(A) = m^{-n}|A|$ denote the density of $A$ with respect to the uniform measure on $\A^n$.
\begin{lemma}
\label{lemma:uniform}
Assume that \eqref{eq:a37} holds. Then there is a subset $I \subseteq [n]$ with $|I| \le n - n^{1/3}$ and $u \in \A^n$ such that either:
\begin{itemize}
    \item $\mu(A_{I\to u}) \ge \alpha + \delta$, or
    \item $\mu(A_{I\to u}) \ge \alpha-10\delta/\gamma$ and there is a product function $P := P_1\dots P_n$, where $P_i\colon \Sigma\to\mathbb{D}$ for each $i$, such that \[ \E_{x \sim \A^{\bar{I}}}[(f_A)_{I \to u}(x)P(x)] \ge \gamma. \]
\end{itemize}
\begin{proof}
Applying \cref{thm:main} (recall that $\mu$ is pairwise-connected by item 2 of \cref{thm:3ap}) we know that there is some $\beta \ge \Omega_m\Big(\frac{\delta\gamma}{\sqrt{n}}\Big) \ge 2n^{-2/3}$ such that:
\[ \Pr_{\substack{I \sim_{1-\beta} [n] \\ u \sim \nu^I}}\Big[\sup\{|\l (f_A)_{I\to u}, P \r| : P \text{ is 1-bounded product function}\} \ge \gamma \Big] \ge \gamma,
\]
for some distribution $\nu$ satisfying $\mu = (1-\beta)\nu + \beta U$.
Indeed, we first pass via a random restriction from the distribution $\mu$ on $n$ coordinates to a distribution $\mu'$ on $\Theta(\delta\sqrt{n})$ coordinates, where the support of $\mu'$ is $S$, the probability of each atom in $\mu'$ is $\Omega_{m}(1)$, the marginal of $\mu'$ on $x$ is uniform, and
get the an expectation analogous to~\eqref{eq:a37} 
is still at least $\alpha^3/16$ with probability at least $\alpha^3/16$. We then invoke \cref{thm:main} 
on this expectation.

Thus item 2 holds unless
\begin{equation}
    \Pr_{\substack{I \sim_{1-\beta} [n] \\ u \sim \nu^I}}\Big[\mu(A_{I\to u}) \ge \alpha-10\delta/\gamma \Big] \le 1-0.99\gamma. \label{eq:goesdown}
\end{equation}
In that case, by \cref{lemma:notuniform}, \[ \E_{\substack{I \sim_{1-\beta} [n] \\ u \sim \nu^I}}[\mu(A_{I \to u})] = \E_{x \sim \mu^{\otimes n}}[A(x)] \ge \alpha - 2\delta. \] Combining this with \eqref{eq:goesdown} implies item 1.
\end{proof}
\end{lemma}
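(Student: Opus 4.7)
The plan is to combine \cref{lemma:notuniform} with \cref{thm:main} via two nested random restrictions and then a density-versus-structure dichotomy. WLOG the first of the three expectations in \eqref{eq:a37} realizes the maximum, so \cref{lemma:notuniform} provides $1$-bounded $g, h$ with $|\E_{\mu^{\otimes n}}[f_A(x)g(y)h(z)]| \ge \alpha^3/8$ for $n$ large. The obstruction to applying \cref{thm:main} directly is that each atom of $\mu$ has mass only $\Theta(\delta/\sqrt n) \to 0$, making the quantitative guarantees vacuous; I resolve this by first conditioning on the (random) subset of coordinates at which $\mu$ sampled an off-diagonal atom.

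Concretely, I would write $\mu = (1-\eta) D + \eta \mu'$ with $\eta = \Theta(\delta/\sqrt n)$, where $D$ is uniform on the diagonal $\{(a,a,a):a\in\Sigma\}$ and $\mu'$ is a pairwise-connected distribution on $S$ with atoms of mass $\Omega_m(1)$ and uniform $x$-marginal. Such $\mu'$ exists: place mass $1/(2m)$ on each diagonal triple and distribute another $1/(2m)$ equally among the off-diagonal triples of $S$ that start with each symbol $a\in\Sigma$; this is feasible because pairwise-connectivity of $S$ forces every $a$ to appear in some off-diagonal triple of $S$. Viewing $(x,y,z) \sim \mu^{\otimes n}$ as: pick $J \subseteq [n]$ by independently including each coordinate with probability $\eta$, sample $u \in \Sigma^{\bar J}$ uniformly for the diagonal entries, and sample $(x,y,z)_J \sim \mu'^{J}$. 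By Chernoff $|J| = \Theta(\delta\sqrt n)$ w.h.p.; by averaging on the triple correlation, a fraction at least $\alpha^3/16$ of pairs $(J,u)$ satisfies $|\E_{\mu'^{J}}[(f_A)_{\bar J \to u} g_{\bar J \to u} h_{\bar J \to u}]| \ge \alpha^3/16$.

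For each such good $(J,u)$ I would invoke \cref{thm:main} on the restricted triple with respect to $\mu'^{J}$, obtaining $\beta^* \ge \exp(-1/\alpha^{O(1)})$ and a decomposition $\mu'_x = \beta^* U + (1-\beta^*) \nu^*$ such that, with probability at least $\beta^*$ over $I' \sim_{1-\beta^*} J$ and $v \sim (\nu^*)^{I'}$, the further restriction $(f_A)_{\bar J \cup I' \to (u,v)}$ is $\beta^*$-correlated with some $1$-bounded product function on $\Sigma^{J \setminus I'}$. Composing the two restrictions and setting $\gamma = \exp(-1/\alpha^{C_m'})$ with $C_m'$ large enough to absorb the $\alpha^3/16$ factor, I obtain a joint distribution on $(I, u)$ (with $I$ the union of the fixed coordinates) that is identical in law to the random restriction prescribed by \cref{thm:main} applied to the decomposition $\mu_x = (1-\beta_{\mathrm{tot}}) \nu + \beta_{\mathrm{tot}} U$ for some $\beta_{\mathrm{tot}} = \Omega_m(\gamma \delta/\sqrt n)$, and satisfies $\Pr_{(I,u)}[\text{product correlation} \ge \gamma] \ge \gamma$. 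Since $|\bar I| = \Theta(\gamma \delta \sqrt n) \gg n^{1/3}$ for $n$ large, the requirement $|I| \le n - n^{1/3}$ comes for free.

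Finally I perform the dichotomy. If any $(I,u)$ with product correlation $\ge \gamma$ also satisfies $\mu(A_{I \to u}) \ge \alpha - 10\delta/\gamma$, item 2 of the lemma is realized. Otherwise every $(I,u)$ with product correlation $\ge \gamma$ has $\mu(A_{I \to u}) < \alpha - 10\delta/\gamma$, so $\Pr_{(I,u)}[\mu(A_{I \to u}) < \alpha - 10\delta/\gamma] \ge \gamma$. Because the composed $x$-marginal over $(I,u,\text{free})$ equals $\mu_x^{\otimes n}$, \cref{lemma:notuniform} gives $\E_{(I,u)}[\mu(A_{I \to u})] \ge \alpha - 2\delta$; substituting into
\[\alpha - 2\delta \;\le\; \gamma(\alpha - 10\delta/\gamma) + (1-\gamma) \max_{(I,u)} \mu(A_{I \to u})\]
and using $\delta = \gamma^3/11000 \ll \gamma$ yields $\max_{(I,u)} \mu(A_{I \to u}) \ge \alpha + \delta$, establishing item 1. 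The main technical obstacle is the construction of $\mu'$ with uniform $x$-marginal while preserving pairwise connectivity and $\Omega_m(1)$ atom masses, together with the bookkeeping that shows the composed two-step restriction really matches \cref{thm:main}'s single-restriction framework with the correct marginal $\mu_x^{\otimes n}$, so that \cref{lemma:notuniform} can be applied to drive the density increment.
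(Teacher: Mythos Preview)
Your approach is essentially the paper's: decompose $\mu = (1-\eta)D + \eta\mu'$ with $\eta = \Theta(\delta/\sqrt n)$, random-restrict the diagonal coordinates, apply \cref{thm:main} to the remaining $\Theta(\delta\sqrt n)$ coordinates governed by $\mu'$, and then run the density-versus-structure dichotomy using the mean bound from \cref{lemma:notuniform}. Your dichotomy computation is correct (and slightly cleaner than the paper's, which phrases it via a union bound with $0.99\gamma$).

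There is one genuine slip in your construction of $\mu'$. You want $\mu'$ to simultaneously (i) have uniform $x$-marginal and (ii) satisfy $\mu = (1-\eta)D + \eta\mu'$ for the specific $\mu$ of \cref{lemma:notuniform}. But that $\mu$ places \emph{equal} mass $\delta/((|S|-m)\sqrt n)$ on every off-diagonal triple, so the $\mu'$ forced by (ii) with $\eta = 2\delta/\sqrt n$ has mass $1/(2m)$ on each diagonal and $1/(2(|S|-m))$ on each off-diagonal triple; its $x$-marginal is $\mu'_x(a) = \tfrac{1}{2m} + \tfrac{k_a}{2(|S|-m)}$, where $k_a$ is the number of off-diagonal triples with first coordinate $a$, and this is uniform only when $k_a$ is constant in $a$. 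Moreover, your justification that ``pairwise-connectivity of $S$ forces every $a$ to appear in some off-diagonal triple'' is false: for $\Sigma=\{0,1\}$ and $S=\{(0,0,0),(1,1,1),(1,0,1),(1,1,0)\}$ all three pairwise supports are connected, yet $0$ is the first coordinate of no off-diagonal triple.

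This does not damage the argument. Simply take the $\mu'$ that actually arises from $\mu=(1-\eta)D+\eta\mu'$; it has support $S$, atoms of mass $\Omega_m(1)$, and is pairwise-connected, which is all \cref{thm:main} requires. Uniformity of $\mu'_x$ is nowhere used: your own bookkeeping shows the composed restriction has per-coordinate $x$-marginal $(1-\eta)U + \eta\mu'_x = \mu_x$, so $\E_{(I,u)}[\mu(A_{I\to u})] = \E_{x\sim\mu_x^{\otimes n}}[A(x)] \ge \alpha-2\delta$ by \cref{lemma:notuniform}, and the dichotomy concludes exactly as you wrote. (The paper's proof sketch also asserts $\mu'_x$ is uniform, equally without justification; neither version needs it.)
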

If item 1 of \cref{lemma:uniform} holds, then we are done (we have obtained a density increment). Otherwise, item 2 holds. In this case, for simplicity rename $\bar{I}$ as $[n]$ again (recall that $|\bar{I}| \ge n^{1/3}$ by \cref{lemma:uniform}), and rename $A_{I \to u}$ as $A$. Thus $(f_A)_{I \to u}$ gets renamed as $f_A = A - \alpha$.

In this case we require a more specialized kind of random restriction. We will try to find subsets $T \subseteq [n]$ and restrict to $x \in \A^n$ such that $x_t = x_{t'}$ for all $t, t' \in T$. We hope for two things to be true for this subset $T$. First, $\prod_{t \in T} P_t$ should be nearly-constant. Second, the density of $A$ does not drop significantly under this restriction. While this is clearly not true for a fixed $T$, we prove that we can choose $T$ with sufficient randomness to ensure this.

We now formally define the operation which forces coordinates to take the same value.
\begin{definition}
\label{def:same}
For a function $g: \A^n \to \bbC$ and $T \subseteq [n]$, define $g_{=T}: \A^{n-|T|+1} \to \bbC$ as follows. For $x \in \A$ and $y \in \A^{[n] \setminus T}$ let $v \in \A^n$ be the vector where $v_i = y_i$ for $i \in [n] \setminus T$ and $v_i = x$ otherwise. Then $g_{=T}(x,y) := g(v)$. For brevity, for disjoint sets $T_1, \dots, T_N$ we write $g_{=T_1,\dots,T_N} = (\dots(g_{=T_1})\dots)_{=T_N}$.
\end{definition}
We now prove that if $T$ is a sufficiently random set, then $\E[g_{=T}]$ is close to $\E[g]$.
\begin{lemma}
\label{lemma:same}
Let $S \subseteq [n]$ and let $g: \A^n \to \bbC$ be $1$-bounded. Then
\[ \left|\E_{x \sim \A^n} g(x) - \E_{T \subseteq S, |T| = k} \E_{x \sim \A^{n-k+1}} g_{=T}(x)\right| \le \frac{2mk}{\sqrt{|S|}}. \]
\end{lemma}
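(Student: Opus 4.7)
The plan is to expand both expectations in an orthonormal character basis on $\A^n$ and control their difference via Cauchy--Schwarz. First I would handle the trivial regime: if $2mk^2 \ge |S|$, then $2mk/\sqrt{|S|} \ge \sqrt{2m} \ge 2$ (assuming $m \ge 2$; the case $m=1$ is vacuous), and the left-hand side is trivially at most $2$ since $g$ is $1$-bounded. So I would assume $2mk^2 \le |S|$ throughout.

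Fix an orthonormal basis $\{\chi_a\}_{a \in \A}$ of $L^2(\A, \text{uniform})$ with $\chi_{a_0} \equiv 1$, and form the tensor basis $\chi_\alpha(x) = \prod_i \chi_{\alpha_i}(x_i)$ indexed by $\alpha \in \A^n$, giving $g = \sum_\alpha \hat g(\alpha)\chi_\alpha$ with $\sum_\alpha |\hat g(\alpha)|^2 = \|g\|_2^2 \le 1$. A direct calculation shows $\E_{x \sim \A^n}[\chi_\alpha(x)] = \mathbf{1}[\alpha = 0]$ and, for fixed $T \subseteq S$ with $|T|=k$,
\[ \E_{x \sim \A^{n-k+1}}[(\chi_\alpha)_{=T}(x)] = \mathbf{1}[\supp(\alpha) \subseteq T]\cdot \E_v\Big[\prod_{i \in \supp(\alpha)}\chi_{\alpha_i}(v)\Big], \]
where $\supp(\alpha) = \{i : \alpha_i \neq a_0\}$, since any coordinate outside $T$ carrying a non-trivial character contributes a factor of $0$. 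Averaging over $T$ and writing the resulting scalar as $q_\alpha$, the quantity I need to bound equals $\sum_{\alpha \neq 0} \hat g(\alpha) q_\alpha$, which by Cauchy--Schwarz is at most $\big(\sum_{\alpha \neq 0}|q_\alpha|^2\big)^{1/2}$.

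The main step is then to bound $\sum_{\alpha \neq 0}|q_\alpha|^2$. I would stratify by $D = \supp(\alpha)$ and use the reproducing-kernel identity $\sum_{a \in \A}\chi_a(v)\overline{\chi_a(v')} = m\cdot \mathbf{1}[v=v']$ to derive
\[ \sum_{\alpha : \supp(\alpha) = D} \Big|\E_v \prod_{i \in D}\chi_{\alpha_i}(v)\Big|^2 \le \sum_{\alpha \in \A^D} \Big|\E_v \prod_{i \in D}\chi_{\alpha_i}(v)\Big|^2 = \E_{v,v'}\big[m^{|D|}\mathbf{1}[v=v']\big] = m^{|D|-1}. \]
Combined with $\Pr_T[D \subseteq T] = \binom{k}{|D|}/\binom{|S|}{|D|}$ and summing over $D \subseteq S$ with $|D| = d$, I get
\[ \sum_{\alpha \neq 0} |q_\alpha|^2 \le \sum_{d \ge 1} \frac{\binom{k}{d}^2 m^{d-1}}{\binom{|S|}{d}} \le \frac{1}{m}\sum_{d \ge 1} \frac{(2k^2m/|S|)^d}{d!} \le \frac{4k^2}{|S|}, \]
using $\binom{k}{d} \le k^d/d!$, $\binom{|S|}{d} \ge (|S|/2)^d/d!$ (valid since $k \le |S|/2$), and $e^x-1 \le 2x$ for $x \le 1$. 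This yields $|\E g - \E_T \E g_{=T}| \le 2k/\sqrt{|S|} \le 2mk/\sqrt{|S|}$, completing the proof.

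I do not anticipate a serious obstacle: the primary delicate point is the reproducing-kernel computation $\sum_{\alpha \in \A^D} |E_\alpha|^2 = m^{|D|-1}$, but this reduces immediately to the fact that $\{\chi_a\}$ is an orthonormal basis. The rest is standard combinatorial bookkeeping, and in fact the argument produces the slightly stronger bound $2k/\sqrt{|S|}$; the $m$ in the statement is absorbed only in reducing to the non-trivial regime.
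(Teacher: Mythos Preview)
Your argument is correct. The Fourier expansion is set up properly, the computation of $\E_{x}[(\chi_\alpha)_{=T}(x)]$ is right, the reproducing-kernel identity $\sum_{a}\chi_a(v)\overline{\chi_a(v')}=m\,\mathbf{1}[v=v']$ is applied correctly to obtain $\sum_{\alpha\in\A^D}|E_\alpha|^2=m^{|D|-1}$, and the combinatorial estimates on $\binom{k}{d}$ and $\binom{|S|}{d}$ go through under your standing assumption $2mk^2\le |S|$ (which in particular forces $k\le |S|/2$). Your handling of the trivial regime is also fine.

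The paper takes a different and shorter route: it observes that the left-hand side is at most $\dtv(\nu,\A^n)$, where $\nu$ is the distribution on $\A^n$ obtained by sampling $T$ and then $x\in\A^{n-k+1}$ as in \cref{def:same}, and bounds this total variation distance by induction on $k$ (reducing to $S=[n]$ and noting that adding one more coordinate to $T$ perturbs the distribution by $O(m/\sqrt{|S|})$ in total variation). That argument is more elementary and coupling-flavored, but somewhat terse. Your spectral approach is more explicit and in fact yields the sharper bound $2k/\sqrt{|S|}$ in the non-trivial regime; the factor $m$ in the stated lemma is only used, in your version, to absorb the case where the bound exceeds $2$ anyway.
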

\begin{proof}
Let $\nu$ be the distribution on $\A^n$ over the $v$ vector (as defined in \cref{def:same}) obtained by sampling $T \subseteq S, |T| = k$ and $x \in \A^{n-k+1}$. It suffices to prove that $\dtv(\nu, \A^n) \le \frac{2mk}{\sqrt{|S|}}$ because $g$ is $1$-bounded. To prove this, further let $\nu'$ to be the distribution where we sample $T \subseteq S$ and set $x_t = a$ for all $t \in T$, where $a$ is some fixed element of $\A$. We will prove that $\dtv(\nu', \A^n) \le \frac{2mk}{\sqrt{|S|}}$.

Towards this, we may assume that $S = [n]$. The claim is evident when $k = 1$. For $k > 1$ note that sampling $T \subseteq [n]$ and be achieved by sampling $T' \subseteq [n]$ with $|T| = k-1$ and then sampling $t \in [n] \setminus T'$ and setting $T = T' \cup \{t\}$. Thus the result follows by induction.
\end{proof}
With these tools in hand we are ready to prove \cref{thm:3ap}.
\begin{proof}[Proof of \cref{thm:3ap}]
We may assume that item 2 of \cref{lemma:uniform} holds, or else we have already obtained a density increment. Let $v_1, \dots, v_n: \A \to \R/\Z$ be such that $P_j(x) = e^{2\pi i v_j(x)}$ for all $j \in [n]$, $x \in \A$.

Let $\zeta = \frac{1}{8m^2}$, and let $N = n^{\zeta}$. By the Pigeonhole principle, we can find disjoint sets $S_1, \dots, S_N$ of size $|S_i| = \sqrt{n}/2$ such that for every $i \in [N]$ and $j, j' \in S_i$, it holds that $\|v_j - v_{j'}\|_\infty \le n^{-\frac{1}{2m}}$. Let $v$ be an arbitrary vector in $S_i$, and let $1 \le k_i \le n^{\frac{1}{4m}}$ be such that $\|k_iv\|_\infty \le n^{-\frac{1}{4m^2}}$ -- such a $k_i$ exists by Dirichlet approximation.

Now let $T_i$ be a uniformly random subset of $S_i$ of size $k_i$ for all $i \in [N]$ and perform the following sequence of restrictions. For all $i \in [N]$ force that $x_j = x_{j'}$ for all $j, j' \in T_i$, and then uniformly random restrict all coordinates in $[n] \setminus (T_1 \cup \dots \cup T_N)$. By \cref{lemma:same} we know that:
\begin{equation} \E_{T_i \subseteq S_i, |T_i| = k_i, i \in [N]} \E_{u \sim \A^{[n] \setminus \cup_i T_i}}[\mu((A_{=T_1,\dots,T_N})_{I\to u})] \ge \alpha - 10\delta/\gamma - \frac{4Nmn^{\frac{1}{4m}}}{n^{1/4}} \ge \alpha - 11\delta/\gamma. \label{eq:final1} \end{equation}
Similarly, by \cref{lemma:same} we know that
\begin{equation} \E_{T_i \subseteq S_i, |T_i| = k_i, i \in [N]} \E_{u \sim \A^{[n] \setminus \cup_i T_i}} \E_{x \sim \A^N}\left[((f_A)_{=T_1,\dots,T_N})_{I \to u}(x) (P_{=T_1,\dots,T_N})_{I\to u}(x) \right] \ge \gamma -  \frac{4Nmn^{\frac{1}{4}}}{n^{1/4}} \ge \gamma/2. \label{eq:final2}
\end{equation}
Let $Q = (P_{=T_1,\dots,T_N})_{I\to u}$. We will argue that $Q$ is nearly constant. For $x, y \in \A^N$,
\begin{align*}
    Q(x) - Q(y) = \exp\Big(2\pi i \sum_{j=1}^N \sum_{t \in T_j} v_t(y)\Big)\left(\exp\Big(2\pi i \sum_{j=1}^N \sum_{t \in T_j} (v_t(x)-v_t(y))\Big) - 1 \right)\prod_{i \in ([n] \setminus \cup_i T_i)} P_i(u_i).
\end{align*}
For $j \in [N]$ and let $v \in T_j$ be such that $\|k_jv\|_\infty \le n^{-\frac{1}{4m^2}}$. Then
\[ \Big\|\sum_{t \in T_j} v_t \Big\|_\infty \le \|k_jv\|_\infty + \sum_{t \in T_j} \|v_t - v\|_\infty \le n^{-\frac{1}{4m^2}} + k_j n^{-\frac{1}{2m}} \le 2n^{-\frac{1}{4m^2}}. \]
Combining this with the above yields that $|Q(x) - Q(y)| \le 100N n^{-\frac{1}{4m^2}} = 100n^{-\frac{1}{8m^2}}.$ Letting $A' := (A_{=T_1,\dots,T_N})_{I\to u}$, combining this with the above \eqref{eq:final2} then implies that:
\[  \E_{\substack{T_i \subseteq S_i, |T_i| = k_i, i \in [N] \\ u \sim \A^{[n] \setminus \cup_i T_i}}}[|\mu(A')-\alpha|] \ge \gamma/2-100n^{-\frac{1}{8m^2}} \ge \gamma/3, \]
and thus
\[ \Pr_{\substack{T_i \subseteq S_i, |T_i| = k_i, i \in [N] \\ u \sim \A^{[n] \setminus \cup_i T_i}}}[|\mu(A')-\alpha| \ge \gamma/6] \ge \gamma/6. \]
If \[ \Pr_{\substack{T_i \subseteq S_i, |T_i| = k_i, i \in [N] \\ u \sim \A^{[n] \setminus \cup_i T_i}}}[\mu(A') > \alpha - \gamma/6] > 1-\gamma/6 \] then there is some event $A'$ where $\mu(A') > \alpha-\gamma/6$ while $|\mu(A')-\alpha| \ge \gamma/6$, so $\mu(A') \ge \alpha+\gamma/6$ as desired. Otherwise, we know that \[ \Pr_{\substack{T_i \subseteq S_i, |T_i| = k_i, i \in [N] \\ u \sim \A^{[n] \setminus \cup_i T_i}}}[\mu(A') \le \alpha - \gamma/6] \ge \gamma/6. \] Finally, recall that $\delta = \gamma^3/11000$. Combining this with \eqref{eq:final1}, which says that 
\[ \E_{\substack{T_i \subseteq S_i, |T_i| = k_i, i \in [N] \\ u \sim \A^{[n] \setminus \cup_i T_i}}}[\mu(A')] \ge \alpha - 11\delta/\gamma \ge \alpha - \gamma^2/1000 \] gives that there is some event $A'$ with $\mu(A') \ge \alpha + \gamma^2/72$.

The claimed quantitative dependence in \cref{thm:3ap} holds because $\gamma = \exp(-1/\alpha^C)$, and $A' \subseteq \A^N$ for $N \ge n^{\frac{1}{24m^2}}$ (at some point we labelled $n^{1/3}$ as $n$). Thus, after $O(1/\gamma)$ iterations the dimension would be at least 
$n^{(1/24m^2)^{O(1/\gamma)}}$ and we would end
up with a set with density $1-o(1)$, which clearly must contain a valid triplet.
\end{proof}

\subsection{Direct Sum Testing}\label{sec:dst}
In this section we prove~\cref{thm:direct_sum}, and we fix $f$ as in the statement of~\cref{thm:direct_sum} throughout.
We first note that the direct sum test is equivalent to the following test: sample $I\subseteq_{1/2}[n]$, 
$x,x'\sim \mathbb{F}_p^I$, 
$x,x'\sim \mathbb{F}_p^{\overline{I}}$ and 
check that 
$f(x,y) - f(x,y') - f(x',y) + f(x',y') = 0$. Thus,
it follows that
\[
\E\left[\sum\limits_{j\in\mathbb{F}_p}\omega_p^{j(f(x,y) - f(x,y') - f(x',y) + f(x',y'))}\right]
\geq \left(\frac{1}{p}+\eps\right)p
=1+p\eps,
\]
and so
$\E_x\left[\sum\limits_{j\neq 0}\omega_p^{j(f(x,y) - f(x,y') - f(x',y) + f(x,y))}\right]
\geq p\eps$. It follows there is $j\neq 0$ such that,
letting $F(z) = \omega_p^{j f(z)}$, we have that
\[
\left|\E_{}\left[F(x,y)\overline{F(x',y)}\overline{F(x,y')}F(x',y')\right]\right|\geq \eps.
\]
We note that the left hand side is exactly $\val(F)$, so applying~\cref{thm:swap,thm:boundedprod,thm:rit} we get that there is $L\colon \Sigma^n\to\mathbb{C}$ with $\norm{L}_2\leq 1$ and degree at most 
$D = \exp((1/\eps)^{O(1)})$, a product function $P = P_1\cdots P_n$ where $P_i\colon \Sigma\to\mathbb{C}$ is $1$-bounded, such that 
$|\langle F, L P\rangle|\geq \delta$, and $\delta=\exp(-\exp((1/\eps)^{O(1)}))$. Letting $G = \frac{1}{\norm{L}_4}F \overline{L}$, we get that
$\norm{G}_4\leq 1$ and 
\[
|\langle G, P\rangle|
=
\frac{|\langle F, L P\rangle|}{\norm{L}_4}
\geq \frac{\delta}{2^{O(D)}},
\]
where we used $\norm{L}_4\leq 2^{O(D)}$ which follows by hypercontractivity. Applying ~\cref{thm:boundedprod} again gives that there is a product function $Q = Q_1\cdots Q_n$ where 
$Q_i\colon \Sigma\to\mathbb{D}$ for each $i$, such that $|\langle G, Q\rangle|\geq \frac{\delta^{O(1)}}{2^{O(D)}}$, and plugging in the definition of $G$ finishes the proof.

\subsection{Proof of~\cref{thm:main_global}}
\cref{thm:main_global} follows by combining \cref{thm:main,thm:rit} and then applying~\cref{thm:boundedprod} as in~\cref{sec:dst}.
%\section*{Acknowledgments}

{\small
\bibliographystyle{alpha}
\bibliography{refs}}

\end{document}